\pgfplotsset{compat=1.18}
\newtheorem{theorem}{Theorem}
\newtheorem{lemma}[theorem]{Lemma}
\newtheorem{corollary}[theorem]{Corollary}
\newtheorem{proposition}[theorem]{Proposition}
\newtheorem{definition}[theorem]{Definition}
\newtheorem{introtheorem}{Theorem} 
\newtheorem{introdefinition}[introtheorem]{Definition}
\theoremstyle{definition}
\newtheorem*{problem*}{Problem}
\newtheorem*{assumption*}{Assumption}
\newtheorem{example}[theorem]{Example}
\newtheorem{remark}[theorem]{Remark}
\newtheorem*{warning*}{Warning}
\newcommand{\F}{{\mc F}}
\newcommand{\ip}[2]{\langle #1,#2\rangle}
\newcommand{\braket}[2]{\langle #1|#2\rangle}
\newcommand{\ketbra}[2]{|#1\rangle\langle#2|}
\newcommand{\ket}[1]{|#1\rangle}
\newcommand{\kettbra}[1]{\ketbra{#1}{#1}}
\newcommand{\bra}[1]{\langle#1|}
\DeclareMathOperator{\supp}{supp}
\newcommand{\norm}[1]{\lVert #1\rVert}
\newcommand{\oo}{\infty}
\newcommand{\ox}{\otimes}
\newcommand{\mc}{\mathcal}
\newcommand{\eps}{\varepsilon}
\newcommand{\III}{{\mathrm{III}}}
\newcommand{\II}{{\mathrm{II}}}
\newcommand{\I}{{\mathrm{I}}}
\newcommand{\e}{\mathrm{e}}
\DeclareMathOperator{\Aut}{Aut}
\newcommand{\abs}[1]{\lvert #1 \rvert}
\newcommand{\up}[1]{^{(#1)}}
\DeclareMathOperator{\tr}{Tr}
\DeclareMathOperator{\Tr}{Tr} 
\renewcommand{\tilde}{\widetilde}
\renewcommand{\hat}{\widehat}
\newcommand{\hide}[1]{}
\def\A{{\mc A}}
\def\fA{{\mathfrak A}}
\def\B{{\mc B}}
\def\CC{{\mathbb C}}
\def\H{{\mc H}}
\def\K{{\mathcal K}}
\def\M{{\mc M}}
\def\N{{\mc N}}
\def\O{{\mc O}}
\def\Q{{\mc Q}}
\def\U{{\mc U}}
\def\W{{\mc W}}
\def\RR{{\mathbb R}}
\newcommand{\MM}{\mathbb M}
\def\NN{{\mathbb N}}
\renewcommand\P{\mc P}
\def\ZZ{{\mathbb Z}}
\def\QQ{{\mathbb Q}}
\newcommand{\R}{\mc R}
\newcommand{\proj}{\mathrm{Proj}}
\DeclareMathOperator{\lin}{span}
\DeclareMathOperator{\id}{id}
\DeclareMathOperator{\Sp}{Sp}
\newcommand{\diam}{\textup{diam}}
\newcommand{\placeholder}[0]{{\,\cdot\,}}
\renewcommand{\Re}{\mathrm{Re}}
\newcommand{\qandq}{\quad\text{and}\quad}
\title{Embezzlement of entanglement, quantum fields, and the classification of von Neumann algebras}
\author{Lauritz van Luijk, Alexander Stottmeister,\\ Reinhard F.~Werner, Henrik Wilming}
\date{\normalsize Institut f\"ur Theoretische Physik, Leibniz Universit\"at Hannover, \\ Appelstraße 2, 30167 Hannover, Germany\\[2ex]\today\\[2ex]
{\it Dedicated to the memory of Uffe Haagerup\footnote{
I (RFW) began the work on this project around 2011 with Volkher Scholz, at the time my PhD student. The aim was to establish the III$_1$ factor as the ``universal embezzling algebra'' in much the same way that the hyperfinite II$_1$ factor represents the idealized entanglement resource of infinitely many singlets. We were discussing this in a lobby at the 2012 ICMP congress in Aalborg when Uffe Haagerup walked by, and I decided to ask him about our problem. In a wonderfully rich conversation of about half an hour, he convinced us that the flow of weights should be the relevant thing to look at. Volkher and I decided to produce a paper explaining this convincingly to the QI community (and ourselves) and had planned to ask Uffe to be a coauthor once we were happy with our presentation. But alas, this project got stuck, and sadly Uffe passed away in the meantime. The current team took over in 2023, going far beyond what Volkher and myself had had in mind but vindicating Uffe's intuition at every turn. We dedicate this paper to his memory.}}}
\begin{document}

\maketitle

\begin{abstract}
We study the quantum information theoretic task of embezzlement of entanglement in the setting of von Neumann algebras.
Given a shared entangled resource state, this task asks to produce arbitrary entangled states using local operations without communication while perturbing the resource arbitrarily little.
We quantify the performance of a given resource state by the worst-case error. States for which the latter vanishes are `embezzling states' as they allow to embezzle arbitrary entangled states with arbitrarily small error.
The best and worst performance among all states defines two algebraic invariants for von Neumann algebras. 
The first invariant takes only two values. Either it vanishes and embezzling states exist, which can only happen in type III, or no state allows for nontrivial embezzlement. 
In the case of factors not of finite type I, the second invariant equals the diameter of the state space.
This provides a quantitative operational interpretation of Connes' classification of type III factors within quantum information theory. 
Type III$_1$ factors are 'universal embezzlers' where every state is embezzling. Our findings have implications for relativistic quantum field theory, where type III algebras naturally appear.
For instance, they explain the maximal violation of Bell inequalities in the vacuum. Our results follow from a one-to-one correspondence between embezzling states and invariant probability measures on the flow of weights. 
We also establish that universally embezzling ITPFI factors are of type III$_1$ by elementary arguments.
\end{abstract}

\clearpage

\thispagestyle{empty}
\clearpage
\tableofcontents

\clearpage
\section{Introduction}
Entanglement is often thought of as a precious resource that can be used to fulfill certain operational tasks in quantum information processing, notably quantum teleportation and quantum computation. 
It is only natural that such a resource should be consumed when put to use. Indeed, local operations with classical communication generally decrease the entanglement of a state unless the local parties only act unitarily.
Nonetheless, the phenomenon of \emph{embezzlement of entanglement}, discovered by van Dam and Hayden \cite{van_dam2003universal}, shows that there exist families of bipartite entangled states $\ket{\Omega_n}$ (with Schmidt rank $n$) shared between Alice and Bob such that any state $\ket\Psi$ with Schmidt rank $m$ may be extracted from them while perturbing the original state arbitrarily little and by acting only with local unitaries:
\begin{align}\label{eq:intro}
    u_Au_B \big(\ket{\Omega_n}_{AB}\otimes\ket{1}_A\ket1_B\big)\approx_\eps \ket{\Omega_n}_{AB} \ox \ket\Psi_{AB}
\end{align}
where $\eps\rightarrow 0$ for any fixed $m$ as $n\to \oo$. Here, $u_A$ and $u_B$ are suitable $\eps$- and $\ket\Psi$-dependent local unitaries applied by Alice and Bob, respectively. In \eqref{eq:intro} $\ket1_A\ket1_B$ denotes the product state $\ket1_A\ox\ket1_B$, and the indices $A/B$ are for emphasis only.
The family of states $\ket{\Omega_n}$ is hence referred to as an \emph{(universal) embezzling family}. 
As the resource state $\ket{\Omega_n}$ is hardly perturbed, it takes a similar role as a catalyst.
However, embezzlement is distinct from the phenomenon of \emph{catalysis of entanglement} pioneered by Jonathan and Plenio \cite{jonathan_entanglement-assisted_1999} because catalysts are typically only required to catalyze a single-state transition. Moreover, no state change is allowed on the catalyst; see \cite{Datta2022,lipka-bartosik_catalysis_2023} for reviews.
Besides the obvious conceptual importance of embezzlement, it has also found use as an important tool in quantum information theory, for example for the Quantum Reverse Shannon Theorem \cite{Bennett_2014,berta_quantum_2011} and in the context of non-local games \cite{Leung2013coherent,regev_quantum_2013,cleve_perfect_2017,coladangelo_two-player_2020}.

An obvious question is whether one can take the limit $n\rightarrow \infty$ in \eqref{eq:intro}, resulting in a state that allows for the extraction of arbitrary entangled states via local operations while remaining invariant. 
This would violate the conception of entanglement as the property of quantum states that cannot be enhanced via local operations and classical communication (LOCC). 
It is, therefore, perhaps unsurprising that the limit can not be taken in a naive way. 
Indeed, the original construction of \cite{van_dam2003universal} is given by
\begin{align}\label{eq:vandam}
	\ket{\Omega_n} = \frac{1}{\sqrt{H_n}}\sum_{j=1}^n \frac{1}{\sqrt j} \,\ket{jj},\quad H_n = \sum_{j=1}^n \frac{1}{j},
\end{align}
where $\ket{jj}$ denotes the product basis vector $\ket j\ox\ket j$.
Since $H_n\rightarrow \infty$ as $n\rightarrow \infty$, these vectors do not converge. 
It has been shown that the asymptotic scaling of Schmidt coefficients roughly as $\lambda_j \sim \tfrac{1}{j}$ is a general feature of embezzling families of states \cite{leung_characteristics_2014,zanoni_complete_2023}.
One can furthermore show using the Schmidt-decomposition that no state $\ket{\Omega}$ on a (possibly non-separable) Hilbert space $\H_A\ox\H_B$ can fulfill \eqref{eq:intro} with equality, i.e., with $\eps=0$, for all $\ket\Psi$ \cite{cleve_perfect_2017}. 
On the other hand, \cite{cleve_perfect_2017} also showed that $\eps=0$ is possible in a \emph{commuting operator framework} if $\ket\Omega$ is allowed to depend on $\ket\Psi$.  
In this work, we explore the ultimate limits of embezzlement in commuting operator frameworks.  
Specifically, we ask and answer the following natural questions:
\begin{enumerate}
	\item\label{question1} Can there be a single quantum state from which one can embezzle, with arbitrarily small error, every finite-dimensional entangled state, no matter how large its Schmidt rank? We call such states \emph{embezzling states}. 
    
	\item\label{question2} Can there be quantum systems where \emph{all} quantum states are  embezzling? We call such systems \emph{universal embezzlers}.
    
	\item\label{question3} Is there a difference between systems with individual embezzling states and universal embezzlers, if they exist, or are they all equivalent in a suitable sense?

	\item\label{question4} Can we expect to find embezzling states, or even universal embezzlers, as actual physical systems? 
\end{enumerate}
To formulate and answer the above questions in a mathematically precise and operationally meaningful way, we study embezzlement from the point of view of von Neumann algebras, which provides the most natural way to formulate bipartite systems beyond the tensor product framework.  Our results establish a deep connection between the classification of von Neumann algebras and the possibility of embezzlement.
Moreover, they imply that relativistic quantum fields are uniquely characterized by the fact that they are universal embezzlers.
Recently, the solution of Tsirelson's problem (see \cite{tsirelsons_prob} for background) and the implied negative solution of Connes' embedding conjecture in \cite{Ji2020} showed a fascinating connection between operational tasks in quantum theory and the theory of von Neumann algebras (see also \cite{goldbring_connes_2022} for an introduction).
Recall that a factor is a von Neumann algebra with trivial center. Factors can be classified into different types ($\I$, $\II$, and $\III$) and subtypes.
Factors of type $\I$ are classified into subtypes $\I_n$, corresponding to $\B(\H)$ with $n$-dimensional Hilbert space $\H$ for $n\in\NN\cup\{\infty\}$. 
Type $\II$ has subtypes $\II_1$ and $\II_\infty$.
The term semifinite factor is used to collectively refer to types $\I$ and $\II$. 
Connes showed that type $\III$ factors can be further classified into subtypes $\III_\lambda$ with $\lambda\in[0,1]$ \cite{connes1973classIII}. 
Connes' embedding problem and, hence, Tsirelson's problem, is related to the classification of type $\II_1$ factors. 
Our results, in turn, show that Connes' classification of type $\III$ factors may be interpreted as a quantitative measure of embezzlement of entanglement; see below.

In the remainder of this introduction, we give an (informal) overview of our methods and results and discuss some of their implications (see also the brief companion paper \cite{short_paper}).
For readers not familiar with von Neumann algebras, we provide some basic material in \cref{sec:preliminaries}.
From now on, we will stop using Dirac's ket-bra notation, except for basis vectors.

\subsection*{Bipartite systems and embezzling states}
After establishing the required mathematical background, we begin in \cref{sec:embezzlingstates} by formalizing a bipartite (quantum) system as a pair of von Neumann algebras $\M_A,\M_B$ acting on a Hilbert space $\H$, so that $\M_A = \M_B'$, where $\M_B'$ denotes the commutant of $\M_B$. 
That is, Alice and Bob have access to their respective local algebras of (bounded) operators to control and measure the shared quantum state $\Omega\in \H$. We refer to the condition $\M_A=\M_B'$ as Haag duality due to its importance in quantum field theory \cite{haag_local_1996}. It is automatically fulfilled in the tensor product framework (see \cref{tab:co-vs-tp} for an overview) and can be interpreted as saying that Alice can implement any symmetry of Bob and vice-versa, see also \cite{van_luijk_schmidt_2023}. 
We call a bipartite system $(\H,\M,\M')$ \emph{standard} if $\M$ is in so-called \emph{standard representation}, see \cref{subsec:pre:weights}. For our purposes, this condition simply means that every (normal) state $\omega\in S_*(\M)$ on $\M$ arises as the marginal of some vector $\Omega\in\H$ and the same is true for $\M'$ (see \cref{lem:standard}). Thus, in a standard bipartite system, every state on $\M$ and $\M'$ has a purification, just as in standard quantum mechanics.

\begin{introdefinition}[Embezzling state] We call a unit 
vector $\Omega\in\H$ an {\bf embezzling state} if for any $n\in\NN$, any $\eps>0$ and any state vector $\Psi\in \CC^n\ox \CC^n$ there exist unitaries $u_A\in \M_A\ox M_n\ox 1$ and $u_B\in \M_B\ox 1\ox M_n$ such that
\begin{align}
 \norm{u_A u_B(\Omega\ox\ket{11}) - \Omega\ox\Psi} < \eps,	
\end{align}
where $M_n$ is the algebra of $n\times n$ complex matrices.
\end{introdefinition}

When considering entanglement theory for pure bipartite states in the tensor product framework, a crucial role is played by Nielsen's theorem \cite{nielsen_conditions_1999}. It reduces the study of transformations via LOCC to majorization theory on Alice's marginals.
Similarly, we next show that whether $\Omega$ performs well at embezzlement can equally well be discussed on the level of the induced state $\omega$ on Alice's algebra $\M=\M_A$ or the induced state $\omega'$ on Bob's algebra $\M'=\M_B$.
Similarly to the definition for biparite pure states $\Omega\in\H$, we call a state $\omega$ on $\M$ a \emph{(monopartite) embezzling} state if for every $n\in\NN$, any $\eps>0$, and any two states $\phi,\psi$ on $M_n$ there exists a unitary $u\in\M$ such that
\begin{align}
	\norm{u (\omega\ox\phi)u^* - \omega\ox\psi}<\eps.
\end{align}

\begingroup
\setlength{\tabcolsep}{10pt} 
\renewcommand{\arraystretch}{1.5} 
\begin{table}[t!]
    \centering
    \begin{tabular}{c|cc}
    &Commuting Operator& Tensor Product \\
	    \hline\hline
	    Hilbert space & $\H$ & $\H_A\ox \H_B$ \\
	    States & $\Omega \in \H$ & $\Omega\in\H_A\ox \H_B$\\
	    Alice's algebra & $\M_A\subseteq \B(\H)$ & $\B(\H_A)\ox 1$\\
	    Bob's algebra & $\M_B\subseteq \M_A'\subseteq \B(\H)$ & $1\ox\B(\H_B)$\\
	    Haag duality & $\M_A=\M_B'$ & automatic     
    \end{tabular}
    \caption{Commuting operator framework vs.\ tensor product framework. }
    \label{tab:co-vs-tp}
\end{table}
\endgroup

\begin{introtheorem}[cf.\ Thm.~\ref{thm:partite}] For any bipartite system $(\H,\M,\M')$, a state $\Omega\in\H$ is an embezzling state if and only if its induced states $\omega$ and $\omega'$ on $\M$ and $\M'$, respectively, are monopartite embezzling states.
\end{introtheorem}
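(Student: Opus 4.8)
The plan is to reformulate each direction on the level of reduced states and then, for the reverse direction, to reconstruct the bipartite transformation from a monopartite one using the standard form. Throughout fix $n$ and note that $(\H\ox\CC^n\ox\CC^n,\ \M\ox M_n\ox 1,\ \M'\ox 1\ox M_n)$ is again a standard bipartite system: the commutation theorem for tensor products gives $(\M\ox M_n\ox 1)'=\M'\ox 1\ox M_n$, and a tensor product of two standard forms is standard. Under this identification the $\M\ox M_n$-marginal of $\Omega\ox\ket{11}$ is $\omega\ox\kettbra1$, that of $\Omega\ox\Psi$ is $\omega\ox\rho_\Psi$ where $\rho_\Psi\in S_*(M_n)$ is the reduced state of $\Psi$ on the first leg, and symmetrically on Bob's side. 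The single nontrivial ingredient is a matching statement: \emph{in a standard bipartite system $(\K,\N,\N')$ there is a function $g$ with $g(\delta)\to 0$ as $\delta\to 0$ such that any two unit vectors $\zeta,\eta\in\K$ with $\norm{\omega_\zeta|_\N-\omega_\eta|_\N}\le\delta$ admit a partial isometry $v'\in\N'$ with $\norm{v'\zeta-\eta}\le g(\delta)$.} This is standard form theory: passing to the natural-cone representatives $\zeta_0,\eta_0$ of the two marginals, the Powers--Størmer/Araki inequality gives $\norm{\zeta_0-\eta_0}\le\delta^{1/2}$; writing $\zeta=w'\zeta_0$ and $\eta=x'\eta_0$ with partial isometries $w',x'\in\N'$ (two vector representatives of one normal state differ by a partial isometry in the commutant), the contraction $v'=x'w'^{*}\in\N'$ obeys $v'\zeta=x'\zeta_0$, hence $\norm{v'\zeta-\eta}=\norm{x'(\zeta_0-\eta_0)}\le\delta^{1/2}$; a polar decomposition turns $v'$ into a partial isometry at the cost of a harmless power.

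\emph{Only if.} Assume $\Omega$ is embezzling. It is enough to show that for every $n$, $\delta>0$ and every $\sigma\in S_*(M_n)$ there is a unitary $u\in\M\ox M_n$ with $\norm{u(\omega\ox\kettbra1)u^{*}-\omega\ox\sigma}<\delta$, since general $\phi\rightsquigarrow\psi$ then follows by composing two such unitaries --- one producing $\phi$ from $\kettbra1$ and read backwards, the other producing $\psi$ --- using that conjugation by a unitary is an isometry on states. Given $\sigma$, pick a purification $\Psi\in\CC^n\ox\CC^n$ with reduced state $\sigma$ on the first leg and apply the definition of embezzling: $\norm{u_Au_B(\Omega\ox\ket{11})-\Omega\ox\Psi}<\delta'$. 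As $u_B$ lies in the commutant of $\M\ox M_n\ox 1$ it does not touch that algebra's marginal, and marginals are $2$-Lipschitz in the vector norm, so $\norm{u_A(\omega\ox\kettbra1)u_A^{*}-\omega\ox\sigma}<2\delta'$; discard the trivial third leg of $u_A$ and take $\delta'=\delta/2$. Running the mirror-image argument (swap the two ancilla legs and $\M\leftrightarrow\M'$) gives the statement for $\omega'$.

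\emph{If.} Assume $\omega$ and $\omega'$ are monopartite embezzling; fix $n$, $\eps$ and $\Psi$, and set $\rho=\rho_\Psi$. Monopartite embezzlement of $\omega$ (with input $\kettbra1$, output $\rho$) yields a unitary $u_A\in\M\ox M_n\ox 1$ with $\norm{u_A(\omega\ox\kettbra1)u_A^{*}-\omega\ox\rho}<\delta$. The vectors $\zeta:=u_A(\Omega\ox\ket{11})$ and $\eta:=\Omega\ox\Psi$ then have $\M\ox M_n\ox 1$-marginals within $\delta$, so the matching statement produces a partial isometry $v'\in\M'\ox 1\ox M_n$ with $\norm{v'\zeta-\eta}\le g(\delta)$. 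Upgrading $v'$ to a unitary $u_B\in\M'\ox 1\ox M_n$ (see below) and using that $u_A$ and $u_B$ commute, $u_Au_B(\Omega\ox\ket{11})=u_B\zeta$, so $\norm{u_Au_B(\Omega\ox\ket{11})-\Omega\ox\Psi}=\norm{u_B\zeta-\eta}$ tends to $0$ with $\delta$, hence is $<\eps$ once $\delta$ is small.

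The crux --- and the step I expect to cost the most work --- is the last one: replacing the partial isometry of the matching statement by a genuine unitary in Bob's algebra. Since $v'^{*}v'$ and $v'v'^{*}$ are Murray--von Neumann equivalent, this amounts to matching their orthogonal complements inside $\M'\ox 1\ox M_n$, and it is here that the hypothesis must re-enter (note that, as stated, the matching statement is only about partial isometries, and cannot be sharpened to unitaries in general). I would first reduce to the case that $\Omega$ is cyclic and separating, by compressing $\M$ with $[\M'\Omega]$ and $\M'$ with $[\M\Omega]$ --- a reduction one must verify is harmless for both embezzlement notions, the lift direction being a mere padding of a unitary with the identity on the complementary projection --- and choose $\Psi$ of full Schmidt rank (these are dense, hence suffice); then $\eta$ is cyclic and separating for $\M\ox M_n\ox 1$ and $\zeta$ differs from its cone representative only in the ancilla leg, so the two complements are unitarily conjugate inside $\M'\ox 1\ox M_n$ and the extension exists outright. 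For the degenerate cases, and as an alternative route, one invokes that a monopartite embezzling state forces $\M$ --- hence $\M'$ and $\M'\ox M_n$ --- to be of type $\III$ (a fact established separately), which supplies exactly the room needed to carry out the complement-matching. Once $u_B$ is in hand, the bookkeeping above is routine.
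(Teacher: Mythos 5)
Your proof is essentially correct and, once unwound, runs on the same machinery as the paper's: the standard form, the natural cone, the estimate $\norm{\Omega_\psi-\Omega_\phi}^2\le\norm{\psi-\phi}$, and the polar decomposition of vectors relative to the cone. Only the packaging differs. The paper (Prop.~\ref{prop:std_form_mbz}) constructs Bob's unitary \emph{explicitly} as $u_B=j\up n(u_A)$, the modular conjugate of Alice's unitary; the purification formula $\Omega_{v\omega v^*}=vj(v)\Omega_\omega$ then gives the bipartite estimate on the cone representative in one line. Your ``matching lemma'' abstracts exactly this step: applied to $\zeta=u_A(\Omega\ox\ket{11})$, the intermediate cone vector is $j\up n(u_A)u_A(\Omega\ox\ket{11})$, so your $w'$ is $j\up n(u_A^*)$ and your $v'$ is the paper's $j\up n(u_A)$ composed with the polar isometry relating $\Omega\ox\Psi$ to its cone representative. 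What your route buys is a reusable statement (close $\N$-marginals imply local matchability of vectors); what it costs is the extension problem you correctly identify as the crux.

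On that crux, the paper never extends a partial isometry to a unitary by matching complements. It proves a Russo--Dye averaging lemma (Lem.~\ref{lem:simulation_by_unitaries}): a contraction acting isometrically on a given vector can be approximated \emph{on that vector} by unitaries. Combined with the equivalent ``contraction'' formulations of embezzlement (Props.~\ref{thm:equiv_bipartite_mbz} and \ref{thm:equiv_marginal_mbz}), this disposes of the upgrade uniformly, with no type classification and no full-Schmidt-rank density detour. Your primary route does go through in the reduced case --- there $x'$ and $w'$ are in fact unitaries, since $\eta_0$ and $\eta$ are cyclic for $\N$ and $\zeta_0=j\up n(u_A)\zeta$, so no extension is even needed --- but your fallback for the degenerate cases, invoking that embezzling states force type $\III$, imports the flow of weights, i.e.\ far heavier machinery than the theorem requires (not circular, since Cor.~\ref{cor:embezzling_needs_typeIII} is independent of Thm.~\ref{thm:partite}, but disproportionate). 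A second soft spot: the reduction to the cyclic-separating case is not ``mere padding with the identity.'' The compression is by $p_0=pp'$ with $p\in\M$, $p'\in\M'$, so $p_0\M p_0$ is not a corner of $\M$ cut by a projection of $\M$, and a unitary of the compressed algebra does not lift to a unitary of $M_n(\M)$ by adjoining $1-p_0$; the paper's Prop.~\ref{thm:WLOG_std} again routes this through contractions and the Russo--Dye lemma. Both soft spots are fixable, and the cleanest fix is the one the paper uses.
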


Since we assume Haag duality, the study of embezzlement thereby reduces to studying monopartite embezzlement on von Neumann algebras.
All our results can, therefore, also be interpreted in this monopartite setting without reference to entanglement but rather simply as a question on the state transitions that can be (approximately) realized on a von Neumann algebra via unitary transformations.

\subsection*{Spectral properties of embezzling states}

It was shown in \cite{leung_characteristics_2014} that the Schmidt spectrum of embezzling families, like the van Dam-Hayden family \eqref{eq:vandam}, essentially behaves like $\lambda_j \sim \frac1j$ (see also \cite{zanoni_complete_2023}).
This poses the following question: What are the spectral properties of embezzling states on von Neumann algebra?

Unlike for matrix algebras, the spectrum of a state on a von Neumann algebra is not defined in general. However, one can make sense of the \emph{modular spectrum}, which is defined as the spectrum of the corresponding modular operator $\Delta_\omega$, a positive self-adjoint operator constructed in Tomita-Takesaki theory (see \cite{takesaki2} or \cref{sec:preliminaries}).
For a state $\omega$ on a matrix algebra, the modular spectrum is the set of ratios $\{\frac{\lambda_i}{\lambda_j}\}_{i,j}$ of eigenvalues $\lambda_j$ of the density operator $\omega$.
Thus, the $\frac1j$-result of \cite{leung_characteristics_2014} suggests that we should get a modular spectrum of $\RR^+$ (since ${\frac1j} \big/ {\frac1k}=\frac kj$ yields every positive rational number).
We show that this is indeed the case: If $\omega$ is a (monopartite) embezzling state on a von Neumann algebra $\M$ then 
\begin{equation}
    \Sp\Delta_\omega =\RR^+.
\end{equation}
The converse is false: there exist states on $\M=\B(\H)$ with a modular spectrum of $\RR^+$. 
The fact that embezzling states have the maximal possible modular spectrum implies that universal embezzlers, i.e., von Neumann algebras where every state is embezzling, are of type $\III_1$.

We now consider semifinite von Neumann algebras. These allow us to consider the spectrum of states directly (as opposed to the modular spectrum discussed above).
Our discussion is based on the notion of \emph{spectral scales}.
For a state $\omega$ on $M_n$ with density operator $\rho$, the spectral scale is the function $\lambda_\rho:\RR^+\to\RR^+$ given by

\begin{equation}\label{eq:intro_spectral_scale}
    \lambda_\rho(t) = \sum_i p_i \,\chi_{[d_{i-1},d_{i})}(t),
\end{equation}
where $p_i$ are the eigenvalues of $\rho$, $d_i =m_i + ... +m_1$ with $m_i$ being the multiplicity of $p_i$ and $d_0=0$.
Clearly, the spectral scale is in one-to-one correspondence with the spectrum (with multiplicities) and completely determines unitary equivalence. 
In particular, the reduced states $\omega_n$ of the van Dam-Hayden embezzling family $\Omega_n$ have spectral scale 
\begin{align}
	\lambda_{\omega_n}(t) = \frac{1}{H_n}\sum_{j=1}^n \frac{1}{j}\,\chi_{[j-1,j)}(t)	
\end{align}
resembling a step-function approximation of the function $1/(H_n t)$ up to $t=n$.
Readers familiar with majorization theory will notice that the spectral scale is essentially the \emph{decreasing rearrangement} of eigenvalues so that $\rho$ majorizes $\sigma$ \cite{bhatia_matrix_1997} if and only if
\begin{align}
    \int_0^t \lambda_\rho(s)\, ds \geq \int_0^t \lambda_\sigma(s)\, ds \qquad\forall t>0.
\end{align}
Spectral scales can also be defined for states on semifinite von Neumann algebras. Generalizing \eqref{eq:intro_spectral_scale}, the spectral scale $\lambda_\omega(t)$ of a state $\omega$ is a right-continuous decreasing probability density on $\RR^+$. 
While spectral scales require semifinite von Neumann algebras, we consider on a general von Neumann algebra right-continuous functions $\tilde \lambda_\omega:(0,\infty) \to \RR^+$ defined for all (normal) states $\omega$ on $\M$ and $M_n(\M)$ that share the most important properties of the spectral scale: a) two (approximately) unitarily equivalent states $\omega,\varphi$ have the same function $\tilde \lambda_\omega=\tilde\lambda_\varphi$ and b) the function behaves as the spectral scale under tensor products.
There exist nontrivial functions $\tilde \lambda_\omega$ on type $\III_\lambda$ factors satisfying the conditions a) and b) (see \cref{sec:FOW_type_III_lambda}) and \cite{haagerup1990equivalence}.

\begin{introtheorem}[cf.\ Prop.~\ref{thm:spectral_charac}] If $\tilde \lambda_\omega$ fulfills the above properties a) and b), then 
\begin{align}\label{eq:mbz_implies_1/t}
	\omega\ \text{embezzling}\quad\implies\quad \tilde\lambda_\omega(t) \propto \frac{1}{t}\,.
\end{align}
\end{introtheorem}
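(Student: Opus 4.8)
The plan is to exploit properties a) and b) to reduce the statement to a functional equation on $\tilde\lambda_\omega$. First I would record what embezzlement says at the level of the function $\tilde\lambda$: if $\omega$ is embezzling, then for every $n$ and any two states $\phi,\psi$ on $M_n$, the state $\omega\otimes\phi$ can be brought arbitrarily close (in norm, hence approximately unitarily equivalently after passing through $M_n(\M)$) to $\omega\otimes\psi$. By property a), approximately unitarily equivalent states have the same $\tilde\lambda$, and by property b), $\tilde\lambda$ of a tensor product is computed from the two factors the way spectral scales are — i.e.\ $\tilde\lambda_{\omega\otimes\phi}$ is the decreasing rearrangement of the pointwise products $\tilde\lambda_\omega(s)\,p_i$ where $p_i$ are the eigenvalues of the density matrix of $\phi$. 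Combining these, embezzlement forces
\begin{align}
  \text{d.r.}\bigl(\tilde\lambda_\omega \otimes \phi\bigr) = \text{d.r.}\bigl(\tilde\lambda_\omega\otimes\psi\bigr)
\end{align}
for \emph{all} pairs $\phi,\psi$ of states on every $M_n$, where $\text{d.r.}$ denotes decreasing rearrangement on $(0,\infty)$.

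Next I would extract the functional equation. Taking $\phi = \tfrac1n(1,\dots,1)$ uniform and $\psi = (1,0,\dots,0)$ pure, the left side is the decreasing rearrangement of $n$ shifted copies of $\tfrac1n\tilde\lambda_\omega$, which is exactly $t\mapsto \tfrac1n\tilde\lambda_\omega(t/n)$; the right side is just $\tilde\lambda_\omega$ itself. Hence
\begin{align}\label{eq:scaling_eqn}
  \tilde\lambda_\omega(t) = \tfrac1n\,\tilde\lambda_\omega(t/n)\qquad\text{for all }n\in\NN,\ t>0.
\end{align}
More generally, comparing a uniform distribution on $n$ outcomes with a uniform distribution on $m$ outcomes gives $\tfrac1n\tilde\lambda_\omega(t/n) = \tfrac1m\tilde\lambda_\omega(t/m)$, i.e.\ $s\,\tilde\lambda_\omega(t)$ is invariant under $t\mapsto ts$, $\tilde\lambda_\omega \mapsto \tilde\lambda_\omega/s$ along the multiplicative semigroup generated by $\NN$ — which is dense in $\RR^+$. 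A right-continuous decreasing function satisfying \eqref{eq:scaling_eqn} for all $n$, together with the analogous relations from non-uniform $\phi$, must be a multiple of $1/t$: writing $g(t) = t\,\tilde\lambda_\omega(t)$, the relation says $g(t) = g(t/n)$ for all $n$, so $g$ is constant on a dense set of scalings of each point; right-continuity and monotonicity of $\tilde\lambda_\omega$ then pin $g$ down to a constant, giving $\tilde\lambda_\omega(t)\propto 1/t$. (If one only has the relations from uniform $\phi$ one gets invariance under the dense subgroup generated by integer ratios; the non-uniform comparisons upgrade this and also rule out pathological measurable-but-not-continuous solutions.)

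The step I expect to be the main obstacle is the passage from "embezzling" to the clean identity $\tilde\lambda_{\omega\otimes\phi} = \tilde\lambda_{\omega\otimes\psi}$ for \emph{fixed} $n$, because embezzlement is an approximate, $\eps$-dependent statement and the unitary lives in $\M$ (equivalently in $M_n(\M)$ after regrouping), not obviously respecting the grading needed for property b). Concretely one must check: (i) the norm-approximation $\norm{u(\omega\otimes\phi)u^* - \omega\otimes\psi}<\eps$ for all $\eps$ does imply $\tilde\lambda_{\omega\otimes\phi} = \tilde\lambda_{\omega\otimes\psi}$, which is where property a) (continuity of $\tilde\lambda$ under approximate unitary equivalence) is used and must be invoked in the right form; and (ii) that $\tilde\lambda$ of $\omega\otimes\phi$, viewed as a state on $\M\otimes M_n = M_n(\M)$, is genuinely the decreasing rearrangement of the products $p_i\tilde\lambda_\omega$ — this is exactly the content of property b), so once b) is stated precisely this is immediate, but one should be careful that b) is formulated for states on $M_n(\M)$ of the product form and that the eigenvalue list of $\phi$ enters with the correct multiplicities. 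Everything after that — deriving and solving \eqref{eq:scaling_eqn} — is elementary real analysis.
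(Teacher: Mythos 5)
Your proposal is correct and follows essentially the same route as the paper's proof of Prop.~\ref{thm:spectral_charac}: use property a) to turn approximate unitary equivalence of $\omega\ox\bra1\placeholder\ket1$ and $\omega\ox\tfrac1n\!\tr$ into equality of $\tilde\lambda$, apply the two special cases of \eqref{eq:l_amplification} to get $\tilde\lambda_\omega(t)=\tfrac1n\tilde\lambda_\omega(t/n)$, extend to all positive rationals, and conclude by right-continuity. The only cosmetic difference is that you phrase property b) for general product states via decreasing rearrangements, whereas the paper (and your actual argument) only needs the uniform and pure cases, so the "main obstacle" you flag is already discharged by the hypotheses as stated.
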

Since the spectral scale on a semifinite von Neumann algebra is guaranteed to be integrable, it follows that embezzling states cannot exist on semifinite von Neumann algebras.

\subsection*{Embezzlement and the flow of weights}

We mentioned above the crucial role that Nielsen's theorem plays in entanglement theory because it allows us to study entanglement via purely classical majorization theory.
To study embezzlement in general von Neumann algebras, we use the so-called \emph{flow of weights} \cite{takesaki2}.
The flow of weights is a classical dynamical system $(X,\mu,F)$ that can be associated with a von Neumann algebra $\M$ in a canonical way. It consists of a standard Borel space $(X,\mu)$ and a flow $F=(F_t)_{t\in\RR}$, i.e., a one-parameter group of non-singular Borel transformations.
The flow of weights captures important information about $\M$. Most importantly, it is ergodic if and only if $\M$ is a factor.
Haagerup and St\o rmer found a canonical way to associate probability measures $P_\omega$ on $X$ to normal states $\omega$ on $\M$ \cite{haagerup1990equivalence}.
In the case of semifinite factors, the flow of weights and the map $\omega\mapsto P_\omega$ can be seen as a generalization of majorization theory: the flow of weights simply yields dilations on $\RR_+$ and $P_\omega$ is a generalization of the spectrum, similar to spectral scales (cf.\ \cref{sec:FOW_semifinite}). 
The crucial property of the flow of weights for us is that two normal states $\omega_1,\omega_2$ on $\M$ are approximately unitarily equivalent if and only if $P_{\omega_1}=P_{\omega_2}$. In fact, it was shown by Haagerup and St\o rmer \cite{haagerup1990equivalence} that (see \cref{thm:distance_of_unitary_orbits})
\begin{align}
\inf_{u\in\U(\M)} \norm{u\omega_1 u^*-\omega_2} = \norm{P_{\omega_1}-P_{\omega_2}},
\end{align}
where the distance of probability measures on $X$ is measured with the $L^1$-distance of their densities with respect to $\mu$.
This property is crucial for all of our results because it allows us to reduce the problem of studying embezzlement to studying the classical dynamical systems $(X,\mu,F)$. 
The interesting objects for us on von Neumann algebras are unitary orbits of embezzling states. Natural objects to consider on classical dynamical systems are stationary (invariant) probability measures. As we will see shortly, the two are in one-to-one correspondence.

\subsection*{Quantification of embezzlement}

Our main result relates the classification of von Neumann algebras to how well a given factor performs at the task of embezzlement.
To quantify how capable a given state $\omega$ is at embezzling, we define 
\begin{equation}\label{introeq:kappa_def}
    \kappa(\omega) =  \sup_{\psi,\,\phi} \inf_{u} \norm{\omega\ox \psi - u(\omega\ox\phi)u^*},
\end{equation}
where the supremum is over all states $\psi,\phi$ on $M_n$ (and over all $n\in\NN$) and where the infimum is over all unitaries $u\in M_n(\M)$.
The quantifier $\kappa(\omega)$ measures the worst-case performance of $\omega$ in embezzling finite-dimensional quantum states. 
Clearly, an embezzling state fulfills $\kappa(\omega)=0$. 
Moreover, for any factor $\M$ we introduce the algebraic invariants
\begin{equation}
	\kappa_{\textit{min}}(\M) = \inf_{\omega\in S_*(\M)} \kappa(\omega) \qandq\kappa_{\textit{max}}(\M) = \sup_{\omega\in S_*(\M)} \kappa(\omega),
\end{equation}
which measure the best and worst worst-case performance of all normal states on a factor $\M$, respectively.
Our main technical tool now allows us to connect $\kappa(\omega)$ with the flow of weights: 
\begin{introtheorem}[cf.~Thm.~\ref{thm:kappa}]
    $\kappa(\omega)$ measures precisely how much $P_\omega$ is invariant under the flow of weights:
\begin{align}\label{eq:intro_kappa}
	\kappa(\omega) = \sup_{t\in\RR} \norm{F_t(P_\omega)- P_\omega}.
\end{align}
\end{introtheorem}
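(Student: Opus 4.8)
The plan is to reduce the statement to a direct computation using the Haagerup--Størmer description of approximate unitary orbits, $\inf_{u\in\U(\M)}\norm{u\omega_1 u^*-\omega_2}=\norm{P_{\omega_1}-P_{\omega_2}}$, applied to the amplified algebra $M_n(\M)=\M\otimes M_n$. First I would recall how the flow of weights and the map $\omega\mapsto P_\omega$ behave under tensoring with a matrix algebra: since the flow of weights of $M_n(\M)$ is canonically identified with that of $\M$ (matrix amplification does not change the core/flow of weights), and since $M_n$ is type $\I$ (its contribution to the flow is trivial), one has $P_{\omega\otimes\phi}$ expressed in terms of $P_\omega$ and the eigenvalue list of $\phi$. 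Concretely, on the semifinite picture where $P_\omega$ is (the pushforward to $X$ of) the spectral scale, $\omega\otimes\phi$ has spectral scale obtained by scaling: if $\phi$ has eigenvalues $q_1,\dots,q_n$, then the measure $P_{\omega\otimes\phi}$ is the average $\sum_k q_k\,(F_{\log q_k})_*P_\omega$ — i.e. tensoring with a rank-one-ish block $q_k$ dilates the spectral scale by $q_k$, which is exactly the action of $F_{\log q_k}$ on $X$. This is the key structural input and I would state it as a lemma (citing \cref{sec:FOW_semifinite} and \cite{haagerup1990equivalence}).

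With that in hand, the proof of \eqref{eq:intro_kappa} is a two-sided estimate. For the lower bound $\kappa(\omega)\ge \sup_t\norm{F_t(P_\omega)-P_\omega}$: fix $t\in\RR$ and choose $\psi,\phi$ on $M_2$ to be (approximately) the pure states $\psi=\ketbra{1}{1}$, $\phi=\ketbra{2}{2}$ after a reshuffle — more carefully, pick $\phi,\psi$ rank-one so that $P_{\omega\otimes\phi}=(F_{\log p})_*P_\omega$ and $P_{\omega\otimes\psi}=(F_{\log q})_*P_\omega$ with $\log p-\log q$ arbitrarily close to $t$ (possible since $\phi,\psi$ range over all states on all $M_n$, so the ratio of a chosen eigenvalue is unconstrained; one uses $M_n$ with large $n$ to make one eigenvalue as small as desired). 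Then by Haagerup--Størmer applied in $M_n(\M)$, $\inf_u\norm{\omega\otimes\psi-u(\omega\otimes\phi)u^*}=\norm{P_{\omega\otimes\psi}-P_{\omega\otimes\phi}}=\norm{(F_{\log q})_*P_\omega-(F_{\log p})_*P_\omega}$, and since the flow preserves the measure class and $F$ is a group, this equals $\norm{F_{\log q-\log p}(P_\omega)-P_\omega}$, which can be made $\ge\norm{F_t(P_\omega)-P_\omega}-\delta$. Taking the supremum over $\psi,\phi$ gives the bound.

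For the upper bound $\kappa(\omega)\le\sup_t\norm{F_t(P_\omega)-P_\omega}$: take arbitrary states $\phi,\psi$ on $M_n$ with eigenvalue lists $(p_k)$ and $(q_k)$. By the amplification lemma, $P_{\omega\otimes\phi}=\sum_k p_k\,(F_{\log p_k})_*P_\omega$ and $P_{\omega\otimes\psi}=\sum_k q_k\,(F_{\log q_k})_*P_\omega$. Using Haagerup--Størmer once more, $\inf_u\norm{\omega\otimes\psi-u(\omega\otimes\phi)u^*}=\norm{P_{\omega\otimes\psi}-P_{\omega\otimes\phi}}$. I would bound this $L^1$-distance by the triangle inequality against the common reference point $P_\omega$ (or a convex combination thereof), writing each $(F_{s})_*P_\omega-P_\omega$ and estimating its norm by $\sup_t\norm{F_t(P_\omega)-P_\omega}=:c$; since the coefficients $p_k,q_k$ are probability vectors, the mixing cannot increase the distance beyond $c$. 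One has to be a little careful that the two convex combinations are indexed differently, but since both are averages of translates of $P_\omega$ by the flow, and $\norm{(F_s)_*P_\omega-(F_{s'})_*P_\omega}=\norm{F_{s-s'}(P_\omega)-P_\omega}\le c$ for all $s,s'$, a convexity argument (any average of points pairwise within $c$ of each other, all lying on the orbit of $P_\omega$, stays within $c$ of the orbit — and in fact within $c$ of $P_\omega$ itself after re-centering) gives $\norm{P_{\omega\otimes\psi}-P_{\omega\otimes\phi}}\le c$. Taking the supremum over all $\phi,\psi,n$ yields $\kappa(\omega)\le c$.

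I expect the main obstacle to be making the amplification lemma precise, i.e. nailing down the exact formula for $P_{\omega\otimes\phi}$ in terms of $P_\omega$ and the spectrum of $\phi$ in the general (type III) setting where $P_\omega$ is not literally a spectral scale but a measure on the abstract flow space $X$. The clean statement is that tensoring with a rank-one weight on a type $\I$ factor acts on $P_\omega$ by the flow transformation $F_{\log(\text{eigenvalue})}$, and a general $\phi$ gives the corresponding convex combination; this should follow from the functoriality of the flow of weights under tensor products and the known behavior of $\omega\mapsto P_\omega$ under tensoring with matrix algebras in \cite{haagerup1990equivalence}, but it is the one place where one genuinely needs the structure theory rather than formal manipulation. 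The rest is triangle inequalities and the Haagerup--Størmer orbit-distance formula.
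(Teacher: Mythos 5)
Your overall strategy is exactly the paper's: reduce to the flow of weights via the Haagerup--St\o rmer orbit-distance formula, use the amplification formula $P_{\omega\ox\phi}=\sum_k p_k\,(F_{\log p_k})_*P_\omega$ (this is \cref{lem:dual_state_Mn}, derived from \cref{prop:dual_state}), get the lower bound from states whose spectral measure is a single flow-translate of $P_\omega$, and get the upper bound from a convexity estimate together with $\norm{(F_s)_*P_\omega-(F_{s'})_*P_\omega}=\norm{F_{s-s'}(P_\omega)-P_\omega}$.

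Two steps are shaky as written. First, in the lower bound you cannot take $\phi,\psi$ to be \emph{rank-one} (pure) states: a pure state on $M_n$ has eigenvalue list $(1,0,\dots,0)$, so the amplification formula gives $P_{\omega\ox\phi}=P_\omega$ and the construction degenerates to $0$. The states you actually need are the normalized traces $\pi_m=\frac1m\tr(p_m\,\placeholder)$ on rank-$m$ projections $p_m\in M_n$, whose single nonzero eigenvalue $\frac1m$ has multiplicity $m$, so that $P_{\omega\ox\pi_m}=(F_{-\log m})_*P_\omega$; letting $m,n$ range over $\NN$ makes $\log\frac mn$ dense in $\RR$, which is what your density-plus-continuity argument requires (and is what the paper does). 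Second, in the upper bound, the triangle inequality against the common reference point $P_\omega$ only yields $2c$, and the ``re-centering'' heuristic does not repair this: each convex combination of translates is within $c$ of $P_\omega$, but that again only gives $2c$ for their mutual distance. The estimate that actually closes the argument is the cross-term decomposition $\sum_i p_iA_i-\sum_j q_jB_j=\sum_{ij}p_iq_j(A_i-B_j)$ combined with the pairwise bound $\norm{A_i-B_j}\le c$ that you correctly state; make that the argument and the proof matches the paper's.
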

In \eqref{eq:intro_kappa}, $F_t(P_\omega)$ denotes the probability measure defined by $F_t(P_\omega)(A) = P_\omega(F_{-t}(A))$.
Since the flow is ergodic, if $\M$ is a factor, there can be, at most, one invariant measure corresponding to a single unitary orbit of embezzling states. 
Using this tool, we can now evaluate $\kappa_{\textit{min}}$ and $\kappa_{\textit{max}}$ for the different types of factors. 
On a technical level, this yields our main result:

\begin{introtheorem}[cf.\ Thms.~\ref{thm:min} and \ref{thm:max}]\label{introthm:tablekappa} Let $\M$ be a factor. The invariants $\kappa_{\textit{min}}$ and $\kappa_{\textit{max}}$ take the following values:
\begingroup
\setlength{\tabcolsep}{10pt} 
\renewcommand{\arraystretch}{1.5} 
\begin{equation}
    \centering
    \begin{tabular}{c|ccccc}
    &type $\I$& type $\II$&type $\III_0$&type $\III_\lambda$, $0<\lambda<1$&type $\III_1$\\\hline\hline
    $\kappa_{\textit{min}}(\M)$&$2$&$2$&$\in\{0,2\}$&$0$&$0$\\
    $\kappa_{\textit{max}}(\M)$&$2$&$2$&$2$&$2\frac{1-\sqrt\lambda}{1+\sqrt\lambda}$&$0$
    \end{tabular}
\end{equation}
\endgroup
Type $\III_0$ factors fall into two categories: Either there exists an embezzling state and $\kappa(\omega)<2$ for all faithful states $\omega$, or there exists no embezzling state and $\kappa(\omega)=2$ for all states $\omega$.
\end{introtheorem}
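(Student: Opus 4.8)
The plan is to deduce everything from the identity $\kappa(\omega)=\sup_{t\in\RR}\norm{F_t(P_\omega)-P_\omega}$ (\cref{thm:kappa}) together with the Haagerup--St\o rmer correspondence $\omega\mapsto P_\omega$ of \cite{haagerup1990equivalence}, so that the three invariants are computed entirely on the classical flow $(X,\mu,F)$. First I would recall the shape of the flow of weights in each case: for semifinite factors (types $\I$ and $\II$) it is the translation flow on $X=\RR$ with Lebesgue measure; for type $\III_1$ it is the one-point flow; for type $\III_\lambda$ with $0<\lambda<1$ it is the rotation flow on the circle $X=\RR/T_0\ZZ$ with $T_0=-\log\lambda$ and normalized Lebesgue measure; and for type $\III_0$ it is an aperiodic, non-transitive, ergodic flow. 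I would likewise record, from the spectral-scale description of $P_\omega$, which probability measures on $X$ arise from normal states: in the semifinite case exactly those whose density in the coordinate $s=\log t$ is $s\mapsto\lambda_\omega(e^s)\,e^s$ for a decreasing $\lambda_\omega$, and in the $\III_\lambda$ case exactly those whose circle density $G$ lifts to a function on $\RR$ with $G(s)e^{-s}$ decreasing and $G(s+T_0)e^{-(s+T_0)}=\lambda\,G(s)e^{-s}$. This ``twisted monotonicity'' constraint on the admissible measures is the linchpin, and pinning it down (together with the fact that the uniform measure is realized by a state and is the unique $F$-invariant probability measure on the circle) will be some of the more delicate bookkeeping.

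The semifinite and $\III_1$ rows are then immediate. In type $\III_1$, $X$ is a point, $P_\omega$ is the point mass for every $\omega$, and $F$ acts trivially, so $\kappa(\omega)=0$ for all $\omega$ and $\kappa_{\textit{min}}=\kappa_{\textit{max}}=0$; in particular every state is embezzling. In the semifinite case $F_t(P_\omega)$ is the translate of $P_\omega$ by $t$, and a fixed probability measure on $\RR$ is asymptotically mutually singular with its far translates, so $\norm{F_t(P_\omega)-P_\omega}\to2$ as $t\to\infty$; hence $\kappa(\omega)=2$ for every $\omega$ and $\kappa_{\textit{min}}=\kappa_{\textit{max}}=2$. (For finite type~$\I$, i.e.\ $\M=M_n$, the same value can alternatively be obtained by hand in $M_{nk}$ by comparing $\omega\ox\phi$ for $\phi$ pure with $\omega\ox\psi$ for $\psi$ maximally mixed and letting $k\to\infty$.)

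For type $\III_\lambda$ with $0<\lambda<1$ the uniform measure $\mu$ is $F$-invariant and equals $P_{\omega_0}$ for some state $\omega_0$, so $\kappa(\omega_0)=0$ and $\kappa_{\textit{min}}=0$. For $\kappa_{\textit{max}}$ I would use that $P\mapsto\sup_t\norm{F_t(P)-P}$ is convex on the convex set of admissible circle densities, so its supremum is attained at an extreme point; the extreme points of the twisted-monotone densities are, up to rotation, the sawtooth densities $G(s)=v\,e^{s\bmod T_0}$ with $v=\lambda/(1-\lambda)$ (a single jump per period). A direct computation gives $\norm{F_{\alpha T_0}(G)-G}=2v\,(e^{\alpha T_0}-1)(e^{(1-\alpha)T_0}-1)$ for $0\le\alpha\le 1$, and since $e^{\alpha T_0}\,e^{(1-\alpha)T_0}=\lambda^{-1}$ is independent of $\alpha$, AM--GM forces the worst shift at $\alpha=\tfrac12$, yielding $\kappa_{\textit{max}}=2v\,(e^{T_0/2}-1)^2=2\frac{1-\sqrt\lambda}{1+\sqrt\lambda}$. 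The reduction to extreme points and the identification of these extreme points with the sawtooth densities is the technical heart of this row.

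For type $\III_0$ the dichotomy turns on whether the (aperiodic, non-transitive, ergodic) flow of weights admits a finite invariant measure in the measure class of $\mu$. If it does, normalize it to a probability measure $\nu\sim\mu$; it is realized as $P_{\omega_0}$ for some necessarily faithful state $\omega_0$, which is then embezzling, so $\kappa_{\textit{min}}=0$. For any faithful $\omega$ we have $P_\omega\sim\nu$, so with $g=dP_\omega/d\nu>0$ a.e.\ and $\eps>0$ chosen so that $\nu(\{g\ge\eps\})>\tfrac12$, invariance of $\nu$ gives $\nu(\{g\ge\eps\}\cap F_t\{g\ge\eps\})\ge 2\nu(\{g\ge\eps\})-1>0$ and hence $\int\min(g\circ F_{-t},g)\,d\nu\ge\eps\,(2\nu(\{g\ge\eps\})-1)=:c>0$ uniformly in $t$; since $\norm{F_t(P_\omega)-P_\omega}=2-2\int\min(g\circ F_{-t},g)\,d\nu$, this gives $\kappa(\omega)\le 2-2c<2$. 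If instead no finite invariant measure in the class of $\mu$ exists, then no $P_\omega$ is $F$-invariant --- by ergodicity an invariant $P_\omega\ll\mu$ would be equivalent to $\mu$ --- so $\kappa(\omega)>0$ for every $\omega$, and to upgrade this to $\kappa(\omega)=2$ I would argue by contradiction: if $\sup_t\norm{F_t(P_\omega)-P_\omega}\le 2-\delta$ then $F_t(P_\omega)$ and $P_\omega$ overlap by at least $\delta/2$ for all $t$, and a Ces\`aro-averaging argument --- using the uniform overlap to keep the averaged measures tight and the $L^1$-structure underlying \cite{haagerup1990equivalence} to keep the limit absolutely continuous --- produces an $F$-invariant probability measure equivalent to $\mu$, a contradiction. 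I expect this last point, namely extracting an \emph{absolutely continuous} invariant measure rather than a mere weak-$*$ limit, together with the precise characterization of the range of $\omega\mapsto P_\omega$, to be the main obstacle; the remaining ingredients are the structure theory of flows of weights and the elementary extremal computation above.
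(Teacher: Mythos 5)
Your reduction of everything to the flow of weights via \cref{thm:kappa} is exactly the paper's strategy, and the semifinite and type $\III_1$ columns, as well as $\kappa_{\textit{min}}=0$ for type $\III_\lambda$ with $0<\lambda<1$, are handled essentially as in the paper. For $\kappa_{\textit{max}}$ in type $\III_\lambda$ you take a genuinely different route: the paper obtains the upper bound from $\kappa(\omega)\le\diam(S_*(\M)/\!\sim)$ (\cref{cor:kappa_diameter}) together with the Connes--St\o rmer diameter computation, and the lower bound by exhibiting a state with $t_0$-periodic modular flow whose spectral distribution function is exactly your sawtooth (\cref{thm:type_III_lambda,lem:type_III_lambda}); your extremal analysis is self-contained and yields both bounds at once, but the passage to extreme points needs justification beyond convexity, since $P\mapsto\sup_t\norm{F_t(P)-P}$ is only \emph{lower} semicontinuous and Bauer's maximum principle does not apply. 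What rescues it is the integral representation of the admissible densities as the closed convex hull of orbit-averaged point masses together with weak-$*$ lower semicontinuity of the norm, which is precisely the content of \cref{lem:kappa_via_A} and of \cite[Sec.~6]{haagerup1990equivalence}; one also needs \cref{lem:f} to know the sawtooth is actually realized by a normal state.

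The type $\III_0$ column has two genuine gaps. First, you never establish $\kappa_{\textit{max}}(\M)=2$ in the case where the flow of weights \emph{does} admit an invariant probability measure: there your own overlap estimate shows $\kappa(\omega)<2$ for every faithful $\omega$, so the supremum $2$ can only be approached by states whose spectral measures concentrate on short orbit segments, and a construction is required. The paper does this by passing to the $C^*$-algebra of $\theta$-continuous elements, using ergodicity plus aperiodicity to produce an aperiodic point $x$ (\cref{lem:period}), evaluating the functional on $\chi=\int_{-\infty}^0 e^t\delta_{F_t(x)}\,dt$ to get the value $2$ (\cref{lem:kappa_if_aperiodic}), and transferring back to normal states via \cref{lem:kappa_via_A}. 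Second, the hard implication ``no invariant probability measure $\Rightarrow\kappa(\omega)=2$ for all $\omega$'' is only sketched: a weak-$*$ limit of Ces\`aro averages lives in $(L^\infty)^*$ and is a priori only finitely additive, and the uniform overlap bound $\ge\delta/2$ does not by itself supply the uniform integrability needed to keep the limit countably additive, nonzero, and absolutely continuous; you flag this yourself as unresolved. The paper bypasses it by invoking the Koopman-representation dichotomy of Arano--Isono--Marrakchi in \cref{lem:amine}, which converts $\sup_t\norm{P-F_t(P)}=2$ into $\inf_t\ip{U_t\xi_P}{\xi_P}=0$ and settles both directions of the dichotomy (and its independence of the chosen $P$) at once. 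Until these two points are filled in, the $\III_0$ row of the table is not proved.
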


\cref{introthm:tablekappa} shows that semifinite factors not only do not admit embezzling states (as discussed above based on spectral scales) but maximally fail to do so: Even $\kappa_{\textit{min}}(\M)$ attains the maximal possible value $2$.
The same holds for a certain class of type $\III_0$ factors.
To obtain $\kappa_{\textit{max}}=2$ for the case of III$_0$ factors, we make use of Gelfand theory to reduce the problem to one on aperiodic, topological dynamical systems instead of a measure-theoretic properly ergodic ones. 
For the case of III$_\lambda$ with $\lambda>0$, we provide concrete examples of states that reach the given value of $\kappa_{\textit{max}}$. 
Remarkably, we have a complete dichotomy: Either $\kappa_{\textit{min}}(\M)=0$, in which case there exists an embezzling state, or $\kappa_{\textit{min}}(\M)=2$ meaning that every state maximally fails at the task of embezzlement. In particular, the existence of a single faithful state with $\kappa(\omega)<2$ already guarantees the existence of an embezzling state.

While semifinite factors do not admit embezzling states, the situation is very different for type $\III_\lambda$ factors with $\lambda>0$. First, every such factor admits an embezzling state, answering question \ref{question1} affirmatively. Second $\kappa_{\textit{max}}(\M)$ monotonically decreases to $0$ as $\lambda$ approaches $1$. Thus, for $\lambda\approx1$, every state is approximately embezzling. In particular, a quantum system is a universal embezzler \emph{if and only if} it is described by a type $\III_1$ factor. This answers questions \ref{question2} and \ref{question3}. 

An interesting observation is that for factors of type $\III$, we can recover their subtype from $\kappa_{\textit{max}}$. Thus, at least in principle, the operational task of embezzlement allows one to obtain Connes' classification of type $\III$ factors. The values taken by $\kappa_{\textit{max}}$ for type $\III$ factors are well-known as the \emph{diameter of the state space} \cite{connes1985diameters}. To define the diameter of the state space, one considers the quotient of the normal state space $S_*(\M)$ modulo approximate unitary equivalence:
\begin{equation}
    \omega\sim\varphi :\iff \forall_{\eps>0}\ \exists_{u\in\U(\M)} : \norm{\omega- u\varphi u^*}<\eps,\qquad\omega,\varphi\in S_*(\M).
\end{equation}
We then have 
\begin{align}\label{eq:kappa_max_diameter}
	\kappa_{\textit{max}}(\M) = \diam(S_*(\M)/\sim) = 2\, \frac{1-\sqrt\lambda}{1+\sqrt\lambda}, \qquad 0\le \lambda\le1,
\end{align}
where the diameter is measured in terms of the induced distance.
We note that the diameter of the state space for type $\II$ factors is $2$ and for type $\I_n$ factors is $2(1-\frac1n)$ \cite{connes1985diameters}. Therefore, $\kappa_{\textit{max}}$ is equal to the diameter unless $\M$ is a finite matrix algebra.

Quite remarkably, even though $\kappa$ is defined only in terms of embezzlement of states on finite-dimensional matrix algebras, it actually bounds the performance for embezzlement on factors of arbitrary type:

\begin{introtheorem}[cf.\ Thm.~\ref{thm:kappa_bound}]\label{introthm:kappa_bound}
    Let $\omega$ be a normal state on a von Neumann algebra $\M$ and $\psi,\phi$ be normal states on a hyperfinite factor $\P$. Then 
    \begin{equation}\label{introeq:kappa_bound}
        \inf_{u\in \U(\M\ox\P)} \norm{u(\omega\ox\psi)u^*-\omega\ox\phi} \le \kappa(\omega).
    \end{equation}
\end{introtheorem}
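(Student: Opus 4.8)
The plan is to reduce the general statement to the already-established identity $\kappa(\omega) = \sup_{t\in\RR}\norm{F_t(P_\omega)-P_\omega}$ (Theorem~\ref{thm:kappa}) by exploiting the fact that a hyperfinite factor $\P$ can be approximated, in a suitable sense, by finite-dimensional matrix algebras, and that tensoring $\M$ with $\P$ does not enlarge the flow of weights in a way that helps. First I would dispose of the case $\kappa(\omega)=2$, where there is nothing to prove since the left-hand side of \eqref{introeq:kappa_bound} is always at most $\norm{\omega\ox\psi}+\norm{\omega\ox\phi}=2$. So assume $\kappa(\omega)<2$; by the dichotomy in \cref{introthm:tablekappa} (or rather the underlying Thm.~\ref{thm:min}), $\omega$ is then approximately unitarily equivalent to an embezzling state, equivalently $P_\omega$ is close to an $F$-invariant measure. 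The key structural input I would use is the behaviour of the flow of weights under tensor products with hyperfinite factors: the flow of weights of $\M\ox\P$ should be (a quotient/extension of) that of $\M$ in a way that the Haagerup--St\o rmer measure $P_{\omega\ox\psi}$ on $X_{\M\ox\P}$ pushes forward to $P_\omega$ on $X_\M$, with the fibre direction carrying the "matrix-algebra-like" data of $\psi$. Concretely, since $\P$ is hyperfinite it is an increasing limit of finite-dimensional subalgebras, and normal states $\psi,\phi$ on $\P$ can be approximated in norm by states supported on such subalgebras $M_{n_k}\subseteq\P$.

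The main steps, in order, would be: (1) Reduce to $\psi,\phi$ normal states on a common finite-dimensional $M_n\subseteq\P$ by a norm-approximation argument, controlling the error by $\norm{\psi-\psi_k}$ and $\norm{\phi-\phi_k}$ which can be made arbitrarily small; here one uses that conjugation by unitaries and tensoring are norm-contractive, so it suffices to prove the bound with $\psi_k,\phi_k$ and let $k\to\infty$. (2) For $\psi,\phi$ on $M_n$, observe that $\U(\M\ox M_n)=\U(M_n(\M))$ contains in particular the unitaries used in the definition of $\kappa(\omega)$, so $\inf_{u\in\U(M_n(\M))}\norm{u(\omega\ox\psi)u^*-\omega\ox\phi}\le \kappa(\omega)$ holds essentially by definition of $\kappa$ as a supremum over all such $n,\psi,\phi$. (3) Finally, since $M_n\subseteq\P$ as a unital subalgebra, every unitary in $M_n(\M)=\M\ox M_n$ is a unitary in $\M\ox\P$, so the infimum over $\U(\M\ox\P)$ is at most the infimum over $\U(\M\ox M_n)$; chaining (1)--(3) gives \eqref{introeq:kappa_bound}.

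The subtle point — and what I expect to be the main obstacle — is step (1): it is not completely obvious that a normal state $\phi$ on a hyperfinite factor $\P$ is norm-approximable by states factoring through finite-dimensional subalgebras, because a generic normal state need not be "localised" even though $\P=\overline{\bigcup_k M_{n_k}}^{\,w}$. One clean way around this is to first approximate $\phi$ weakly by its restrictions/conditional expectations $E_k(\phi)$ onto the $M_{n_k}$, but weak approximation is not enough for a norm bound on the embezzlement error. The right fix is presumably to use that $\psi,\phi$ only enter the left-hand side through $\omega\ox\psi$ and $\omega\ox\phi$, and that one is free to first dilate: pass to $\M\ox\P\ox\P$ or use that $\P\cong\P\ox\P$ for hyperfinite $\P$, embezzling the hard-to-localise part of $\phi$ from a fresh copy of $\P$ using the fact (to be invoked from the semifinite/hyperfinite theory) that $\kappa$ is insensitive to such splittings. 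Alternatively, and more in the spirit of the paper, one re-derives the inequality directly from \eqref{eq:intro_kappa} by checking that $P_{\omega\ox\psi}$ and $P_{\omega\ox\phi}$ have the same pushforward to $X_\M$ and applying the Haagerup--St\o rmer distance formula (Thm.~\ref{thm:distance_of_unitary_orbits}) on $\M\ox\P$ together with the identification $X_{\M\ox\P}\cong X_\M$ for hyperfinite $\P$ (since tensoring with a hyperfinite factor does not change the flow of weights up to the relevant equivalence). That identification of flows of weights under tensoring with a hyperfinite factor is the technical heart; once it is in hand, \eqref{introeq:kappa_bound} follows by the same computation that proves Thm.~\ref{thm:kappa}.
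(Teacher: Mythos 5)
Your steps (2)--(3) are fine, and you have correctly located the crux in step (1); but neither of your proposed fixes for that crux works, so there is a genuine gap. First, the norm approximation of a normal state on a hyperfinite factor by states factoring through finite-dimensional subalgebras is exactly the property that characterizes ITPFI factors among hyperfinite ones (\cref{thm:hyperfinite_product_factor}, \cref{rem:normal_factorization}); it fails for the hyperfinite type $\III_0$ factors that are not ITPFI, so the reduction to $M_n\subseteq\P$ cannot be carried out for all hyperfinite $\P$. Second, your fallback identification $X_{\M\ox\P}\cong X_\M$ for hyperfinite $\P$ is false: it holds when $\P$ is a \emph{semifinite} factor (\cref{sec:fow_semifinite}), but not for type $\III$ hyperfinite $\P$ --- e.g.\ $\M=\B(\H)$ and $\P=\R_\oo$ give $\M\ox\P\cong\R_\oo$ with trivial flow of weights, while $X_\M=(0,\oo)$; this is precisely why the paper remarks that equality in \eqref{introeq:kappa_bound} fails in that example. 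Moreover, even where a pushforward $X_{\M\ox\P}\to X_\M$ exists, knowing that $P_{\omega\ox\psi}$ and $P_{\omega\ox\phi}$ push forward to the same measure does not by itself bound their $L^1$-distance by $\kappa(\omega)$.

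The paper closes the gap in two moves that your outline is missing. (i) It first proves the bound (in fact equality) for \emph{every semifinite factor} $\P$, not just matrix algebras (\cref{thm:typeII_mbz}); this is not "by definition of $\kappa$" but uses the formula $(\omega\ox\psi)^\wedge=\int_0^\oo\lambda_\psi(t)\,\hat\omega\circ\theta_{\log\lambda_\psi(t)}\,dt$ from \cref{prop:dual_state} and a convexity estimate, together with \cref{thm:kappa}. (ii) It then runs a martingale argument (\cref{lem:martingale}): one approximates $\P$ by an increasing net of \emph{semifinite} subfactors $\P_\alpha$ with compatible normal conditional expectations satisfying $\norm{\psi-\psi\circ E_\alpha}\to0$ (slice maps onto finite type $\I$ algebras for ITPFI factors; \cite[Prop.~8.3]{haagerup1990equivalence} for type $\III_0$), and uses the Haagerup--St\o rmer continuity $\norm{\hat\mu-\hat\nu}=\lim_\alpha\norm{(\mu_{|\M\ox\P_\alpha})^\wedge-(\nu_{|\M\ox\P_\alpha})^\wedge}$ to pass the bound from $\P_\alpha$ to $\P$. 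If you restrict attention to ITPFI $\P$, your steps (1)--(3) do essentially reproduce the paper's argument; it is the non-ITPFI type $\III_0$ case that forces the detour through general semifinite subfactors.
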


In particular, if $\omega$ is an embezzling state, it may embezzle state transitions between arbitrary states even on (hyperfinite) type $\III$ factors. We suspect that the assumption of hyperfiniteness can be dropped.
In fact, we know that \cref{introthm:kappa_bound} holds for a much larger class of factors $\P$, encompassing all those that are semifinite or type $\III_{0}$.

\subsection*{Embezzlement and infinite tensor products}

Our results about embezzlement in general von Neumann algebras rely on the flow of weights. While elegant and powerful, the flow of weights requires the full machinery of modular theory. It is, therefore, desirable to have a simpler argument to show that universal embezzlers must have type $\III_1$.
We provide such an argument for infinite tensor products of finite type $\I$ (ITPFI) factors in \cref{sec:itpfi}. Our argument is elementary in the sense that it does not rely on modular theory.

ITPFI factors are special cases of \emph{hyperfinite}, also called \emph{approximatly finite dimensional}, von Neumann algebras, which by definition allow for an (ultraweakly) dense filtration by matrix algebras. 
Therefore, the von Neumann algebras found in physics are typically hyperfinite.
Importantly, there are ITPFI factors for every (sub-)type of the classification of factors mentioned above \cite{araki1968factors}. 
Moreover, it is an important result in the classification of von Neumann algebras that every hyperfinite factor (with separable predual) of type $\III$$_\lambda$ with $\lambda>0$ is isomorphic to the respective Powers factor \cite{powers1967uhf} for $\lambda<1$ and the Araki-Woods factor $\R_\infty$ \cite{araki1968factors} for type $\III$$_1$.
Connes showed the cases $0<\lambda<1$ \cite{connes1976injective} while Haagerup proved the case $\lambda=1$ in \cite{haagerup_uniqueness_1987}.
Thus, our direct argument for ITPFI factors covers all hyperfinite factors (with separable predual) apart from those of type $\III$$_0$.

An ITPFI factor $\M=\bigotimes_{j=1}^\oo (\M_j,\rho_j)$ is specified by a sequence of finite type $\I$ factors $\M_j$ with reference states $\rho_j$. 
The type of $\M$ only depends on the asymptotic behavior of the states $(\rho_j)$: modifying or removing any finite number of them results in an algebra isomorphic to $\M$.
Our argument, which we sketch here, relies on the fact that on an ITPFI factor, every normal state $\omega$ may be approximated to arbitrary precision as a tensor product of the form
\begin{align}
    \omega \approx_\eps \omega_{\leq n}\ox \rho_{>n},
\end{align}
where $\rho_{>n} = \otimes_{j>n}^\infty\rho_j$ and $\omega_{\leq n}$ is a state on $\M_{\leq n}:=\otimes_{j=1}^n\M_j$. 
If $\M$ is a universal embezzler, then the states $\rho_{>n}$ must all be embezzling states.  
Hence $\omega_{\leq n}\ox \rho_{>n}$ is (approximately) unitarily equivalent to $\sigma_{\leq n}\ox \rho_{>n}$ for any normal state $\sigma_{\leq n}$ on $\M_{\leq n}$. Consequently, all normal states $\omega$ and $\sigma$ on $\M$ are (approximately) unitarily equivalent. Therefore, the diameter of the state space $\M$ is $0$, which happens if and only if $\M$ has type $\III$$_1$ \cite{connes_homogeneity_1978}.
Conversely, since type $\III$$_1$ factors are properly infinite, we have $\M\cong \M\otimes M_n$. Hence, the fact that the diameter of the state space is $0$ directly implies that III$_1$ factors are universal embezzlers. We thus find:

\begin{introtheorem}[cf.\ Cor.~\ref{cor:itpfi-III1}]
    Let $\M$ be an ITPFI factor. $\M$ is universally embezzling if and only if $\M$ is the unique hyperfinite factor of type $\III_{1}$, i.e., $\M\cong\R_{\infty}$.
\end{introtheorem}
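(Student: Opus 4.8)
The plan is to prove the two directions separately, drawing on the sketch already given in the introduction and the machinery developed earlier in the paper. For the ``only if'' direction, suppose $\M = \bigotimes_{j=1}^\infty(\M_j,\rho_j)$ is a universally embezzling ITPFI factor. First I would establish the approximation lemma: for any normal state $\omega$ on $\M$ and any $\eps>0$ there is an $n$ and a normal state $\omega_{\le n}$ on $\M_{\le n} := \bigotimes_{j=1}^n \M_j$ with $\norm{\omega - \omega_{\le n}\ox\rho_{>n}} < \eps$, where $\rho_{>n} = \bigotimes_{j>n}\rho_j$. This is a standard density statement in the ITPFI setting, following from the fact that finite partial tensor products are ultraweakly dense and states are normal. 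Next, since $\M\cong \M_{>n}$ (removing finitely many tensor factors does not change the isomorphism class), the state $\rho_{>n}$ corresponds to a state on a factor isomorphic to $\M$, hence is itself an embezzling state by the universal embezzling hypothesis; more carefully, one argues that embezzling on $\M_{>n}$ lifts to embezzling on $\M$ by acting trivially on $\M_{\le n}$. Then for any two normal states $\omega,\sigma$ on $\M$, writing both (up to $\eps$) as $\omega_{\le n}\ox\rho_{>n}$ and $\sigma_{\le n}\ox\rho_{>n}$ on a common $n$ (enlarging $n$ as needed), the embezzling property of $\rho_{>n}$ applied with $\phi = \omega_{\le n}$ and $\psi = \sigma_{\le n}$ (viewed as states on the finite matrix algebra $\M_{\le n}$) gives a unitary $u\in\M$ with $\norm{u\omega u^* - \sigma} < 3\eps$. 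Hence $\omega\sim\sigma$ for all normal states, so $\diam(S_*(\M)/\!\sim) = 0$, which by Connes' homogeneity theorem \cite{connes_homogeneity_1978} forces $\M$ to have type $\III_1$. Since $\M$ is a hyperfinite type $\III_1$ factor with separable predual, it is isomorphic to $\R_\infty$ by the Connes--Haagerup uniqueness theorem \cite{haagerup_uniqueness_1987}.

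For the converse, suppose $\M\cong\R_\infty$, so $\M$ is the hyperfinite type $\III_1$ factor. By Connes' homogeneity theorem the diameter of its state space is $0$, i.e., any two normal states are approximately unitarily equivalent. It remains to turn this into the embezzling property: given a normal state $\omega$ on $\M$, $n\in\NN$, $\eps>0$, and states $\phi,\psi$ on $M_n$, I need a unitary $u\in M_n(\M) \cong \M\ox M_n$ with $\norm{u(\omega\ox\phi)u^* - \omega\ox\psi} < \eps$. Since $\M$ is properly infinite (type $\III$), $\M\ox M_n\cong\M$, and under this isomorphism $\omega\ox\phi$ and $\omega\ox\psi$ become two normal states on a copy of $\M$; by the diameter-zero property they are approximately unitarily equivalent, which is exactly the required estimate. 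Therefore every normal state of $\M$ is embezzling, i.e., $\M$ is a universal embezzler. (Alternatively, one could invoke \cref{introthm:tablekappa}, which records $\kappa_{\textit{max}}(\M) = 0$ for type $\III_1$; but the homogeneity-theorem argument keeps this corollary self-contained within the elementary ITPFI discussion of \cref{sec:itpfi}.)

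The main obstacle I anticipate is the bookkeeping in the ``only if'' direction: one must be careful that the unitary implementing the state transition genuinely lives in $\M$ (not merely in some larger algebra), that the embezzling property of $\rho_{>n}$ is applied with the matrix algebra $M_{\dim}$ taken large enough to host both $\omega_{\le n}$ and $\sigma_{\le n}$ simultaneously, and that the three $\eps$-approximations (for $\omega$, for $\sigma$, and the embezzling step) compose with the correct triangle-inequality constants and with a uniform choice of $n$. The identification $\M_{>n}\cong\M$ together with the fact that this isomorphism carries $\rho_{>n}$ to a genuine state on $\M$ — so that ``$\rho_{>n}$ is embezzling'' is a meaningful and correct statement — is the one conceptual point that deserves explicit justification rather than being swept into ``standard.'' Everything else is routine given the approximation lemma and the cited classification results.
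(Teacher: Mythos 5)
Your proposal is correct and follows essentially the same route as the paper: the same truncation/approximation step (\cref{cor:matrixapprox}), the same reduction to embezzlement by the tail state $\varphi_{\le n}^{c}=\rho_{>n}$ yielding $\diam(S_*(\M)/\!\sim)=0$ (\cref{thm:mbz_to_typeIII}), then Connes' homogeneity theorem plus Haagerup's uniqueness for the ``only if'' direction, and the properly-infinite/homogeneity argument for the converse. The bookkeeping issues you flag --- that the unitary must live in $\M\cong M_{d_n}(\M_{\le n}^{c})$ and that universal embezzlement of $\M$ passes to $\M_{\le n}^{c}$ --- are exactly the points the paper addresses in the proof of \cref{thm:mbz_to_typeIII}.
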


Since the unique hyperfinite type $\III_1$ factor is an ITPFI factor, universal bipartite embezzling extends from all pure states to all bipartite mixed states:
\begin{introtheorem}[cf.\ Cor.~\ref{cor:itpfi-III1-mixed}]
    In a bipartite system $(\H,\M,\M')$, where $\M$ (and hence $\M'$) are hyperfinite type $\III_1$ factors, every density operator $\rho$ on $\H$ is embezzling: For every unit vector $\Psi\in\CC^n\ox\CC^n$ and every $\eps>0$, there exist unitaries $u \in \M\ox M_n\ox 1$ and $u'\in\M\ox1\ox M_n$ such that
    \begin{equation}
        \norm{uu' (\rho\ox \kettbra{11}) u^*u'^* - \rho \ox \kettbra\Psi}_1 <\eps.
    \end{equation}
\end{introtheorem}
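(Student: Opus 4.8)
The plan is to deduce the mixed-state statement from the pure-state universal embezzlement property of the unique hyperfinite type $\III_1$ factor, using purification together with the standard form of the algebra. First, I would invoke the previous corollary: since $\M\cong\R_\infty$ is an ITPFI factor and is universally embezzling, every \emph{normal state} on $\M$ (equivalently, every density operator $\rho$ on $\H$, since a hyperfinite type $\III_1$ factor in standard form has all normal states realized as vector states and, more generally, as density operators) is a monopartite embezzling state in the sense defined in the introduction. The bipartite system $(\H,\M,\M')$ is standard because $\M$ is a type $\III$ factor acting on its standard Hilbert space (one may assume this without loss of generality, or pass to it), so Haag duality $\M=\M_B'$ holds by hypothesis and the induced-state reformulation of embezzlement (\cref{thm:partite}) applies.

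The key step is to reduce the mixed bipartite statement to the monopartite statement on $\M$. Given a density operator $\rho$ on $\H$, I would first purify it: there is a unit vector $\Omega\in\H\ox\K$ for some auxiliary Hilbert space $\K$ (which can be taken separable, even finite-dimensional if $\rho$ has finite rank, but in general just separable) such that $\rho = \tr_\K \kettbra\Omega$. The point is that the marginal of $\Omega$ on $\M$ (extended to act as $\M\ox 1$ on $\H\ox\K$) is exactly the normal state $\omega$ on $\M$ induced by $\rho$, i.e.\ $\omega(x) = \tr(\rho\, x)$. Since $\omega$ is a monopartite embezzling state, by \cref{thm:partite} the vector $\Omega$ is a bipartite embezzling state for the system with algebras $\M\ox 1_\K$ and its commutant. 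Concretely, for any $\Psi\in\CC^n\ox\CC^n$ and $\eps>0$ there are unitaries $u\in\M\ox M_n\ox 1$ (acting on $\H\ox\CC^n$, leaving $\K$ alone) and a corresponding unitary in the commutant such that $\norm{u u'(\Omega\ox\ket{11}) - \Omega\ox\Psi}<\eps$. The commutant of $\M\ox 1_\K$ on $\H\ox\K$ is $\M'\ox\B(\K)$, which is larger than $\M'$; the mild technical point to address is that the embezzling unitary on Bob's side can be taken in $\M'\ox M_n$ rather than in $(\M'\ox\B(\K))\ox M_n$. This follows because $\M'$ is itself a hyperfinite type $\III_1$ factor (being the commutant of one in standard form), hence also universally embezzling, so one may run the embezzlement argument symmetrically on $\M'$'s side using the induced state $\omega'$; alternatively one uses that $\Omega$ can be chosen so that its $\M'$-marginal is a faithful normal state and applies \cref{thm:partite} directly.

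Finally, I would translate the vector-norm estimate back to a trace-norm estimate on density operators. Tracing out $\K$ is a contraction for the trace norm, and for unit vectors $\norm{\kettbra\Phi - \kettbra{\Phi'}}_1 \le 2\norm{\Phi-\Phi'}$; combining these with the previous display gives
\begin{equation}
    \norm{uu'(\rho\ox\kettbra{11})u^*u'^* - \rho\ox\kettbra\Psi}_1 \le 2\,\norm{u u'(\Omega\ox\ket{11}) - \Omega\ox\Psi} < 2\eps,
\end{equation}
and rescaling $\eps$ gives the claim. The main obstacle I anticipate is purely bookkeeping: making sure the purification space $\K$ is handled cleanly so that Bob's embezzling unitary genuinely lies in $\M'\ox M_n$ (not in the commutant enlarged by $\B(\K)$), and confirming that ``every density operator $\rho$ on $\H$'' really does correspond to a normal state on $\M$ in the standard representation so that the previous corollary applies verbatim — both are standard facts about standard forms of $\sigma$-finite von Neumann algebras, but they need to be stated carefully rather than assumed.
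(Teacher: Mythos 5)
The step you describe as ``purely bookkeeping'' is in fact the crux of the matter, and neither of your proposed fixes closes it. After purifying $\rho$ to $\Omega\in\H\ox\K$, the commutant of $\M\ox 1_\K$ is $\M'\ox\B(\K)$, so the pair $(\M\ox 1_\K,\,\M'\ox 1_\K)$ is \emph{not} a Haag-dual bipartite system and \cref{thm:partite} does not apply to it. The proof of \cref{thm:partite} (via \cref{prop:std_form_mbz}) produces Bob's unitary as $u'=j(u)$ using the modular conjugation of Alice's algebra, which lands in the full commutant --- here $M_n(\M'\ox\B(\K))$, acting nontrivially on $\K$; tracing out $\K$ then turns conjugation by $u'$ into a non-unitary channel with Kraus operators in $M_n(\M')$, which is strictly weaker than the claimed statement. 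Your first fix --- running monopartite embezzlement independently on $\M$ and on $\M'$ --- only controls the two marginals of $uu'(\rho\ox\kettbra{11})u^*u'^*$ on $M_n(\M)$ and $M_n(\M')$; since $(\M\ox 1_\K)\vee(\M'\ox 1_\K)\neq\B(\H\ox\K)$, matching marginals does not control the global trace-norm distance, and the entire point of the correlated choice $u'=j(u)$ is to control the joint action on the purifying vector, which independently chosen unitaries do not do. Your second fix does not restore Haag duality for the pair you need. Note also that purifying $\omega=\tr(\rho\,\placeholder)|_{\M}$ inside $\H$ itself (where \cref{thm:partite} \emph{does} apply) yields a vector state that differs from $\rho$ as a state on $\B(\H)$, so it proves nothing about $\rho$.

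The paper's proof avoids purification altogether and uses the ITPFI structure of $\R_\oo$ instead: write $\H=\bigotimes_j(\H_j,\Phi_j)$ with $\M=\bigotimes_j(\M_j,\varphi_j)$, take the spectral decomposition $\rho=\sum_{i\le r}p_i\kettbra{\Omega_i}$ (finite rank after a first approximation), and approximate each $\Omega_i$ by $\Omega_i^{(N)}\ox\Phi_{>N}$. The approximant $\rho^{(N)}=\sigma\ox\kettbra{\Phi_{>N}}$ confines all the mixedness to $\H_{\le N}$ and has a \emph{pure} tail $\Phi_{>N}$, which is an embezzling vector for the genuine bipartite system $(\H_{>N},\M_{>N},\M_{>N}')$. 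Taking the embezzling unitaries of the form $1_{\le N}\ox v$ and $1_{\le N}\ox v'$, they commute with everything carrying the mixedness, so the pure-state bipartite theorem applies verbatim to the tail; a trace-norm limit in $N$ finishes the proof. This localization of the mixedness in a finite corner with a pure product remainder is the idea your proposal is missing.
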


\subsection*{Exact embezzlement}
One core motivation for our work is to understand in which sense a single quantum system may serve as a good resource for embezzlement of arbitrary pure quantum states. 
We emphasized in the beginning that \emph{exact} (i.e., error-free) embezzlement is not possible in a tensor product framework. 
We now return to the question of exact embezzlement in the commuting operator framework. 
It has been shown before \cite{cleve_perfect_2017} that for every \emph{fixed state} $\Psi \in \CC^m\ox \CC^m$ it is possible to construct a quantum state $\Omega\in\H$ in a separable Hilbert space $\H$ and commuting unitaries $u,u':\H\rightarrow \H\ox \CC^m\ox\CC^m$ such that
\begin{align}\label{eq:exact-mbz-2}
    u u'\Omega = \Omega\ox\Psi.
\end{align}
The possibility of exact $\Psi$-embezzlement raises the question of whether exact embezzlement for arbitrary states may be possible in general in the commuting operator framework. We can answer this question definitively:
\begin{introtheorem}[cf.\ Cor.~\ref{cor:exact-mbz}]
    There exists a standard bipartite system $(\H,\M,\M')$ that allows for exact embezzlement in the sense that there exist unitaries $u\in M_m(\M)$, $u'\in M_m(\M')$ such that
    \begin{align}
        uu'(\Omega\ox\Phi) = \Omega\ox \Psi,
    \end{align}
    for any two states $\Phi,\Psi\in\CC^m\ox \CC^m$ with full Schmidt rank and for any $m\in\NN$. 
    However, any such bipartite system requires that $\H$ is non-separable. 
\end{introtheorem}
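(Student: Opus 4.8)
The plan is to build the required system as an explicit commuting operator model and then derive the non-separability from the exact-embezzlement property. For the construction, I would start from the known $\Psi$-fixed constructions of \cite{cleve_perfect_2017}: for a fixed full-Schmidt-rank state $\Psi\in\CC^m\ox\CC^m$ one has a separable Hilbert space with commuting unitaries $u,u'$ realizing $uu'\Omega=\Omega\ox\Psi$. The obstacle is that we need a \emph{single} $\Omega$ and \emph{single} pair of unitaries working for \emph{all} $m$ and \emph{all} full-rank $\Phi,\Psi$ simultaneously, with the unitaries living in $M_m(\M)$ and $M_m(\M')$ for a fixed bipartite system $(\H,\M,\M')$ satisfying Haag duality. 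The natural candidate is to take $\M$ to be a standard bipartite system associated with a type $\III_1$ factor, use the fact (from the type $\III_1$ analysis, \cref{introthm:tablekappa} and \cref{introthm:kappa_bound}) that $\kappa(\omega)=0$ for the induced state, and then upgrade the approximate embezzlement ($\eps\to 0$) to exact embezzlement by a compactness/limit argument. Concretely: for each $\eps_k\to 0$ we get unitaries $u_k\in M_m(\M)$, $u'_k\in M_m(\M')$ with $\norm{u_ku'_k(\Omega\ox\Phi)-\Omega\ox\Psi}<\eps_k$; passing to a suitable (weak-$*$/ultraweak) limit point of the $u_k$ should produce exact unitaries, \emph{provided} the Hilbert space is large enough that the limit stays unitary rather than merely an isometry or a contraction. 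This is precisely where non-separability must enter.

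For the limiting step I would work in the von Neumann algebra $M_m(\M)$ (a von Neumann algebra, hence has a predual and the unit ball of $M_m(\M)$ is weak-$*$ compact), extract a weak-$*$ convergent subnet $u_{k_i}\to u$, and similarly $u'_{k_i}\to u'$ in $M_m(\M')$; one checks $u$ is a contraction and the product $uu'$ satisfies $uu'(\Omega\ox\Phi)=\Omega\ox\Psi$ on the relevant vectors, but to get that $u$ is a \emph{unitary} one needs the approximating sequence to converge in a stronger topology or one needs the target algebra to carry enough extra structure (e.g.\ embed into a properly infinite algebra where one can correct an isometry to a unitary). Since III$_1$ factors are properly infinite, $\M\cong\M\ox M_m$, so one can always arrange the correction on the "extra" copy; the key is to do it uniformly in the parameters $\Phi,\Psi$, which requires a single universal Hilbert space carrying all the required ancillary room — and one cannot fit uncountably many such independent corrections into a separable space. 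I expect the hard part to be exactly this: producing \emph{honest} unitaries (not isometries) realizing the exact transition simultaneously for the continuum of full-rank pairs $(\Phi,\Psi)$, while maintaining Haag duality $\M=\M'$ and the standard form.

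For the converse — that \emph{any} bipartite system with exact embezzlement for all $m$ and all full-rank $\Phi,\Psi$ must have non-separable $\H$ — I would argue by contradiction. If $\H$ were separable, then $\M$ would have separable predual, and its flow of weights $(X,\mu,F)$ would be a standard Borel dynamical system. Exact embezzlement forces, via \cref{thm:partite} and the correspondence between embezzling behaviour and invariance under the flow (\cref{introthm:tablekappa}, eq.~\eqref{eq:intro_kappa}), that the induced state $\omega$ is embezzling with $\kappa(\omega)=0$, i.e.\ $P_\omega$ is an $F$-invariant probability measure; moreover exactness should give a strictly stronger rigidity than mere invariance. The contradiction I would aim for: in the separable setting, Haagerup--Størmer's result that $\inf_u\norm{u\omega_1u^*-\omega_2}=\norm{P_{\omega_1}-P_{\omega_2}}$ means the infimum over the \emph{separable} unitary group is an infimum, not a minimum unless $P_{\omega_1}=P_{\omega_2}$ exactly; but exact embezzlement of \emph{every} full-rank pair on $M_m$ would require the unitary orbit of $\omega\ox\Phi$ to literally contain $\omega\ox\Psi$ for all such $\Phi,\Psi$, and a Baire-category / cardinality argument shows a separable unitary group acting on a separable space cannot realize this continuum of exact identifications while keeping $\omega$ fixed. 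Making that last count precise — pinning down which invariant requires uncountable "width" — is the technical crux; I would look to the same mechanism that forces embezzling \emph{families} rather than a single vector in the tensor-product setting (the non-convergence of $\ket{\Omega_n}$, the $\lambda_j\sim 1/j$ tail) and show it obstructs separability in the exact commuting-operator model as well.
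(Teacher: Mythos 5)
There are genuine gaps in both halves of your argument. For the existence part, the weak-$*$ compactness route does not work: multiplication is not jointly weak-$*$ continuous, so from $u_k u_k'(\Omega\ox\Phi)\to\Omega\ox\Psi$ in norm and $u_k\to u$, $u_k'\to u'$ weak-$*$ you cannot conclude $uu'(\Omega\ox\Phi)=\Omega\ox\Psi$; moreover the weak-$*$ limit of unitaries is in general only a contraction, and no "correction on an extra tensor factor" can repair this uniformly, because — as the second half of the theorem itself shows — exact embezzlement is \emph{impossible} on any separable $\H$, so no limiting procedure carried out inside the original (separable) algebra $\M$ can succeed. The paper's mechanism is different: it passes to the Ocneanu ultrapower $\M^\F$ of an algebra carrying a faithful embezzling state. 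The point of \cref{lem:u_in_Q} is that a sequence $(u_n)$ of unitaries witnessing approximate unitary equivalence defines an honest unitary $u_\oo$ in $\M^\F$ (not a weak limit in $\M$), and $u_\oo(\omega^\F\ox\phi)u_\oo^*=\omega^\F\ox\psi$ exactly (\cref{cor:error_free}); the commutant unitary $u'$ is then obtained for free from the modular conjugation of the standard form, $u'=J\up n u J\up n$. Non-separability is built in because the ultrapower of an infinite-dimensional algebra is non-separable.

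For the converse, your Baire-category/cardinality sketch is not a proof: a separable unitary group certainly can have uncountable orbits, and the Haagerup--St\o rmer formula only controls the infimum over the orbit, so "exactness versus approximation" is not by itself an obstruction to separability. The paper's actual argument is spectral and quite short: if $u(\omega\ox\phi)u^*=\omega\ox\psi$ exactly, then conjugating the modular operator by $uj\up n(u)$ shows that every ratio of eigenvalues of $\rho_\psi$ is an \emph{eigenvalue} (not merely a spectral value) of $\Delta_\omega$ (\cref{prop:exact_mbz_new}). Ranging over all faithful $\psi\in S(M_n)$ and all $n$, every $\lambda>0$ is an eigenvalue of $\Delta_\omega$ (\cref{cor:exact-mbz-eigenvalues}), giving an uncountable family of pairwise orthogonal eigenvectors and hence a non-separable $\H$. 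This eigenvalue mechanism — the quantitative sharpening of $\Sp\Delta_\omega=\RR^+$ from "spectrum" to "point spectrum" under exactness — is the missing idea in your outline.
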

The requirement that the initial state $\Phi$ has full Schmidt-rank is necessary because one clearly cannot map a non-faithful state on $\M$ to a faithful state on $\M$ via a unitary operation $u$. Alternatively, $\Omega$ can also be used for exact embezzlement in the sense of \eqref{eq:exact-mbz-2}. 
To construct such exactly embezzling bipartite systems, we can take a system with an embezzling state and pass to the ultrapower.
This technique also allows us to show that the spectrum of the modular operator of any embezzling state $\omega$ must be all of $\RR^+$, which immediately implies that universal embezzlers are of type $\III_{1}$. The reason why an exactly embezzling state cannot be realized on a separable Hilbert space is that the modular operator $\Delta_\omega$ of such a state must have every $\lambda>0$ as an eigenvalue.

\subsection*{Quantum fields as universal embezzlers}

The results on ITPFI factors already show that infinite spin systems may serve as universal embezzlers. 
Besides statistical mechanics, the arena of physics where type $\III$ factors appear most naturally is relativistic quantum field theory. From the point of view of operator algebras, a quantum field theory may be viewed as a local net of observable algebras that associates von Neumann algebras $\M(\O)$ to (open) subsets $\mc O$ of spacetime. The algebras all act jointly on a Hilbert space $\H$ with a common cyclic separating vector $\Omega\in\H$ representing the vacuum. The map $\mc O\mapsto \M(\O)$ must, of course, fulfill certain consistency conditions imposed by relativity; see \cref{sec:qft} for an overview and \cite{haag_local_1996} for a thorough introduction. 

We may interpret a unitary operator $u\in \M(\O)$ as a unitary operation that may be enacted by an agent having control over spacetime region $\mc O$. 
Suppose Alice controls $\mc O$ and Bob controls the causal complement $\mc O'$. If Haag duality holds, namely $\M(\O')=\M(\O)'$, we may thus interpret $(\H,\M(\O),\M(\O'))$ as a (standard) bipartite system and ask whether the vacuum state $\Omega$ (or any other state) is an embezzling state. 
According to the results summarized above, this amounts to deciding the type of the algebra $\M(\O)$. 
As we discuss in more detail in \cref{sec:qft}, it has been found under very general assumptions that the local algebras $\M(\O)$ have type $\III$, and, in fact, subtype $\III_1$.
Succinctly:
\begin{center}\it
Relativistic quantum fields are universal embezzlers.
\end{center} 
Besides giving an operational interpretation to the diverging entanglement (fluctuations) in relativistic quantum fields, this result also provides a simple explanation for the classic result that the vacuum of relativistic quantum fields allows for a maximal violation of Bell inequalities \cite{summers1985vacuum}: Alice and Bob can simply embezzle a perfect Bell state $\frac{1}{\sqrt 2}(\ket{11} +\ket{22})$ and subsequently perform a standard Bell test.
Indeed, we can establish a quantitative link between the degree of violation of the CHSH inequality as measured by the correlation coefficient $\beta$ (see \cref{subsec:bell}) and our embezzlement quantifier $\kappa(\omega)$:
\begin{introtheorem}[cf.\ Prop.~\ref{prop:beta}]  Consider a standard bipartite system $(\H,\M,\M')$ with state $\Omega\in \H$ and marginal $\omega\in S_*(\M)$.
Then
\begin{align}
    \beta(\Omega;\M,\M') \geq 2\sqrt 2 - 8\sqrt{\kappa(\omega)}.
\end{align}
\end{introtheorem}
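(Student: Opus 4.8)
The plan is to reduce the problem to the corresponding statement for a maximally entangled state on a finite-dimensional system, using embezzlement to transfer a good Bell state onto the system at the cost of a small, controlled perturbation. First I would recall that the CHSH correlation coefficient $\beta(\Omega;\M,\M')$ is, by definition, the supremum over local dichotomic observables $a_1,a_2\in\M$, $b_1,b_2\in\M'$ with $\norm{a_i},\norm{b_j}\le 1$ of $\langle\Omega, (a_1(b_1+b_2)+a_2(b_1-b_2))\Omega\rangle$, and that it is well-known (Tsirelson) that a maximally entangled state $\Psi=\frac{1}{\sqrt2}(\ket{11}+\ket{22})\in\CC^2\ox\CC^2$ together with the standard optimal measurement settings realizes $\beta=2\sqrt2$ exactly. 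The idea is that since $\omega$ embezzles, the state $\Omega\ox\ket{11}$ on $\M\ox M_2\ox\M'\ox M_2$ can be brought, by local unitaries $u_A\in\M\ox M_2\ox 1$, $u_B\in 1\ox\M'\ox M_2$ (using \cref{thm:partite} to move between the bipartite picture and the one-sided picture), to within $\sqrt{\kappa(\omega)}$ in norm of $\Omega\ox\Psi$ — here the square root enters because $\kappa$ is phrased in terms of states (trace norm on density operators) while $\beta$ is evaluated on vectors, and passing from a trace-norm bound $\delta$ on marginals to a vector-norm bound uses the standard purification/Fuchs–van de Graaf-type estimate $\norm{\xi-\eta}\le\sqrt{2\delta}$ after an appropriate choice of purification, giving a prefactor that I will absorb into the stated constant $8$.

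The key steps, in order, are: (1) Fix the optimal CHSH operators $A_1,A_2\in M_2$ for Alice's ancilla and $B_1,B_2\in M_2$ for Bob's ancilla realizing $\langle\Psi,(A_1(B_1+B_2)+A_2(B_1-B_2))\Psi\rangle=2\sqrt2$; these act on the $M_2\ox M_2$ tensor factor only. (2) Use that $\omega$ is embezzling (equivalently $\kappa(\omega)$ is small, though here $\kappa(\omega)$ is an arbitrary value and we keep it as a parameter) to find, for any $\eta>0$, unitaries $u\in\M\ox M_2$, $u'\in\M'\ox M_2$ such that $\norm{uu'(\Omega\ox\ket{11})-\Omega\ox\Psi}\le\sqrt{2\kappa(\omega)}+\eta$; this follows from the definition of $\kappa$ applied to the pair of states $(\ketbra{11}{11},\ketbra{\Psi}{\Psi})$ on $M_4$, together with the passage from the state formulation to the vector formulation via a suitable purification (and \cref{thm:partite} to distribute the unitary across the two parties). (3) Define Alice's CHSH operators on the full system as $a_i:=u^*(1_\M\ox A_i)u\in\M\ox M_2\subseteq M_2(\M)$ and Bob's as $b_j:=u'^*(1_{\M'}\ox B_j)u'\in\M'\ox M_2$; these are dichotomic, have norm $\le1$, and commute across the bipartition because $u$ and $u'$ act on commuting algebras. (4) Estimate
\begin{align*}
\beta(\Omega\ox\ket{11};M_2(\M),M_2(\M'))
&\ge \big\langle uu'(\Omega\ox\ket{11}),\,(1\ox C)\,uu'(\Omega\ox\ket{11})\big\rangle\\
&\ge \langle\Omega\ox\Psi,\,(1\ox C)\,\Omega\ox\Psi\rangle - 2\norm{C}\,\big(\sqrt{2\kappa(\omega)}+\eta\big)\\
&= 2\sqrt2 - 4\sqrt2\,\big(\sqrt{2\kappa(\omega)}+\eta\big),
\end{align*}
where $C=A_1(B_1+B_2)+A_2(B_1-B_2)$ has $\norm{C}\le 2\sqrt2$; letting $\eta\to0$ and noting $4\sqrt2\cdot\sqrt2=8$ yields $\beta\ge 2\sqrt2-8\sqrt{\kappa(\omega)}$. (5) Finally, observe that adjoining an ancilla does not change $\beta$: $\beta(\Omega;\M,\M')=\beta(\Omega\ox\ket{11};M_2(\M),M_2(\M'))$, since any CHSH data on the tensored system can be simulated back on the original one by restriction (and conversely tensoring with the identity on the ancilla is allowed). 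This gives the claimed inequality.

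The main obstacle I anticipate is step (2), specifically getting the square-root dependence on $\kappa(\omega)$ with a clean constant. The definition of $\kappa(\omega)$ controls $\inf_u\norm{u(\omega\ox\phi)u^*-\omega\ox\psi}$ in the \emph{state} (predual) norm, and one must convert this to a bound on the distance of \emph{vectors} in $\H\ox\CC^n\ox\CC^n$ that are jointly cyclic/separating representatives. This requires choosing the purifying vector carefully — one uses that in a standard form, two close states are implemented by two close vectors in the natural cone, with the vector distance bounded by the square root of (twice) the state distance; this is the source of both the $\sqrt{\,\cdot\,}$ and the factor $2$ under the root. A secondary, more bookkeeping-type concern is making sure that the unitary achieving the near-optimal $\kappa$ can genuinely be split as $u_Au_B$ with $u_A,u_B$ in the correct one-sided algebras; this is exactly what \cref{thm:partite} (the equivalence of bipartite and monopartite embezzlement) is for, so it should go through, but one has to track the tensor-factor placement of the ancillas $M_n$ on Alice's versus Bob's side consistently. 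Everything else — the Tsirelson value, the norm bound on $C$, and the ancilla-invariance of $\beta$ — is routine.
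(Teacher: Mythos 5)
Your proposal is correct and follows essentially the same route as the paper: the paper packages your steps (2)–(4) as a general continuity estimate $\beta(\omega)\ge\beta(\omega')-8\sqrt{\norm{\omega-\omega'}}$ (obtained from the positive-cone bound $\norm{\Omega_\psi-\Omega_\phi}^2\le\norm{\psi-\phi}$ with $u'=j(u)$) applied to $\omega'=u(\omega\ox\tfrac12\tr)u^*$, whose purification $\Omega_\omega\ox\Psi$ carries a maximally entangled qubit pair, and proves your step (5) as a separate lemma. The only cosmetic difference is that your constant $8$ arises as $2\cdot 2\sqrt2\cdot\sqrt2$ while the paper's arises as $2\cdot 4$ from the crude bound $\norm{C}\le 4$; both bookkeepings are valid.
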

In particular, whenever $\kappa(\omega)<0.01$, we find that Alice and Bob can use $\Omega$ to violate a Bell inequality. 
By \eqref{eq:kappa_max_diameter}, the embezzlement quantifier $\kappa(\omega)$ is bounded by $2(1-\lambda^{1/2})/(1+\lambda^{1/2})$ for states $\omega$ on a type $\III_\lambda$ factor. 
Thus, when $\M$ is a type $\III_\lambda$ factor with $0.99\le \lambda\le1$, every pure bipartite state is guaranteed to violate a Bell inequality.

As a cautious remark, we mention that the operational interpretation via embezzlement needs to be taken with a grain of salt as the status of "local operations" in quantum field theory needs further clarification (see \cite{fewster2020local_measurements} and reference therein). Specifically, it is not clear which unitaries in the local algebras $\M(\O)$ can serve as viable operations localized in $\O$.

Besides the bipartite setting, the monopartite interpretation of the local observables algebras $\M(\O)$ as universal embezzlers reveals that all states of any (locally) coupled quantum system (at least if it is hyperfinite or semifinite) can be locally prepared up to arbitrary precision (cf. \cref{thm:kappa_bound,thm:typeII_mbz}) -- an observation that is in accordance with previous findings concerning the local preparability of states in relativistic quantum field theory \cite{buchholz1986on_noether,werner1987local_prep,summers1990on_independence}.

\section{Conclusion and outlook}
In this work, we have comprehensively discussed the problem of embezzlement of entanglement in the setting of von Neumann algebras. Our results establish a close connection between quantifiers of embezzlement and Connes' classification of type $\III$ factors. 
In particular, we show that embezzling states and even universal embezzlers exist -- both as mathematical objects and in the form of relativistic quantum fields. In the remainder of this section, we make some additional remarks and conclusions. 

An immediate question that comes to mind is the connection between embezzling states, as discussed in this work, with embezzling families. While we plan to present the details in future work \cite{embezzling_families}, we here briefly mention some results that may be obtained in this regard:

We can show that one can interpret the van Dam-Hayden embezzling family $\Omega_{2^k}$ (see \eqref{eq:vandam}) as a family of states on $M_2^{\otimes k}$, i.e., on chains of $k$ spin-$\frac12$ particles, which converge to the unique tracial state on the resulting UHF algebra $M_2^{\ox\oo}$. 
Thus, even though we start with an embezzling family and we can take a well-defined limit, we obtain a type $\II_1$ factor (after closing); hence, the resulting state \emph{cannot} be an embezzling state.  

Conversely, however, if $\M$ is a hyperfinite factor with a dense increasing family $\{\M_k\}_k$ of finite type $\I$ factors, and $\omega\in S_*(\M)$ is a (monopartite) embezzling state, then the restrictions $\omega_k$ to $\M_k$ define a monopartite embezzling family and their purifications $\Omega_k$ yield bipartite embezzling families. Therefore, there are embezzling families that lead to type $\III$ factors. These families can be characterized through a consistency condition. 
Moreover, whenever $(\M,\omega)$ arises as an inductive limit of finite-dimensional matrix algebras, we naturally obtain embezzling families. In particular, this may be related to the construction of quantum field theories via scaling limits \cite{stottmeister2020oar,morinelli2021scaling_limits_wavelets,osborne2023cft_lattice_fermions}. This suggests that embezzling arbitrary states requires operations on arbitrarily small length and large energy scales.
From a practical point of view, the latter is, of course, infeasible. But, it poses the question of to what extent embezzlement could be quantified in terms of the energy densities at one's disposal.
More importantly, it has been hypothesized in various forms that in a quantum theory of gravity, there must exist a minimal length scale \cite{hossenfelder_minimal_2013}. This would seem to break the possibility of embezzlement in our sense. 
Indeed, recently, it was argued that in the presence of gravity, local observable algebras may be of type $\II$ instead of type $\III$ \cite{witten2022crossed,chandrasekaran2023dsobs,longo2022note,chandrasekaran2023adscft}. If true, this would rule out the possibility of having quantum fields as (even non-universal) embezzlers. 
Thus, the absence of physical embezzlers may be a decisive property of quantum gravity. However, as mentioned above, drawing such a conclusion would also require further insights into the structure of admissible local operations in quantum field theory.

An interesting result in quantum information theory is the super-additivity of quantum and classical capacities of certain quantum channels \cite{smith_quantum_2008,hastings_superadditivity_2009}, the latter being equivalent to a range of super-additivity phenomena in quantum information theory \cite{shor_equivalence_2004}. Let us mention that embezzling states show a super-additivity effect, too: It is possible to have two algebras $\M_1$ and $\M_2$ with states $\omega_1$ and $\omega_2$ so that $\kappa(\omega_i) = \diam(S_*(\M_i)/\!\sim)$, i.e., the states $\omega_i$ perform as bad as possible on the respective algebras in terms of embezzlement, but nevertheless $\kappa(\omega_1\ox\omega_2) = 0$, i.e.\ $\omega_1\otimes \omega_2$ is an embezzling state. 
To see this, choose $\M_i$ as type $\III_{\lambda_i}$ ITPFI factors such that $\frac{\log(\lambda_1)}{\log(\lambda_2)}\notin\QQ$. It is well-known that in this case $\M_1\ox\M_2$ has type $\III_1$ and hence $\kappa(\omega_1\ox\omega_2)=0$ \cite{araki1968factors}. 
It is even possible to find type $\III_0$ factors such that their tensor square is a type $\III_1$ factor \cite[Cor.~3.3.5]{connes1973classIII}.

\paragraph{Acknowledgements}
We would like to thank Marius Junge, Roberto Longo, Yoh Tanimoto, and Rainer Verch for useful discussions. We thank Amine Marrakchi for pointing out \cref{lem:amine}  to us, which allowed us to complete \cref{thm:min}. We thank Stefaan Vaes for sharing his insights on Mathoverflow as well as the construction in \cref{rem:stefaan}.
We thank Carsten Voelkmann for carefully reading an earlier version of this manuscript.
LvL and AS have been funded by the MWK Lower Saxony via the Stay Inspired Program (Grant ID: 15-76251-2-Stay-9/22-16583/2022).

\subsection*{Notation and standing conventions} 

Inner products are linear in the second entry.
The standard basis vectors of $\CC^n$ are denoted $\ket i$, $i=1,...,n$.
The product basis in $\CC^n\ox\CC^n$ will be written as $\ket{ij}=\ket i \ox\ket j$.
Positive cones of ordered vector spaces $(E,\ge)$ will be denoted $E^+$. In particular, $\RR^+=[0,\infty)$.
The unitary group of a von Neumann algebra $\M$ is denoted $\U(\M)$, the normal state space is denoted $S_*(\M)$, and the center is denoted $Z(\M)$.
The support projection of a normal state $\omega$ on a von Neumann algebra $\M$ is denoted $s(\omega)$.
The set of projections in $\M$ is denoted $\proj(\M)$.
If $h$ is a self-adjoint operator, we define the (possibly unbounded) operator $h^{-1}$ as the pseudoinverse.\footnote{Explicitly, if $p$ is the spectral measure of $h$, we define $h^{-1} = \int \lambda^+ dp(\lambda)$ where $x^+ = 1/x$ for $x>0$ and $0^+=0$.}
If $\M\subset\B(\H)$ and $\Omega\in\H$ is a vector, then $[\M\Omega]$ denotes the closure of the subspace spanned by the vectors $\M\Omega$ or the orthogonal projection onto this subspace, depending on context. 
If $\M$ acts on $\H$, matrices $[x_{ij}]\in M_{n,m}(\M)$ are identified with operators $x:\H\ox\CC^m\to\H\ox\CC^n$ via $x\Psi\ox\ket j = \sum_i x_{ij}\Psi\ox\ket i$.
If $x\in M_{n,m}(\M)$ and $\omega\in S_*(M_m(\M))$ then $x\omega x^*$ denotes the normal positive linear functional defined by $(x\omega x^*)(y) = \omega(x^*yx)$, $y\in M_m(\M)$.
We denote the von Neumann tensor product of two von Neumann algebras $\M,\N$ by $\M\ox\N$.

\section{Preliminaries}\label{sec:preliminaries}

We briefly recall the basics of von Neumann algebras and give an overview of modular theory, crossed products, and spectral scales (see, for example, \cite{bratteli1987oa1,takesaki1,takesaki2,takesaki3} for further details).
We hope that this makes our work more accessible to readers from the quantum information community.

\subsection{Hilbert spaces, von Neumann algebras, and normal states}
\label{subsec:pre:hilbert-spaces}
All Hilbert spaces are assumed complex, and we use the convention that the inner product $\ip\placeholder\placeholder$ is linear in the second entry.
If $\H$ is a Hilbert space, $\mc T(\H)$ denotes the trace class and $\B(\H)$ denotes the algebra of bounded operators on $\H$. $\B(\H)$ is equipped with the operator norm, the obvious product, and the adjoint operation $x\mapsto x^*$.
Apart from the norm topology, it also carries several operator topologies.
The weak and strong operator topologies are the topologies generated by the families of functions $\{x\mapsto\ip\Psi{x\Phi}:\Psi,\Phi\in\H\}$ and $\{x\mapsto \norm{x\Psi}:\Psi\in\H\}$, respectively.
As a Banach space, $\B(\H)$ is isomorphic with the dual space of the trace class via the pairing $(x,\rho)\mapsto \tr \rho x$, where $(x,\rho)\in\B(\H)\times\mc T(\H)$.
The weak$^*$ topology induced on $\B(\H)$ by the duality with $\mc T(\H)$ is called the $\sigma$-weak operator topology.

If $\M$ is a collection of bounded operators, its commutant, denoted $\M'$, is the subalgebra of bounded operators $x'\in\B(\H)$ commuting with all $x\in\M$.
A von Neumann algebra $\M$ on $\H$ is a weakly closed non-degenerate $^*$-invariant algebra of bounded operators.
It is actually equivalent to ask for $\M$ to be closed in the strong or $\sigma$-weak topology or to ask that $\M$ is equal to its bicommutant $\M''$.
This equivalence is the celebrated bicommutant theorem of von Neumann and lies at the heart of the theory.
It implies that von Neumann algebras always come in pairs: If $\M$ is a von Neumann algebra, then so is its commutant $\M'$.
If $\M$ is a von Neumann algebra on $\H$, then it is the Banach space dual of the space $\M_*$ of $\sigma$-weakly continuous linear functionals on $\M$.
Consequently, $\M_*$ is called the \emph{predual} of $\M$. It isometrically embeds into the dual $\M^*$ and bounded linear functionals $\varphi\in \M^*$ are called normal if they are $\sigma$-weakly continuous, i.e., if they are in $\M_*$.

As was famously shown by Sakai, von Neumann algebras can also be defined abstractly as those $C^*$-algebras $\M$ that have a predual $\M_*$.
We will mostly work with abstract von Neumann algebras from which the concrete von Neumann algebras arise via representations.
In the abstract setting, the weak, strong, and $\sigma$-weak operator topologies cannot be defined on $\M$ as above. 
However, the $\sigma$-weak topology does not depend on the representation: It is the topology induced by the predual. In the abstract setting, we will refer to it as \emph{ultraweak topology}.
A $^*$-homomorphism $\pi:\M\to\N$ between von Neumann algebras is called \emph{normal} if it is ultraweakly continuous, i.e., continuous if both $\M$ and $\N$ are equipped with the respective ultraweak topologies.
In particular, a normal representation of a von Neumann algebra $\M$ on a Hilbert space $\H$ is a unital $^*$-homomorphism $\pi:\M\to\B(\H)$ which is continuous with respect to the ultraweak topology on $\M$ and the $\sigma$-weak operator topology on $\B(\H)$.
In this work, we only consider faithful representation, which we usually just write as $\M\subset\B(\H)$.

A \emph{normal state} on $\M$ is an ultraweakly continuous positive linear functional $\omega:\M\to\CC$, such that $\omega(1)=1$.
We denote the set of normal states by $S_*(\M)$.
If $\M\subset\B(\H)$ and if $\rho$ is a density operator on $\H$, i.e., a positive trace-class operator with $\tr\rho=1$, then $\omega(x)=\tr\rho x$ defines a normal state on $\M$ and all normal states arise in this way.
If $\omega$ is a normal state on $\M$, then its \emph{support projection} $s(\omega)$ is defined as the smallest projection $p\in\M$ such that $\omega(p)=1$.\footnote{Equipped with the usual order and $p^\perp=1-p$, the projections in a von Neumann algebra $\M$ form an orthocomplete lattice $\proj(\M)$: For every family of projections $p_j$ in $\M$ there exists a least upper bound $\bigvee_j p_j$ and a greatest lower bound $\bigwedge_j p_j$, both of which are projections in $\M$, such that $1-\bigvee_j p_j = \bigwedge_j (p_j-1)$.}
If $\omega(x)=\ip\Omega{x\Omega}$ is a vector state, then $s(\omega)$ is the orthogonal projection onto $[\M'\Omega]$, the closure of $\M'\Omega$.
A normal state $\omega$ is faithful if $\omega(x^{*}x)=0$ implies $x=0$, and in this case the support projection is $s(\omega)=1$.

In physics, one often considers separable Hilbert spaces only.
Consequently, von Neumann algebras describing observables of a physical system should admit a faithful representation on a separable Hilbert space.
Such von Neumann algebras are called \emph{separable} and may be characterized by the following equivalent properties:
\begin{enumerate}
    \item $\M$ admits a faithful representation on a separable Hilbert space
    \item the predual $\M_*$ is norm-separable
    \item $\M$ is separable in the ultraweak topology.
\end{enumerate}
In particular, a separable von Neumann algebra only admits countable families of pairwise orthogonal non-zero projections.
The latter property is called \emph{$\sigma$-finiteness} (or countable decomposability), and it is equivalent to the existence of a faithful normal state \cite[Prop.~2.5.6]{bratteli1987oa1}.

If $n\in\NN$ is an integer and $\M$ is a von Neumann algebra on $\H$, then the $n\times m$ matrices $[x_{ij}]\in M_{n,m}(\M)$ with entries $x_{ij}$ in $\M$ are identified with operators 
\begin{equation}
    x: \H\ox\CC^m \to \H\ox\CC^n, \qquad x(\Psi\ox\ket j) = \sum_{i=1}^{n} x_{ij} \Psi \ox \ket i,
\end{equation}
where $\{\ket i\}_{i=1}^k$ denotes the standard basis of $\CC^k$.
In particular, $M_n(\M)\cong \M\ox M_n$ is itself a von Neumann algebra on $\H\ox\CC^n$.

\subsection{Weights and modular theory}
\label{subsec:pre:weights}
Weights can be viewed as a non-commutative analog of integration with respect to a not-necessarily finite measure in the same sense as normal states are non-commutative analogs of integration with respect to a probability measure.
A \emph{normal weight} $\varphi$ on a von Neumann algebra $\M$ is an ultraweakly lower semicontinuous map $\varphi:\M^+\to[0,\oo]$ satisfying
\begin{equation}
    \varphi(x+\lambda y) = \varphi(x)+\lambda\varphi(y), \qquad x,y\in\M^+,\ \lambda\ge0,
\end{equation}
with the convention $0\cdot\oo=0$ (see, \cite[Ch.~VII, §1]{takesaki2}, \cite[Sec.~III.2]{blackadar_operator_2006}).
For normal weights, there is a non-commutative analog of the monotone convergence theorem: If $(x_\alpha)$ is a uniformly bounded increasing net in $\M^+$ then $\lim_\alpha \varphi(x_\alpha) = \varphi(\lim_\alpha x_\alpha)$ where the limit $\lim_\alpha x_\alpha$ is in the ultraweak topology.
A normal weight is said to be semifinite if the left-ideal $\mathfrak n_\varphi=
\{x\in\M : \varphi(x^*x)<\oo\}$ is ultraweakly dense in $\M$, and it is said to be faithful if $\varphi(x^*x)=0$ implies $x=0$.
In the following, all weights are assumed to be semifinite and we will sometimes just write "weight" instead of normal semifinite weight.
A (normal semifinite) \emph{trace} is a weight $\tau$ that is unitarily invariant, i.e., satisfies $\tau(uxu^*)=\tau(x)$ for all $x\in\M^+$ and unitaries $u\in\U(\M)$.
A particularly easy class of weights are normal positive linear functionals on $\M$. These are precisely the normal weights such that $\varphi(1)<\oo$. Equivalently, these are the finite weights with $\mathfrak n_\varphi=\M$.

The GNS construction of a normal state $\varphi$ on $\M^+$ can be generalized to normal semifinite weights $\varphi$.
For each a normal semifinite weight $\varphi$ there is -- up to unitary equivalence -- a unique semi-cyclic representation $(\pi_\varphi,\H_\varphi,\Psi_\varphi)$
where $\pi_\varphi$ is a normal representation of $\M$ on $\H_\varphi$ and $\Psi_\varphi:\mathfrak n_\varphi\to\H_\varphi$ is a linear map such that
\begin{equation}
\left.\begin{aligned}
    \pi_\varphi(a)\Psi_\varphi(x) &= \Psi_\varphi(ax)\\[5pt]
    \ip{\Psi_\varphi(x)}{\Psi_\varphi(y)} &= \varphi(x^*y)
\end{aligned}\ \right\} \qquad a\in\M,\ x,y\in\mathfrak n_\varphi,
\end{equation}
where the right-hand side is defined by polarization \cite[Sec.~VII, §1]{takesaki2}.
Semi-cyclicity means that $\Psi_\varphi(\mathfrak n_\varphi)$ is dense in $\H_\varphi$.
If the weight $\varphi$ is faithful, then so is the GNS representation.

We are now going to describe the basics of modular theory.
For this, we pick a normal semifinite faithful weight $\varphi$ and consider its GNS representation $(\H_\varphi,\pi_\varphi,\Psi_\varphi)$.
Since $\varphi$ is faithful the same holds for $\pi_\varphi$, so that we may identify $\M$ with $\pi_\varphi(\M)$.
The starting point of modular theory ist the conjugate-linear operator $S_\varphi^0\Psi_\varphi(x) = \Psi_\varphi(x^*)$ defined on all vectors of the form $\Psi_\varphi(x)$ with $x\in\mathfrak n_\varphi\cap\mathfrak n_\varphi^*$.
It can be shown that $S_\varphi^0$ is closable.
The modular operator $\Delta_\varphi$ and the modular conjugation $J$ induced by the weight $\varphi$ are defined by
\begin{equation}
    \Delta_\varphi = S_\varphi^*S_\varphi \qandq S_\varphi = J\Delta_\varphi^{1/2},
\end{equation}
where $S_\varphi$ is the closure of $S_\varphi^0$.
The \emph{modular flow} of $\varphi$ is the ultraweakly continuous one-parameter group of automorphisms 
\begin{equation}
    \sigma_t^\varphi(x) := \Delta_\varphi^{it}x\Delta_\varphi^{-it}
\end{equation}
on $\B(\H_\varphi)$.  The main theorem of modular theory, due to Tomita, states that $J\M J=\M'$ and that the modular flow leaves $\M$ invariant, i.e., $\sigma_t^\varphi(x)\in\M$ for all $t\in\RR$, $x\in\M$, see \cite[Ch.~VI]{takesaki2}.
The first fact implies that $\M$ and $\M'$ are anti-isomorphic via the conjugate linear $^*$-isomorphism $j:\M\to\M'$ defined by $j(x)=JxJ$. On the center $Z(\M)=\M\cap\M'$, $j$ simply reduces to the adjoint $j(z) = JzJ = z^*$, $z\in Z(\M)$.
An additional structure that is present in the GNS representation is the positive cone 
\begin{equation}
    \P = \overline{\{j(x)\Psi_\varphi(x) : x\in\M\}}\subset \H_\varphi.
\end{equation}
One can show that $\P$ linearly spans $\H_\varphi$, is pointwise invariant under $J$, and self-dual in the sense that $\P=\P^\natural := \{\Psi\in\H_\varphi : \ip{\Psi}{\Phi}\ge0 \ \forall \Phi\in\P\}$.
Before we explain the importance of the cone $\P$, we mention that, up to unitary equivalence, the triple $(\H_\varphi,J,\P)$ does not depend on the choice of normal semifinite faithful weight. 
In fact, all triples $(\H,J,\P)$ of a Hilbert space $\H$ equipped with a faithful representation $\M\subset\B(\H)$, a conjugation\footnote{A conjugation on a Hilbert space $\H$ is a conjugate linear isometry satisfying $J^2=1_\H$.} $J$ and a self-dual closed cone $\P\subset\H$ satisfying
\begin{enumerate}
    \item\label{it:axiom1} $J\M J=\M'$,
    \item\label{it:axiom2} $J\Psi=\Psi$ for all $\Psi\in \P$,
    \item\label{it:axiom3} $xj(x)\P\subset\P$ for all $x\in\M$, where $j(x) = JxJ\in\M'$,
\end{enumerate}
are unitarily equivalent \cite{haagerup_standard_1975}.\footnote{In \cite{haagerup_standard_1975}, where the standard form was introduced by Haagerup, it is additionally assumed that $JzJ=z^*$ for all $z\in\M\cap\M'$. This assumption is shown to be redundant in \cite[Lem.~3.19]{ando2014ultraproducts}.}
Such a triple $(\H,J,\P)$ is called the \emph{standard form} of $\M$, and we saw above that the GNS construction of a normal semifinite faithful weight gives a way to construct the standard form.\footnote{The uniqueness of the standard representation only applies to the modular conjugation $J$ and the positive cone $\P$. The modular operator $\Delta_\varphi$ does depend on the chosen weight $\varphi$.}
The standard form is sometimes called standard representation. We will use the term \emph{standard representation} to mean a representation that is spatially isomorphic to the standard form.
Roughly speaking, a standard representation is the standard form where we forget about $J$ and $\P$.

The importance of the positive cone $\P$ is due to the following fact:
For every $\omega\in\M_*^+$ there is a \emph{unique} $\Omega_\omega\in\P$ such that
\begin{equation}\label{eq:unique_gns}
    \omega(x) = \ip{\Omega_\omega}{x\Omega_\omega},\qquad x\in\M.
\end{equation}
The map $S_*(\M)\ni\omega\mapsto\Omega_\omega\in\P$ is a homeomorphism. In fact, the following estimates hold
\begin{equation}\label{eq:state_vector_norms}
    \norm{\Omega_\psi - \Omega_\phi}^2 \le \norm{\psi-\phi} \le \norm{\Omega_\psi-\Omega_\phi}\,\norm{\Omega_\psi+\Omega_\phi}.
\end{equation}
Furthermore, if $\M$ is in standard form, then so is $\M'$ and $\M$ and $\M'$ are anti-isomorphic via the map $j(x)=JxJ$.
Using $j$ and \cref{it:axiom3} above, it follows that
\begin{equation}\label{eq:purification_formula}
    \Omega_{v \omega v^*} = vj(v) \Omega_\omega, \qquad v\in \M,
\end{equation}
where $v\omega v^* = \omega(v^*(\placeholder)v)$, and $S_{v\omega v^*} = v j(v) S_\omega v^* j(v)^*$. Consequently also $\Delta_{v\omega v^*} = v j(v) \Delta_\omega v^* j(v)^*$.
Given a standard representation $\M\subset\B(\H)$ and a cyclic separating vector $\Omega$, there uniquely exist a positive cone $\P$ and a conjugation $J$ turning $(\H,J,\P)$ into a standard form, such that $\Omega\in\P$ where $\omega(x)=\ip{\Omega}{x\Omega}$, $x\in\M$.
Since normal faithful states exist precisely on $\sigma$-finite von Neumann algebras, a $\sigma$-finite von Neumann algebra is in standard representation if and only if there exists a cyclic separating vector.

\begin{example}\label{exa:M_n_std_form}
    In the case $\M=M_n$, the standard form can be described as follows:
    $\H = \CC^n\ox\CC^n$ and the action of $M_n$ is given by identifiyng $x\in M_n$ with $x\ox 1\in M_n\ox1 \subset\B(\H)$.
    Following standard notation, we write $\ket{ij}$ for $\ket i \ox \ket j$, $i,j=1,...,n$.
    The conjugation is $J(\sum_{ij}\Psi_{ij}\ket{ij}) = \sum_{ij}\Psi^*_{ij}\ket{ij}$, where $\Psi^*_{ij}=\overline{\Psi_{ji}}$.
    The commutant of $\M=M_n$ is $1\ox M_n$ and $j$ is simply given by $j(x\ox 1)=1\ox \bar x$ where $\bar x$ is the entry-wise complex conjugate of $x$, i.e., $\overline{[x_{ij}]}=[\bar x_{ij}]$.
    The cone $\P$ is
    \begin{equation}\label{eq:M_n_positive_cone}
        \P 
        = \bigg\{\sum_{ij} \Psi_{ij} \ket{ij}: [\Psi_{ij}]\ge0 \bigg\} = \bigg\{(\Psi\otimes 1) \Omega: 0\leq \Psi \in M_n\bigg\},
    \end{equation}
    with $\Omega = \sum_i \ket{ii}$ and the map $\omega\mapsto\Omega_\omega$ is given by
    \begin{equation}\label{eq:cone_rep_state}
        \Omega_\omega = \sum_{ij} (\sqrt{\rho_\omega})_{ij} \ket{ij} =(\sqrt{\rho_\omega}\otimes 1)\Omega,
    \end{equation}
    where $\rho_\omega\in M_n^+$ is the density operator of $\omega$, i.e., satisfies $\omega(x) = \tr \rho_\omega x$. The modular operator of a faithful state $\omega$ takes the form $\Delta_\omega = \rho_\omega\ox(\overline \rho_\omega)^{-1}$, which indeed fulfills
    \begin{align}
    J\Delta_\omega^{1/2} (x\otimes 1) \Omega_\omega = J(\rho_\omega^{1/2}x \rho_\omega^{1/2}\ox\overline \rho_\omega^{-1/2}) \Omega 
    = (\rho_\omega^{-1/2} \ox \overline\rho_\omega^{1/2} \overline{x} \:\!\overline\rho_\omega^{1/2})\Omega = (x^* \otimes 1)\Omega_\omega,
    \end{align}
    where we used $(x^*\otimes 1)\Omega = (1\otimes \overline{x})\Omega$ and \cref{eq:cone_rep_state}.
\end{example}

More generally, the standard form of the matrix amplification $M_n(\M)$ of a von Neumann algebra is given by:

\begin{lemma}\label{lem:standard_amplification}
    If $(\H,J,\P)$ is the standard form of $\M$, we can construct the standard form $(\H\up n,J\up n,\allowbreak\P\up n)$ of $M_n(\M)$ as follows. 
    The Hilbert space is $\H\up n = \H\ox\CC^n\ox\CC^n$ with $M_n(\M)$ acting on the first two tensor factors in the obvious way, the modular conjugation is given by $J\up n (\sum_{ij} \Psi_{ij}\ox\ket{ij}) = \sum_{ij} J\Psi_{ji} \ox \ket{ij}$, and the positive cone is 
    \begin{align}\label{eq:Pn}
        \P\up n= \overline\lin\bigg\{\sum_{kl}x_kj(x_l)\Psi \ox \ket{kl} : x_1,...,x_n\in\M,\ \Psi\in\P\bigg\}.
    \end{align}
    If $\M$ is $\sigma$-finite, we have 
    $\P\up n = \overline\lin\{\sum_{kl} x_kj(x_l)\Omega\ox\ket{kl} : x_1,...x_n\in\M\}$ for some cyclic separating vector $\Omega\in\P$.
\end{lemma}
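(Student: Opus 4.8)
The plan is to realize $(\H\up n,J\up n,\P\up n)$ concretely as the standard form produced by a single explicit cyclic separating vector, namely the one implementing (up to normalization) the product $\omega\ox\tau_n$ of a reference state on $\M$ with the trace $\tau_n$ on $M_n$. Assume first that $\M$ is $\sigma$-finite and fix a cyclic separating $\Omega\in\P$ with induced state $\omega$; put $\Omega\up n=\sum_k\Omega\ox\ket{kk}\in\H\up n$. The three things to establish are: (i) $\Omega\up n$ is cyclic and separating for $M_n(\M)$, so its modular data exist; (ii) the modular conjugation $J_{\Omega\up n}$ is the displayed $J\up n$; (iii) the natural cone $\P_{\Omega\up n}$ is the displayed $\P\up n$. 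Granting these, the GNS construction of $\Omega\up n$ exhibits $(\H\up n,J\up n,\P\up n)$ as a standard form of $M_n(\M)\cong\M\ox M_n$; in particular $\P\up n$ is automatically self-dual and satisfies the three axioms of the standard form, so one could equivalently have argued by checking those axioms and invoking the quoted uniqueness.

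For (i), cyclicity and separation of $\Omega\up n$ follow because $\M\Omega$ and $\M'\Omega$ are dense in $\H$ while $\{\Omega\ox\ket{kk}\}_k$ generates $\CC^n\ox\CC^n$ under the matrix units $e_{kl}\in M_n$ acting on either leg. A direct computation shows $\Omega\up n$ induces $\omega\ox\tau_n$ on $\M\ox M_n$, so that, since modular objects of product functionals tensorize, $J_{\Omega\up n}=J_\omega\ox J_{\tau_n}$, where $J_\omega=J$ (as $\Omega\in\P$) and $J_{\tau_n}$ is the conjugation $\ket{ij}\mapsto\ket{ji}$ composed with coefficientwise conjugation from \cref{exa:M_n_std_form}. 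Carrying out this tensor product of conjugations on $\H\ox\CC^n\ox\CC^n$ reproduces exactly $J\up n(\sum_{ij}\Psi_{ij}\ox\ket{ij})=\sum_{ij}J\Psi_{ji}\ox\ket{ij}$, which is (ii). As a cross-check, one verifies directly that $J\up n$ is a conjugation with $J\up n(x\ox e_{kl}\ox1)J\up n=j(x)\ox1\ox e_{kl}$ (so $J\up n M_n(\M)J\up n=\M'\ox1\ox M_n=M_n(\M)'$, i.e.\ \cref{it:axiom1}) and that $J\up n$ fixes each element $\sum_{kl}x_kj(x_l)\Psi\ox\ket{kl}$ with $\Psi\in\P$, using $JxJ=j(x)$, $Jj(x)J=x$, $J\Psi=\Psi$ and $[\M,\M']=0$, which is \cref{it:axiom2}.

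For (iii), write $g=\sum_{kl}x_kj(x_l)\Psi\ox\ket{kl}$ ($x_k\in\M$, $\Psi\in\P$) for a typical generating vector of $\P\up n$. A short computation with $a=[a_{pq}]\in M_n(\M)$ gives $j\up n(a)=\sum_{pq}j(a_{pq})\ox1\ox e_{pq}$ and shows that $aj\up n(a)$ sends $g$ to the generating vector with the same $\Psi$ and with $x_p$ replaced by $\sum_q a_{pq}x_q$; in particular $aj\up n(a)\P\up n\subseteq\P\up n$, the positivity axiom. Since $\Omega\up n$ is itself a finite sum of such vectors (take $\Psi=\Omega$, $x_k=\delta_{k,k_0}1$, and sum over $k_0$), this yields $aj\up n(a)\Omega\up n\in\P\up n$, hence $\P_{\Omega\up n}=\overline{\{aj\up n(a)\Omega\up n\}}\subseteq\P\up n$. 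Conversely, using $\P=\overline{\{yj(y)\Omega:y\in\M\}}$ (the defining description of the cone applied to $\omega$) and absorbing $y$ into the $x_k$, every $g$ is a limit of vectors $\sum_{kl}u_kj(u_l)\Omega\ox\ket{kl}$ with $u_k=x_ky$, and the latter equals $aj\up n(a)\Omega\up n$ for the $a$ whose first column is $(u_1,\dots,u_n)$ and which vanishes elsewhere; so $\P\up n\subseteq\P_{\Omega\up n}$. This proves (iii) and, since only $\Psi=\Omega$ is ever needed in $\P_{\Omega\up n}$, also the $\sigma$-finite refinement in the statement.

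Self-duality of $\P\up n$ — really the crux — now requires nothing further, as natural cones are self-dual; the ``easy'' half $\P\up n\subseteq(\P\up n)^\natural$ is in any case elementary, since for two generators $\ip g{g'}$ collapses to $\ip\Psi{zj(z)\Psi'}$ with $z=\sum_k x_k^\ast y_k\in\M$, which is nonnegative by self-duality of $\P$ and \cref{it:axiom3} for $\M$. The main obstacle is precisely the reverse inclusion $(\P\up n)^\natural\subseteq\P\up n$: there seems to be no way to control $(\P\up n)^\natural$ directly, which is exactly why the argument routes through an explicit cyclic separating vector whose natural cone is $\P\up n$. Finally, when $\M$ is not $\sigma$-finite one replaces $\Omega$ throughout by the GNS data of a normal semifinite faithful weight $\varphi$ on $\M$ and uses the product weight $\varphi\ox\tau_n$ on $M_n(\M)$; the computations are identical. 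Alternatively, one may simply invoke the standard description of the standard form of a von Neumann tensor product, applied to $\M\ox M_n$ together with \cref{exa:M_n_std_form}.
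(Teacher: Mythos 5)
Your proposal is correct and follows essentially the same route as the paper: both realize the standard form of $M_n(\M)$ via the GNS data of the product of a reference state/weight on $\M$ with the trace on $M_n$, tensorize the modular conjugation, identify $\P\up n$ as the natural cone $\overline{\{a\,j\up n(a)\,\Omega\up n\}}$, and pass between the two descriptions of the cone by absorbing the generator of $\P$ into the $x_k$'s (the paper additionally spells out the approximation $e_n\to 1$ needed in the weight setting, which your ``computations are identical'' remark glosses over but which causes no difficulty). The only cosmetic difference is that you treat the $\sigma$-finite vector-state case first and defer to weights, whereas the paper works with a normal semifinite faithful weight from the outset.
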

\begin{proof}
    We construct the standard form using the GNS representation of a normal semifinite faithful weight.
    Let $\varphi$ be a normal semifinite faithful weight on $\M$ and set $\varphi\up n= \varphi\ox\tr$. 
    Let $(\H,J,\P$), the standard form of $\M$, be constructed from the GNS representation of $\varphi$.
    The GNS representation of $\varphi\up n$ is $(\H\up n,\pi\up n, \eta\up n) = (\H,\pi,\eta)\ox(\CC^n\ox\CC^n,\id \ox1,(\id\ox1)\Omega)$ with $\Omega=\sum_r \ket{rr}$.
    Note that $(x\ox 1)\Omega = \sum_{kl} x_{kl}\ox\ket{kl}$.
    We now supress $\pi$ and $\pi\up n$, i.e., identify $\pi=\id_\M$ and $\id_\M\ox \id_{M_n}\ox1$.
    Since $J\up n$ is the tensor product of the modular conjugations on $\H$ and $\CC^n\ox\CC^n$, we have $J\up n(\sum_{kl}\Psi_{kl}\ox\ket{kl}) = \sum_{kl} (J\Psi_{kl})\ox \ket{kl}$.
    Therefore, $j\up n(x)$, $x\in M_n(\M)$, is given by
    $
        j\up n(x) = \sum_{kl} j(x_{kl}) \ox 1 \ox \ketbra kl \in M_n(\M)'
    $,
    and the positive cone $\P\up n$ is the closure of 
    \begin{align*}
         \bigg\{ j\up n(x) \eta\up n(x) : x\in M_n(\A_{\varphi\up n})\bigg\}
         &= \bigg\{ \sum_{klmn} (j(x_{mn}) \ox 1\ox \ketbra mn)(\eta_\varphi(x_{kl}) \ox \ket{kl}) : x\in M_n(\A_\varphi)\bigg\}\\
        &= \bigg\{ \sum_{klmn} j(x_{mn})\eta_\varphi(x_{kl}) \ox (\ket k\ox \ket m \braket nl ) : x\in M_n(\A_\varphi)\bigg\}\\
         &= \bigg\{ \sum_{klm} j(x_{ml})\eta_\varphi(x_{kl}) \ox \ket{km} : x\in M_n(\A_\varphi)\bigg\}.\\
         &={\lin}\bigg\{\sum_{km} j(x_m)\eta_\varphi(x_k) \ox\ket{km} : x_1,...,x_n\in\A_\varphi\bigg\}.
    \end{align*}
    where we used the notation $\A_\varphi := \mathfrak n_\varphi\cap\mathfrak n_\varphi^*$. Note that $\A_\varphi''=\M$.
    We note that, if $\varphi=\omega$ is a faithful normal state, $\eta_\varphi(x)= x\Omega_\omega$, $x\in\M=\A_\omega$, so that the last claim follows.
    Because of the closure, we may replace $\Psi\in\P$ in \eqref{eq:Pn} by $\Psi = j(x_0)\eta_\varphi(x_0)$, $x_0\in \A_\varphi$. Taking $y_k = x_k x_0$, $x_k\in\M$, then gives
    \begin{equation*}
        \P\up n\ni j(y_l)\eta_\varphi(y_k)\ox\ket{kl} = j(x_l) x_k j(x_0)\eta_\varphi(x_0) \ox \ket{kl} = j(x_l)x_k \Psi \ox \ket{kl}.
    \end{equation*}
    Thus, $\P\up n$ contains the right-hand side of \eqref{eq:Pn}.
    For the converse, pick a sequence $e_n\in\A_\varphi$ which converges to $1$ strongly (this is possible because $\A_\varphi''=\M$ \cite[Thm.~VII.2.6]{takesaki2}).
    Then $j(x_k)\eta_\varphi(x_l) = \lim_n j(x_ke_n)\eta_{\varphi}(x_le_n) = j(x_k)x_k j(e_n)\eta_\varphi(e_n) = j(x_k)x_l\Psi_n$, $\Psi_n = j(e_n)\eta_\varphi(e_n)\in\P$. Thus, the set in \eqref{eq:Pn} contains $\P\up n$, which finishes the proof.
\end{proof}

\subsection{Crossed products}
\label{subsec:pre:crossed-products}

Crossed products play an important role in the theory of von Neumann algebras.
Given a group action on a von Neumann algebra, the crossed product provides a way to extend the von Neumann algebra by the generators of the group action (see \cite{vandaele1978crossed_products_typeIII, takesaki2} for general accounts).

In the following, we fix a locally compact abelian group $G$ and denote by $\hat G$ its Pontrjagin dual\footnote{The Pontjagin dual $\hat G$ of a locally compact abelian group $G$ is the group of characters of $G$, i.e., continuous homomorphisms $\chi:G\to\mathbb T$ onto the circle group $\mathbb T$, with pointwise multiplication. It is locally compact as well.}. In the rest of the paper, we only need the cases $G=\ZZ, \RR, \mathbb T$ whose duals are $\hat G = \mathbb T, \RR, \ZZ$, respectively.

Let $\M$ be a von Neumann algebra of operators on $\H$ equipped with a point-ultraweak continuous $G$-action $\alpha$.\footnote{A $G$-action $\alpha:G\to\Aut\M$ is point-ultraweak continuous if for all normal states $\omega$ and all $x\in\M$, the map $g\mapsto \omega(\alpha_g(x))$ is a continuous function $G\to\CC$. See  \cite[Thm.~III.3.2.2]{blackadar_operator_2006} for equivalent characterizations.}
To construct the crossed product $\M\rtimes_\alpha G$, consider the Hilbert space $L^2(G,\H)=\H\ox L^2(G,dg)$, where $dg$ is the left Haar measure, and define operators
\begin{equation}\label{eq:cp_ops}
    \left.\begin{aligned}
    \pi(x)\Psi(h) &= \alpha_{h^{-1}}(x)\Psi(h) \\[5pt]
    \lambda(g) \Psi(h) &=\Psi(g^{-1}h)
    \end{aligned}\ \right\}\qquad x\in\M,\ g,h\in G.
\end{equation}
Note that $\lambda(g)\pi(x)\lambda(g)^* = \pi(\alpha_g(x))$, $g\in G$, $x\in\M$.
The crossed product is defined as the von Neumann algebra generated by the operators in \cref{eq:cp_ops}:
\begin{equation}
    \M\rtimes_\alpha G := \{ \pi(x),\lambda(g) : x\in\M, \,g\in G\}''.
\end{equation}
The Hilbert space $L^2(G,\H)$ carries a natural representation $\mu$ of the dual group $\hat G$ given by $\mu(\chi)\Psi (h) = \chi(h)\cdot\Psi(h)$.
These unitaries induce a point-ultraweakly continuous $\hat G$-action $\hat\alpha$ on the crossed product via $\hat\alpha_\chi(y) = \mu(\chi)y\mu(\chi)^*$, $y\in\M\rtimes_\alpha G$.
It follows from the canonical commutation relations
\begin{equation}
    \mu(\chi)\lambda(g) = \overline{\chi(g)}\lambda(g)\mu(\chi),\qquad (g,\chi)\in G\times \hat G,
\end{equation}
that the dual action is given by
\begin{equation}
    \left.\begin{aligned}
    \hat\alpha_\chi(\pi(x)) &=\pi(x) \\[5pt]
    \hat\alpha_\chi(\lambda(g)) &= \overline{\chi(g)} \lambda(g)
    \end{aligned}\ \right\}\qquad x\in\M,\ g\in G,\ \chi\in\hat G.
\end{equation}
Clearly, the map $\pi:\M\to\M\rtimes_\alpha G$ is a normal $^*$-embedding. 
In fact, $\pi(\M)\subset \M\rtimes_\alpha G$ is exactly the fixed point algebra $\N^{\hat\alpha}$.
Using the latter fact, one can associate to every weight $\varphi$ on $\M$ a so-called \emph{dual weight} $\tilde\varphi$ on $\M\rtimes_\alpha G$ via
\begin{equation}
    \tilde\varphi(y) = \varphi\bigg(\int_{\hat G} \hat\alpha_\chi(y)\,d\chi \bigg),\qquad y\in (\M\rtimes_\alpha G)^+,
\end{equation}
where $d\chi$ is the left Haar measure on $\hat G$. 
Unless $\hat G$ is compact (which is equivalent to $G$ being discrete), the dual weight will always be unbounded, i.e., $\tilde\varphi(1)=\oo$, no matter if $\varphi$ is bounded or not.
The modular flow of the dual weight $\tilde\varphi$ is an extension of the modular flow $\sigma^\varphi$ to the crossed product:
\begin{equation}
    \left.\begin{aligned}
    \sigma^{\tilde\varphi}_t(\pi(x)) &=\pi(\sigma_t^\varphi(x)) \\[5pt]
    \sigma^{\tilde\varphi}_t(\lambda(g)) &= \lambda(g)
    \end{aligned}\ \right\}\qquad x\in\M,\ g\in G,\ t\in\RR.
\end{equation}

\begin{example}\label{ex:groupalgebra}
    In the case $\M=\CC$ we have  $L^2(G,\H)=L^2(G,dg)$,
    \begin{align}
        \CC\rtimes_\mathrm{\id} G =\{\lambda(g): g\in G\}''\cong L^\infty(\hat G,d\chi),
    \end{align}
    and the dual action acts on $L^\infty(\hat G,d\chi)$ by translation. Here, $d\chi$ denotes the left Haar measure on $\hat G$. We sketch the argument:
    For abelian groups, the norm completion of basic elements of the form $\int_G f(g) \lambda(g)dg$ with $f\in C_c(G)$ yields the group $C^*$-algebra $C^*(G)$ and we have $C^*(G) \cong C_0(\hat G)$ by the Fourier transform (see, e.g., \cite{williams_crossed_2007}). Since the Fourier transform converts multiplication by characters to translation, and since
    $C_0(\hat G)$ is weak$^*$ dense in $L^\infty(\hat G,d\chi)$ the claim follows. 
\end{example}

\subsection{Spectral scales}\label{subsec:pre:spectral_scales}

Let $\M$ be a semifinite von Neumann algebra and let $\tr$ be a faithful, normal, semifinite trace.
As is common,  we denote by $L^p(\M,\tr)$ (with $0<p<\infty$) the set of densely defined closed operators $\rho$ affiliated with $\M$ such that $(\tr|\rho|^p)^{1/p}<\infty$.
To every normal, positive, linear functional $\omega\in\M_*^+$, we can associate the Radon-Nikodym derivative $\rho_\omega:=d\omega/d\!\tr$, which is the unique positive self-adjoint operator in $L^1(\M,\tr)$ such that
\begin{equation}
    \omega(x) = \tr \rho_\omega x, \qquad x\in\M.
\end{equation}
$\omega$ is a state if and only if $\tr\rho_\omega=1$. 
In the following, we apply the theory of distribution functions and spectral scales in \cite{fack1986generalized,petz1985scale,hiai1991closed,hiai1989distance} to the density operator $\rho_\omega\in L^1(\M,\tr)$, and summarize the basic facts.
The \emph{distribution function} $D_\omega$ of $\omega$ is defined by
\begin{equation}\label{eq:distribution_function}
    D_\omega(t) = \tr \chi_t(\rho_\omega) = \Tr(p_{\omega}((t,\infty))),\qquad t\ge0,
\end{equation}
where $\chi_t$ is the indicator function of $(t,\oo)$ and $p_{\omega}$ is the spectral measure of $\rho_{\omega}$. The spectral scale $\lambda_\omega$ of $\omega$ is defined as
\begin{equation}\label{eq:spectral_scale}
    \lambda_\omega(t) = \inf\{s >0 : D_\omega(s)\le t\}, \qquad t\ge0,
\end{equation}
where $\lambda_\omega(0):=\oo$ if $D_\omega(s)>0$ for all $s>0$.\footnote{The definitions here are related to those in \cite{fack1986generalized} via the density $\rho_\omega\in L^1(\M,\tr)$, e.g., the distribution function $D_{\omega}$ of the state $\omega$ is the distribution function of the positive (and "$\tr$-measurable") operator $\rho_\omega$.}
Both, $D_\omega$ and $\lambda_{\omega}$ are right-continuous, non-increasing probability densities on $\RR^+$:
\begin{align}
    \int_{0}^{\infty}D_{\omega}(t) \,dt = 1, \qquad \int_{0}^{\infty}\lambda_{\omega}(t) \,dt = 1.
\end{align}
Geometrically, \eqref{eq:spectral_scale} means that the graph of $\lambda_\omega$ is the (right-continuous) reflection of the graph of $D_\omega$ about the diagonal.
The distribution function enjoys the following properties:
\begin{align}\label{eq:D_properties}
    D_\omega(t) &\in \{\tr p : p\in \proj(\M)\}, & D_\omega(0)&=\tr s(\omega),& \supp D_\omega&=[0,\norm{\rho_\omega}],
\intertext{where the interval on the right is $[0,\oo)$ if $\rho_\omega$ is unbounded.
Similarly, the spectral scale satisfies:}
    \lambda_\omega(t) &\in \Sp(\rho_\omega),& \lambda_\omega(0) &= \norm{\rho_\omega}, &\supp\lambda_\omega &= [0,\tr s(\omega)].
\end{align}
The spectral scale and the distribution function are connected by
\begin{equation}\label{eq:D_of_lambda}
    D_\omega(t) = |\{ s >0 : \lambda_\omega(s)> t \}| = \int_0^\oo \chi_t(\lambda_\omega(s)) \,ds,
\end{equation}
which may be summarized as saying that the cumulative distribution function of the spectral scale is again $D_\omega$.
Note that the left-hand side of \eqref{eq:D_of_lambda} is, by definition, equal to $\tr\chi_t(\rho_\omega)$.
Since the sets $(t,\oo)$, $t>0$, generate the Borel $\sigma$-algebra on $\RR^+$, it follows that $\tr\chi_A(\rho_\omega)= \int_0^\oo \chi_A(\lambda_\omega(s))\,ds$ for all Borel sets $A\subset \RR^+$.
Therefore, the measure $A\mapsto \tr\chi_A(\rho_\omega)$ is the push-forward of the Lebesgue measure along the spectral scale and
\begin{equation}\label{eq:trace_formula}
    \tr f(\rho_\omega) = \int_{0}^{\infty}f(\lambda)\,\Tr dp_{\omega}(\lambda) = \int_0^\oo f(\lambda_\omega(t))\,dt
\end{equation}
holds for all bounded Borel functions $f$ on $\RR^+$. 
\cref{eq:trace_formula} summarizes many important properties of the spectral scale. It was first observed in \cite[Prop.~1]{petz1985scale}.

\begin{example}\label{exa:M_n_spectrum}
    Let $\M= M_n$. Let $\omega\in S(M_n)$ be a state with density operator $\rho = \sum_i p_i P_i$. Let the eigenvalues be ordered decreasingly $p_1\ge p_2\ge ...$.
    Then, the distribution function and the spectral scale are 
    \begin{equation}
      D_{\omega}(t) = \sum_{i}m_{i}\,\chi_{t}(p_{i}), \qquad  \lambda_\omega(t) = \sum_i p_i \,\chi_{[d_{i-1},d_{i})}(t),
    \end{equation}
    where $d_i =m_i + ... +m_1$ with $m_i =\tr P_i$ being the multiplicity of $p_i$ and $d_{0}=0$. In particular, we have:
    \begin{align}\label{eq:D_l_examples}
        D_{\bra1\placeholder\ket1}(t) = \lambda_{\bra1\placeholder\ket1}(t) = \chi_{[0,1)}(t), \qandq
        \lambda_{\frac1n\!\Tr}(t) = \tfrac1n \,\chi_{[0,n)}(t),\ \ D_{\frac1n\!\tr}(t) = n \,\chi_{[0,\frac1n)}(t).
    \end{align} 
    See \cref{fig:lambda} for a plot of the spectral scale and the distribution function of a state on a matrix algebra.
\end{example}

\begin{example}
        Let $\M=L^\oo(Y,\nu)$ and let $p \in L^1(Y,\nu)$ be a probability density (so that $p\in S_*(\M)$).
        Then, the distribution function $D_{p}(t) = \int_{Y}\chi_{t}(p)d\nu$ is the cumulative distribution function of $p$, and the spectral scale $\lambda_p$ is precisely the decreasing rearrangement $p^* \in L^1(\RR^+)$ of $p$ \cite[Rem.~2.3.1]{fack1986generalized}. 
\end{example}

\begin{figure}
\centering
    \includegraphics[width=12cm]{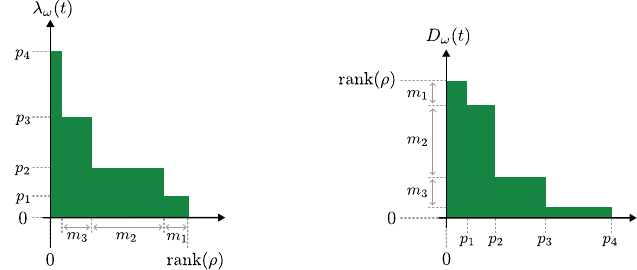}
    \caption{Spectral scale $\lambda_\omega(t)$ and distribution function $D_\omega(t)$ of a state $\omega = \tr (\rho\,\placeholder)$ on $M_n$ with density operator $\rho\in M_n$. The numbers $p_i$ are the eigenvalues of $\rho$, ordered increasingly, and $m_i\in\NN$ denotes their multiplicity.
    }
\label{fig:lambda}
\end{figure}

The following Proposition ties together spectral scales and distribution functions and shows how they relate to the distance of unitary orbits.

\begin{proposition}\label{prop:spectral_distance}
    The maps $S_*(\M)\ni\omega\mapsto D_\omega \in L^1(\RR^+)$ and $S_*(\M)\ni\omega\mapsto \lambda_\omega \in L^1(\RR^+)$ are unitarily invariant and satisfy
    \begin{equation}
        \norm{\lambda_\omega-\lambda_\varphi}_{L^1} = \norm{D_\omega-D_\varphi}_{L^1} 
    \le \inf_{u\in\U(\M)}\norm{u\omega u^*-\varphi}
    \end{equation}
    with equality if $\M$ is a factor. 
\end{proposition}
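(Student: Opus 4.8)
The plan is to deduce the statement from unitary invariance of the Radon--Nikodym derivative together with the calculus of generalized $s$-numbers, handling the equality for factors separately.

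First I would record unitary invariance and use it to simplify the inequality. From $(u\omega u^*)(x)=\tr(\rho_\omega\, u^*xu)=\tr\big((u\rho_\omega u^*)\,x\big)$ one reads off $\rho_{u\omega u^*}=u\rho_\omega u^*$; since $\chi_t(u\rho_\omega u^*)=u\,\chi_t(\rho_\omega)\,u^*$ and $\tr$ is unitarily invariant, $D_{u\omega u^*}=D_\omega$, and hence $\lambda_{u\omega u^*}=\lambda_\omega$ because $\lambda_\omega$ is determined by $D_\omega$ alone via \eqref{eq:spectral_scale}. Thus the left-hand side of the claimed inequality does not change if $\omega$ is replaced by $u\omega u^*$, and using the isometric identification $\M_*\cong L^1(\M,\tr)$, $\norm{\omega-\varphi}=\norm{\rho_\omega-\rho_\varphi}_1$, it suffices to prove
\[
\norm{\lambda_\omega-\lambda_\varphi}_{L^1}=\norm{D_\omega-D_\varphi}_{L^1}\le\norm{\rho_\omega-\rho_\varphi}_1,\qquad\omega,\varphi\in S_*(\M).
\]
The equality of the two $L^1$-distances is elementary: by \eqref{eq:D_of_lambda} the graph of $\lambda_\omega$ is the reflection of the graph of $D_\omega$ about the diagonal, and for non-increasing $f,g\ge0$ the quantity $\norm{f-g}_{L^1}$ is the planar Lebesgue measure of the symmetric difference of the subgraphs $\{(t,s):0\le s\le f(t)\}$, $\{(t,s):0\le s\le g(t)\}$, which is invariant under $(t,s)\mapsto(s,t)$; equivalently, write $D_\omega(t)=\abs{\{\lambda_\omega>t\}}$ and apply Fubini.

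For the remaining inequality, note that by \eqref{eq:spectral_scale} the spectral scale $\lambda_\omega$ is exactly the generalized $s$-number function $\mu(\rho_\omega)$ of the positive operator $\rho_\omega\in L^1(\M,\tr)$. Hence what is needed is the $L^1$-contraction property
\[
\norm{\mu(\rho_\omega)-\mu(\rho_\varphi)}_{L^1}\le\norm{\rho_\omega-\rho_\varphi}_1,
\]
i.e.\ the $\tau$-measurable form of the Lidskii--Mirsky eigenvalue inequality $\sum_i\abs{\lambda_i^\downarrow(A)-\lambda_i^\downarrow(B)}\le\norm{A-B}_1$ \cite{bhatia_matrix_1997}; it is part of the Fack--Kosaki $s$-number calculus, see \cite{fack1986generalized,petz1985scale}. (When $\M=L^\infty(Y,\nu)$ this is just the classical fact that the decreasing rearrangement is non-expansive on $L^1$.) Combining this with the previous step and with unitary invariance yields $\norm{\lambda_\omega-\lambda_\varphi}_{L^1}=\norm{D_\omega-D_\varphi}_{L^1}\le\inf_{u\in\U(\M)}\norm{u\omega u^*-\varphi}$.

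It remains to prove equality when $\M$ is a factor, i.e.\ to produce, for each $\eps>0$, a unitary $u$ with $\norm{u\omega u^*-\varphi}\le\norm{\lambda_\omega-\lambda_\varphi}_{L^1}+\eps$. The strategy is: approximate $\rho_\omega$ and $\rho_\varphi$ in $L^1$ by positive operators with finite spectrum carried by projections of mutually commensurable traces; in a semifinite factor the comparison theory of projections then permits passing to a common refinement of the two spectral resolutions and, matching the eigenvalue blocks in decreasing order, reducing to the finite-dimensional case, where the optimal $u$ is the one realising the equality case of Mirsky's inequality; then let the approximation parameters and $\eps$ tend to $0$. This is the semifinite forerunner of the Haagerup--St\o rmer distance formula used later (see \cref{thm:distance_of_unitary_orbits}), with a full proof available in \cite{hiai1989distance,hiai1991closed}. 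I expect this last step to be the only real obstacle: unitary invariance is formal, the identity of the two $L^1$-distances is plane geometry, and the inequality is a direct appeal to the $s$-number calculus, whereas the upper bound on the infimum genuinely needs the comparison theory of projections in semifinite factors together with a careful $L^1$-approximation to reduce to matrices.
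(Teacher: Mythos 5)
Your proposal is correct and follows essentially the same route as the paper: unitary invariance is a one-line computation, the identity $\norm{\lambda_\omega-\lambda_\varphi}_{L^1}=\norm{D_\omega-D_\varphi}_{L^1}$ comes from the reflection duality \eqref{eq:D_of_lambda}, and both the contraction inequality and the equality for factors are delegated to the same body of literature (Haagerup--St\o rmer / Hiai--Nakamura / Fack--Kosaki) that the paper itself cites. The only cosmetic difference is that you take the noncommutative Lidskii--Mirsky bound for the spectral scale as primary and deduce the distribution-function bound via the reflection argument, whereas the paper quotes the bound for $D_\omega$ directly from Haagerup--St\o rmer and derives the $\lambda_\omega$ statement from it; the two are equivalent given the reflection identity.
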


The various statements in the Proposition are contained in the works \cite{hiai1989distance,fack1986generalized,hiai1991closed,petz1985scale}. 
For the convenience of the reader, we give a short proof based on \cite{haagerup1990equivalence}:

\begin{proof}
    Unitary invariance is clear.
    $\norm{D_\omega-D_\varphi}_{L^1} \le \norm{\omega-\varphi}$ is proved in \cite[Lem.~4.3]{haagerup1990equivalence} (where the assumption of $\M$ being a factor is not used in the proof).
    By unitary invariance, the inequality on the right follows. 
    The converse inequality for $\M$ being a factor is shown in \cite[Thm.~4.4]{haagerup1990equivalence}.
    The first equality follows from the second one because the distribution function of $\lambda_\omega$ is $D_\omega$ (see \eqref{eq:D_of_lambda}) and vice versa and because conjugation by unitaries is trivial in $L^1(\RR^+)$.
\end{proof}

\section{Embezzling states}\label{sec:embezzlingstates}

This section deals with basic properties of embezzling states. We start by formally defining embezzling states and then show that several other reasonable definitions of embezzling states are equivalent to ours. We also prove the equivalence of bipartite and monopartite embezzlement for standard bipartite systems. Finally, we characterize the spectral properties of embezzling states. 

\begin{definition}\label{def:bipartite_system}
    A {\bf bipartite system} is a triple $(\H,\M,\M')$ of a Hilbert space $\H$, a von Neumann algebras $\M\subset\B(\H)$ and its commutant $\M'$.
    A (pure) {\bf bipartite state} on $(\H,\M,\M')$ is a unit vector $\Omega\in\H$ and the marginals of $\Omega$ are the states $\omega$ and $\omega'$ defined by restricting $\ip\Omega{(\placeholder)\Omega}$ to $\M$ and $\M$, respectively.
\end{definition}

Instead of $\M$ and its commutant $\M'$, one could consider the more general case of commuting von Neumann algebras $\M_A$ and $\M_B$.
Because of its importance in quantum field theory \cite{haag_local_1996}, the condition $\M_A=\M_B'$ is then called \emph{Haag duality} \cite{van_luijk_schmidt_2023}.
Operationally, Haag duality reflects that Alice can implement every unitary symmetry of Bob's observable algebra $\M_B$, i.e., every unitary $u$ on $\H$ commuting with $\M_B$ lies in $\M_A$.\footnote{We are not aware of a satisfactory interpretation of Haag duality purely in terms of correlation experiments (see \cite[Sec.~6]{van_luijk_schmidt_2023}).}

If $(\H,\M,\M')$ is a bipartite system then so is $(\H\ox\CC^n\ox\CC^n,M_n(\M),M_n(\M'))$ where we identify $M_n(\M)=\M\ox M_n\ox1$ and $M_n(\M')=\M'\ox1\ox M_n = M_n(\M)'$.

\begin{definition}
    Let $(\H,\M,\M')$ be a bipartite system.
    A pure bipartite state, i.e., a unit vector, $\Omega\in\H$ is {\bf embezzling} if for all $\Psi\in\CC^n\ox\CC^n$ and all $\eps>0$, there exist unitaries $u\in M_{n}(\M)$ and $u'\in M_{n}(\M')$ such that
    \begin{equation}\label{eq:bipartite_mbz}
        \norm{\Omega\ox\Psi - uu'\,\Omega\ox\ket{11}}<\eps.
    \end{equation}
\end{definition}

It is clear that the marginals $\omega$ and $\omega'$ of an embezzling vector state $\Omega\in\H$ satisfy the following monopartite property:

\begin{definition}
    Let $\omega$ be a normal state on a von Neumann $\M$. Then $\omega$ is {\bf embezzling}, if for all states $\psi$ on $M_n$ and all $\eps>0$ there exists unitaries $u\in M_n(\M)$ such that
    \begin{equation}\label{eq:marginal_mbz}
        \norm{\omega\ox\psi- u(\omega\ox\bra1\placeholder\ket1)u^*}<\eps.
    \end{equation}
\end{definition}

We use the following notion of approximate unitary equivalence:
\begin{definition}\label{def:approx_ue}
    Let $\omega,\varphi\in S_*(\M)$ be normal states on a von Neumann algebra $\M$. Then $\omega$ and $\varphi$ are said to be {\bf approximately unitarily equivalent}, denoted $\omega\sim\varphi$, if for all $\eps>0$ there exists a unitary $u\in\M$ such that
    \begin{equation}\label{eq:approx_ue}
         \norm{u\omega u^* -\varphi}<\eps.
    \end{equation}
\end{definition}

Clearly, approximate unitary equivalence is an equivalence relation.
Since a state $\omega$ is embezzling if and only if $\omega\ox\bra1\placeholder\ket1\sim\omega\ox\psi$ for an arbitrary state $\psi\in S(M_n)$, we have the following characterization of embezzling states:
\begin{equation}
    \omega\text{ is embezzling} \quad\iff\quad \omega\ox\psi\sim\omega\ox\phi\qquad \forall\psi,\phi\in S(M_n),\ n\in\NN.
\end{equation}
A similar statement holds for embezzling bipartite states. The main result of this section is the following:
\begin{theorem}[Equivalence of bipartite and monopartite embezzling]\label{thm:partite}
    Let $(\H,\M,\M')$ be a bipartite system, let $\Omega\in\H$ be a unit vector and let $\omega$ and $\omega'$ be the marginal states on $\M$ and $\M'$.
    The following are equivalent:
    \begin{enumerate}[(a)]
        \item\label{it:partite1} $\Omega$ is embezzling,
        \item\label{it:partite2} $\omega$ is an embezzling state on $\M$,
        \item\label{it:partite3} $\omega'$ is an embezzling state on $\M'$.
    \end{enumerate}
\end{theorem}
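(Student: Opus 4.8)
The plan is to transfer everything to marginals, where the statement becomes one about purifications. Two facts do the work. First, the assignment of a marginal to a vector is norm-continuous: if $\N$ acts on a Hilbert space $\K$ and $\xi,\eta\in\K$ are unit vectors with marginals $\rho_\xi,\rho_\eta\in S_*(\N)$, then $\norm{\rho_\xi-\rho_\eta}\le\norm{\xi-\eta}\,\norm{\xi+\eta}\le 2\norm{\xi-\eta}$, which follows by inserting $\ip{\xi}{a\eta}$ into $\rho_\xi(a)-\rho_\eta(a)$. Second, a quantitative uniqueness of purifications: since $\N'$ is the \emph{full} commutant of $\N$ on $\K$,
\[
    \inf_{u'\in\U(\N')}\norm{u'\xi-\eta}^2 \;\le\; \norm{\rho_\xi-\rho_\eta}.
\]
This is Uhlmann's theorem together with the Powers--St\o rmer bound $1-F(\rho,\sigma)\le\tfrac12\norm{\rho-\sigma}$ (writing $F$ for the fidelity): one has $\norm{u'\xi-\eta}^2=2-2\Re\ip{u'\xi}{\eta}$, and because $u'\xi$ ranges over purifications of $\rho_\xi$ inside $\K$ while $\N'$ is large enough to realize the optimum in Uhlmann's theorem, $\sup_{u'\in\U(\N')}\abs{\ip{u'\xi}{\eta}}=F(\rho_\xi,\rho_\eta)$; choosing a phase to make the inner product real and close to $F$ gives the bound. (Alternatively, this inequality reduces, via a central decomposition of $\K$, to the standard-form estimate~\eqref{eq:state_vector_norms}.) Establishing this second fact in full generality---arbitrary $\N$, no standardness or separability assumed---is the only genuine obstacle; the rest is bookkeeping.

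For \ref{it:partite1}$\Rightarrow$\ref{it:partite2}: given a state $\psi$ on $M_n$ and $\eps>0$, pick $\Psi\in\CC^n\ox\CC^n$ whose marginal on the first factor is $\psi$. Embezzlement of $\Omega$ yields $u\in M_n(\M)$ and $u'\in M_n(\M')$ with $\norm{\Omega\ox\Psi-uu'(\Omega\ox\ket{11})}<\eps/2$. Taking marginals on $M_n(\M)=\M\ox M_n\ox1$, and using that $u'\in M_n(\M)'$ leaves this marginal unchanged, the first fact gives $\norm{\omega\ox\psi-u(\omega\ox\bra1\placeholder\ket1)u^*}<\eps$, so $\omega$ is embezzling. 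The implication \ref{it:partite1}$\Rightarrow$\ref{it:partite3} is identical, with $\M'$ in place of $\M$.

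For \ref{it:partite2}$\Rightarrow$\ref{it:partite1}: fix $\Psi\in\CC^n\ox\CC^n$ and $\eps>0$, and let $\psi$ be the marginal of $\Psi$ on the first $\CC^n$. Since $\omega$ is embezzling on $\M$, choose $u\in M_n(\M)$ with $\norm{u(\omega\ox\bra1\placeholder\ket1)u^*-\omega\ox\psi}<\eps^2$. Then $v_1:=u(\Omega\ox\ket{11})$ and $v_2:=\Omega\ox\Psi$ have marginals on $M_n(\M)$ differing by less than $\eps^2$ (again because $u\in M_n(\M)$, and the marginal of $\Omega\ox\Psi$ is $\omega\ox\psi$). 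Applying the second fact with $\N=M_n(\M)$, $\N'=M_n(\M')$ on $\H\ox\CC^n\ox\CC^n$ produces $u'\in\U(M_n(\M'))$ with $\norm{u'v_1-v_2}<\eps$, i.e.\ $\norm{uu'(\Omega\ox\ket{11})-\Omega\ox\Psi}<\eps$ since $uu'=u'u$; hence $\Omega$ is embezzling. Finally, \ref{it:partite3}$\Rightarrow$\ref{it:partite1} follows by applying \ref{it:partite2}$\Rightarrow$\ref{it:partite1} to the bipartite system $(\H,\M',\M)$, whose $\M'$-marginal of $\Omega$ is $\omega'$, together with the observation that the definition of an embezzling vector is symmetric under interchanging the two commuting algebras.
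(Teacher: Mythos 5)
Your reduction of the theorem to the two marginal facts is sound, and the easy direction (\ref{it:partite1}$\Rightarrow$\ref{it:partite2},\ref{it:partite3}) together with the bookkeeping in \ref{it:partite2}$\Rightarrow$\ref{it:partite1} is correct. The problem is that your ``second fact'',
\[
 \inf_{u'\in\U(\N')}\norm{u'\xi-\eta}^2\le\norm{\rho_\xi-\rho_\eta}
\]
for an \emph{arbitrary} von Neumann algebra $\N\subset\B(\K)$ and arbitrary unit vectors, is not a black box you can wave at: it essentially \emph{is} the theorem, and neither of the two justifications you sketch establishes it. The Uhlmann route fails because $\U(\N')\xi$ does not range over the purifications of $\rho_\xi$ inside $\K$: two vectors with the same $\N$-marginal are related by a \emph{partial isometry} $v'\in\N'$ with $v'^*v'=[\N\xi]$, and $v'$ extends to a unitary of $\N'$ only if $1-[\N\xi]$ is equivalent to $1-v'v'^*$ in $\N'$, which can fail. (For $\N=\B(\H)\ox 1$ on $\H\ox\H$, take $\xi$ with full Schmidt support and $\eta$ with the same $\N$-marginal but whose second marginal has nontrivial kernel: no unitary in $\N'$ maps $\xi$ onto $\eta$, and the infimum is $0$ only via an approximation argument.) So one must (i) show the optimal overlap is approached by \emph{contractions} in $\N'$, and (ii) convert a norm-preserving contraction into a unitary on the given vector --- this is precisely the Russo--Dye argument of \cref{lem:simulation_by_unitaries} --- and on top of that one needs Uhlmann's theorem and the Powers--St\o rmer inequality for states on arbitrary von Neumann algebras, which are nontrivial and left unreferenced. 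Your alternative route is also not available as stated: a central decomposition does not bring a representation into standard form (the obstruction is multiplicity, not the center); what does work is cutting down by the support projections $[\N'\xi]\in\N$ and $[\N\xi]\in\N'$, after which one must separately show that embezzlement transfers back and forth under the cut-down.

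What you have correctly identified is the skeleton of the paper's own argument: marginal continuity for one direction, and ``closeness of marginals implies closeness of vectors up to a commutant unitary'' for the converse. The paper proves the latter by exactly the chain of steps your sketch elides: reduction to a standard bipartite system via support projections (\cref{cor:faithful}, \cref{lem:projection}, \cref{thm:WLOG_std}), the positive-cone estimate \eqref{eq:state_vector_norms} combined with \eqref{eq:purification_formula} and the symmetric choice $u'=j(u)$ in the standard form (\cref{prop:std_form_mbz}), the polar decomposition of vectors (\cref{lem:polar}), and the unitary approximation of norm-preserving contractions (\cref{lem:simulation_by_unitaries}). To complete your proof you would need to supply an argument for your second fact at essentially this level of detail; as written, the central estimate is asserted rather than proved.
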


The proof is in several steps and will be carried out in the following subsections.

\subsection{Equivalent notions of embezzlement}

Instead of letting Alice and Bob act by unitaries on the product vector $\Omega\ox\ket{11}$, we can ask if it is possible to embezzle the state $\psi$ on $M_n$ using (partial) isometries in $M_{n,1}(\M)$ and $M_{n,1}(\M')$, respectively.

\begin{proposition}\label{thm:equiv_bipartite_mbz}
    Let $(\H,\M,\M')$ be a bipartite system with $\H$ and let $\Omega\in\H$ be a unit vector. The following are equivalent
    \begin{enumerate}[(a)]
        \item\label{it:equiv_bipartite_mbz1} $\Omega$ is embezzling
        \item\label{it:equiv_bipartite_mbz2}
            for all $\Psi\in\CC^n\ox\CC^n$, $\eps>0$ there exist isometries $v\in M_{n,1}(\M)$, $v'\in M_{n,1}(\M')$ such that
            \begin{equation}\label{eq:bipartite_mbz2}
                \norm{\Omega\ox\Psi-vv'\Omega}<\eps.
            \end{equation}
        \item\label{it:equiv_bipartite_mbz3} for all $\Psi\in\CC^n\ox\CC^n$, $\eps>0$ there exist partial isometries $v\in M_{n,1}(\M)$, $v'\in M_{n,1}(\M')$ such that \eqref{eq:bipartite_mbz2} holds, which satisfy
        \begin{equation}\label{eq:equiv_bipartite_state3}
            vv^*=s(\omega)\ox s(\psi), \quad v^*v=s(\omega),\quad v'^*v'=s(\omega'),\quad v'v'^*=s(\omega')\ox s(\psi),
        \end{equation} 
        where $\psi\in S(M_n)$ is the marginal of the vector state $\Psi$.
        \item\label{it:equiv_bipartite_mbz4} for all $\Psi\in\CC^n\ox\CC^n$, $\eps>0$ there exist contractions $v\in M_{n,1}(\M)$, $v'\in M_{n,1}(\M')$ such that \eqref{eq:bipartite_mbz2} holds,
    \end{enumerate}
\end{proposition}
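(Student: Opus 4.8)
The plan is to prove the cycle $(a)\Rightarrow(b)\Rightarrow(d)\Rightarrow(c)\Rightarrow(a)$; the first two implications are formal and all the content sits in the last two. For $(a)\Rightarrow(b)$: the product vector $\Omega\ox\ket{11}$ is obtained from $\Omega$ by applying the two ``ancilla-creating'' isometries $\iota\in M_{n,1}(\M)$ and $\iota'\in M_{n,1}(\M')$ whose only nonzero column entry is $1_\H$; if $u\in M_n(\M)$ and $u'\in M_n(\M')$ are embezzling unitaries for $\Omega$ and $\Psi$, then $v:=u\iota$ and $v':=u'\iota'$ are isometries (a unitary composed with an isometry) with $vv'\Omega=uu'(\Omega\ox\ket{11})$, which gives \eqref{eq:bipartite_mbz2}. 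For $(b)\Rightarrow(d)$: an isometry is a contraction.

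For $(d)\Rightarrow(c)$ the first step is to replace the contractions by partial isometries via polar decomposition. Writing $v'=w'\abs{v'}$ with $w'\in M_{n,1}(\M')$ a partial isometry and $\abs{v'}=(v'^*v')^{1/2}\in\M'$, one has $\norm{(v'-w')\Omega}\le\norm{(1-\abs{v'})\Omega}$ and $\norm{(1-\abs{v'})\Omega}^2\le\ip{\Omega}{(1-\abs{v'}^2)\Omega}=1-\norm{v'\Omega}^2$, which is $O(\sqrt\eps)$ since $v$ is a contraction and $\norm{vv'\Omega}\ge1-\eps$; repeating the argument for $v$, estimated against the vector $w'\Omega$, yields partial isometries with error still $O(\sqrt\eps)$. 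It remains to adjust these so that the support identities \eqref{eq:equiv_bipartite_state3} hold exactly. Here one uses $s(\omega)\Omega=\Omega$ and $s(\omega')\Omega=\Omega$ repeatedly: composing $w$ with $s(\omega)$ on the right (and re-polar-decomposing, at cost $O(\sqrt\eps)$ again) arranges $w^*w\le s(\omega)$; since $\omega(w^*w)=\norm{w\Omega}^2$ is close to $1$, the projection $s(\omega)-w^*w$ is ``small'' and can be absorbed by enlarging $w$ with an isometry of the form $(s(\omega)-w^*w)\ox\ket1$ mapping into the (orthogonal, sufficiently large) complement of $ww^*$, and a conjugation by a fixed unitary in $1\ox M_n$ moves the final projection onto $s(\omega)\ox s(\psi)$; the two marginals of the pure state $\Psi$ have the same support rank, so that the adjustments on Alice's and Bob's sides are compatible.

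For $(c)\Rightarrow(a)$, I would first observe that \eqref{eq:equiv_bipartite_state3} applied to a maximally entangled $\Psi\in\CC^n\ox\CC^n$ forces a partial isometry $v$ with $v^*v=s(\omega)$ and $vv^*=s(\omega)\ox1_{\CC^n}$, i.e.\ $s(\omega)$ is equivalent to $n$ mutually orthogonal copies of itself for every $n$; hence $s(\omega)$ — and symmetrically $s(\omega')$ — is properly infinite, and after restricting to the central support of $\Omega$ we may assume $\M$ and $\M'$ are properly infinite. Next, given $v,v'$ from $(c)$ for a fixed $\Psi$, one checks that $v$ intertwines the $\M'$-action, so $v\Omega$ and $\iota\Omega=\Omega\ox\ket1$ induce the same state on $M_n(\M)'=\M'\ox1$; consequently there is a partial isometry $w\in M_n(\M)$ with $w(\Omega\ox\ket1)=v\Omega$ and prescribed initial and final projections, and — using proper infiniteness and, if necessary, enlarging the ancilla dimension so that the defect projections $1-w^*w$ and $1-ww^*$ become properly infinite and hence Murray--von Neumann equivalent — one dilates $w$ to a unitary $u\in M_n(\M)$; a symmetric argument produces $u'\in M_n(\M')$. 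Tracking which tensor factor each operator acts on, $uu'(\Omega\ox\ket{11})$ is then within $\eps$ of $vv'\Omega$, hence of $\Omega\ox\Psi$, giving \eqref{eq:bipartite_mbz}.

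I expect the main obstacle to be the conversion between the different ``shapes'' of intertwiners, especially the dilation step in $(c)\Rightarrow(a)$: a partial isometry between projections of the ``wrong'' relative size need not dilate to a unitary of the required form, so one genuinely has to extract proper infiniteness of the support projections from the strong support conditions in $(c)$ and then carefully manage Murray--von Neumann comparison of the defect projections while exploiting the freedom in the ancilla dimension. The support surgery in $(d)\Rightarrow(c)$ is also delicate, although routine once one systematically uses $s(\omega)\Omega=\Omega$ and polar decomposition.
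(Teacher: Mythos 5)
Your cycle $(a)\Rightarrow(b)\Rightarrow(d)\Rightarrow(c)\Rightarrow(a)$ puts the hard work in different places than the paper's cycle $(a)\Rightarrow(b)\Rightarrow(c)\Rightarrow(d)\Rightarrow(a)$, and the step $(d)\Rightarrow(c)$ as you describe it has a genuine gap. The polar-decomposition estimates are fine, but the ``support surgery'' is not: to turn a partial isometry $w$ with $w^*w\le s(\omega)$ and $ww^*\le s(\omega)\ox s(\psi)$ into one satisfying \eqref{eq:equiv_bipartite_state3} exactly, you need the defect projections $s(\omega)-w^*w$ and $s(\omega)\ox s(\psi)-ww^*$ to be Murray--von Neumann \emph{equivalent}, and smallness of $\omega(s(\omega)-w^*w)$ gives no comparison information at all -- a projection with tiny expectation in a state need not be subequivalent to another such projection, and even if you absorb $s(\omega)-w^*w$ into the range side you are left with a nonzero residual final defect that cannot be killed. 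Worse, for \eqref{eq:equiv_bipartite_state3} to be satisfiable at all when $s(\psi)$ has rank $k>1$ one needs $s(\omega)\sim k\cdot s(\omega)$, i.e.\ $s(\omega)$ properly infinite -- which is exactly the structural fact you only extract later, in $(c)\Rightarrow(a)$, \emph{from statement (c) itself}. So the argument is circular at this point: the surgery needs (a consequence of) the conclusion of (c) as an input. In the paper's ordering this issue is harmless because (c) sits off the critical path (the equivalence closes through the trivial $(c)\Rightarrow(d)$), whereas in your ordering (c) carries the entire return to (a).

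The tool you are missing is the Russo--Dye argument of \cref{lem:simulation_by_unitaries} (taken from \cite[Lem.~2.4]{haagerup1990equivalence}): a contraction $x\in\M$ with $\norm{x\Psi}=\norm\Psi$ can be approximated \emph{on the vector $\Psi$} by a unitary of $\M$, since $(1-\delta)x$ is a convex combination of unitaries and an averaging estimate selects one close to $x\Psi$. This yields $(d)\Rightarrow(a)$ directly -- pass from $v,v'$ to contractions $w\in M_n(\M)$, $w'\in M_n(\M')$ via \cref{lem:matrices_and_vectors}, note $ww'(\Omega\ox\ket{11})=vv'\Omega$, and apply the lemma twice -- with no comparison theory, no proper infiniteness, and no exact support conditions required. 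Your $(c)\Rightarrow(a)$ dilation argument can probably be made to work (the extraction of proper infiniteness from the maximally entangled case and the identity $(n-1)s(\omega)\sim(n-k)s(\omega)$ are sound, granted the ancilla enlargement when $\psi$ is faithful), but it is far heavier than necessary and does not repair the gap upstream. The fix is to reorder: prove $(d)\Rightarrow(a)$ by Russo--Dye and obtain (c) as a consequence of (b) rather than of (d).
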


The operator  $vv':\H\to\H\ox\CC^n\ox\CC^n$ in \eqref{eq:bipartite_mbz2} is defined by $vv'\Omega = \sum_{ij} v_iv'_j \Omega \ox\ket{ij}$.
We will also show the following monopartite version of \cref{thm:equiv_bipartite_mbz}:

\begin{proposition}\label{thm:equiv_marginal_mbz}
    Let $\omega$ be a normal state on a von Neumann $\M$. The following are equivalent
    \begin{enumerate}[(a)]
        \item\label{it:equiv_marginal_mbz1} $\omega$ is embezzling
        \item\label{it:equiv_marginal_mbz2} for all $\psi\in S(M_n)$, $\eps>0$, there exist isometries $v\in M_{n,1}(\M)$, such that
        \begin{equation}\label{eq:equiv_marginal_mbz}
            \norm{\omega\ox\psi-v\omega v^*}<\eps,
        \end{equation}
        \item\label{it:equiv_marginal_mbz3} for all $\psi\in S(M_n)$, $\eps>0$, there exist partial isometries $v\in M_{n,1}(\M)$ with $vv^*=s(\omega\ox\psi)$ and $v^*v=s(\omega)$ such that \eqref{eq:equiv_marginal_mbz} holds.
        \item\label{it:equiv_marginal_mbz4} for all $\psi\in S(M_n)$, $\eps>0$, there exists a contraction $v\in M_{n,1}(\M)$ such that \eqref{eq:equiv_marginal_mbz} holds.
    \end{enumerate}
\end{proposition}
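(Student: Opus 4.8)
The plan is to prove the cycle \ref{it:equiv_marginal_mbz1}$\Rightarrow$\ref{it:equiv_marginal_mbz2}$\Rightarrow$\ref{it:equiv_marginal_mbz3}$\Rightarrow$\ref{it:equiv_marginal_mbz4}$\Rightarrow$\ref{it:equiv_marginal_mbz1}. Two links are immediate: \ref{it:equiv_marginal_mbz3}$\Rightarrow$\ref{it:equiv_marginal_mbz4} since a partial isometry is a contraction, and \ref{it:equiv_marginal_mbz1}$\Rightarrow$\ref{it:equiv_marginal_mbz2} by taking $v:=uw_0$, where $w_0\in M_{n,1}(\M)$ is the canonical isometry with entries $1$ in the first slot and $0$ elsewhere, so that $w_0\omega w_0^*=\omega\ox\bra1\placeholder\ket1$; then $v^*v=w_0^*u^*uw_0=1$ and $v\omega v^*=u(\omega\ox\bra1\placeholder\ket1)u^*$. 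For \ref{it:equiv_marginal_mbz2}$\Rightarrow$\ref{it:equiv_marginal_mbz3} one first replaces the isometry $v$ by the partial isometry $v\,s(\omega)$, which has source $s(\omega)$ and still satisfies $(v s(\omega))\,\omega\,(v s(\omega))^*=v\omega v^*$; it then remains to correct the range projection $p:=v s(\omega)v^*$ to $s(\omega\ox\psi)$. Since $p\sim s(\omega)\otimes e_{11}$ in $M_n(\M)$, and $s(\omega)\otimes e_{11}\sim s(\omega)\otimes s(\psi)=s(\omega\ox\psi)$ precisely in the situations where embezzlement can occur (this is automatic when $s(\omega)\M s(\omega)$ is of type $\III$ or when $\psi$ has rank one; otherwise, by \cref{prop:spectral_distance} applied after amplification, none of \ref{it:equiv_marginal_mbz1}--\ref{it:equiv_marginal_mbz4} can hold and the implication is vacuous), there is a partial isometry $z$ from $p$ to $s(\omega\ox\psi)$. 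Choosing $z$ via the polar decomposition of $s(\omega\ox\psi)\,\Omega_{\omega\ox\psi}$ against the cone vector of $v\omega v^*$ in a standard form and replacing $v$ by $zv$ moves the state by $O(\eps)$; here one uses the two‑sided bound \eqref{eq:state_vector_norms} between states and their cone vectors together with \eqref{eq:purification_formula}. Rescaling $\eps$ gives \ref{it:equiv_marginal_mbz3}.

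The substance is \ref{it:equiv_marginal_mbz4}$\Rightarrow$\ref{it:equiv_marginal_mbz1}, where the key tool is the Julia--Halmos unitary dilation. Given a contraction $v\in M_{n,1}(\M)$ with $\norm{\omega\ox\psi-v\omega v^*}<\eps$, observe that $\omega(v^*v)=(v\omega v^*)(1)\ge 1-\eps$, so $v$ is nearly isometric on $s(\omega)$: the element $\hat v\in M_{n+1,1}(\M)$ with entries $v$ in the first $n$ slots and $(1-v^*v)^{1/2}$ in the last is an isometry, $\hat v\omega\hat v^*$ agrees with $v\omega v^*$ on the $M_n$-corner and carries mass $\omega(1-v^*v)\le\eps$ in the extra slot, so a Cauchy--Schwarz estimate for states yields $\norm{\omega\ox\psi-\hat v\omega\hat v^*}=O(\sqrt\eps)$ (with $\psi$ now regarded inside $S(M_{n+1})$). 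Dilating once more, the block matrix $u=\left(\begin{smallmatrix}\hat v & 1-\hat v\hat v^*\\ 0 & -\hat v^*\end{smallmatrix}\right)$ is a unitary in $M_{n+2}(\M)$ with $u\,(\omega\ox\bra1\placeholder\ket1)\,u^*=\hat v\omega\hat v^*$ exactly (the off‑diagonal terms vanish because $\hat v^*\hat v=1$), hence $\norm{\omega\ox\psi-u(\omega\ox\bra1\placeholder\ket1)u^*}=O(\sqrt\eps)$. Letting $\eps\to0$ and $\psi$ range over all finite‑dimensional states, this says $\omega\ox\bra1\placeholder\ket1\sim\omega\ox\psi$ in a (uniformly bounded) matrix amplification of $\M$; since embezzlement does not depend on the size of the auxiliary matrix block — checked by reducing an arbitrary $\psi$ to a faithful state on a smaller block, where the corner embeddings of unitary orbits match, and using in the other direction that in a type $\III$ algebra all nonzero projections are equivalent (the only case in which \ref{it:equiv_marginal_mbz4} is non‑vacuous) — this is exactly the assertion that $\omega$ is embezzling.

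The main obstacle is precisely this last step: the dilation unavoidably enlarges the matrix algebra (from $M_n$ to $M_{n+2}$), so one must show that $\inf_u\norm{u(\omega\ox\phi)u^*-\omega\ox\psi}$ is independent of the amplification $M_m(\M)$ in which $u$ is taken. Establishing this dimension‑robustness, together with the accompanying support/equivalence bookkeeping (e.g.\ that $s(\omega)\otimes e_{11}\sim s(\omega)\otimes s(\psi)$ in the relevant cases), is the only point where one needs something beyond routine manipulations of contractions, partial isometries, and the state--cone‑vector correspondence.
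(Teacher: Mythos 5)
Your cycle and the easy links \ref{it:equiv_marginal_mbz1}$\Rightarrow$\ref{it:equiv_marginal_mbz2} and \ref{it:equiv_marginal_mbz3}$\Rightarrow$\ref{it:equiv_marginal_mbz4} match the paper. The genuine gap is in \ref{it:equiv_marginal_mbz4}$\Rightarrow$\ref{it:equiv_marginal_mbz1}, and it is exactly the point you flag yourself: the Halmos dilation produces a unitary in $M_{n+2}(\M)$, so what you actually prove is the embezzlement estimate for the corner-embedded copy of $\psi$ inside $S(M_{n+2})$, not for $\psi\in S(M_n)$ with a unitary in $M_n(\M)$ as the definition demands. Your proposed repair does not close this. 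If you compress the large unitary back to $M_n(\M)$ by the projection onto the first $n$ coordinates, you obtain a \emph{contraction} in $M_n(\M)$ achieving error $\le\eps$ --- which is statement \ref{it:equiv_marginal_mbz4} again, so the argument is circular. The alternative you sketch (``in a type $\III$ algebra all nonzero projections are equivalent, the only case in which \ref{it:equiv_marginal_mbz4} is non-vacuous'') presupposes that \ref{it:equiv_marginal_mbz4} forces type $\III$; that is a consequence of results proved \emph{later} in the paper, is not available here, and in any case the equivalence must be established for arbitrary $\M$ without first deciding which case one is in. The same unproven equivalence $s(\omega)\ox e_{11}\sim s(\omega)\ox s(\psi)$ silently enters your \ref{it:equiv_marginal_mbz2}$\Rightarrow$\ref{it:equiv_marginal_mbz3} step as well.

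The missing ingredient is the one the paper invokes: by \cite[Lem.~2.4]{haagerup1990equivalence}, for normal states on a \emph{fixed} von Neumann algebra the distance between unitary orbits equals the infimum of $\norm{a\phi a^*-\varphi}$ over contractions $a$. The paper's route is therefore short: the zero-padding bijection of \cref{lem:matrices_and_vectors} turns your $v\in M_{n,1}(\M)$ into a contraction $w\in M_n(\M)$ with $w(\omega\ox\bra1\placeholder\ket1)w^*=v\omega v^*$, and the cited lemma upgrades the family of contractions directly to unitaries in $M_n(\M)$ --- no dilation, no change of matrix size, no case distinction on the type. (A state-level Russo--Dye argument in the spirit of \cref{lem:simulation_by_unitaries} is essentially the proof of that lemma.) Your dilation machinery is correct as far as it goes, but it only postpones the contraction-to-unitary conversion; without that lemma the implication \ref{it:equiv_marginal_mbz4}$\Rightarrow$\ref{it:equiv_marginal_mbz1} remains open.
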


An immediate consequence is the following result which often allows us to assume that embezzling states are faithful:

\begin{corollary}\label{cor:faithful}
    Let $\M$ be a von Neumann algebra with a normal state $\omega$.
    Denote by $\M_0$ the supporting corner $s(\omega)\M s(\omega)$ and by $\omega_0$ the restriction of $\M$ to $\M_0$.
    Then $\omega$ is an embezzling state on $\M$ if and only if $\omega_0$ is and embezzling state on $\M_0$.
\end{corollary}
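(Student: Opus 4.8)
The statement is an equivalence of two "embezzling" properties across the corner compression. The key observation is that, because the support projection $s(\omega)$ lies in $\M$, tensoring with a matrix algebra $M_n$ commutes with corner compression: $s(\omega\ox\psi) = s(\omega)\ox s(\psi)$, and the corner $(\M\ox M_n)_{s(\omega\ox\psi)}$ equals $\M_0\ox (M_n)_{s(\psi)}$. So I would first record, as a small lemma, that for $x\in M_{n,1}(\M)$ the matrix entries that actually ``see'' $\omega$ (resp.\ $\omega_0$) are those compressed by $s(\omega)$: more precisely, for a contraction $v\in M_{n,1}(\M)$ one has $v\omega v^* = (v s(\omega))\,\omega\,(v s(\omega))^*$ and $vs(\omega)\in M_{n,1}(\M_0)$ after identifying $\M_0 = s(\omega)\M s(\omega)$ (the image lands in the corner because the relevant expression only involves $s(\omega) x_i s(\omega)$). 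The norms involved in \eqref{eq:marginal_mbz} are norms of functionals on $M_n(\M)$ that are supported on the corner $M_n(\M_0)$, and the restriction map from normal functionals on $M_n(\M)$ supported there to normal functionals on $M_n(\M_0)$ is an isometry. This dictionary makes the two embezzlement conditions literally the same inequality.

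\medskip
\noindent\textbf{Step 1 (Reduce to the contraction formulation).} By \cref{thm:equiv_marginal_mbz}, $\omega$ is embezzling on $\M$ iff for every state $\psi$ on $M_n$ and every $\eps>0$ there is a \emph{contraction} $v\in M_{n,1}(\M)$ with $\norm{\omega\ox\psi - v\omega v^*}<\eps$, and similarly for $\omega_0$ on $\M_0$ with contractions in $M_{n,1}(\M_0)$. I will prove the equivalence in this contraction formulation, which is the most flexible.

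\medskip
\noindent\textbf{Step 2 ($\omega_0$ embezzling $\implies$ $\omega$ embezzling).} Given a contraction $v_0\in M_{n,1}(\M_0)$ witnessing embezzlement for $\omega_0$, view it as a contraction $v\in M_{n,1}(\M)$ via the inclusion $\M_0\subset\M$ (its entries are $v_{0,i}\in s(\omega)\M s(\omega)\subset\M$). Then $v\omega v^*$ as a functional on $M_n(\M)$ restricts, on the corner $M_n(\M_0)$, to exactly $v_0\omega_0 v_0^*$, and it annihilates anything orthogonal to the corner since each entry is compressed by $s(\omega)$; likewise $\omega\ox\psi$ restricted to $M_n(\M_0)=M_n(\M)_{s(\omega)\ox 1}$ is $\omega_0\ox\psi$ and is supported there. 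Because the norm of a normal functional equals the norm of its restriction to the corner on which it is supported, $\norm{\omega\ox\psi - v\omega v^*}_{M_n(\M)} = \norm{\omega_0\ox\psi - v_0\omega_0 v_0^*}_{M_n(\M_0)}<\eps$, so $\omega$ is embezzling.

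\medskip
\noindent\textbf{Step 3 ($\omega$ embezzling $\implies$ $\omega_0$ embezzling).} Given a contraction $v\in M_{n,1}(\M)$ with $\norm{\omega\ox\psi - v\omega v^*}<\eps$, set $v_0 := (s(\omega)\ox 1)\,v\,s(\omega) \in M_{n,1}(\M_0)$, a contraction with entries in the corner. Since $\omega = s(\omega)\,\omega\, s(\omega)$ (i.e.\ $\omega(x)=\omega(s(\omega)xs(\omega))$), we have $v\omega v^* = v_0\omega_0 v_0^*$ as functionals after compressing to the corner; and $\omega\ox\psi$ is already supported on the corner. So the restriction of $\omega\ox\psi - v\omega v^*$ to $M_n(\M_0)$ is $\omega_0\ox\psi - v_0\omega_0 v_0^*$, and its norm is $\le \norm{\omega\ox\psi - v\omega v^*}<\eps$ (restriction to a subalgebra does not increase the norm of a functional). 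Hence $\omega_0$ is embezzling.

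\medskip
\noindent\textbf{Main obstacle.} The only genuinely delicate point is the bookkeeping in Step 2 about norms of functionals and their restrictions: one must be careful that $v\omega v^*$, as a functional on the full $M_n(\M)$, genuinely vanishes off the corner $M_n(\M)_{s(\omega)\ox 1}$, so that its full norm coincides with the norm of its restriction. This follows because $(v\omega v^*)(y) = \omega(v^* y v)$ and each entry of $v$ has been (or can be taken, via the contraction freedom) compressed by $s(\omega)$ on the right, together with $\omega(x)=\omega(s(\omega)x s(\omega))$; then for $y$ with $(s(\omega)\ox1)\,y\,(s(\omega)\ox1)=0$ the expression vanishes. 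Everything else is a routine translation between the corner and the ambient algebra.
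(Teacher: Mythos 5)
Your proof is correct and follows essentially the same route as the paper, which states the corollary as an immediate consequence of Proposition~\ref{thm:equiv_marginal_mbz}: the witnessing operators can be taken to live in the supporting corner, and the relevant functionals are supported there, so the two norm conditions coincide. The only cosmetic difference is that you work with the contraction formulation (d) and compress by hand, whereas the paper's intended shortcut is formulation (c), whose prescribed initial and final projections $v^*v=s(\omega)$, $vv^*=s(\omega\ox\psi)$ already force $v$ into $M_{n,1}(\M_0)$.
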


To deduce \cref{thm:equiv_bipartite_mbz,thm:equiv_marginal_mbz}, we use a few observations about the basic fact that elements of $M_{n,1}(\M)$ are in bijection with matrices in $M_n(\M)$ with zero entries outside of the first column:

\begin{lemma}\label{lem:matrices_and_vectors}
    Let $\M$ be a von Neumann algebra. Then
    \begin{equation}
        M_{n,1}(\M) \ni v = [v_i] \longmapsto w =[w_{ij}] = [v_i \delta_{j1}] \in M_n(\M)
    \end{equation}
    is a bijection between $M_{n,1}(\M)$ and $\{[w_{ij}] \in M_n(\M) : w_{ij}=0 \ \forall j\ne 1\}$, such that $v : \H\to\H\ox \CC^n$ is a partial isometry if and only if $w$ is a partial isometry in $M_n(\M)$. Their initial and final projections are related via
    \begin{equation}
       w^*w = v^*v\ox\ketbra  11,\qquad ww^* = vv^*.
    \end{equation}
    Let $\omega$ be a normal state on $\M$. Then $v \omega v^* = w(\omega\ox\bra1\placeholder\ket1)w^*$ and
    \begin{equation}
        \Omega_{v\omega v^*} = j(w)w (\Omega_\omega \ox \ket{11}) = v'v \Omega_\omega
    \end{equation}
    in the standard form of $M_n(\M)$, where $v' := J\up nvJ$ is the partial isometry in $M_{n,1}(\M')$ corresponding to $j(w)\in M_n(\M')$ via the above bijection.
\end{lemma}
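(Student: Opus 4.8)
The statement breaks into three parts: (1) the map $v\mapsto w$ is a bijection onto the "first-column" matrices preserving the partial-isometry property and tracking initial/final projections; (2) the identity $v\omega v^* = w(\omega\ox\bra1\placeholder\ket1)w^*$ of positive functionals; and (3) the purification formula $\Omega_{v\omega v^*} = j(w)w(\Omega_\omega\ox\ket{11}) = v'v\,\Omega_\omega$ in the standard form of $M_n(\M)$. The plan is to dispatch (1) and (2) by bare-hands matrix computation using the identification of $M_{n,m}(\M)$ with operators $\H\ox\CC^m\to\H\ox\CC^n$ recalled in \cref{subsec:pre:hilbert-spaces}, and then to obtain (3) from the characterization of the positive cone in \cref{lem:standard_amplification} together with the purification formula \eqref{eq:purification_formula}.

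First I would verify (1). The assignment $v=[v_i]\mapsto w=[v_i\delta_{j1}]$ is visibly linear and injective, and its image is exactly the set of matrices vanishing outside the first column, with inverse $w\mapsto[w_{i1}]$; so bijectivity is immediate. For the operator-theoretic content: under the identification, $v:\H\to\H\ox\CC^n$ acts by $v\Psi = \sum_i v_i\Psi\ox\ket i$, while $w:\H\ox\CC^n\to\H\ox\CC^n$ acts by $w(\Psi\ox\ket j) = \delta_{j1}\sum_i v_i\Psi\ox\ket i$. Hence $w = v\,\pi_1$ where $\pi_1:\H\ox\CC^n\to\H$ is the partial isometry $\Psi\ox\ket j\mapsto\delta_{j1}\Psi$ with $\pi_1^*\pi_1 = 1\ox\ketbra11$ and $\pi_1\pi_1^* = 1$. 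From $w=v\pi_1$ one computes $w^*w = \pi_1^*(v^*v)\pi_1 = v^*v\ox\ketbra11$ (using that $v^*v$ commutes with the support of $\pi_1$, as $v^*v\in\M$ acts on the $\H$ factor) and $ww^* = v\pi_1\pi_1^*v^* = vv^*$. If $v$ is a partial isometry, $v^*v$ is a projection, so $w^*w$ is a projection and $w$ is a partial isometry; conversely if $w^*w=v^*v\ox\ketbra11$ is a projection then $v^*v$ is, so $v$ is a partial isometry. This also gives the stated initial and final projections.

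Next, (2): for $y\in M_n(\M)$, $(v\omega v^*)(y) = \omega(v^*yv)$ by definition of $v\omega v^*$, where $v^*yv\in\M$ is the $(1,1)$-entry-type contraction; and $\big(w(\omega\ox\bra1\placeholder\ket1)w^*\big)(y) = (\omega\ox\bra1\placeholder\ket1)(w^*yw)$. Writing $w=v\pi_1$, we get $w^*yw = \pi_1^*(v^*yv)\pi_1$, and since $v^*yv\in\M$ acts on the $\H$ factor, $(\omega\ox\bra1\placeholder\ket1)(\pi_1^*(v^*yv)\pi_1) = \omega(v^*yv)$; the two functionals agree. For (3), I would use that both sides are vectors in the positive cone $\P\up n$ and that the state-vector map $\omega\mapsto\Omega_\omega$ into $\P\up n$ is a bijection (Section~\ref{subsec:pre:weights}, \eqref{eq:unique_gns}), so it suffices to check that $j\up n(w)w(\Omega_\omega\ox\ket{11})$ lies in $\P\up n$ and induces the functional $v\omega v^* = w(\omega\ox\bra1\placeholder\ket1)w^*$. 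That $j\up n(w)w\xi\in\P\up n$ for $\xi = \Omega_\omega\ox\ket{11}\in\P\up n$ is exactly axiom~\ref{it:axiom3} applied in $M_n(\M)$ (with $\Omega_\omega\ox\ket{11} = \Omega_{\omega\ox\bra1\placeholder\ket1}\in\P\up n$, cf.\ \cref{lem:standard_amplification} and \cref{exa:M_n_std_form}); and the induced functional of $j\up n(w)w\xi$ is $w\,(\cdot)\,w^*$ applied to the functional of $\xi$, which is $w(\omega\ox\bra1\placeholder\ket1)w^*$, by the general identity $\Omega_{a\varphi a^*} = aj(a)\Omega_\varphi$ from \eqref{eq:purification_formula} with $a=w\in M_n(\M)$ (extended from unitaries to arbitrary elements since both sides depend only on $a$ through $aj(a)$ acting on the cone). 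Finally, $v'=J\up nvJ$ corresponds to $j\up n(w)=J\up nwJ\up n$ under the same bijection of part~(1) applied to $\M'$ in place of $\M$ — one checks $J\up n(v\pi_1)J\up n = (J\up nvJ\up n)\pi_1$ since $\pi_1$ acts only on the $\CC^n$ tensor leg where $J\up n$ acts by the standard real structure, leaving $\pi_1$ invariant — so $j\up n(w)w = v'v$ and the second equality in (3) follows.

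The main obstacle is bookkeeping rather than conceptual: one must be careful that $J\up n$ and $\pi_1$ interact correctly (the claim $J\up n\pi_1 = \pi_1 J$, i.e.\ that $\pi_1$ is "real" for the amplified conjugation) and that the purification formula \eqref{eq:purification_formula}, stated for $v\in\M$, is legitimately applied with the non-unitary element $w\in M_n(\M)$ — this requires noting that $\Omega_{w\varphi w^*} = wj\up n(w)\Omega_\varphi$ holds for all $w\in M_n(\M)$, which follows by approximating $w$ by linear combinations of partial isometries, or directly from the defining property of $\P\up n$ together with uniqueness of the cone representative. Everything else is routine manipulation of block matrices over $\M$.
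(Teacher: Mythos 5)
Your proposal is correct and follows essentially the same route as the paper: a direct block-matrix computation of $w^*w$, $ww^*$, and of the induced functional (your factorization $w=v\pi_1$ is just a tidy repackaging of the paper's term-by-term calculation), followed by the purification formula \eqref{eq:purification_formula} in the standard form of $M_n(\M)$ for the last display. The one caveat is your parenthetical claim that $\Omega_{a\varphi a^*}=a\,j(a)\,\Omega_\varphi$ extends to arbitrary elements $a$ (it fails already for $a=2\cdot 1$, where the left-hand side is $2\Omega_\varphi$ but the right-hand side is $4\Omega_\varphi$, and it cannot be obtained by approximating with combinations of partial isometries since both sides are quadratic in $a$); for the partial isometries actually needed here, your cone-membership (axiom~\ref{it:axiom3}) plus uniqueness-of-the-cone-representative argument is the correct justification, provided one notes that $w^*w$ dominates the support of $\omega\ox\bra1\placeholder\ket1$.
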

\begin{proof}
  We write $v=\sum_i v_i\otimes \ket i$ and $w=\sum_i v_i \otimes \ket i\bra1$. Hence
  \begin{align}
      w^* w = \sum_{i} v^*_i v_i\otimes \ketbra11 = v^*v\otimes\ketbra 11 ,\quad ww^* = \sum_{ij} v_i v_j^* \otimes \ket{i}\bra{j} = vv^*.
   \end{align}
  Evidently $w$ is a partial isometry if and only if $v$ is and for $x\otimes y \in \M\otimes M_n$ we have
  \begin{align}
    w(\omega\otimes \bra1\placeholder\ket1)w^*(x\otimes y) = \omega\otimes \bra1\placeholder\ket1\left(w^* x\otimes y w\right) = \sum_{i,j}\omega(v_i^* x v_j)\bra{i}y\ket{j} =  v\omega v^*(x\otimes y).
  \end{align}
  The rest follows from  and \cref{eq:purification_formula} and the standard form of $M_n(\M)$.
\end{proof}

\begin{lemma}\label{lem:simulation_by_unitaries}
    Let $\M$ be a von Neumann algebra on a Hilbert space $\H$. Let $x\in\M$ be a contraction, i.e., $\norm x\le 1$.
    If $\norm{x\Psi}=\norm\Psi$ for some $\Psi\in\H$, then for each $\epsilon>0$ there exists a unitary $u\in\M$ such that $\norm{u\Psi-x\Psi}<\eps$.
\end{lemma}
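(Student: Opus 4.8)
The plan is to reduce the claim to the problem of extending a partial isometry of $\M$ to a unitary — doable exactly unless a Murray--von Neumann comparison obstruction appears — and then to circumvent that obstruction by truncating $\Psi$.

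First I would turn the norm equality into a spectral statement. As $\norm x\le1$, the element $1-x^*x\in\M$ is positive, and $\norm{x\Psi}=\norm\Psi$ gives $\ip\Psi{(1-x^*x)\Psi}=0$, hence $(1-x^*x)^{1/2}\Psi=0$ and so $x^*x\Psi=\Psi$. With the polar decomposition $x=v\abs x$ (where $v\in\M$ is a partial isometry and $\abs x=(x^*x)^{1/2}\in\M$) this yields $\abs x\Psi=\Psi$ and $x\Psi=v\Psi$, so it suffices to find, for given $\eps>0$, a unitary $u\in\M$ with $\norm{u\Psi-v\Psi}<\eps$. Cutting $v$ down by $e:=[\M'\Psi]$, the support projection of the vector state induced by $\Psi$, and noting that the range projection of $\abs x$ dominates $e$ (because $\abs x\Psi=\Psi$), we get a partial isometry $v_0:=ve$ with $v_0^{\,*}v_0=e$, $v_0v_0^{\,*}=vev^*=[\M'v\Psi]=:f$ and $v_0\Psi=v\Psi=x\Psi$. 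Here $e$ and $f$ are $\sigma$-finite — they carry the faithful normal states induced by $\Psi$ and by $x\Psi$ on $e\M e$ and $f\M f$ — and $v_0$ exhibits them as Murray--von Neumann equivalent. Passing to the corner $q\M q$ with $q:=e\vee f$ (still $\sigma$-finite, and containing $v_0$, $e$, $f$) and restoring the identity on $(1-q)\H$ at the end, I may assume $\M$ itself is $\sigma$-finite.

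Everything then hinges on whether $1-e$ and $1-f$ are equivalent. If they are, the lemma holds exactly: choosing a partial isometry $c\in\M$ with $c^*c=1-e$, $cc^*=1-f$, the operator $u:=v_0+c$ is unitary with $u\Psi=v_0\Psi+c(1-e)\Psi=x\Psi$. Otherwise, comparison theory (and the fact that $e$ and $f$ vanish on the same central projections, where $1-e=1-f$) lets me reduce to the case where $e$ is properly infinite with full central support — on the complementary part $1-e$ and $1-f$ are already equivalent, since the complement of a finite projection in a properly infinite algebra is properly infinite with full central support, hence equivalent to the identity. So assume $e$ is properly infinite with full central support, whence $e$ is equivalent to $1$.

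Here the $\eps$ saves the day through a truncation. Write $e=\sum_n e_n$ as an orthogonal sum of projections each equivalent to $e$ (possible since $e$ is properly infinite and $\sigma$-finite). As $\sum_n\ip\Psi{e_n\Psi}=\ip\Psi{e\Psi}=1$, pick $N$ with $\sum_{n\ge N}\ip\Psi{e_n\Psi}<\eps^2/4$ and set $g:=\sum_{n\ge N}e_n$, $e':=e-g$, $f':=v_0e'v_0^{\,*}$, $h:=v_0gv_0^{\,*}=f-f'$. Then $\norm{g\Psi}<\eps/2$, and $g$ and $h$, being countable orthogonal sums of copies of $e$, are each equivalent to $1$; since $g$ is properly infinite and orthogonal to $1-e$, the projection $1-e'=(1-e)+g$ is equivalent to $g$, hence to $1$, and likewise $1-f'$ is equivalent to $1$, so $1-e'$ and $1-f'$ are equivalent. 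Extending $v_0e'$ to a unitary $u=v_0e'+c$ as above and using $v_0e'=v_0-v_0g$, one gets $\norm{u\Psi-x\Psi}=\norm{u\Psi-v_0\Psi}\le\norm{v_0g\Psi}+\norm{c\Psi}=2\norm{g\Psi}<\eps$. This truncation is the main obstacle: $v_0$ need not extend to a unitary of $\M$, so one must shrink $\Psi$ onto a subprojection whose complement is properly infinite in order to make the two defect projections equivalent — precisely what $\sigma$-finiteness of $[\M'\Psi]$ and summability of the numbers $\ip\Psi{e_n\Psi}$ make possible.
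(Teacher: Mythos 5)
Your argument is correct, but it takes a genuinely different --- and much heavier --- route than the paper. The paper's proof is a short convexity argument: by the Russo--Dye theorem, $(1-\delta)x$ is a finite convex combination $\sum_i p_i u_i$ of unitaries of $\M$, and the hypothesis $\norm{x\Psi}=\norm\Psi$ makes the average $\sum_i p_i\norm{u_i\Psi-(1-\delta)x\Psi}^2$ collapse to $(2\delta-\delta^2)\norm\Psi^2$, so some single $u_i$ already works up to an error of order $\sqrt\delta\,\norm\Psi$; no polar decomposition, comparison theory, or case analysis is needed. Your proof instead extracts the partial isometry $v_0=ve$ with $v_0\Psi=x\Psi$ and extends it to a unitary, paying for the possible failure of $1-e\sim 1-f$ by truncating with $g$. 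This is more work, but it buys something the paper's argument does not: whenever the defect projections are equivalent (e.g.\ in finite von Neumann algebras) you obtain an \emph{exact} unitary with $u\Psi=x\Psi$. One imprecision in your reduction step: on the central summand where $e$ is finite you justify $1-e\sim1-f$ by citing that the complement of a finite projection in a \emph{properly infinite} algebra is properly infinite with full central support; but the corner $q\M q$ with $q=e\vee f$ need not be properly infinite. You need one further central splitting of $q\M q$ into its finite and properly infinite summands: on the finite summand $e\sim f$ implies $q-e\sim q-f$ simply because the algebra is finite, and on the properly infinite summand your stated argument applies. With that patch, the remaining steps (the implication $x^*x\Psi=\Psi\Rightarrow\abs{x}\Psi=\Psi$, the identification $v_0v_0^*=[\M' x\Psi]$, the decomposition $e=\sum_n e_n$ with $e_n\sim e$ from proper infiniteness and $\sigma$-finiteness, and the Schr\"oder--Bernstein comparisons giving $g\sim q\sim(q-e)+g$) all check out, and the proof is complete.
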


\begin{proof}
    The proof idea is taken from \cite[Lem.~2.4]{haagerup1990equivalence}. Let $0<\delta<1$.
    By the Russo-Dye Theorem \cite{kadison1985means}, $(1-\delta)x$ is a finite convex combination of unitaries $u_i\in\M$:
    \begin{equation}
        (1-\delta)x = \sum_i p_i u_i,\qquad p_i>0, \quad \sum_i p_i=1.
    \end{equation}
    Then
    \begin{align*}
        \sum p_i\norm{u_i\Psi-(1-\delta)x\Psi}^2
        &= \sum p_i (\norm{u_i\Psi}^2 + \norm{(1-\delta)x\Psi}^2 - 2\Re\ip{u_i\Psi}{(1-\delta)x\Psi})\\
        &= \norm{\Psi}^2 + (1-\delta)^2 \norm{\Psi}^2 - 2\Re\ip{(1-\delta)\Psi}{(1-\delta)\Psi} \\
        &= \norm\Psi^2 - (1-\delta)^2\norm\Psi^2 
        = \norm{\Psi}^2 (2\delta-\delta^2)
        <\norm\Psi^2 2\delta.
    \end{align*}
    Thus, for some index $i$ the unitary $u=u_i$, has to satisfy $\norm{u\Psi- (1-\delta)x\Psi}<2\delta\norm\Psi$.
    Therefore, 
    \begin{align*}
        \norm{u\Psi-x\Psi} \le \norm{u\Psi-(1-\delta)x\Psi} + \delta\norm\Psi \le \norm\Psi (\sqrt2\sqrt \delta + \delta).
    \end{align*}
\end{proof}

\begin{proof}[Proof of \cref{thm:equiv_bipartite_mbz}]
    \ref{it:equiv_bipartite_mbz3} $\Rightarrow$ \ref{it:equiv_bipartite_mbz4} is trivial.
    \ref{it:equiv_bipartite_mbz1} $\Rightarrow$ \ref{it:equiv_bipartite_mbz2}:    
    Let $\Psi\in\CC^n\ox\CC^n$ be a given unit vector and let $\eps>0$. For given unitaries $u,u'$ in $M_n(\M)$ and $M_n(\M')$, respectively, such that \eqref{eq:bipartite_mbz} holds, we define isometries $v,v'$ by $v = u (\,\placeholder\ox\ket1)$ and $v'=u'(\placeholder\ox\ket1)$. It follows then that \eqref{eq:bipartite_mbz2} holds because $vv'\Omega = uu'(\Omega\ox\ket{11})$.

    \ref{it:equiv_bipartite_mbz2} $\Rightarrow$ \ref{it:equiv_bipartite_mbz3}:
    Let $\Psi\in\CC^n\ox\CC^n$ be a given unit vector and let $\eps>0$. For given isometries $v,v'$ in $M_n(\M)$ and $M_n(\M')$, respectively, such that \eqref{eq:bipartite_mbz2} holds, we define partial isometries $w,w'$ by $w = [s(\omega)\ox s(\psi)]vs(\omega)$ and $w'=[s(\omega')\ox s(\psi)]vs(\omega')$ (note that $\psi'\equiv\psi$).
    Then the estimate follows from
    \begin{align*}
        \norm{\Omega\ox\Psi-ww'\Omega} = \norm{[s(\omega)s(\omega')\ox s(\psi)\ox s(\omega')](\Omega\ox\Psi-vv'\Omega)}
        \le \norm{\Omega\ox\Psi-vv'\Omega} <\eps.
    \end{align*}

    \ref{it:equiv_bipartite_mbz4} $\Rightarrow$ \ref{it:equiv_bipartite_mbz1}: 
    Let $\Psi\in\CC^n\ox\CC^n$ be a given unit vector and let $\eps>0$.
    Let $v\in M_n(\M)$ and $v'\in M_n(\M')$ be contractions such that \eqref{eq:bipartite_mbz2} holds with error $\frac\eps2$.
    Let $w\in M_n(\M)$ and $w'\in M_n(\M')$ be the contractions corresponding to $v$ and $v'$ via \cref{lem:matrices_and_vectors}. In particular, $ww'(\Omega\ox\ket{11})=vv'\Omega$.
    Without loss of generality we can assume that $\norm{ww'(\Omega\ox\ket{11}}=1=\norm{\Omega\ox\ket{11}}$.
    Applying \cref{lem:simulation_by_unitaries} twice lets us pick unitaries $u\in M_n(\M)$ and $u'\in M_n(\M')$ such that $\norm{uu'(\Omega\ox\ket{11})-ww'(\Omega\ox\ket{11})}<\frac\eps2$.
    This implies
    \begin{multline}
        \norm{\Omega\ox\Psi-uu'(\Omega\ox\ket{11})}
        \le \norm{\Omega\ox\Psi- ww'(\Omega\ox\ket{11})} + \norm{(uu'-ww')(\Omega\ox\ket{11})} \\
        = \norm{\Omega\ox\Psi- vv'\Omega} + \frac\eps2 <\frac\eps2+\frac\eps2=\eps.
    \end{multline}
\end{proof}

\begin{proof}[Proof of \cref{thm:equiv_marginal_mbz}]
    The implications \ref{it:equiv_marginal_mbz1} $\Rightarrow$ \ref{it:equiv_marginal_mbz2} $\Rightarrow$ \ref{it:equiv_marginal_mbz3} $\Rightarrow$ \ref{it:equiv_marginal_mbz4} are proven with the exact same techniques as we used in the proof of the bipartite-version \cref{thm:equiv_marginal_mbz}.
    The difference is that we only need to keep track of one system and that we work with states instead of vectors, e.g., $uu'(\Omega\ox\ket{11})$ is replaced by $u(\omega\ox\bra1\placeholder\ket1)u^*$ and $vv'\Omega$ is replaced by $v\omega v^*$.

    \ref{it:equiv_marginal_mbz4} $\Rightarrow$ \ref{it:equiv_marginal_mbz1}: This argument is taken from \cite[Lem.~2.4]{haagerup1990equivalence}.
    Let $\psi\in S(M_n)$ and $\eps>0$.
    Let $v\in M_{n,1}(\M)$ be as in \ref{it:equiv_bipartite_mbz4} and let $w\in M_n(\M)$ be the contraction corresponding to it via \cref{lem:matrices_and_vectors}.
    Then $w(\omega \ox\bra1\placeholder\ket1)w^*=v\omega v^*$ and, hence, $\norm{\omega\ox\psi-w(\omega\ox\bra1\placeholder\ket1)w^*}<\eps$.
    Since we can construct such a contraction $w$ for all $\eps>0$, \cite[Lem.~2.4]{haagerup1990equivalence} implies that we can also find unitaries $u\in\M$ such that $\norm{\omega\ox\psi-u(\omega\ox\bra1\placeholder\ket1)u^*}<\eps$ for all $\eps>0$.
\end{proof}

\subsection{Standard bipartite systems}

\begin{definition}
    A bipartite system $(\H,\M,\M')$ is {\bf standard} if $\M$ and, hence $\M'$, is in standard representation.
\end{definition}

In standard bipartite systems, the setup is completely symmetric for Alice  and Bob: 
The modular conjugation $J$ implements the exchange symmetry between $J\M J=\M'$.

\begin{lemma}\label{lem:standard}
    Let $(\H,\M,\M')$ be a $\sigma$-finite bipartite system, i.e., $\M$ and $\M'$ are both $\sigma$-finite. Then $(\H,\M,\M')$ is standard if and only if all normal states $\omega\in S_*(\M)$ and $\omega'\in S_*(\M')$ arise as marginals of vectors states.
\end{lemma}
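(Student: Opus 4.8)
The plan is to prove both implications, using the uniqueness of the standard form as the backbone. For the forward direction, suppose $(\H,\M,\M')$ is standard. By definition this means $\M$ is (spatially isomorphic to) its standard form, so without loss of generality we may identify $\H$ with the standard Hilbert space equipped with a positive cone $\P$ and modular conjugation $J$. Then the excerpt already records the crucial fact: for every $\omega\in S_*(\M)$ there is a (unique) vector $\Omega_\omega\in\P\subset\H$ with $\omega(x)=\ip{\Omega_\omega}{x\Omega_\omega}$, i.e.\ every normal state on $\M$ is a vector state. Since in the standard form $\M'$ is again in standard form (with the same $J$ and the conjugate cone), the identical statement holds for $\M'$: every $\omega'\in S_*(\M')$ is a vector state as well. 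That gives the "only if" direction essentially for free.

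For the converse, assume every $\omega\in S_*(\M)$ and every $\omega'\in S_*(\M')$ is realized as a vector state by some vector in $\H$. I want to conclude $\M$ is in standard representation. First, since $\M$ is $\sigma$-finite it admits a faithful normal state $\omega_0$; by hypothesis $\omega_0(x)=\ip{\Omega}{x\Omega}$ for some unit vector $\Omega$. Faithfulness of $\omega_0$ forces $\Omega$ to be separating for $\M$ (if $x^*x\Omega=0$ then $\omega_0(x^*x)=0$ so $x=0$), hence cyclic for $\M'$. Symmetrically, picking a faithful normal state on $\M'$ (which exists as $\M'$ is $\sigma$-finite) and realizing it by a vector shows there is a vector cyclic for $\M$ and separating for $\M'$; equivalently, $\M$ itself has a cyclic separating vector — but I should be careful that the \emph{same} vector need not work for both algebras a priori. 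The cleanest route is: $\M$ $\sigma$-finite means it has a faithful normal state, hence its standard form has a cyclic separating vector; I then want a spatial isomorphism between the given representation and the standard form.

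The key step — and the one I expect to be the main obstacle — is to upgrade "$\M$ has a cyclic and separating vector in $\H$" (which the vector-realizability hypothesis should yield, via the argument above applied to a faithful state) to "the representation is spatially isomorphic to the standard form." For this I would invoke the uniqueness portion of the Haagerup standard form theorem cited in the preliminaries: any representation $\M\subset\B(\H)$ admitting a cyclic separating vector is, together with the canonically associated $J$ and $\P$ (built from that vector via Tomita–Takesaki), a standard form, and all standard forms are spatially isomorphic. Concretely, let $\Omega$ be cyclic and separating for $\M$; Tomita–Takesaki produces $J,\Delta,\P$ with $J\M J=\M'$, and the excerpt's remark that "given a standard representation $\M\subset\B(\H)$ and a cyclic separating vector $\Omega$, there uniquely exist a positive cone $\P$ and a conjugation $J$ turning $(\H,J,\P)$ into a standard form" tells me $(\H,J,\P)$ satisfies axioms \ref{it:axiom1}–\ref{it:axiom3}, hence is \emph{the} standard form. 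So the representation is standard.

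It remains to produce that cyclic separating vector for $\M$ from the hypotheses. Take a faithful $\omega_0\in S_*(\M)$ and a faithful $\omega_0'\in S_*(\M')$; realize them as vector states of unit vectors $\Omega$ and $\Omega'$ respectively. Then $\Omega$ is separating for $\M$ (faithfulness of $\omega_0$) hence cyclic for $\M'$, and $\Omega'$ is separating for $\M'$ hence cyclic for $\M$. Now I use a standard fact: if $\M$ has one vector cyclic for $\M$ and another vector separating for $\M$ (equivalently cyclic for $\M'$), then $\M$ has a single vector that is simultaneously cyclic and separating — this follows because the central supports of the cyclic projections $[\M'\Omega']$ and $[\M\Omega]$ are both $1$, and on a $\sigma$-finite von Neumann algebra two cyclic projections with full central support dominate a common vector cyclic and separating for $\M$ (e.g.\ via \cite[Cor.~5.5.15, Thm.~...]{bratteli1987oa1}-type arguments, or directly: a maximal family of orthogonal cyclic-type vectors for $\M'$ within $[\M\Omega]$ can be chosen countable and combined). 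Once such a vector exists, the previous paragraph finishes the proof. The delicate point to get exactly right is this last combinatorial/central-support argument ensuring a \emph{common} cyclic-and-separating vector, so that is where I would spend the most care.
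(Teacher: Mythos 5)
Your proposal is correct and follows essentially the same route as the paper: the forward direction via the positive cone, and the converse by realizing faithful normal states on $\M$ and $\M'$ as vector states to obtain a separating vector and a cyclic vector for $\M$, then combining them into a single cyclic and separating vector (the paper cites \cite[Thm.~III.2.6.10]{blackadar_operator_2006} for exactly the step you flag as delicate) and invoking the characterization of standard representations of $\sigma$-finite algebras by cyclic separating vectors.
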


\begin{proof}
    As explained in \cref{sec:preliminaries}, $\M$ is in standard representation if and only if $\M'$ is. 
    In the standard representation, all states on both algebras are implemented by vectors in the positive cone.
    For the converse, let $\Omega,\Omega'\in\H$ be vectors implementing faithful normal states on $\M$ and $\M'$ (which exist because $\M$ and $\M'$ are $\sigma$-finite). 
    Then $\Omega$ is separating for $\M$ and $\Omega'$ is separating for $\M'$, hence cyclic for $\M$.
    By \cite[Thm.~III.2.6.10]{blackadar_operator_2006}, a vector $\Omega$ exists, which is both cyclic and separating. Hence, we are in standard form.
\end{proof}

\begin{proposition}\label{prop:std_form_mbz}
    Let $(\H,J,\P)$ be the standard form of a von Neumann algebra $\M$. Let $\Omega\in\H$ be a unit vector, $\omega$ the induced normal state on $\M$, and $\Omega_\omega\in\P$ the corresponding vector in the positive cone.
    The following are equivalent:
    \begin{enumerate}[(i)]
        \item\label{it:std_form_mbz1} $\Omega$ is embezzling,
        \item\label{it:std_form_mbz2} $\omega$ is embezzling,
        \item\label{it:std_form_mbz3} $\Omega_\omega$ is embezzling.
    \end{enumerate}
\end{proposition}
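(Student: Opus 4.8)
The plan is to reduce all three conditions to the ``contraction'' reformulations of embezzlement from Propositions~\ref{thm:equiv_bipartite_mbz} and~\ref{thm:equiv_marginal_mbz}, together with the elementary fact that two unit vectors in $\H$ inducing the same normal state on a von Neumann algebra $\N$ differ by a partial isometry in $\N'$. It suffices to establish \ref{it:std_form_mbz1}$\Leftrightarrow$\ref{it:std_form_mbz3} and \ref{it:std_form_mbz2}$\Leftrightarrow$\ref{it:std_form_mbz3}.

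For \ref{it:std_form_mbz1}$\Leftrightarrow$\ref{it:std_form_mbz3} I would first show that whether a pure bipartite state is embezzling depends only on its $\M$-marginal; since $\Omega$ and $\Omega_\omega$ both induce $\omega$ on $\M$, the equivalence follows. So let $\Omega_1,\Omega_2\in\H$ induce the same state on $\M$ and choose partial isometries $p',q'\in\M'$ with $p'\Omega_1=\Omega_2$ and $q'\Omega_2=\Omega_1$. If $\Omega_1$ is embezzling, then given $\Psi\in\CC^n\ox\CC^n$ and $\eps>0$, Proposition~\ref{thm:equiv_bipartite_mbz} supplies contractions $v\in M_{n,1}(\M)$, $v'\in M_{n,1}(\M')$ with $\norm{\Omega_1\ox\Psi-vv'\Omega_1}<\eps$. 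Viewing $p'$ as $p'\ox1\ox1\in M_n(\M')$, so that it acts trivially on the ancillary factors and commutes with the entries of $v$ (which lie in $\M$), one computes $p'(\Omega_1\ox\Psi)=\Omega_2\ox\Psi$ and $p'\,vv'\Omega_1=v\,(p'v'q')\,\Omega_2$, where $p'v'q'\in M_{n,1}(\M')$ is again a contraction. Hence $\norm{\Omega_2\ox\Psi-v(p'v'q')\Omega_2}<\eps$, and Proposition~\ref{thm:equiv_bipartite_mbz} now gives that $\Omega_2$ is embezzling. Symmetry finishes this part.

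For \ref{it:std_form_mbz3}$\Rightarrow$\ref{it:std_form_mbz2}: if $\Omega_\omega$ is embezzling, purify an arbitrary $\psi\in S(M_n)$ by a vector $\Psi\in\CC^n\ox\CC^n$, and use that the map sending a unit vector to its $M_n(\M)$-marginal is $2$-Lipschitz, together with the fact that conjugating by $u'\in M_n(\M')$ does not change the $M_n(\M)$-marginal; this turns the bipartite estimate into $\norm{\omega\ox\psi-u(\omega\ox\bra1\placeholder\ket1)u^*}<2\eps$, so $\omega$ is embezzling. The substantial step is \ref{it:std_form_mbz2}$\Rightarrow$\ref{it:std_form_mbz3}. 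Given $\Psi\in\CC^n\ox\CC^n$ with $M_n$-marginal $\psi$ and $\eps>0$, Proposition~\ref{thm:equiv_marginal_mbz} yields an isometry $v\in M_{n,1}(\M)$ with $\norm{v\omega v^*-\omega\ox\psi}<\eps$. By Lemma~\ref{lem:matrices_and_vectors}, in the standard form of $M_n(\M)$ one has $\Omega_{v\omega v^*}=v'v\Omega_\omega$ with $v'\in M_{n,1}(\M')$ the partial isometry associated to $v$, and by~\eqref{eq:state_vector_norms} $\norm{\Omega_{v\omega v^*}-\Omega_{\omega\ox\psi}}<\sqrt{\eps}$. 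Since $\Omega_\omega\ox\Psi$ and $\Omega_{\omega\ox\psi}$ both induce $\omega\ox\psi$ on $M_n(\M)$, there is a partial isometry $\zeta'\in M_n(\M')$ with $\zeta'\Omega_{\omega\ox\psi}=\Omega_\omega\ox\Psi$; applying $\zeta'$ and using that $\M$ and $\M'$ commute gives $\norm{\Omega_\omega\ox\Psi-v(\zeta'v')\Omega_\omega}<\sqrt{\eps}$ with $v\in M_{n,1}(\M)$ and $\zeta'v'\in M_{n,1}(\M')$ contractions, so Proposition~\ref{thm:equiv_bipartite_mbz} shows $\Omega_\omega$ is embezzling.

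The main obstacle is not conceptual but careful bookkeeping in the amplified standard form of Lemma~\ref{lem:standard_amplification}: one must verify that in identities such as $p'\,vv'\Omega_1=v(p'v'q')\Omega_2$ and $\zeta'v'v\Omega_\omega=v(\zeta'v')\Omega_\omega$ the operators coming from $\M$ and from $\M'$ act on the intended tensor factors and genuinely commute, and that the composed operators land in $M_{n,1}(\M)$, respectively $M_{n,1}(\M')$, and remain contractions. Beyond this and the standard commutant-partial-isometry fact, the only ingredients are Propositions~\ref{thm:equiv_bipartite_mbz} and~\ref{thm:equiv_marginal_mbz}, Lemma~\ref{lem:matrices_and_vectors}, and the vector-state estimates~\eqref{eq:state_vector_norms}.
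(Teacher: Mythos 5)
Your proof is correct, and it runs on the same engine as the paper's: the purification formula \eqref{eq:purification_formula} as packaged in \cref{lem:matrices_and_vectors}, the vector--state estimates \eqref{eq:state_vector_norms}, and the contraction/isometry reformulations of embezzlement from \cref{thm:equiv_bipartite_mbz,thm:equiv_marginal_mbz}. The organization differs in two minor but genuine ways. For \ref{it:std_form_mbz3}$\Rightarrow$\ref{it:std_form_mbz1}, the paper writes $\Omega=v'\Omega_\omega$ with $v'\in\M'$ a partial isometry, approximates $v'$ by unitaries via \cref{lem:simulation_by_unitaries}, and then invokes norm-closedness of the set of embezzling vectors (\cref{lem:embezzling_vectors}); you instead prove the slightly stronger statement that the $\M$-marginal alone determines whether a bipartite vector is embezzling, by conjugating the approximating contractions with the commutant partial isometries $p',q'$ and appealing directly to the contraction characterization \ref{it:equiv_bipartite_mbz4} of \cref{thm:equiv_bipartite_mbz} --- this bypasses both auxiliary lemmas and is arguably cleaner. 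For \ref{it:std_form_mbz2}$\Rightarrow$\ref{it:std_form_mbz3}, the paper takes the symmetric unitary $u'=j(u)$ and handles a $\Psi$ outside the positive cone via the polar decomposition of $[\Psi_{ij}]$, whereas you transport $\Omega_{\omega\ox\psi}$ to $\Omega_\omega\ox\Psi$ by a partial isometry $\zeta'\in M_n(\M')$; both are instances of the same standard-form mechanism, and your bookkeeping (the identities $p'vv'\Omega_1=v(p'v'q')\Omega_2$ and $\zeta'v'v\Omega_\omega=v(\zeta'v')\Omega_\omega$, and the fact that $p'v'q'$ and $\zeta'v'$ are contractions in $M_{n,1}(\M')$) checks out.
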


For the proof, we need two Lemmas:

\begin{lemma}\label{lem:polar}
    Let $(\H,\M,\M')$ be a standard bipartite system. Let $J$, $\P$ be the modular conjugation and positive cone of the standard form.
    Then, for all $\Omega\in\H$ there exist partial isometries $u\in\M$, $u'\in\M'$ such that $u\Omega=u'\Omega\in\P$ and such that $u^*u =[\M'\Omega]$ and $u'^*u'=[\M\Omega]$.
\end{lemma}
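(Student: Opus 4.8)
The plan is to produce $u$ and $u'$ from the classical ``change of implementing vector'' construction, carried out once on the $\M$-side and once on the $\M'$-side, and then to match the two cone vectors that result. To fix notation, let $\omega=\ip{\Omega}{(\placeholder)\Omega}|_\M\in\M_*^+$ and $\omega'=\ip{\Omega}{(\placeholder)\Omega}|_{\M'}\in(\M')_*^+$, so that $[\M'\Omega]=s(\omega)\in\M$ and $[\M\Omega]=s(\omega')\in\M'$. Since $(\H,J,\P)$ is a standard form of $\M$ it is also one of $\M'$, so $\P$ is the natural cone of both algebras; write $\Omega_\omega\in\P$ for the unique representative of $\omega$ as a state on $\M$ and $\Omega_{\omega'}\in\P$ for the representative of $\omega'$ as a state on $\M'$.

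For the first partial isometry, observe that $\Omega$ and $\Omega_\omega$ induce the same functional on $\M$, so $x\Omega\mapsto x\Omega_\omega$ ($x\in\M$) is isometric on $\M\Omega$ because $\norm{x\Omega}^2=\omega(x^*x)=\norm{x\Omega_\omega}^2$. It therefore extends to an isometry $[\M\Omega]\H\to[\M\Omega_\omega]\H$, and setting it to zero on the orthogonal complement defines a partial isometry $u'\in\B(\H)$ with $u'^*u'=[\M\Omega]$ and $u'u'^*=[\M\Omega_\omega]$. As $u'$ intertwines the two $\M$-actions and $[\M\Omega],[\M\Omega_\omega]$ belong to $\M'$, it commutes with $\M$, hence $u'\in\M'$ and $u'\Omega=\Omega_\omega\in\P$. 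Running the same argument with $\M$ and $\M'$ interchanged (using that $\P$ is the natural cone of $\M'$ as well) gives a partial isometry $u\in\M$ with $u^*u=[\M'\Omega]$, $uu^*=[\M'\Omega_{\omega'}]$ and $u\Omega=\Omega_{\omega'}\in\P$.

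The remaining step, which I expect to be the main obstacle, is to show that these two cone vectors coincide, $\Omega_\omega=\Omega_{\omega'}$; then $\xi:=u\Omega=u'\Omega$ is a single element of $\P$ with exactly the asserted initial projections $u^*u=[\M'\Omega]$ and $u'^*u'=[\M\Omega]$. This is the genuinely modular-theoretic part: one must relate the canonical cone vector of the state induced by $\Omega$ on $\M$ to that of the state it induces on $\M'$. I would do this by first reducing to a cyclic and separating situation (cutting down by appropriate support and central projections, under which $\Omega$ becomes cyclic and separating for the reduced pair of algebras) and then using the Tomita--Takesaki description of the natural cone, which realizes both $\Omega_\omega$ and $\Omega_{\omega'}$ from the one modular conjugation and modular operator attached to the cyclic separating vector $\Omega$. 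Once that identification is in hand, verifying that the initial projections come out exactly right, and that the construction is compatible with the matrix amplifications $M_n(\M)$ needed later, is routine bookkeeping.
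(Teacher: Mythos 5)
Your first two paragraphs are correct and coincide in substance with the paper's proof: the paper simply invokes the polar decomposition of a vector relative to the natural cone (Takesaki, Ex.\ IX.1.2), and your isometric-extension argument is precisely the standard proof of that fact. What you obtain is a partial isometry $u\in\M$ with $u\Omega=\Omega_{\omega'}\in\P$ and $u^*u=[\M'\Omega]$, and a partial isometry $u'\in\M'$ with $u'\Omega=\Omega_{\omega}\in\P$ and $u'^*u'=[\M\Omega]$.

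The step you defer, however, cannot be completed: $\Omega_\omega=\Omega_{\omega'}$ is false in general, and no reduction to the cyclic-and-separating case will change this. In the standard form of $M_2$ (\cref{exa:M_n_std_form}), identify vectors with matrices via $\xi=(X\ox 1)\sum_i\ket{ii}$; then $\Omega_\omega$ corresponds to $(XX^*)^{1/2}$ and $\Omega_{\omega'}$ to $(X^*X)^{1/2}$, which differ whenever $X$ is not normal. Concretely, for $\Omega=\ket{1}\ox\ket{2}$ one gets $\Omega_\omega=\ket{11}$ and $\Omega_{\omega'}=\ket{22}$. Worse, the equality $u\Omega=u'\Omega=:\eta$ together with $u^*u=[\M'\Omega]$ and $u'^*u'=[\M\Omega]$ forces $\eta$ to induce the same states as $\Omega$ on both $\M$ and $\M'$, hence (if $\eta\in\P$) forces $\eta=\Omega_\omega=\Omega_{\omega'}$; so the lemma as literally stated already fails for $\Omega=\ket{1}\ox\ket{2}$, where every admissible pair yields $u\Omega=u'\Omega=\alpha\ket{12}$, which lies in $\P$ only for $\alpha=0$. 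This is not a defect of your argument alone: the paper's own proof handles the two algebras "by symmetry" and only ever establishes the two separate statements, and indeed only the $\M'$-half (namely $\Omega=v'\Omega_\omega$ with $v'\in\M'$ a partial isometry) is used later, in the proof of \cref{prop:std_form_mbz}. The correct move is therefore to drop the requirement $u\Omega=u'\Omega$ --- the weakened statement is exactly what your first two paragraphs prove --- rather than to look for a proof of it.
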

\begin{proof}
    By the symmetry between $\M$ and $\M'$, we only need to show the claim for $\M$. By \cite[Ex.~IX.1.2)]{takesaki2}, there exists a vector $\abs\Omega\in\P$ and a partial isometry $v\in\M$ such that $vv^*=[\M'\Omega]$, $v^*v=[\M'\abs\Omega]$ and $\Omega = v\abs\Omega$. Thus, the claim holds for $u=v^*$.
\end{proof}

\begin{lemma}\label{lem:embezzling_vectors}
    Let $(\H,\M,\M')$ be a bipartite system. 
    Then the set of embezzling vectors is norm-closed and invariant under local unitaries, i.e., if $\Omega$ is embezzling, then the same is true for $uu'\Omega$ for each pair of unitaries $u,u'$ in $\M$ and $\M'$, respectively.
\end{lemma}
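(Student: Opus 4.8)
The plan is to treat the two assertions—invariance under local unitaries and norm-closedness—separately; both are elementary, and the only point requiring care is bookkeeping with the identifications $M_n(\M)=\M\ox M_n\ox 1$ and $M_n(\M')=\M'\ox 1\ox M_n=M_n(\M)'$.

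For local unitary invariance, fix unitaries $u\in\M$ and $u'\in\M'$ and abbreviate $U=u\ox 1\ox 1\in M_n(\M)$ and $U'=u'\ox 1\ox 1\in M_n(\M')$. These commute and act only on the first tensor leg, so $uu'\Omega\ox\ket{11}=UU'(\Omega\ox\ket{11})$ and $uu'\Omega\ox\Psi=UU'(\Omega\ox\Psi)$ for every $\Psi\in\CC^n\ox\CC^n$. Given such a $\Psi$ and $\eps>0$, I would invoke that $\Omega$ is embezzling to produce unitaries $a\in M_n(\M)$, $a'\in M_n(\M')$ with $\norm{\Omega\ox\Psi-aa'(\Omega\ox\ket{11})}<\eps$, and then set $w=UaU^*\in M_n(\M)$, $w'=U'a'U'^*\in M_n(\M')$. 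A brief manipulation using the commutation relations $[U,M_n(\M')]=0$ and $[U',M_n(\M)]=0$ gives $(UU')^*\,w w'\,UU'=aa'$, whence
\begin{align}
    \norm{uu'\Omega\ox\Psi-w w'(uu'\Omega\ox\ket{11})}=\norm{UU'\big(\Omega\ox\Psi-aa'(\Omega\ox\ket{11})\big)}<\eps.
\end{align}
As $\Psi$ and $\eps$ are arbitrary, this shows $uu'\Omega$ is embezzling.

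For norm-closedness I would use a routine $\eps/3$ argument. Suppose $\Omega_k\to\Omega$ in norm with every $\Omega_k$ embezzling; then $\Omega$ is still a unit vector. Given $\Psi\in\CC^n\ox\CC^n$ and $\eps>0$, first pick $k$ with $\norm{\Omega_k-\Omega}<\eps/3$, then unitaries $u\in M_n(\M)$, $u'\in M_n(\M')$ with $\norm{\Omega_k\ox\Psi-uu'(\Omega_k\ox\ket{11})}<\eps/3$. Since $uu'$ is unitary and $\norm{\ket{11}}=\norm\Psi=1$, the triangle inequality yields $\norm{\Omega\ox\Psi-uu'(\Omega\ox\ket{11})}<\eps$, so $\Omega$ is embezzling.

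I do not anticipate a real obstacle here: no modular theory, $\sigma$-finiteness, or standardness of the bipartite system enters. The only genuine pitfall is confusing the two matrix legs—one must check that the conjugated operators $UaU^*$ and $U'a'U'^*$ indeed lie in $M_n(\M)$ and $M_n(\M')=M_n(\M)'$, respectively—and that $uu'$ makes sense as a unitary on $\H\ox\CC^n\ox\CC^n$ precisely because $M_n(\M)$ and $M_n(\M')$ commute. (If preferred, the closedness part can instead be deduced from the contraction formulation in \cref{thm:equiv_bipartite_mbz}, but the direct argument above is self-contained.)
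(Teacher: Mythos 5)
Your proof is correct; the paper dismisses this lemma with the single word ``Clear,'' and your argument is exactly the routine verification the authors had in mind (conjugating the embezzling unitaries by the local unitaries, and a standard $\eps/3$ triangle-inequality estimate for norm-closedness). The only subtlety worth noting—that the conjugated operators stay in the correct commuting algebras—is one you explicitly address.
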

\begin{proof}
    Clear.    
\end{proof}

\begin{proof}[Proof of \cref{prop:std_form_mbz}]
    \ref{it:std_form_mbz1} $\Rightarrow$ \ref{it:std_form_mbz2} is clear.
    \ref{it:std_form_mbz2} $\Rightarrow$ \ref{it:std_form_mbz3}: 
    Let $\psi$ be a state on $M_n$, let $\Psi$ be the corresponding vector in the positive cone, and let $\eps>0$. 
    If we take $M_n(\M)$ to be in standard form on $\H\ox\CC^n\ox\CC^n$ (see \cref{lem:standard}) then
    \begin{equation}
        \Omega_{\omega\ox\bra1\placeholder\ket1} = \Omega_\omega\ox\ket{11},\qquad \Omega_{\omega\ox\psi}=\Omega_\omega\ox\Psi.
    \end{equation}
    Let $u\in M_n(\M)$ be a unitary such that $\norm{\omega\ox\psi-u(\omega\ox\bra1\placeholder\ket1)u^*}<\eps^2$ and set $u'=j(u)\in M_n(\M')=M_n(\M)'$. 
    Combining \eqref{eq:state_vector_norms} and \eqref{eq:purification_formula}, we get
    \begin{align}\label{eq:symmetric_choice}
        \norm{\Omega_\omega\ox\Psi - uu'\Omega_\omega\ox\ket{11}}
        =\norm{\Omega_{\omega\ox\psi}-\Omega_{u(\omega\ox\bra1\placeholder\ket1)u^*}}
        \le\norm{\omega\ox\psi-u(\omega\ox\bra1\placeholder\ket1)u^*}^{1/2} = \eps.
    \end{align}
    If $\Psi$ is not in the positive cone, the same estimate holds if $u'$ is multiplied by the adjoint of a unitary in $u_0'$ in $1\ox M_n$ such that $1\ox u_0'\Psi$ is in the positive cone.
    To see this, use the polar decomposition of the matrix $[\Psi_{ij}]$ such that $\Psi=\sum_{ij} \Psi_{ij}\ox\ket{ij}$. 
    Therefore, $\Omega_\omega$ is embezzling.

    \ref{it:std_form_mbz3} $\Rightarrow$ \ref{it:std_form_mbz1}:
    By the polar decomposition of vectors in standard form \cite[Ex.~IX.1.2)]{takesaki2}, there exists a partial isometry $v'\in\M'$ such that $\Omega = v'\Omega_\omega$.
    Since $\Omega$ and $\Omega_\omega$ are unit vectors,  \cref{lem:simulation_by_unitaries} implies that for each $k>0$, we can find a unitary $v_k'\in\M'$ such that $\norm{(v_k'-v')\Omega_\omega}=\norm{\Omega_k-\Omega}<k^{-1}$, where $\Omega_k=v_k'\Omega$.
    By \cref{lem:embezzling_vectors}, $\Omega_k$ an embezzling vector.
    Therefore, $\Omega=\lim_k\Omega_k$ is the limit of a sequence of embezzling vectors and, hence, an embezzling vector (see \cref{lem:embezzling_vectors}).
\end{proof}

\begin{lemma}\label{lem:projection}
    Let $\M$ be a von Neumann algebra on $\H$ and let $p'\in\proj(\M')$ be a projection in the commutant.
    If $\omega$ is an embezzling state on $\M$, then $p'\omega p'$ is an embezzling state on $p'\M p'$.
\end{lemma}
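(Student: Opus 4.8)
The plan is to transport an embezzlement witness for $\omega$ to the reduced algebra along the canonical reduction homomorphism. First I would observe that, since $p'\in\M'$ commutes with $\M$, we have $p'\M p'=\M p'$, and that the reduction map $\pi\colon\M\to\M p'$, $x\mapsto xp'$, is a surjective unital normal $^*$-homomorphism; I would then pass to its matrix amplification $\pi_n:=\pi\ox\id_{M_n}\colon M_n(\M)\to M_n(\M p')$, which is again surjective, unital and normal and is implemented by compression, $\pi_n(y)=(p'\ox1_n)\,y\,(p'\ox1_n)$, the projection $p'\ox1_n$ lying in the commutant $M_n(\M)'$. By the characterization of embezzling states recorded right after \cref{def:approx_ue}, it then suffices to produce, for every $n\in\NN$, all $\psi,\phi\in S(M_n)$ and every $\eps>0$, a unitary of $M_n(\M p')$ conjugating $p'\omega p'\ox\psi$ to within $\eps$ of $p'\omega p'\ox\phi$; here one may freely normalize $p'\omega p'$, the case $\omega(p')=0$ being vacuous.

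The argument then rests on two elementary features of $\pi_n$. Being a unital $^*$-homomorphism, $\pi_n$ carries unitaries to unitaries, so a unitary $u\in M_n(\M)$ yields a unitary $\pi_n(u)=u(p'\ox1_n)=(p'\ox1_n)u$ of $M_n(\M p')$ — the two expressions agreeing precisely because $u$ commutes with $p'\ox1_n\in M_n(\M)'$. Being surjective, $\pi_n$ has isometric adjoint on preduals, so that distances in $(M_n(\M p'))_*$ coincide with distances between pullbacks in $(M_n(\M))_*$. I would then check, on elementary tensors, that the pullback $(p'\omega p'\ox\psi)\circ\pi_n$ is the right-compression $y\mapsto(\omega\ox\psi)\!\big(y(p'\ox1_n)\big)$ of the amplified state $\omega\ox\psi$, and that the pullback intertwines conjugation, $(\pi_n(u)\,\rho\,\pi_n(u)^*)\circ\pi_n=u\,(\rho\circ\pi_n)\,u^*$.

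Putting these together, given a unitary $u\in M_n(\M)$ with $\norm{u(\omega\ox\psi)u^*-\omega\ox\phi}<\eps$ — which exists because $\omega$ is embezzling — I would conclude
\[
\norm{\pi_n(u)\,(p'\omega p'\ox\psi)\,\pi_n(u)^*-p'\omega p'\ox\phi}
=\norm{\big(u(\omega\ox\psi)u^*-\omega\ox\phi\big)\!\big(\,\placeholder\,(p'\ox1_n)\big)}<\eps,
\]
using that $u$ commutes with $p'\ox1_n$ and that compression by a projection does not increase a functional's norm; dividing by $\omega(p')$ renormalizes and gives the required approximate unitary equivalence. The only step that needs care is the bookkeeping of the previous paragraph — verifying that the pullback of $p'\omega p'\ox\psi$ along $\pi_n$ is exactly the $(p'\ox1_n)$-compression of $\omega\ox\psi$, and invoking the isometry of $\pi_n^*$, which is where surjectivity of $\pi$ (equivalently, that $\M p'$ is a genuine quotient of $\M$) enters. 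There is no genuine analytic obstacle.
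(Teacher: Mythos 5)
Your overall strategy --- reduce along the unital normal $^*$-homomorphism $\pi\colon x\mapsto xp'$, push the witnessing unitaries forward, and pull states back isometrically --- is the natural skeleton, and most of your bookkeeping is sound: $\pi_n$ carries unitaries of $M_n(\M)$ to unitaries of $M_n(\M p')$, it intertwines conjugation, and surjectivity makes $\rho\mapsto\rho\circ\pi_n$ isometric on preduals. The gap is in your final display, at the step ``compression by a projection does not increase a functional's norm.'' The projection $q=p'\ox1_n$ lies in the \emph{commutant} of $M_n(\M)$, so for $y\in M_n(\M)$ the element $yq$ lies outside $M_n(\M)$; the functional $y\mapsto\sigma(yq)$ therefore probes $\sigma$ on the larger algebra $M_n(\M)\vee\{q\}''$, and its norm on $M_n(\M)$ is \emph{not} controlled by $\norm{\sigma}_{M_n(\M)_*}$. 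A two-by-two example defeats the inequality you need: take $\N=\CC 1\subset M_2$, $q=\ketbra{1}{1}$, $\sigma=\bra1\placeholder\ket1-\bra2\placeholder\ket2$; then $\sigma$ vanishes on $\N$ while $y\mapsto\sigma(yq)$ has norm $1$ on $\N$. Knowing that $u(\omega\ox\psi)u^*-\omega\ox\phi$ is small on $M_n(\M)$ says nothing a priori about its size on $M_n(\M)q$ --- that is precisely the kind of joint information about $\M$ and $\M'$ that a marginal does not determine. Relatedly, your identification of $(p'\omega p'\ox\psi)\circ\pi_n$ with the right-compression of $\omega\ox\psi$ already presupposes an extension of $\omega$ beyond $\M$ (the expression $(\omega\ox\psi)\big(y(p'\ox1_n)\big)$ is undefined for a state on $M_n(\M)$ alone), and different extensions give genuinely different compressed states.

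The repair is to insist on the reading of $p'\omega p'$ under which its pullback along $\pi$ is determined by $\omega$ itself. Since $\ker\pi=\M(1-z)$ with $z\in Z(\M)$ the central support of $p'$, one has $\M p'\cong\M z$, and $p'\omega p'$ should be understood as the state of $\M z$ induced by $\omega(\placeholder\,z)$ (normalized); the lemma then reduces to the statement that a central summand of an embezzling state is proportional to an embezzling state, which is the unnamed direct-sum lemma preceding \cref{cor:no_semifinite_embezzling}. In the one place the paper invokes \cref{lem:projection} (the proof of \cref{thm:partite}) one has $p'\ge s(\omega')$, so $p'\Omega=\Omega$, the pullback of $p'\omega p'$ is exactly $\omega$, and your displayed quantity equals $\norm{u(\omega\ox\psi)u^*-\omega\ox\phi}$ with no compression estimate needed. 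For a general $p'$ and a literal compression of a chosen density matrix the argument does not close; as a warning sign, the analogous claim with a positive contraction $a'\in\M'$ in place of $p'$ would make every normalized functional dominated by a multiple of $\omega$ embezzling, hence by density every state, contradicting the existence of type $\III_\lambda$ factors ($\lambda<1$) that admit faithful embezzling states but are not universal embezzlers.
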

\begin{proof}
    Clear.
\end{proof}

\begin{proposition}\label{thm:WLOG_std}
    Let $(\H,\M\,\M')$ be a bipartite system, and let $\Omega\in\H$ be a unit vector. Let $p\ge s(\omega)$, $p'\ge s(\omega')$ be projections in $\M$ and $\M'$, respectively, and let $p_0=pp'$.
    Set $\H_0 = p_0\H$, $\M_0=p_0\M p_0$ and $\Omega_0 \equiv \Omega\in\H_0\subset\H$.
    Then $\M_0' = p_0\M'p_0$ and the following are equivalent
    \begin{enumerate}[(a)]
        \item\label{it:WLOG_std1} $\Omega$ is embezzling for $(\H,\M,\M')$
        \item\label{it:WLOG_std2} $\Omega_0$ is embezzling for $(\H_0,\M_0,\M_0')$.
    \end{enumerate}
    If $p=s(\omega)$ and $p'=s(\omega')$ then $\Omega_0$ is cyclic and separating for $\M_0$ and, hence, $(\H_0,\M_0,\M_0')$ is a standard bipartite system.
\end{proposition}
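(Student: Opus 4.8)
The plan is to establish the two pieces of the statement separately: first the algebraic identity $\M_0' = p_0\M'p_0$ together with the equivalence \ref{it:WLOG_std1} $\iff$ \ref{it:WLOG_std2}, and then the cyclic/separating claim when $p$ and $p'$ are the support projections. The key tool throughout is the reduction (compression) of a von Neumann algebra by a projection, and the compatibility of the embezzlement property with such compressions, which was already recorded in \cref{lem:projection}.

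First I would handle the commutant. Since $p\in\M$ and $p'\in\M'$, the projection $p_0 = pp' = p'p$ lies in neither algebra in general, but it is still true that on $\H_0=p_0\H$ the algebra $\M_0 = p_0\M p_0$ is a von Neumann algebra, and its commutant is computed by the standard reduction formula in two steps: reducing $\M$ by $p\in\M$ gives $(p\M p)' = \M'p$ acting on $p\H$, and then reducing by $p'\in\M' $ (which, restricted to $p\H$, is a projection in $\M'p = (p\M p)'$) gives $\M_0' = p'(\M'p)p' = p_0\M'p_0$ acting on $p_0\H$. (Here one uses that $p$ and $p'$ commute, so $p_0\M'p_0 = p'p\M'pp' = p'\M'p'\cdot p$ makes sense as operators on $\H_0$.) I would cite the reduction theorem for commutants, e.g.\ \cite[Prop.~II.3.10]{takesaki1}, applied twice.

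For the equivalence, the direction \ref{it:WLOG_std1} $\Rightarrow$ \ref{it:WLOG_std2} is essentially \cref{lem:projection} applied twice: compressing the embezzling state $\omega$ on $\M$ first by $p'\in\proj(\M')$ and then by $p\in\proj(\M)$ (noting $p\ge s(\omega)$, $p'\ge s(\omega')$ so the compressed state is again a state and $\Omega_0\equiv\Omega$ still induces it) shows $\Omega_0$ is embezzling for $(\H_0,\M_0,\M_0')$; here I would use \cref{thm:partite} to pass freely between the bipartite picture and the marginal $\omega$. The converse \ref{it:WLOG_std2} $\Rightarrow$ \ref{it:WLOG_std1} is the point that needs a small argument: given a target $\Psi\in\CC^n\ox\CC^n$ and $\eps>0$, one obtains unitaries $u_0\in M_n(\M_0)$, $u_0'\in M_n(\M_0')$ that embezzle within $\H_0\ox\CC^n\ox\CC^n$; since $p_0 = pp'$ is invariant under both $\M_0$ and $\M_0'$ amplified by $M_n$, and since $\Omega\in\H_0$, these partial isometries extend by $0$ on $(1-p_0)\H$ to partial isometries $v\in M_{n,1}(\M)$, $v'\in M_{n,1}(\M')$ with the required properties; then invoke the contraction-version \cref{thm:equiv_bipartite_mbz}\ref{it:equiv_bipartite_mbz4} to conclude $\Omega$ is embezzling for $(\H,\M,\M')$. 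I expect this extension-by-zero bookkeeping to be the main (though routine) obstacle, since one must check the partial isometries genuinely land in the unreduced algebras and that $vv'\Omega$ is computed in the correct Hilbert space.

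Finally, for the last sentence, take $p=s(\omega)$ and $p'=s(\omega')$. Recall $s(\omega)$ is the projection onto $[\M'\Omega]$ and $s(\omega')$ onto $[\M\Omega]$. I would show $\Omega_0=\Omega$ is cyclic for $\M_0$ on $\H_0$: $\overline{\M_0\Omega} = \overline{p_0\M p_0\Omega} = \overline{p_0\M s(\omega')\Omega}$; since $s(\omega')=[\M\Omega]$ we have $s(\omega')\Omega=\Omega$, so this is $\overline{p_0\M\Omega} = p_0\overline{\M\Omega} = p_0 [\M\Omega]\H$-part $= p_0 s(\omega')\H = p_0\H = \H_0$, using $p_0 = s(\omega)s(\omega')$ and $s(\omega)s(\omega')\H = s(\omega)\H\cap$ stuff — more carefully, $p_0\M\Omega = s(\omega)\M s(\omega')\Omega = s(\omega)\M\Omega$ whose closure is $s(\omega)[\M\Omega]=s(\omega)s(\omega')\H=\H_0$. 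By the symmetric argument $\Omega$ is cyclic for $\M_0'=p_0\M'p_0$, hence separating for $\M_0$. Therefore $(\H_0,\M_0,\M_0')$ admits a cyclic separating vector and is a standard bipartite system in the sense of \cref{lem:standard}. $\hfill\qed$
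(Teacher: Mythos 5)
Your commutant computation, your treatment of (b) $\Rightarrow$ (a), and your cyclicity argument for the final claim are all sound and essentially coincide with the paper's proof: the paper likewise applies the reduction formula $(q\N q)'=q\N'q$ twice, extends $u_0,u_0'$ to contractions $w\in M_n(\M)$, $w'\in M_n(\M')$ with $p_0wp_0=u_0$, $p_0w'p_0=u_0'$, and passes through \cref{lem:matrices_and_vectors} and \cref{it:equiv_bipartite_mbz4} of \cref{thm:equiv_bipartite_mbz}. Note only that your ``extension by zero'' is not literally available: for $x\in\M$ one has $p_0xp_0=pxp\cdot p'$, which does not lie in $\M$, so the lift must instead be obtained from surjectivity of the normal $*$-homomorphism $p\M p\to\M_0$, $y\mapsto p_0yp_0$, which maps the unit ball onto the unit ball. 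You correctly flag this as the point needing care.

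The genuine gap is in (a) $\Rightarrow$ (b). You compress the marginal $\omega$ to an embezzling state $\omega_0$ on $\M_0$ and then invoke \cref{thm:partite} to ``pass freely'' back to the bipartite statement for $\Omega_0$. But the proof of \cref{thm:partite} in this paper uses \cref{thm:WLOG_std}, so as written your argument is circular. The only non-circular ingredient available for the direction ``monopartite embezzling $\Rightarrow$ bipartite embezzling,'' namely \cref{prop:std_form_mbz}, requires the bipartite system to be standard, and $(\H_0,\M_0,\M_0')$ is standard only when $p=s(\omega)$ and $p'=s(\omega')$ --- precisely not the general case $p\ge s(\omega)$, $p'\ge s(\omega')$ that the proposition covers. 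The paper avoids the detour through marginals entirely: given unitaries $u,u'$ witnessing \eqref{eq:bipartite_mbz}, it forms the compressed contractions $a=(p_0\ox1)u(p_0\ox1)\in M_n(\M_0)$ and $a'=(p_0\ox1)u'(p_0\ox1)\in M_n(\M_0')$ and observes, using $p_0\Omega=\Omega$, that
\begin{equation*}
    \norm{\Omega_0\ox\Psi-aa'(\Omega_0\ox\ket{11})}=\norm{(p_0\ox1\ox1)\big(\Omega\ox\Psi-uu'(\Omega\ox\ket{11})\big)}<\eps,
\end{equation*}
so that \cref{it:equiv_bipartite_mbz4} of \cref{thm:equiv_bipartite_mbz} applies directly to $\Omega_0$. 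You should replace your marginal argument with this vector-level compression.
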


We remark that $\M_0=p_0\M p_0$ is naturally von Neumann algebra acting on $\H_0$ because $p_0$ is the product of a projection in $\M$ and a projection in $\M'$.

\begin{proof}
    First recall  $s(\omega)$ and $ps(\omega')$ are the projection onto $[\M'\Omega]$ and $[\M\Omega]$, respectively.
    Clearly, the two projections $p$ and $p'$ commute so that $p_0 = pp'$ is a projection as well.
    By \cite[Cor.~5.5.7]{KadisonRingrose1}, we have $(q\N q)' = q\N'q$ for a von Neumann algebra $\N$ and a projection $q\in\proj(\N)$ or a projection $q\in\proj(\N')$.
    Therefore $p_0 = pp'p\in p\M'p=(p\M p)'$, $p_0=p'pp'\in p'\M p'$, and, hence 
    \begin{equation}
        (p_0\M p_0)' = [p' (p\M p) p' ]' = p'(p\M p)' p' = p'p\M'pp' = p_0\M'p_0.
    \end{equation}
    If $p$ and $p'$ are the support projections then, by construction of $\H_0$, $\Omega_0$ is cyclic for $\M_0$ and $\M_0'$ and, hence, cyclic and separating for $\M_0$. 
    We remark that a similar setup was considered in \cite[Prop.~39]{van_luijk_schmidt_2023}.

    \ref{it:WLOG_std1} $\Rightarrow$ \ref{it:WLOG_std2}: 
    Denote by $\omega_0$ the state induced by $\Omega_0$ on $\M_0$. 
    We will show that $\omega_0$ is embezzling.
    Let $\Psi\in\CC^n\ox\CC^n$ be a unit vector, let $\eps>0$, and let $u\in M_n(\M)$, $u'\in M_n(\M')$ be unitaries such that \eqref{eq:bipartite_mbz} holds.
    We define contractions $a =(p_0\ox1)u(p_0\ox1)\in M_n(\M_0)$ and $a'=(1\ox p_0)u'(p_0\ox1)\in M_n(\M')$.
    Using that $p_0\Omega_0=\Omega_0=\Omega$, we get
    \begin{multline}
        \norm{\Omega_0\ox\Psi-aa'(\Omega_0\ox\ket{11})} 
        = \norm{(p_0\ox1\ox1) (\Omega\ox\Psi- uu'(\Omega\ox\ket{11}))} \\
        \le \norm{\Omega\ox\Psi- uu'(\Omega\ox\ket{11})} <\eps.
    \end{multline}
    Since we can construct such a contraction for all $\eps>0$ and since $\Psi$ was arbitarary, \cref{it:equiv_bipartite_mbz4} of \cref{thm:equiv_bipartite_mbz} holds for $\Omega_0$ and, hence, $\Omega_0$ is embezzling.

    \ref{it:WLOG_std2} $\Rightarrow$ \ref{it:WLOG_std1}: 
    Let $\Psi\in\CC^n\ox\CC^n$ be a unit vector, let $\eps>0$, and let $u_0\in M_n(\M_0)$, $u_0'\in M_n(\M'_0)$ be unitaries such that \eqref{eq:bipartite_mbz} holds for $\Omega_0$.
    Let $w\in M_n(\M)$ and $w'\in M_n(\M')$ be contractions with $p_0wp_0=u_0$ and $p_0w'p_0=u_0'$.
    Define $v\in M_{n,1}(\M)$ and $v'\in M_{n,1}(\M')$ to be elements corresponding to $w,w'$ via \cref{lem:matrices_and_vectors}.
    Then $vv' \Omega = ww'(\Omega\ox\ket{11})= u_0u_0(\Omega\ox\ket{11})$ and, hence, $\norm{\Omega\ox\Psi-vv'\Omega}<\eps$.
    Therefore $\Omega$ satisfies \cref{it:equiv_bipartite_mbz4} of \cref{thm:equiv_bipartite_mbz} and, hence, is embezzling.
\end{proof}

\begin{proof}[Proof of \cref{thm:partite}]
    It is clear that \ref{it:partite1} implies \ref{it:partite2} and \ref{it:partite3}.
    We only need to show \ref{it:partite2} $\Rightarrow$ \ref{it:partite1}.
    Let $p=s(\omega)$, $p'=s(\omega')$, $p_0=pp'$, $\H_0=p_0\H$, and $\M_0=p_0\M p_0$.
    By \cref{cor:faithful}, $p\omega p$ is an embezzling state on $p\M p$ which, by \cref{lem:projection}, implies that $p' p\omega pp' = \omega_0$ is an embezzling state on $\M_0$.
    Since $(\H_0,\M_0,\M_0')$ is in standard form, \cref{prop:std_form_mbz} implies that $\Omega_0$ is embezzling.
    It now follows from \cref{thm:WLOG_std}, that $\Omega$ is embezzling for $(\H,\M,\M')$
\end{proof}

\subsection{Spectral properties}

We now discuss spectral properties of embezzling states $\omega$. This involves two topics: The behavior of the distribution function $D_\omega$ and the spectral scale $\lambda_\omega$ (see \cref{subsec:pre:spectral_scales}) as well as the spectrum of the modular operator of embezzling states (see \cref{subsec:pre:weights}). We begin with the former, which yields a simple proof that semifinite factors cannot host embezzling states.

The distribution function behaves naturally with respect to tensor products: Let $\M$ and $\P$ be semifinite von Neumann algebras with faithful normal traces $\tr_\M$ and $\tr_\P$, respectively.
We equip the tensor product $\M\ox\P$ with the product trace $\tr_{\M\ox\P}=\tr_\M\ox\tr_\P$.
Given two normal states $\omega$ and $\varphi$ on $\M$ and $\P$ respectively, the spectral measure $p_{\omega\ox\varphi}$ of the density operator $\rho_{\omega\ox\varphi}=\rho_\omega\ox\rho_\varphi$ is given by the tensor convolution of the individual spectral measures
\begin{align}\label{eq:spectral_convolution}
    p_{\omega\ox\varphi} & = p_{\omega}\circledast p_{\varphi},
\end{align}
which is the $\proj(\M\ox\P)$-valued Borel measure on $\RR^+$ defined by $p_\omega\circledast p_\varphi (A) = \int_{0}^{\infty}\!\!\int_{0}^{\infty}\!\!\chi_{A}(ts)\, dp_{\omega}(t)\ox dp_{\varphi}(s)$.\footnote{Equivalently, $p_\omega\circledast p_\varphi$ the pushforward of the tensor-product measure $p_{\omega}\ox p_{\varphi}$ along the multiplication-map $\RR^{+}\times\RR^{+}\ni(t,s)\mapsto t\cdot s\in\RR^{+}$.}
This entails the following convolution formula for the distribution function of product states:

\begin{lemma}\label{lem:dist_convolution}
    Let $\M$ and $\P$ be semifinite von Neumann algebras. Given two normal states $\omega\in S_{*}(\M)$ and $\varphi\in S_{*}(\P)$, the distribution function of the product state $\omega\ox\varphi$ is given by:
    \begin{align}\label{eq:dist_convolution}
        D_{\omega\ox\varphi}(t) & = \Tr_{\M}((D_{\varphi}\ast p_{\omega})(t)) = \Tr_{\M}(D_{\varphi}(t \rho_{\omega}^{-1})).
    \end{align}
    Equivalently, we get $D_{\omega\ox\varphi}(t) = \Tr_{\P}(D_{\omega}(t \rho_{\varphi}^{-1}))$.

    In terms of the spectral scales of $\omega$ and $\psi$, we can write:
    \begin{align}\label{eq:scale_convolution}
    D_{\omega\ox\varphi}(t) & = \int_{0}^{\infty}\int_{0}^{\infty}\chi_{t}(\lambda_{\omega}(r)\lambda_{\psi}(s))\,dr\,ds = \int_{0}^{\infty}D_{\varphi}(\lambda_{\omega}(r)^{-1}t)\,dr = \int_{0}^{\infty}D_{\omega}(\lambda_{\psi}(s)^{-1}t)\,ds
\end{align}
\end{lemma}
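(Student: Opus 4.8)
The plan is to compute $D_{\omega\ox\varphi}(t)=\Tr_{\M\ox\P}\big(p_{\omega\ox\varphi}((t,\infty))\big)$ directly from the spectral convolution formula \eqref{eq:spectral_convolution} and then separate the two tensor factors using normality of the traces. Since the distribution function depends only on the density operator, and since the restriction of $\Tr_\M$ to the support corner $s(\omega)\M s(\omega)$ agrees with $\Tr_\M$ on projections dominated by $s(\omega)$ (and likewise for $\P$), one may first pass to these corners and thereby assume $\omega$ and $\varphi$ faithful; then $\rho_\omega^{-1},\rho_\varphi^{-1}$ are honest inverses on the relevant spectral subspaces and no kernel terms need to be tracked. (Alternatively: for $t>0$ the set $(t,\infty)$ avoids $0$, so only the parts of $p_\omega,p_\varphi$ supported on $(0,\infty)$ contribute, consistently with $D_\varphi(+\infty)=0$.)

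By \eqref{eq:spectral_convolution} and the definition \eqref{eq:distribution_function} of the distribution function,
\begin{equation}
    D_{\omega\ox\varphi}(t)=(\Tr_\M\ox\Tr_\P)\!\left(\int_0^\infty\!\!\int_0^\infty \chi_t(rs)\,dp_\omega(r)\ox dp_\varphi(s)\right).
\end{equation}
The integrand is a positive operator-valued function of $(r,s)$, so by normality of $\Tr_\P$ (equivalently, monotone convergence applied to simple approximants of $\chi_t$) one may carry $\Tr_\P$ inside the $s$-integral: for $r>0$, $\int_0^\infty \chi_t(rs)\,\Tr_\P\,dp_\varphi(s)=\Tr_\P\big(p_\varphi((t/r,\infty))\big)=D_\varphi(t/r)$. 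What remains is $\int_0^\infty D_\varphi(t/r)\,dp_\omega(r)$, which by the Borel functional calculus is exactly $D_\varphi(t\rho_\omega^{-1})$, i.e.\ by definition the multiplicative convolution $(D_\varphi\ast p_\omega)(t)$; applying $\Tr_\M$ yields the first asserted identity. Performing the $r$-integration first instead, and using $\Tr_\M\ox\Tr_\P=\Tr_\P\ox\Tr_\M$, gives the symmetric identity $D_{\omega\ox\varphi}(t)=\Tr_\P\big(D_\omega(t\rho_\varphi^{-1})\big)$.

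For the spectral-scale form \eqref{eq:scale_convolution}, I would invoke the trace formula \eqref{eq:trace_formula}, extended from bounded to nonnegative Borel functions by monotone convergence (legitimate because $\Tr_\M$ is normal and the pushforward of Lebesgue measure along $\lambda_\omega$ represents $A\mapsto\Tr_\M\chi_A(\rho_\omega)$). Applied to the function $\lambda\mapsto D_\varphi(t/\lambda)$ this gives $\Tr_\M\big(D_\varphi(t\rho_\omega^{-1})\big)=\int_0^\infty D_\varphi(\lambda_\omega(r)^{-1}t)\,dr$; applying \eqref{eq:trace_formula} once more on the $\P$-factor to rewrite $D_\varphi(\lambda_\omega(r)^{-1}t)=\Tr_\P\big(p_\varphi((\lambda_\omega(r)^{-1}t,\infty))\big)=\int_0^\infty\chi_t(\lambda_\omega(r)\lambda_\varphi(s))\,ds$ and invoking Tonelli produces the double-integral expression. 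The remaining two equalities in \eqref{eq:scale_convolution} are simply the intermediate stages of this chain.

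The main obstacle is measure-theoretic bookkeeping rather than any conceptual difficulty: one must justify interchanging the operator-valued spectral integrals with the normal traces, handle the Tonelli steps, and keep the pseudoinverse/kernel conventions consistent. Since every operator in sight is positive there are no integrability obstructions; the only genuinely nontrivial input is \eqref{eq:spectral_convolution} itself, namely that $\int\!\!\int\chi_t(rs)\,dp_\omega\ox dp_\varphi$ coincides with $p_{\omega\ox\varphi}((t,\infty))$, which we are permitted to assume.
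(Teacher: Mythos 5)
Your proposal is correct and follows essentially the same route as the paper: both start from the spectral convolution formula \eqref{eq:spectral_convolution}, push the traces through the double spectral integral to integrate out one factor (yielding $D_\varphi(r^{-1}t)$), identify the result via Borel functional calculus as $\Tr_\M(D_\varphi(t\rho_\omega^{-1}))$, and obtain the spectral-scale version from the fact that $\Tr\,dp_\omega$ is the pushforward of Lebesgue measure along $\lambda_\omega$ (i.e.\ \eqref{eq:trace_formula}). Your added care about normality, Tonelli, and the pseudoinverse convention is a harmless refinement of the same argument.
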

\begin{proof}
Using \cref{eq:spectral_convolution}, a direct computaton yields:
\begin{align} \label{eq:dist_convolution_calc} \nonumber
D_{\omega\ox\varphi}(t) & = \Tr_{\M\ox\P}(p_{\omega\ox\varphi}((t,\infty))) = (\Tr_{\M}(p_{\omega})\ast\Tr_{\P}(p_{\varphi}))((t,\infty)) \\ \nonumber
& = \int_{0}^{\infty}\int_{0}^{\infty}\chi_{t}(rs)\,\Tr_{\M}(dp_{\omega}(r))\,\Tr_{\P}(dp_{\varphi}(s)) \\ 
& = \int_{0}^{\infty}D_{\varphi}(r^{-1}t)\,\Tr_{\M}(dp_{\omega}(r)) = \Tr_{\M}((D_{\varphi}\ast p_{\omega})(t)) = \Tr_{\M}(D_{\varphi}(t \rho_{\omega}^{-1})),
\end{align}
where we used the convolution of Borel functions and measures on $\RR^{+}$ in the last two lines:
\begin{align}\label{eq:f_m_conv}
    (f\ast\mu)(t) & = \int_{0}^{\infty}f(s^{-1}t)\, d\mu(s).
\end{align}
Reversing the roles of $\omega$ and $\varphi$, we find the corresponding statement involving $D_{\omega}$ and $\rho_{\varphi}$. The formula involving the spectral scales follows from the second line of \cref{eq:dist_convolution_calc} and the fact that the trace of the spectral measure of $\rho_{\omega}$ (or $\rho_{\varphi}$) is the pushforward of the Lebesgue measure by the spectral scale $\lambda_{\omega}$ (respectively $\lambda_{\varphi}$).
\end{proof}
\begin{remark}\label{rem:random_variables}
The distribution function of a state behaves like the cumulative distribution function of a $\RR^{+}$-valued random variable such that the tensor product of two states translates into the multiplication of the associated random variables. This is illustrated by the special case in which $\Tr_{\M}(p_{\omega})$ (or similarly $\Tr_{\P}(p_{\varphi})$) admits a distributional Radon-Nikodym derivative with respect to the Lebesgue measure $dt$ on $\RR$. Under said assumption, \cref{eq:dist_convolution} yields:
\begin{align}\label{eq:dist_convolution_cont}
    D_{\omega\ox\varphi}(t) & = -t\tfrac{d}{dt}\int_{0}^{\infty}D_{\omega}(r)D_{\varphi}(r^{-1}t)\, \tfrac{dr}{r}
\end{align}
which follows from the distributional identity $D'_{\omega}(t) = -\tfrac{\Tr(dp_{\omega})}{dt}$.
\end{remark}

We now consider the matrix amplification $M_n(\M)$ and equip it with the trace $\tr\ox\tr_n$ where $\tr_n$ is the standard trace on $M_n$.
It follows immediately from the definition of the distribution function in \cref{eq:distribution_function}, the convolution formula \cref{eq:dist_convolution}, and \cref{exa:M_n_spectrum} that
\begin{align}
    &&D_{\omega\ox\bra1\placeholder\ket1}(t) &= D_\omega(t),&D_{\omega\ox\frac1n\!\tr}(t)&= nD_\omega(nt).&&\label{eq:D_amplification}
\intertext{Hence, the spectral scales are given by}
    &&\lambda_{\omega\ox\bra1\placeholder\ket1}(t)&= \lambda_\omega(t),&\lambda_{\omega\ox\frac1n\!\tr}(t) &= \tfrac1n \lambda_\omega(\tfrac tn).&&\label{eq:l_amplification}
\end{align}

\begin{proposition}\label{thm:spectral_charac}
    Let $\M$ be a von Neumann algebra and let $\tilde\lambda_\omega:(0,\oo)\to\RR^+$ (resp.\ $\tilde D_\omega:(0,\oo)\to\RR^+$) be a non-zero right-continuous function defined for all normal states $\omega$ on $\M$ and $M_n(\M)$ for all $n$, such that 
    \begin{itemize}
        \item if states $\omega,\varphi$ on $M_n(\M)$ are approximately unitarily equivalent, then $\tilde \lambda_{\omega}(t)=\tilde \lambda_\varphi(t)$ (resp.\ $\tilde D_\omega(t)=\tilde D_\varphi(t)$),
        \item $\tilde \lambda_\omega(t)$ satisfies formula \eqref{eq:l_amplification} (resp.\ $\tilde D_\omega(t)$ satisfies \eqref{eq:D_amplification}) holds.
    \end{itemize}
    If $\omega$ is embezzling, then 
    \begin{equation}
        \tilde \lambda_\omega(t) \propto \frac1t \qquad \Big(\text{resp.}\ \tilde D_\omega(t)\propto \frac1t\ \Big)\qquad t>0.
    \end{equation}
\end{proposition}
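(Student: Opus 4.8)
The plan is to distill a single scaling functional equation from the embezzling hypothesis and then solve it using integer scalings together with right-continuity. First I would invoke the characterization of embezzling states recalled just above the statement: $\omega$ embezzling means $\omega\ox\psi\sim\omega\ox\phi$ for all states $\psi,\phi\in S(M_n)$ and all $n\in\NN$. Taking $\psi=\bra1\placeholder\ket1$ (the pure state) and $\phi=\tfrac1n\tr$ (the maximally mixed state) gives $\omega\ox\bra1\placeholder\ket1\sim\omega\ox\tfrac1n\tr$ on $M_n(\M)$. Since $\tilde\lambda$ (resp.\ $\tilde D$) is by assumption constant on approximate-unitary-equivalence classes, this forces $\tilde\lambda_{\omega\ox\bra1\placeholder\ket1}=\tilde\lambda_{\omega\ox\frac1n\tr}$; inserting the two halves of the amplification rule \eqref{eq:l_amplification} (resp.\ \eqref{eq:D_amplification}) converts this into
\begin{equation}\label{eq:prop_plan_fe}
    \tilde\lambda_\omega(t)=\tfrac1n\,\tilde\lambda_\omega(t/n)\qquad\Longleftrightarrow\qquad \tilde\lambda_\omega(nt)=\tfrac1n\,\tilde\lambda_\omega(t),\qquad t>0,\ n\in\NN,
\end{equation}
and the same computation with \eqref{eq:D_amplification} gives $\tilde D_\omega(nt)=\tfrac1n\tilde D_\omega(t)$. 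Thus both cases reduce to the same elementary problem, which I treat for $\tilde\lambda$.

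Next I would upgrade \eqref{eq:prop_plan_fe} from integer to rational scalings: applying the second form of \eqref{eq:prop_plan_fe} with $(n,t)\mapsto(m,t/n)$ and then the first form (in the shape $\tilde\lambda_\omega(t/n)=n\,\tilde\lambda_\omega(t)$) shows $\tilde\lambda_\omega\big(\tfrac mn t\big)=\tfrac nm\,\tilde\lambda_\omega(t)$ for all $m,n\in\NN$, i.e.\ $\tilde\lambda_\omega(qt)=q^{-1}\tilde\lambda_\omega(t)$ for every positive rational $q$. Specializing $t=1$ yields $\tilde\lambda_\omega(q)=c/q$ for all $q\in\QQ^+$, where $c:=\tilde\lambda_\omega(1)$. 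Finally I would pass from $\QQ^+$ to all of $(0,\oo)$ by right-continuity: for $t>0$, choose rationals $q_k\downarrow t$, so that $\tilde\lambda_\omega(t)=\lim_k\tilde\lambda_\omega(q_k)=\lim_k c/q_k=c/t$. The non-vanishing hypothesis forces $c\neq0$ (otherwise $\tilde\lambda_\omega$ would vanish on the dense set $\QQ^+$ and hence, by right-continuity, identically), so $\tilde\lambda_\omega(t)\propto1/t$; the same three steps applied to $\tilde D_\omega$ give $\tilde D_\omega(t)\propto1/t$.

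I do not expect a genuine obstacle here. Everything after the initial reduction is a routine functional-equation argument; the only points deserving a line of care are that $\bra1\placeholder\ket1$ and $\tfrac1n\tr$ are genuinely among the states to which the embezzling property applies (so the starting approximate equivalence is legitimate), and that the rational bootstrap only ever invokes \eqref{eq:prop_plan_fe} at arguments where it has been established, namely all of $(0,\oo)$, so no circularity arises.
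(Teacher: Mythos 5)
Your proposal is correct and follows essentially the same route as the paper's proof: derive $\tilde\lambda_\omega(t)=\tfrac1n\tilde\lambda_\omega(t/n)$ from the approximate unitary equivalence of $\omega\ox\bra1\placeholder\ket1$ and $\omega\ox\tfrac1n\tr$ via the amplification formula, bootstrap to the rational scaling relation $\tilde\lambda_\omega(qt)=q^{-1}\tilde\lambda_\omega(t)$, and conclude by right-continuity. The extra remarks on non-vanishing and on the $\tilde D$ case are fine but add nothing beyond what the paper's argument already covers.
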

\begin{proof}
    Since $\omega$ is embezzling, $\omega\ox\bra1\placeholder\ket1$ and $\omega\ox\frac1n\!\tr$ are approximately unitarily equivalent for all $n\in\NN$. 
    Therefore, $\tilde\lambda_\omega(t) =\lambda_{\omega\ox\bra1\placeholder\ket1}(t)= \tilde\lambda_{\omega\ox\frac1n\tr}(t) = \frac1n \tilde\lambda_\omega(\frac tn)$ for all $n$. Consequently,
    \begin{equation}
        \tilde\lambda_\omega(t) =\tfrac1n\tilde\lambda_\omega(t\tfrac 1n)= \tfrac 1n \tilde\lambda_\omega(t\tfrac mn\cdot\tfrac1m) = \tfrac mn \tilde\lambda_\omega(t\tfrac mn),\qquad n,m\in\NN.
    \end{equation}
    This shows $\tilde\lambda_\omega(t) = q\tilde\lambda_\omega(tq)$ for all rational numbers $q>0$.
    In combination with right-continuity, this gives $\tilde\lambda_\omega(t) = \frac1t\tilde\lambda_\omega(1)$ for all $t>0$.
\end{proof}

As mentioned, the spectral scale $\lambda_\omega(t)$ and the distribution function $D_\omega(t)$ for states $\omega$ on semifinite factors satisfy the criteria in \cref{thm:spectral_charac}.
Since they are both right continuous probability distribution on $\RR^+$, no embezzling states can exist on semifinite von Neumann algebras because $\frac1t$ is not integrable.
We will see later that nontrivial solutions to the assumptions of \cref{thm:spectral_charac} exist even for type $\III$ factors (see \cref{sec:FOW_type_III_lambda}). 
We also remark that, if $\tilde D_\omega(t)$ satisfies the assumption of \cref{thm:spectral_charac}, then so does $\tilde\lambda_\omega(t) = \inf\{ s>0 : \tilde D_\omega(t)\le t\}$ (cp.\ \eqref{eq:spectral_scale}).

Before we continue the study of spectral properties of embezzling states, we show how \cref{thm:spectral_charac} rules out the existence of embezzling states on semifinite von Neumann algebras.
We start with the following Lemma:

\begin{lemma}
    Let $\M=\bigoplus_{i\in I}\M_i$ be a direct sum of von Neumann algebras and let $\omega=\oplus \omega_i$ be a normal state on $\M$.
    Then $\omega$ is embezzling if and only if for each $i\in I$ we have either $\omega_i=0$ or $\omega_i$ is proportional to an embezzling state.
\end{lemma}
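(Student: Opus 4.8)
The plan is to reduce everything to the monopartite unitary characterization of embezzlement recorded just after \cref{def:approx_ue}: a normal state $\omega$ is embezzling if and only if for every $n\in\NN$, every state $\psi$ on $M_n$, and every $\eps>0$ there is a unitary $u\in M_n(\M)$ with $\norm{u(\omega\ox\bra1\placeholder\ket1)u^*-\omega\ox\psi}<\eps$. I would first fix notation by writing $\omega=\bigoplus_{i\in I}\omega_i$ with each $\omega_i$ a normal positive functional on $\M_i$, setting $p_i=\omega_i(1)\ge0$ (so $\sum_i p_i=1$, hence $p_i>0$ for at most countably many $i$) and $I_0=\{i\in I:p_i>0\}$; for $i\notin I_0$ one has $\omega_i=0$, so these summands never matter. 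The two structural facts I will invoke are: (i) the canonical identification $M_n\big(\bigoplus_i\M_i\big)\cong\bigoplus_i M_n(\M_i)$, under which $\omega\ox\psi$ becomes $\bigoplus_i(\omega_i\ox\psi)$ and a unitary $u$ becomes a family $(u_i)_i$ with each $u_i\in M_n(\M_i)$ unitary; and (ii) the predual of a direct sum is the $\ell^1$-direct sum of the preduals, so $\norm{\bigoplus_i\varphi_i}=\sum_i\norm{\varphi_i}$ for every normal functional.

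For the ``if'' direction I would assume $\omega_i/p_i$ is embezzling on $\M_i$ for all $i\in I_0$. Given $\psi$ on $M_n$ and $\eps>0$, I would pick for each $i\in I_0$ a unitary $u_i\in M_n(\M_i)$ with $\norm{u_i((\omega_i/p_i)\ox\bra1\placeholder\ket1)u_i^*-(\omega_i/p_i)\ox\psi}<\eps$, whence $\norm{u_i(\omega_i\ox\bra1\placeholder\ket1)u_i^*-\omega_i\ox\psi}<p_i\eps$, and set $u_i=1$ for $i\notin I_0$ so that the corresponding term vanishes. Then $u:=\bigoplus_i u_i$ is unitary in $M_n(\M)$, and by (i) and (ii)
\[
\norm{u(\omega\ox\bra1\placeholder\ket1)u^*-\omega\ox\psi}=\sum_{i\in I_0}\norm{u_i(\omega_i\ox\bra1\placeholder\ket1)u_i^*-\omega_i\ox\psi}<\sum_{i\in I_0}p_i\eps=\eps ,
\]
so $\omega$ is embezzling. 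For the ``only if'' direction I would assume $\omega$ is embezzling, fix $i\in I_0$, a state $\psi$ on $M_n$, and $\eps>0$, and choose a unitary $u\in M_n(\M)$ with $\norm{u(\omega\ox\bra1\placeholder\ket1)u^*-\omega\ox\psi}<p_i\eps$. Decomposing $u=\bigoplus_j u_j$ and comparing the $i$-th components via (ii) gives $\norm{u_i(\omega_i\ox\bra1\placeholder\ket1)u_i^*-\omega_i\ox\psi}<p_i\eps$; dividing by $p_i$ shows $\omega_i/p_i$ is embezzling on $\M_i$, i.e.\ $\omega_i$ is proportional to an embezzling state, while $\omega_i=0$ for $i\notin I_0$.

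I do not expect a genuine obstacle: the one point that must be handled carefully is fact (ii), the $\ell^1$-structure of the predual of a direct sum, since this is exactly what makes the weights $p_i$ telescope to $1$ in both directions (a naive sup-norm estimate on the difference would be useless). An alternative route for the ``only if'' direction is to apply \cref{lem:projection} to the central projections $z_i$ cutting out the summands, after checking that the lemma extends verbatim from states to nonzero corner functionals; I would still prefer the direct argument above, since it keeps track of the normalization explicitly.
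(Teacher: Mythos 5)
Your proposal is correct and follows essentially the same route as the paper's proof, which rests on exactly the two facts you isolate: unitaries in $M_n(\M)$ decompose as independently chosen direct sums of unitaries in the $M_n(\M_i)$, and the norm of a normal functional on a direct sum is the $\ell^1$-sum of the component norms. The paper states this in two lines; your version merely makes the bookkeeping of the weights $p_i$ explicit in both directions.
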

\begin{proof}
    Unitaries $u\in\M$ decompose as direct sums of unitaries $u_i\in\M_i$, which can be chosen independently. Moreover, for any state $\psi$ on $M_n$ we have
    \begin{align}
        \norm{\omega\otimes \psi - u(\omega\otimes \bra 1\cdot\ket 1)u^*} = \sum_i \norm{\omega_i\otimes \psi - u_i(\omega_i \otimes\bra1\cdot \ket 1)u_i^*},
    \end{align}
    which implies the claim.
\end{proof}

\begin{corollary}\label{cor:no_semifinite_embezzling}
    Let $\M$ be a von Neumann algebra and $\M=\P\oplus\R$ the decomposition into a semifinite and a type $\III$ von Neumann algebra.
    A normal state $\omega\in S_*(\M)$ is embezzling if and only if $\omega=0\oplus\phi$ with $\phi$ an embezzling state on $\R$.
\end{corollary}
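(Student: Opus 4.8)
The plan is to reduce the statement, via the preceding Lemma on direct sums, to the single assertion that \emph{no semifinite von Neumann algebra carries an embezzling state}, and then to obtain that assertion from the spectral scale.

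First I would invoke the canonical type decomposition $\M=\P\oplus\R$ into a semifinite summand $\P$ and a type $\III$ summand $\R$, and split a normal state accordingly as $\omega=\omega_\P\oplus\omega_\R$. The preceding Lemma says that $\omega$ is embezzling if and only if each of $\omega_\P$ and $\omega_\R$ is either zero or a positive scalar multiple of an embezzling state on the respective summand. Granting that $\P$ admits no embezzling state, the $\P$-component must vanish, $\omega_\P=0$; since $\omega(1)=1$, the $\R$-component is then itself a normal state $\phi$ on $\R$, which the Lemma forces to be embezzling. This is the forward implication, $\omega=0\oplus\phi$. The converse is immediate from the same Lemma, as the state $0\oplus\phi$ has $\P$-component $0$ and $\R$-component the embezzling state $\phi$.

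The heart of the matter — and the step I would treat most carefully — is that a semifinite von Neumann algebra $\M$ admits no embezzling state. Fix a faithful normal semifinite trace on $\M$; it induces one on every amplification $M_n(\M)$, which is again semifinite, so the spectral scale is available for all normal states on $\M$ and on the $M_n(\M)$. By \cref{prop:spectral_distance}, the assignment $\omega\mapsto\lambda_\omega$ is invariant under approximate unitary equivalence, and by \eqref{eq:l_amplification} it satisfies $\lambda_{\omega\ox\frac1n\!\tr}(t)=\tfrac1n\lambda_\omega(\tfrac tn)$. These are exactly the two hypotheses of \cref{thm:spectral_charac} applied with $\tilde\lambda_\omega:=\lambda_\omega$, so an embezzling $\omega$ would satisfy $\lambda_\omega(t)\propto 1/t$ — impossible, since $\lambda_\omega$ is a probability density on $(0,\infty)$ while $1/t$ is not integrable there. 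Equivalently one can avoid citing \cref{thm:spectral_charac} and argue directly: embezzlement yields $\lambda_\omega(t)=\lambda_{\omega\ox\frac1n\!\tr}(t)=\tfrac1n\lambda_\omega(\tfrac tn)$ for every $n$, and right-continuity upgrades this to $\lambda_\omega(t)=\lambda_\omega(1)/t$, the same contradiction.

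The only subtlety worth flagging is that the amplifications $M_n(\M)$ need not be factors, so one should check that the spectral-scale tools still apply to them. They do: \cref{prop:spectral_distance} is stated for an arbitrary semifinite von Neumann algebra, with factoriality needed only for its reverse inequality, which is not used here, and \eqref{eq:l_amplification} holds without any factoriality assumption. Beyond that, the argument is routine.
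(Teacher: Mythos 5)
Your proposal is correct and follows essentially the same route as the paper: the paper likewise obtains the corollary by combining the preceding direct-sum lemma with the observation that \cref{thm:spectral_charac}, applied to the spectral scale (or distribution function), forbids embezzling states on semifinite von Neumann algebras since $1/t$ is not integrable. Your extra remark that the non-factor case of $M_n(\M)$ only needs the unitary-invariance direction of \cref{prop:spectral_distance} is a correct and worthwhile clarification, but it does not change the argument.
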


Recall that a general von Neumann algebra $\M$ has a (unique) direct sum decomposition $\M = \P\oplus\R$ such that $\P$ semifinite and $\R$ is type $\III$.
If $\M=\P\oplus\R$, as above, acts on $\H$, then we can decompose $\H$ as $\mc J\oplus\K$ such that $\P$ only acts on $\mc J$ and $\R$ only acts on $\K$.
It follows that the commutant is $\M'=\P'\oplus\R'$, which is the direct sum decomposition of $\M'$ into a semifinite and a type $\III$ algebra.
If $(\H,\M,\M')$ is a bipartite system this gives us bipartite systems $(\mc J,\P,\P')$ and $(\K,\R,\R')$.
Therefore,
\begin{equation}\label{eq:direct_sum}
    (\H,\M,\M')= (\mc J,\P,\P')\oplus (\K,\R,\R')
\end{equation}
is the unique decomposition of the bipartite system $(\H,\M,\M')$ as a direct sum of a semifinite bipartite system and a type $\III$ bipartite system (with the obvious definitions).

\begin{corollary}\label{thm:no_semifinite_embezzling}
    Let $(\H,\M,\M')$ be a bipartite system and consider the direct sum decomposition into a semifinite and and a type $\III$ bipartite system.
    Let $\Omega \in\H$ be a unit vector.
    Then $\Omega$ is embezzling if and only if  $\Omega= 0\oplus\Phi\in\H$ with $\Phi$ embezzling for $(\K,\R,\R')$.
\end{corollary}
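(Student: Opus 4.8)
The plan is to deduce this from the monopartite version \cref{cor:no_semifinite_embezzling} via the equivalence of bipartite and monopartite embezzlement (\cref{thm:partite}), using the compatible direct-sum decompositions $\M=\P\oplus\R$ and $\H=\mc J\oplus\K$ introduced around \eqref{eq:direct_sum}, where $\P$ acts on $\mc J$ and $\R$ acts on $\K$. First I would split the unit vector accordingly as $\Omega=\Omega_{\mc J}\oplus\Omega_\K$ and observe that, since $\P$ and $\R$ act on the orthogonal subspaces $\mc J$ and $\K$, the marginal state $\omega$ of $\Omega$ on $\M$ decomposes as $\omega=\omega_\P\oplus\omega_\R$, where $\omega_\P=\ip{\Omega_{\mc J}}{(\placeholder)\Omega_{\mc J}}$ restricted to $\P$ and $\omega_\R=\ip{\Omega_\K}{(\placeholder)\Omega_\K}$ restricted to $\R$.

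Next, by \cref{thm:partite} applied to $(\H,\M,\M')$, the vector $\Omega$ is embezzling if and only if $\omega$ is an embezzling state on $\M$; by \cref{cor:no_semifinite_embezzling} this is equivalent to $\omega_\P=0$ together with $\omega_\R$ being an embezzling state on $\R$. Now I would translate $\omega_\P=0$ back to the Hilbert-space level: the unit of $\P$ is the projection $1_{\mc J}$ onto $\mc J$, so $\omega_\P(1_{\mc J})=\norm{\Omega_{\mc J}}^2$, whence $\omega_\P=0$ is equivalent to $\Omega_{\mc J}=0$, i.e.\ $\Omega=0\oplus\Phi$ with $\Phi:=\Omega_\K$. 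From $\norm{\Omega_{\mc J}}^2+\norm\Phi^2=1$ we get that $\Phi$ is a unit vector in $\K$, and $\omega_\R$ is exactly its marginal on $\R$. Finally, applying \cref{thm:partite} once more, this time to the bipartite system $(\K,\R,\R')$ (which is a bipartite system by the remarks preceding \eqref{eq:direct_sum}), the condition ``$\omega_\R$ embezzling on $\R$'' is equivalent to ``$\Phi$ embezzling for $(\K,\R,\R')$'', and assembling the equivalences yields the claim.

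This corollary is essentially a bookkeeping exercise, so I do not expect a genuine obstacle. The one step deserving a sentence of care is the passage from $\omega_\P=0$ to $\Omega_{\mc J}=0$, which relies on $\P$ acting non-degenerately on $\mc J$ (equivalently $1_\P=1_{\mc J}$); the rest is just keeping track of how the direct-sum structure of $\M$ interacts with the orthogonal splitting of $\H$ used in \eqref{eq:direct_sum}.
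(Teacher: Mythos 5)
Your proposal is correct and is exactly the derivation the paper intends (the paper leaves this corollary without an explicit proof, presenting it as an immediate consequence of \cref{cor:no_semifinite_embezzling}, the decomposition \eqref{eq:direct_sum}, and \cref{thm:partite}). Your bookkeeping — identifying $\omega=\omega_\P\oplus\omega_\R$, translating $\omega_\P=0$ into $\Omega_{\mc J}=0$ via non-degeneracy of the action of $\P$ on $\mc J$, and applying \cref{thm:partite} once on each side — is all that is needed.
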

\null

We now return to the study of spectral properties of embezzling states.
We will show that the modular spectrum of an embezzling state is always the full positive real line $\Sp\Delta_\omega=\RR^+$.
Let us briefly give some intuition for the modular spectrum: For  a faithful state $\omega$ on $M_n$ represented by the density matrix $\rho_\omega$, we have seen in \cref{exa:M_n_std_form} that $\Delta_\omega = \rho_\omega\ox (\overline{\rho}_{\omega})^{-1}$. Therefore the modular spectrum $\Sp\Delta_\omega$ consists of all ratios of eigenvalues: $\Sp\Delta_\omega = \{\frac{p}{q}: p,q\in\Sp(\rho_\omega)\}$.

\begin{theorem}\label{thm:modular_spectrum}
    If $\omega$ is an embezzling state on a $\sigma$-finite von Neumann algebra $\M$, then its modular spectrum is
    \begin{equation}
        \Sp\Delta_\omega =\RR^+.
    \end{equation}
\end{theorem}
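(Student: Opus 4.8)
The plan is to locate every $\lambda>0$ inside $\Sp\Delta_\omega$ by tensoring $\omega$ with a one-parameter family of states on $M_2$ and exploiting that embezzlement collapses all of these tensor products into a single approximate-unitary-equivalence class. First I would pass to the support corner: by \cref{cor:faithful} we may assume $\omega$ is faithful on $\M$ (the modular operator of $\omega$ agrees with that of its restriction to $s(\omega)\M s(\omega)$), so $\M$ is $\sigma$-finite and we realize it in standard form with cyclic separating vector $\Omega$ inducing $\omega$; write $\Delta:=\Delta_\omega$. Since $\Delta\Omega=\Omega$ we have $1\in\Sp\Delta$, and $\Sp\Delta$ is closed in $\RR^+$, so it suffices to prove $(0,\oo)\subseteq\Sp\Delta$.

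For $t>0$ let $\psi_t\in S(M_2)$ have eigenvalues $\tfrac{t}{1+t},\tfrac1{1+t}$. By \cref{exa:M_n_std_form}, $\Delta_{\psi_t}$ has spectrum $\{1,t,t^{-1}\}$, and $\psi_1=\tfrac12\tr$ has $\Delta_{\psi_1}=1$. Realizing $M_2(\M)$ in standard form as in \cref{lem:standard_amplification}, the cone vector of $\omega\ox\psi$ is $\Omega\ox\Omega_\psi$ and $\Delta_{\omega\ox\psi}=\Delta_\omega\ox\Delta_\psi$; decomposing over the (finite-dimensional) eigenspaces of $\Delta_{\psi_t}$ gives
\begin{equation*}
  \Sp\Delta_{\omega\ox\psi_1}=\Sp\Delta_\omega,\qquad \Sp\Delta_{\omega\ox\psi_t}=\Sp\Delta_\omega\cup t\,\Sp\Delta_\omega\cup t^{-1}\Sp\Delta_\omega\ \ni\ t,
\end{equation*}
the membership because $1\in\Sp\Delta_\omega$. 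Since $\omega$ is embezzling, $\omega\ox\psi_t\sim\omega\ox\psi_1$ (both are tensor products of $\omega$ with a state on $M_2$). Granting the Key Lemma below — that approximately unitarily equivalent faithful normal states have the same modular spectrum — we conclude $\Sp\Delta_\omega=\Sp\Delta_{\omega\ox\psi_1}=\Sp\Delta_{\omega\ox\psi_t}\ni t$; as $t>0$ is arbitrary, $\Sp\Delta_\omega=\RR^+$.

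It remains to prove the Key Lemma: if $\varphi_1\sim\varphi_2$ are faithful normal states on a von Neumann algebra $\N$, then $\Sp\Delta_{\varphi_1}=\Sp\Delta_{\varphi_2}$. For an exact unitary $\varphi_2=u\varphi_1u^*$ this is immediate from \eqref{eq:purification_formula} and the identity below it, which give $\Delta_{\varphi_2}=uj(u)\Delta_{\varphi_1}(uj(u))^*$, a unitary conjugate of $\Delta_{\varphi_1}$. For the approximate case I would choose $u_n\in\U(\N)$ with $\varphi_1^{(n)}:=u_n\varphi_1u_n^*\to\varphi_2$ in norm — so $\Sp\Delta_{\varphi_1^{(n)}}=\Sp\Delta_{\varphi_1}$ for all $n$, and each modular flow $\sigma^{\varphi_1^{(n)}}$ is conjugate to $\sigma^{\varphi_1}$ — and then pass to the limit through the identification $\log\Sp\Delta_\varphi=\mathrm{Sp}(\sigma^\varphi)$ of the modular spectrum with the Arveson spectrum of the modular flow on $\N$. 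Here $\|\varphi_1^{(n)}-\varphi_2\|\to0$ forces the Connes cocycle $[D\varphi_1^{(n)}:D\varphi_2]_t\to1$ uniformly on compact $t$-sets, whence $\sigma^{\varphi_1^{(n)}}_t=\Ad([D\varphi_1^{(n)}:D\varphi_2]_t)\circ\sigma^{\varphi_2}_t$ converges to $\sigma^{\varphi_2}$ while $\mathrm{Sp}(\sigma^{\varphi_1^{(n)}})$ stays constant, and a semicontinuity argument for the Arveson spectrum then pins $\mathrm{Sp}(\sigma^{\varphi_2})=\mathrm{Sp}(\sigma^{\varphi_1})$.

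I expect this last step to be the main obstacle. The delicate point is that $\varrho\mapsto\varrho^{it}$ is only strongly (not norm) continuous in infinite dimensions, so the Connes cocycle converges to $1$ only in a strong topology; one must therefore leverage that the $\varphi_1^{(n)}$ are genuine unitary conjugates of a single fixed state — not merely states sharing a modular spectrum — for instance via the continuity of $\varphi\mapsto[D\varphi:D\psi]$ obtained from the relative modular operator, in order to control the limiting Arveson spectrum. The reduction to faithful states, the tensor-product spectrum computation, and the exact-equivalence case are all routine.
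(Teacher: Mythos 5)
Your reduction is correct and is genuinely different from the paper's argument: tensoring $\omega$ with the qubit states $\psi_t$, computing $\Sp\Delta_{\omega\ox\psi_t}=\Sp\Delta_\omega\cup t\,\Sp\Delta_\omega\cup t^{-1}\Sp\Delta_\omega$, and playing $\omega\ox\psi_t\sim\omega\ox\tfrac12\tr$ against $1\in\Sp\Delta_\omega$ does place every $t>0$ in $\Sp\Delta_\omega$ \emph{once your Key Lemma is available}. The Key Lemma is true and in fact appears in the paper as \cref{eq:approx_inv_mod_spec}, but the paper does not prove it your way: it passes to the Ocneanu ultrapower $\M^\F$, where \cref{lem:u_in_Q} upgrades approximate unitary equivalence to exact unitary equivalence, and then uses $\Sp\Delta_\varphi=\Sp\Delta_{\varphi^\F}$ from \cite[Cor.~4.8 (3)]{ando2014ultraproducts}. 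The theorem itself is proved along the same lines: $\omega^\F$ embezzles exactly (\cref{cor:error_free}), exact embezzlement forces every $\lambda>0$ to be an \emph{eigenvalue} of $\Delta_{\omega^\F}$ (\cref{prop:exact_mbz_new,cor:exact-mbz-eigenvalues}), and the spectrum descends to $\Delta_\omega$. That stronger eigenvalue statement is what later yields the non-separability of exact embezzlers, so the ultrapower route buys more than the bare spectral inclusion.

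The genuine gap sits exactly where you flag it. From $\norm{u_n\varphi_1u_n^*-\varphi_2}\to0$ you obtain convergence of the Connes cocycles $(D(u_n\varphi_1u_n^*):D\varphi_2)_t\to1$ only in the strong$^*$ topology (uniformly on compact $t$-sets), so the perturbed flows converge to $\sigma^{\varphi_2}$ only pointwise $\sigma$-strongly; a semicontinuity statement for the Arveson spectrum under this mode of convergence is not off-the-shelf, and even the one-sided inclusion $\mathrm{Sp}(\sigma^{\varphi_2})\subseteq\mathrm{Sp}(\sigma^{\varphi_1})$ (which would suffice by symmetry of $\sim$) requires uniform control of $\int f(t)\,\sigma^{(n)}_t(x)\,dt$ that you have not supplied, on top of justifying $\mathrm{Sp}(\sigma^{\varphi})=\mathrm{Sp}(\log\Delta_{\varphi})$. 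As written the proof is therefore incomplete at its crux. The quickest repair is to replace the Arveson-spectrum argument by the ultrapower step: invoke \cref{eq:approx_inv_mod_spec} (or reprove it via \cref{lem:u_in_Q} and \cite[Cor.~4.8 (3)]{ando2014ultraproducts}), after which your tensor-product computation cleanly finishes the theorem.
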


Note that since $\Delta_\omega$ is always positive, the theorem asserts that the modular spectrum of embezzling states is maximal.
The converse to this is false: For example, the density operator $\rho = 6\pi^{-2}\sum n^{-2}\kettbra n$ on $\ell^2(\NN)$ determines a normal state $\omega = \tr\rho(\placeholder)$ with modular spectrum $\RR^+$. Since $\omega$ is a state on a type $\I$ factor, it cannot be embezzling (see \cref{cor:no_semifinite_embezzling}).

Foreshadowing our discussion of universal embezzlement in \cref{sec:universal}, we immediately infer from \cref{thm:modular_spectrum}:
\begin{corollary}\label{cor:universal_mbz_spectrum}
Let $\M$ be a $\sigma$-finite von Neumann algebra such that every normal state $\omega\in S_{*}(\M)$ is embezzling. Then, $\M$ is of type $\III_{1}$.
\end{corollary}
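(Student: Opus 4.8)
The plan is to reduce the statement to Connes' description of the invariant $S(\M)$, for which Theorem~\ref{thm:modular_spectrum} already provides all the analytic content. First, since $\M$ is $\sigma$-finite it admits a faithful normal state (\cite[Prop.~2.5.6]{bratteli1987oa1}), and by hypothesis every normal state on $\M$ is embezzling; in particular every \emph{faithful} normal state $\omega\in S_*(\M)$ is embezzling. Applying Theorem~\ref{thm:modular_spectrum} to each such $\omega$ yields $\Sp\Delta_\omega=\RR^+$ for every faithful normal state on $\M$.

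Next, I would invoke Connes' invariant $S(\M)=\bigcap_\varphi\Sp\Delta_\varphi$, where the intersection runs over all normal semifinite faithful weights $\varphi$ on $\M$ \cite{connes1973classIII,takesaki2}. For a $\sigma$-finite von Neumann algebra this intersection can be computed over faithful normal \emph{states} alone, so the previous paragraph gives $S(\M)=\RR^+=[0,\infty)$. Finally, one reads off the type: $0\in S(\M)$ forces $\M$ to be purely of type $\III$ (no semifinite direct summand), and the residual condition $S(\M)\setminus\{0\}=\RR^*_+$ is exactly the defining property of type $\III_1$ in Connes' classification (the $\lambda=1$ case; for factors it is the original definition, and in general it is equivalent to the flow of weights being the trivial $\RR$-action). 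Hence $\M$ is of type $\III_1$.

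The only non-trivial ingredient is Theorem~\ref{thm:modular_spectrum}, which is already in hand, so the remainder is bookkeeping with the $S$-invariant; the single point requiring care is the reduction of the defining intersection from all normal semifinite faithful weights to faithful normal states, which is standard for $\sigma$-finite algebras. Note that no factoriality assumption is needed (nor available): a direct sum of type $\III_1$ algebras again has every normal state embezzling, consistent with the conclusion being "type $\III_1$" in the general, not necessarily factorial, sense.
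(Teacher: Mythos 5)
Your argument is correct and coincides with the paper's own proof: both apply \cref{thm:modular_spectrum} to all faithful normal states (which exist by $\sigma$-finiteness) and then read off the type from the Connes invariant $\mathrm S(\M)=\bigcap_{\omega\ \text{faithful}}\Sp\Delta_\omega=\RR^+$. Your closing remark about the non-factorial case is a useful clarification but does not change the route.
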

\begin{proof}
    Since $\M$ is $\sigma$-finite, the Connes invariant $\mathrm S(\M)$ is given by \cite{connes1973classIII,stratila2020modular}
    \begin{align}\label{eq:S_state}
        \mathrm S(\M) & = \bigcap_{\substack{\omega\in S_{*}(\M)\\ \textup{faithful}}}\Sp\Delta_{\omega} = \RR^{+}.
    \end{align}
    Thus, $\mathrm S(\M)= \RR^+$ and $\M$ is of type $\III_{1}$.
\end{proof}

The basic idea of the proof of \cref{thm:modular_spectrum} will be to pass to an ultrapower of $\M$ to turn approximate unitary equivalence into an exact one. We refer to \cite{ando2014ultraproducts} for details on ultrapowers and explain the basics in the following. 
To start, we fix a free ultrafilter $\F$ on $\NN$.
The (Ocneanu) ultrapower $\M^\F$ of $\M$ is defined as the quotient $\Q_\F/\mc I_\F$, where 
\begin{equation}
\begin{aligned}
     \mc I_\F&:= \{ (x_n)\in \ell^\oo(\NN,\M) : \text{$s^*$-}\!\!\lim_{n\to\mc F} x_n=0\}, \\
     \mc Q_\F&:= \{ (x_n) \in\ell^\oo(\NN,\M) : (x_n) \mc I_\F,\,\mc I_\F(x_n) \subset \mc I_\F\}
\end{aligned}
\end{equation}
with "$\text{$s^*$-}\!\lim_{n\to\mc F}$" denoting the strong$^*$-limit along the ultrafilter $\mc F$ \cite{ando2014ultraproducts}.
$\Q_\F$ as defined above is a $C^*$-algebra, $\mc I_\F$ is a closed two-sided $^*$-ideal in $\Q_\F$ and their quotient, the ultrapower $\M^\F$ is an abstract von Neumann algebra, i.e., a $C^*$-algebra with predual.
We denote the equivalence class of $(x_n)\in\Q_\F$ in $\M^\F$ as $x_\oo$. Note that $(x_\oo)^* = x_\oo^*$.
All aithful normal states $\omega$ on $\M$ determine a faithful ultrapower state $\omega^\F\in S_*(\M^\F)$ which is defined via 
\begin{equation}
    \omega^\F(x_\oo) = \lim_{n\to\F} \omega(x_n), \qquad (x_n)\in\Q_\F.
\end{equation}
It is clear from the construction of the ultrapower that the matrix amplification $M_n(\M^\F)$ is naturally isomorphic to the ultrapower $M_n(\M)^\F$.
In the following, we identify the two.

\begin{lemma}\label{lem:u_in_Q}
    Let $(u_n)$ be a sequence of unitaries in $\M$ and let $\norm{u_n\phi u_n^*-\psi}\to0$ for two faithful normal states $\psi,\phi$ on $\M$.
    Then $(u_n)\in\Q_\F$, $u_\oo$ is a unitary in $\M^\F$, and 
    \begin{equation}\label{eq:exact_ue}
        u_\oo \phi^\F u_\oo^* = \psi^\F.
    \end{equation}
\end{lemma}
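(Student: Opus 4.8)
The plan is to verify the three assertions in turn: that $(u_n)$ represents a well-defined element of $\Q_\F$, that $u_\oo$ is unitary, and that conjugation by $u_\oo$ sends $\phi^\F$ to $\psi^\F$. The main input is that $\psi$ and $\phi$ are \emph{faithful}: this is what forces strong$^*$-convergence statements out of mere norm-convergence of states, via the estimate relating the state norm to vector norms in the positive cone (cf.\ \eqref{eq:state_vector_norms}) together with the fact that, for a faithful normal state, the associated vector in the cone is cyclic and separating.

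First I would check $(u_n)\in\Q_\F$. Since $\norm{u_n}=1$ for all $n$, the sequence lies in $\ell^\oo(\NN,\M)$. To see that it multiplies $\mc I_\F$ into $\mc I_\F$ on both sides, it suffices to show that if $(x_n)$ is a bounded sequence with $s^*$-$\lim_{n\to\F}x_n=0$, then also $s^*$-$\lim_{n\to\F}u_n x_n=0$ and $s^*$-$\lim_{n\to\F}x_n u_n=0$. The first is immediate because $u_n$ is unitary, so $\norm{u_nx_n\Psi}=\norm{x_n\Psi}\to0$ and likewise $\norm{(u_nx_n)^*\Psi}=\norm{x_n^*u_n^*\Psi}\le\norm{x_n^*u_n^*\Psi}$; here one uses that the net $u_n^*\Psi$ stays in a fixed ball and that $x_n^*\to0$ strongly tests against \emph{all} vectors, so in particular against $u_n^*\Psi$ after passing to the ultrafilter — more carefully, one estimates $\norm{x_n^*u_n^*\Psi}\le\norm{x_n^*\Omega_\phi}\cdot(\text{bounded})$ by writing $u_n^*\Psi$ in terms of the cyclic vector $\Omega_\phi$ and using that $x_n^*\to0$ in the $s^*$-topology, which by faithfulness of $\phi$ is equivalent to $\norm{x_n^*\Omega_\phi}\to0$ and $\norm{x_n\Omega_\phi}\to0$. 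The right-multiplication case is symmetric. Hence $(u_n)\in\Q_\F$ and $u_\oo\in\M^\F$ is well-defined, with $u_\oo^*$ represented by $(u_n^*)$; from $u_n^*u_n=u_nu_n^*=1$ for every $n$ we get $u_\oo^*u_\oo=u_\oo u_\oo^*=1$, so $u_\oo$ is unitary.

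For the identity \eqref{eq:exact_ue}, I would compute, for any $x_\oo\in\M^\F$ with representing sequence $(x_n)\in\Q_\F$,
\begin{align*}
    (u_\oo\phi^\F u_\oo^*)(x_\oo)
    = \phi^\F(u_\oo^* x_\oo u_\oo)
    = \lim_{n\to\F}\phi(u_n^* x_n u_n)
    = \lim_{n\to\F}(u_n\phi u_n^*)(x_n).
\end{align*}
Now $(u_n\phi u_n^*)(x_n) = \psi(x_n) + \big((u_n\phi u_n^*)-\psi\big)(x_n)$, and the second term is bounded in absolute value by $\norm{(u_n\phi u_n^*)-\psi}\cdot\norm{x_n}\le \norm{(u_n\phi u_n^*)-\psi}\cdot\sup_m\norm{x_m}\to0$ by hypothesis. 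Therefore the ultralimit equals $\lim_{n\to\F}\psi(x_n)=\psi^\F(x_\oo)$, which is exactly \eqref{eq:exact_ue}.

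The step I expect to require the most care is the first one — verifying $(u_n)\in\Q_\F$, i.e.\ that conjugation by the $u_n$ preserves the ideal $\mc I_\F$ — because the definition of $\Q_\F$ bakes in precisely this two-sided normalizer condition, and the subtlety is that strong$^*$-convergence along $\F$ must be preserved under multiplication by a sequence of unitaries that need not itself converge. This is where faithfulness of $\phi$ (and the cyclic-separating vector it provides in the standard form, cf.\ \cref{exa:M_n_std_form} and the surrounding discussion) is essential: it converts $s^*$-convergence of $(x_n)$ into norm-convergence of $x_n\Omega_\phi$ and $x_n^*\Omega_\phi$, against which one can cleanly bound $u_n x_n$ and $x_n u_n$. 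Everything else is a routine ultralimit manipulation.
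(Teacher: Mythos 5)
Your overall architecture is the same as the paper's --- first $(u_n)\in\Q_\F$, then unitarity of $u_\oo$, then the ultralimit computation for \eqref{eq:exact_ue} --- and your second and third steps are correct. The gap is in the first step, the one you yourself flag as delicate: as written, your argument never uses the hypothesis $\norm{u_n\phi u_n^*-\psi}\to0$, and it would therefore show that \emph{every} sequence of unitaries lies in $\Q_\F$. That is false. Take $\M=\B(\ell^2(\NN))$, $x_n=\kettbra{n}$, and $u_n$ the self-adjoint unitary exchanging $\ket1$ and $\ket n$. Then $(x_n)\in\mc I_\F$ (test with the faithful normal state $\chi=\sum_k2^{-k}\bra k\placeholder\ket k$), but $x_nu_n=\ketbra{n}{1}$, so $\chi\big((x_nu_n)^*(x_nu_n)\big)=\chi(\kettbra1)=\tfrac12$ for all $n$ and $(x_nu_n)\notin\mc I_\F$; hence $(u_n)\notin\Q_\F$. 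Both mechanisms you propose for bounding $\norm{x_n^*u_n^*\Psi}$ fail for the same reason: strong$^*$-convergence of $(x_n^*)$ controls $\norm{x_n^*\Phi}$ only for a \emph{fixed} vector $\Phi$, not along the moving vectors $u_n^*\Psi$, and there is no estimate of the form $\norm{x_n^*u_n^*\Psi}\le C\,\norm{x_n^*\Omega_\phi}$, since writing $u_n^*\Psi\approx a_n'\Omega_\phi$ with $a_n'\in\M'$ gives no uniform control on $\norm{a_n'}$.

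The missing idea is exactly where the hypothesis enters. Use the criterion that a uniformly bounded sequence $(y_n)$ is strong$^*$-null along $\F$ iff $\chi(y_n^*y_n)\to0$ and $\chi(y_ny_n^*)\to0$ for faithful normal states $\chi$, and test with \emph{both} $\phi$ and $\psi$. For $y_n=x_nu_n$ one has $\phi(y_ny_n^*)=\phi(x_nx_n^*)\to0$ directly, while $\phi(y_n^*y_n)=\phi(u_n^*x_n^*x_nu_n)=(u_n\phi u_n^*)(x_n^*x_n)\le\norm{u_n\phi u_n^*-\psi}\,\norm{x_n}^2+\psi(x_n^*x_n)\to0$: the term that $\phi$ alone cannot handle is passed to $\psi$ at the cost of the error $\norm{u_n\phi u_n^*-\psi}\to0$, and faithfulness of $\psi$ guarantees $\psi(x_n^*x_n)\to0$. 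The case $y_n=u_nx_n$ is symmetric, using $\norm{u_n^*\psi u_n-\phi}=\norm{\psi-u_n\phi u_n^*}$. With this repair your remaining steps go through unchanged.
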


\begin{proof}
    Recall that a uniformly bounded net $(x_\alpha)$ converges to $0$ in the strong$^*$-topology if and only if $\phi(x_n^*x_n)$ and $\phi(x_nx_n^*)$ both converge to zero for some, hence all, faithful normal states $\phi$ \cite[Prop.~III.5.3]{takesaki1}.
    Let $(x_n)\in\mc I_\F$. Then $\phi((x_nu_n)^*(u_nx_n)) = \phi(x_n^*x_n)\to0$, $\phi((u_nx_n)(u_nx_n)^*)=\phi(x_nx_n^*)\to0$, and
    \begin{equation}
        \phi((u_nx_n)^*(u_nx_n) ) = (u_n\phi u_n^*)(x_n^*x_n) \le \norm{u_n\phi u_n^*-\psi} \norm{x_n}^2 + \psi(x_n^*x_n) \to 0.
    \end{equation}
    Analogously, one sees $\phi((u_nx_n)(u_nx_n)^*)\to 0$. Together these imply $(u_nx_n),\, (x_n u_n)\in\mc I_\F$ showing $(u_n)\in \Q_\F$.
    Finally, \eqref{eq:exact_ue} follows from 
    \begin{align*}
        \abs{(u_\oo \phi^\F u_\oo^*-\psi^\F)(x_\oo )}= \lim_{n\to\F} \abs{(u_n\phi u_n^*-\psi)(x_n)}
        \le \lim_{n\to\F} \norm{u_n\phi u_n^*-\psi} \norm{x_n} =0.
    \end{align*}
\end{proof}

As an immediate consequence of the preceding result, we note that the modular spectrum $\Sp\Delta_{\omega}$ is invariant under approximate unitary equivalence.

\begin{corollary}\label{eq:approx_inv_mod_spec}
    Let $\omega,\varphi\in S_{*}(\M)$ be normal states on a von Neumann algebra $\M$. If $\omega$ and $\varphi$ are approximately unitarily equivalent, i.e., $\omega\sim\varphi$, then $\Sp\Delta_{\omega}=\Sp\Delta_{\varphi}$.
\end{corollary}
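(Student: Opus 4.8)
The plan is to lift the approximate unitary equivalence to an \emph{exact} one in the Ocneanu ultrapower, where the modular operator can be transported by the transformation rule following \eqref{eq:purification_formula}, and then to descend the resulting spectral equality back to $\M$. First I would reduce to faithful states: approximate unitary equivalence $\omega\sim\varphi$ identifies the support projections $s(\omega)$ and $s(\varphi)$ up to Murray–von Neumann equivalence, so choosing a partial isometry $v\in\M$ with $v^*v=s(\omega)$, $vv^*=s(\varphi)$ and using $\Delta_{v\omega v^*}=vj(v)\Delta_\omega v^*j(v)^*$ one gets $\Sp\Delta_\omega=\Sp\Delta_{v\omega v^*}$, while $v\omega v^*$ and $\varphi$ are approximately unitarily equivalent \emph{faithful} states on the corner $s(\varphi)\M s(\varphi)$.

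So assume $\omega,\varphi$ faithful and pick unitaries $u_n\in\U(\M)$ with $\norm{u_n\omega u_n^*-\varphi}\to 0$. By \cref{lem:u_in_Q} we obtain $(u_n)\in\Q_\F$, a unitary $u_\oo\in\U(\M^\F)$, and the exact identity $u_\oo\,\omega^\F\,u_\oo^*=\varphi^\F$. Putting $\M^\F$ into standard form and applying the modular transformation rule once more, now inside $\M^\F$, with the unitary $u_\oo j(u_\oo)$, gives
\begin{equation}
    \Delta_{\varphi^\F}=u_\oo j(u_\oo)\,\Delta_{\omega^\F}\,\bigl(u_\oo j(u_\oo)\bigr)^* ,\qquad\text{hence}\qquad \Sp\Delta_{\omega^\F}=\Sp\Delta_{\varphi^\F}.
\end{equation}

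The remaining, and I expect only substantial, step is the descent: $\Sp\Delta_{\psi^\F}=\Sp\Delta_\psi$ for every faithful normal state $\psi$ on $\M$. One inclusion is easy. Since $\psi^\F|_\M=\psi$, the GNS space $\H_\psi$ embeds isometrically into $\H_{\psi^\F}$, and because $\sigma^{\psi^\F}$ restricts on the constant sequences to $\sigma^\psi$, the subspace $\H_\psi$ is invariant under $\Delta_{\psi^\F}^{it}$ for all $t$; by uniqueness of the modular flow its restriction there is $\Delta_\psi^{it}$, so $\H_\psi$ reduces $\Delta_{\psi^\F}$ with restriction $\Delta_\psi$, whence $\Sp\Delta_\psi\subseteq\Sp\Delta_{\psi^\F}$. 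The reverse inclusion $\Sp\Delta_{\psi^\F}\subseteq\Sp\Delta_\psi$ is the crux: it says the modular spectrum is not enlarged by the Ocneanu ultrapower, and it follows from the fact that $\sigma^{\psi^\F}$ is the pointwise (ultrafilter) extension of $\sigma^\psi$, so that the Arveson spectrum of the modular flow — which equals $\log\Sp\Delta_\psi$ — is preserved; this is part of the structure theory of the ultrapower \cite{ando2014ultraproducts}. Granting the descent, the chain
\begin{equation}
    \Sp\Delta_\omega=\Sp\Delta_{\omega^\F}=\Sp\Delta_{\varphi^\F}=\Sp\Delta_\varphi
\end{equation}
closes the argument. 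Thus the main obstacle is exactly the ultrapower-invariance of the modular spectrum; the faithful reduction is routine bookkeeping, and the lifting is precisely what \cref{lem:u_in_Q} is designed to provide.
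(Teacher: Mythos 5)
Your proposal is correct and follows essentially the same route as the paper: reduce to faithful states on the support corners via a partial isometry realizing the Murray--von Neumann equivalence $s(\omega)\sim s(\varphi)$, lift the approximate unitary equivalence to an exact one in the Ocneanu ultrapower via \cref{lem:u_in_Q}, and invoke the invariance of the modular spectrum under passage to the ultrapower from \cite[Cor.~4.8~(3)]{ando2014ultraproducts}. The only cosmetic difference is that you transport $\omega$ to the corner of $\varphi$ while the paper transports $\varphi$ to the corner of $\omega$, and you sketch a justification for the ultrapower-invariance step where the paper simply cites it.
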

\begin{proof}
    Recall that $\Delta_{\omega}$ is defined with respect to support corner $\M_{s(\omega)}=s(\omega)\M s(\omega)$ (and analogously for $\Delta_{\varphi}$). According to \cite[Lem.~2.3]{haagerup1990equivalence}, we have $s(\omega)\sim s(\varphi)$ in the Murray-von Neumann sense. Let $v\in\M$ be a partial isometry realizing the equivalence, i.e., $v^{*}v=s(\omega)$ and $vv^{*}=s(\varphi)$. Then, $v^{*}\varphi v$ is faithful on $\M_{s(\omega)}$, and $\omega\sim v^{*}\varphi v$ by \cite[Lem.~2.4]{haagerup1990equivalence}. By \cref{lem:u_in_Q}, the ultrapower states $\omega^{\F}$ and $(v^{*}\varphi v)^{\F}$ are unitarily equivalent with respect to ultrapower $(\M_{s(\omega)})^{\F}$. As the modular spectrum of a state and its ultrapower agree \cite[Cor.~4.8 (3)]{ando2014ultraproducts}, we find $\Sp\Delta_{\omega}=\Sp\Delta_{v^{*}\varphi v}$. Since $v:s(\omega)\H\to s(\varphi)\H$ is a unitary that takes $\M_{s(\omega)}$ to $\M_{s(\varphi)}$, the claim follows.
\end{proof}
    
\begin{corollary}\label{cor:error_free}
    If $\omega$ is a faithful embezzling state on $\M$, then $\omega^\F$ admits exact embezzlement in the sense that for all $n\in\NN$ and all faithful states $\psi,\phi\in S(M_n)$, there exists a unitary $u\in M_{n}(\M^\F)= M_n(\M)^\F$ such that 
    \begin{equation}
        \omega^\F\ox\psi =u(\omega^\F\ox\phi)u^*.
    \end{equation}
    \begin{proof}
    This follows from \cref{lem:u_in_Q} because ultrapowers behave well under matrix amplification, i.e., because $(\omega\ox\psi)^\F = \omega^\F\ox\psi$ (using the identification $M_n(\M)^\F= M_n(\M^\F)$).
    \end{proof}    
    \end{corollary}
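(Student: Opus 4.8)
The plan is to deduce the corollary from \cref{lem:u_in_Q} by applying it to the matrix amplification $M_n(\M)$, after upgrading the embezzling property of $\omega$ to a \emph{sequence} of unitaries that realizes the desired state transformation with vanishing error. So I would fix $n\in\NN$ and faithful states $\psi,\phi\in S(M_n)$. Since $\omega$ is a faithful normal state and $\psi,\phi$ are faithful, the product states $\omega\ox\phi$ and $\omega\ox\psi$ are faithful normal states on $M_n(\M)\cong\M\ox M_n$. Because $\omega$ is embezzling, the characterization of embezzling states via approximate unitary equivalence (recorded after \cref{def:approx_ue}) together with transitivity of $\sim$ gives $\omega\ox\phi\sim\omega\ox\psi$; hence there is a sequence of unitaries $u_k\in\U(M_n(\M))$ with $\norm{u_k(\omega\ox\phi)u_k^*-\omega\ox\psi}\to0$ as $k\to\oo$.

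Next I would invoke \cref{lem:u_in_Q}, but with the von Neumann algebra $M_n(\M)$ in place of $\M$ and the faithful normal states $\omega\ox\phi$, $\omega\ox\psi$ in place of $\phi,\psi$. The lemma then yields that $(u_k)\in\Q_\F$ (formed over $M_n(\M)$), that its class $u_\oo$ is a unitary in the ultrapower $M_n(\M)^\F$, and that $u_\oo\,(\omega\ox\phi)^\F\,u_\oo^* = (\omega\ox\psi)^\F$.

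It then remains to translate this into the stated form via the standard identifications. First, $M_n(\M)^\F\cong M_n(\M^\F)$ (as noted before \cref{lem:u_in_Q}), so $u_\oo$ is a unitary in $M_n(\M^\F)=M_n(\M)^\F$. Second, since tensoring with the finite-dimensional algebra $M_n$ commutes with the ultrapower construction, the ultrapower states satisfy $(\omega\ox\phi)^\F=\omega^\F\ox\phi$ and $(\omega\ox\psi)^\F=\omega^\F\ox\psi$ under this identification — intuitively, a state on $M_n$ is "constant" along $\F$. Substituting gives $u_\oo(\omega^\F\ox\phi)u_\oo^*=\omega^\F\ox\psi$, which is the assertion with $u=u_\oo$. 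The only point requiring genuine care — the \emph{main obstacle}, such as it is for a corollary — is checking this last identification on the nose: that the natural $^*$-isomorphism $M_n(\M)^\F\to M_n(\M^\F)$ intertwines $(\omega\ox\phi)^\F$ with $\omega^\F\ox\phi$. This is a direct evaluation on representatives $\bigl((x^{(k)}_{ij})_{ij}\bigr)_k$, using that the matrix entries of $\phi$ (and $\psi$) do not depend on $k$; and the faithfulness hypotheses on $\psi,\phi$ are precisely what makes \cref{lem:u_in_Q} applicable (a non-faithful target state cannot be reached unitarily from a faithful one in any case, so the restriction is also necessary).
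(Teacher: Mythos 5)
Your proposal is correct and follows essentially the same route as the paper: the paper's proof is exactly the one-line observation that \cref{lem:u_in_Q} applied to $M_n(\M)$ together with the identifications $M_n(\M)^\F = M_n(\M^\F)$ and $(\omega\ox\psi)^\F = \omega^\F\ox\psi$ gives the claim, and you have simply spelled out the intermediate steps (approximate unitary equivalence of $\omega\ox\phi$ and $\omega\ox\psi$, faithfulness of the product states, and the verification of the identification on representatives).
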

\begin{proposition}\label{prop:exact_mbz_new}
Let $\omega \in S_*(\M)$ and $\psi\in S(M_n)$ be faithful states and suppose that any of the following assertions is true:
\begin{enumerate}
    \item $u(\omega \ox \frac{1}{n}\!\Tr) u^* = \omega\ox \psi$ for some unitary $u\in M_n(\M)$,
    \item $u\omega u^* = \omega\ox \psi$ for some unitary $u\in M_{n,1}(\M)$.
\end{enumerate}
Then every $\lambda\in \Sp(\Delta_\psi)$ is an eigenvalue of $\Delta_\omega$.
    \begin{proof}
    Let $\rho = \mathrm{diag}(p_1,\ldots, p_n)$ with $p_i>0$ and $\sum p_i=1$, and let $\psi= \tr\rho(\placeholder)$ be the implemented state on $M_n$.
    Then the spectral values of $\Sp(\Delta_\psi)$ are given by the ratios $p_i/p_j$, because $\Delta_\psi = \rho\ox (\overline\rho)^{-1}$ and therefore
    $\Delta_\psi \ket{ij} = p_i/p_j \ket{ij}$.
    Assume that the first item holds and let $u\in M_n(\M)$ be a unitary such that $u(\omega\ox\psi)u^* = \omega \ox\frac1n\!\tr_n$.
    Set $v=j\up n(u)$. Then  $uv(\Delta_{\omega\ox\psi})u^*v^* = uv(\Delta_\omega\ox \Delta_\psi)u^*v^* = \Delta_\omega\ox \Delta_{\frac{1}{n}\!\Tr} = \Delta_\omega \ox 1$. 
    Therefore
    \begin{align}
    (\Delta_\omega\ox1) uv(\Omega\ox\ket{ij}) = \frac{p_i}{p_j}\, uv(\Omega\ox\ket{ij}),
    \end{align}
    because $\Delta_\omega\Omega = \Omega$.
    The claim follows analogously from the second item.
    \end{proof}
\end{proposition}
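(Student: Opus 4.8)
The plan is to reduce everything to reading off eigenvalues of $\Delta_\omega$ from the explicit modular operator of $\psi$, using two facts that hold throughout: modular operators factorize over tensor products, $\Delta_{\omega\ox\psi}=\Delta_\omega\ox\Delta_\psi$, and the canonical cone vector $\Omega=\Omega_\omega$ of a faithful state is fixed, $\Delta_\omega\Omega=\Omega$. Diagonalizing $\rho=\mathrm{diag}(p_1,\dots,p_n)$ with all $p_i>0$ (this does not change $\Sp(\Delta_\psi)$), Example~\ref{exa:M_n_std_form} gives $\Delta_\psi=\rho\ox(\overline\rho)^{-1}$, so $\Delta_\psi\ket{ij}=(p_i/p_j)\ket{ij}$ and $\Sp(\Delta_\psi)=\{p_i/p_j:1\le i,j\le n\}$. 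Hence it suffices to show that each ratio $p_i/p_j$ is an \emph{eigenvalue} of $\Delta_\omega$.

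First I would treat assertion~1. Replacing $u$ by $u^*$, we may assume $u(\omega\ox\psi)u^*=\omega\ox\tfrac1n\Tr$ for a unitary $u\in M_n(\M)$. Set $v=j\up n(u)\in M_n(\M)'$; since $u$ and $v$ commute, $uv$ is unitary and the transformation law \eqref{eq:purification_formula}, applied in $M_n(\M)$, yields $\Delta_{u(\omega\ox\psi)u^*}=uv\,\Delta_{\omega\ox\psi}\,(uv)^*$. The left-hand side is $\Delta_{\omega\ox\frac1n\Tr}=\Delta_\omega\ox\Delta_{\frac1n\Tr}=\Delta_\omega\ox 1$, because the normalized trace is tracial and so has trivial modular operator. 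Thus $\Delta_\omega\ox 1=uv(\Delta_\omega\ox\Delta_\psi)(uv)^*$. Since $\Delta_\omega\Omega=\Omega$, the vector $\Omega\ox\ket{ij}$ is an eigenvector of $\Delta_\omega\ox\Delta_\psi$ with eigenvalue $p_i/p_j$, so $uv(\Omega\ox\ket{ij})$ is a nonzero eigenvector of $\Delta_\omega\ox 1$ with the same eigenvalue. Finally, an eigenvalue of $\Delta_\omega\ox 1$ on the standard form $\H\up n=\H\ox\CC^n\ox\CC^n$ of $M_n(\M)$ is automatically an eigenvalue of $\Delta_\omega$ (expand the eigenvector in a product basis of $\CC^n\ox\CC^n$ and observe each nonzero component lies in the corresponding eigenspace of $\Delta_\omega$). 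This exhibits every $p_i/p_j$ as an eigenvalue of $\Delta_\omega$.

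For assertion~2 I would argue in parallel, using Lemma~\ref{lem:matrices_and_vectors} to pass from the isometry $u\in M_{n,1}(\M)$ to the partial isometry $w\in M_n(\M)$ and its commutant partner $u'=J\up n u J\in M_{n,1}(\M')$. The purification identity there gives $\Omega_{\omega\ox\psi}=u'u\,\Omega$, and the composite implementer $u'u$ is a unitary $\H\to\H\up n$ taking the standard form of $\M$ onto that of $M_n(\M)$ (it carries the positive cone into the positive cone and intertwines $J$ with $J\up n$). By the partial-isometry form of \eqref{eq:purification_formula} it therefore conjugates the modular data, $\Delta_{\omega\ox\psi}=u'u\,\Delta_\omega\,(u'u)^*$, equivalently $\Delta_\omega=(u'u)^*\,\Delta_{\omega\ox\psi}\,u'u$. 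Applying $(u'u)^*$ to the eigenvector $\Omega\ox\ket{ij}$ of $\Delta_{\omega\ox\psi}=\Delta_\omega\ox\Delta_\psi$ produces a nonzero eigenvector of $\Delta_\omega$ with eigenvalue $p_i/p_j$, finishing the proof.

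The genuinely delicate point — the one I would verify most carefully — is the passage from the \emph{spectrum} of $\Delta_\psi$ to honest \emph{eigenvalues} of $\Delta_\omega$. This succeeds only because each value $p_i/p_j$ is an exact eigenvalue of $\Delta_\psi$ (eigenvector $\ket{ij}$) and $\Delta_\omega\Omega=\Omega$ holds exactly, so that \eqref{eq:purification_formula} transports these eigenvectors with no approximation; this is precisely the extra strength that exact embezzlement provides over mere approximate unitary equivalence. In assertion~2 the only additional care concerns the support-projection bookkeeping for the non-square (partial) isometry $u$, for which Lemma~\ref{lem:matrices_and_vectors} is tailor-made.
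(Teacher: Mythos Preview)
Your proof is correct and follows essentially the same approach as the paper: diagonalize $\rho$, use $\Delta_{\omega\ox\psi}=\Delta_\omega\ox\Delta_\psi$ together with $\Delta_\omega\Omega=\Omega$, and transport the explicit eigenvectors $\Omega\ox\ket{ij}$ via the unitary implementer $uv=u\,j\up n(u)$ from \eqref{eq:purification_formula}. Your treatment of assertion~2 is more detailed than the paper's (which simply says ``analogously''); your identification of the intertwiner $u'u$ is correct, though the cleanest justification for $\Delta_{\omega\ox\psi}=u'u\,\Delta_\omega\,(u'u)^*$ is that $a\mapsto uau^*$ is a $*$-isomorphism $\M\to M_n(\M)$ carrying $\omega$ to $\omega\ox\psi$, so uniqueness of the standard form gives a unitary conjugating the modular data---and your $u'u$ is precisely that unitary.
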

\begin{corollary}\label{cor:exact-mbz-eigenvalues}
    Let $\omega\in S_*(\M)$ be a state that admits exact embezzlement, in the sense that for all faithful $\varphi,\psi \in S(M_n)$ (for all $n\in\NN)$ there exists a unitary $u\in \U(M_n(\M))$ such that
    \begin{align}
        u\omega\ox\varphi u^* = \omega\ox\psi.
    \end{align}
    Then every $\lambda>0$ is an eigenvalue of the modular operator $\Delta_\omega$.
    \begin{proof}
    Any $\lambda>0$ appears as an eigenvalue of $\Delta_\psi$ for suitable $\psi\in S(M_n)$.
    \end{proof}
\end{corollary}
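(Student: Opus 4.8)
The plan is to deduce this directly from \cref{prop:exact_mbz_new}(1). The only point one has to supply is that, for every prescribed $\lambda>0$, there exists a faithful state $\psi$ on some matrix algebra $M_n$ with $\lambda\in\Sp(\Delta_\psi)$; feeding the exact embezzlement hypothesis with the maximally mixed state $\tfrac1n\Tr$ as the \emph{initial} state then puts us precisely in the situation of \cref{prop:exact_mbz_new}(1).

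Concretely, I would fix $\lambda>0$ and work on $M_2$. Set $\rho_\lambda=\mathrm{diag}\!\big(\tfrac{\lambda}{1+\lambda},\tfrac{1}{1+\lambda}\big)$, a faithful density operator, and let $\psi=\tr\rho_\lambda(\placeholder)$ be the associated state. As recorded in the proof of \cref{prop:exact_mbz_new}, $\Delta_\psi=\rho_\lambda\ox(\overline{\rho_\lambda})^{-1}$ acts diagonally in the basis $\ket{ij}$ via $\Delta_\psi\ket{ij}=(\rho_\lambda)_i(\rho_\lambda)_j^{-1}\ket{ij}$, so $\Sp(\Delta_\psi)=\{1,\lambda,\lambda^{-1}\}$ and in particular $\lambda\in\Sp(\Delta_\psi)$.

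Next I would invoke the exact embezzlement property with $n=2$, initial state $\varphi=\tfrac12\Tr$ (which is faithful), and target state $\psi$: this produces a unitary $u\in\U(M_2(\M))$ with $u(\omega\ox\tfrac12\Tr)u^*=\omega\ox\psi$, which is exactly hypothesis (1) of \cref{prop:exact_mbz_new}. Its conclusion gives that every element of $\Sp(\Delta_\psi)$, in particular $\lambda$, is an eigenvalue of $\Delta_\omega$. Since $\lambda>0$ was arbitrary, every positive real is an eigenvalue of $\Delta_\omega$.

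I do not expect any genuine obstacle: \cref{prop:exact_mbz_new} already carries all the analytic content, transporting the unitary intertwiner to honest eigenvectors of $\Delta_\omega$ through the standard form and the formula $\Delta_{v\omega v^*}=vj(v)\Delta_\omega v^*j(v)^*$, while the remaining ingredient — realizing a prescribed positive number as a ratio of eigenvalues of a faithful $2\times2$ density matrix — is immediate. If one wishes to allow non-faithful $\omega$, one first observes that $u$ must commute with $s(\omega)\ox 1$ (both $\omega\ox\tfrac12\Tr$ and $\omega\ox\psi$ have support $s(\omega)\ox 1$), so $u$ restricts to a unitary over the corner $s(\omega)\M s(\omega)$, on which $\omega$ is faithful and $\Delta_\omega$ is defined; the same argument then applies there.
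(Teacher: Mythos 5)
Your proposal is correct and follows essentially the same route as the paper: the paper's one-line proof likewise reduces the corollary to \cref{prop:exact_mbz_new} by observing that any $\lambda>0$ occurs in $\Sp(\Delta_\psi)$ for a suitable faithful $\psi$ on a matrix algebra, exactly as your explicit $2\times2$ density matrix $\rho_\lambda$ shows. Your closing remark on reducing to the supporting corner when $\omega$ is not faithful is a harmless (and correct) extra detail that the paper leaves implicit.
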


\begin{corollary}
    Let $\M$ be a von Neumann algebra and let $\omega$ be a state which admits exact embezzlement in the sense of the previous corollary.
    Then $\M$ is not a separable von Neumann algebra, i.e., admits no faithful representation on a separable Hilbert space.
    \begin{proof}
    An uncountable set of distinct eigenvalues implies an uncountable set of (pairwise orthogonal) eigenvectors. Thus the claim follows from the previous corollary.
    \end{proof}
\end{corollary}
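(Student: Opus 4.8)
The plan is to argue by contradiction: assuming $\M$ admits a faithful representation on a separable Hilbert space, I will produce an uncountable orthonormal family inside a separable Hilbert space.

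First I would recall that $\Delta_\omega$ is, by definition, the modular operator of the faithful restriction of $\omega$ to its support corner $\M_{s(\omega)}=s(\omega)\M s(\omega)$, acting as a positive self-adjoint operator on the GNS Hilbert space $\H_\omega$ of that restriction. If $\M$ is separable, then so is the corner $\M_{s(\omega)}$, and the GNS space of a normal state on a separable von Neumann algebra is separable (a cyclic vector generates $\H_\omega$ from the ultraweakly separable algebra); hence $\H_\omega$ is a separable Hilbert space. Next I would invoke \cref{cor:exact-mbz-eigenvalues}: because $\omega$ admits exact embezzlement, every $\lambda\in(0,\infty)$ is an eigenvalue of $\Delta_\omega$, so for each such $\lambda$ one can choose a unit vector $\xi_\lambda\in\H_\omega$ with $\Delta_\omega\xi_\lambda=\lambda\xi_\lambda$. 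Since eigenvectors of a self-adjoint operator belonging to distinct eigenvalues are orthogonal, $\{\xi_\lambda:\lambda>0\}$ is an orthonormal family of cardinality $\abs\RR$ in $\H_\omega$ — equivalently, the spectral projections $\chi_{\{\lambda\}}(\Delta_\omega)$, $\lambda>0$, form an uncountable family of pairwise orthogonal nonzero projections on $\H_\omega$. A separable Hilbert space admits only countable orthonormal families, so this is a contradiction and $\M$ cannot be separable.

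The one point I would be careful to spell out, rather than leave implicit, is the implication "$\M$ separable $\Rightarrow$ $\H_\omega$ separable." Note that $\Delta_\omega$ is in general not affiliated with $\M$ (only the unitaries $\Delta_\omega^{it}$ implement the modular flow on $\M$), so the remark from \cref{subsec:pre:hilbert-spaces} that a separable von Neumann algebra carries only countable families of pairwise orthogonal projections does not apply to the spectral projections of $\Delta_\omega$ directly; it applies to $\B(\H_\omega)$ exactly because $\H_\omega$ is separable, which itself follows from norm-separability of the predual together with the existence of a cyclic vector for the GNS representation. Everything else is an immediate consequence of the orthogonality of eigenspaces.
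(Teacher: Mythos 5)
Your proposal is correct and follows the same route as the paper's proof: \cref{cor:exact-mbz-eigenvalues} gives an uncountable family of distinct eigenvalues of $\Delta_\omega$, hence an uncountable orthonormal family, contradicting separability. The extra care you take in justifying that separability of $\M$ forces the GNS space $\H_\omega$ to be separable (rather than appealing to the countable-projection property of $\M$ itself, which would not apply since the spectral projections of $\Delta_\omega$ need not lie in $\M$) is a worthwhile clarification that the paper leaves implicit.
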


\begin{proof}[Proof of \cref{thm:modular_spectrum}]
    By \cite[Cor.~4.8 (3)]{ando2014ultraproducts}, a faithful normal state and its ultrapower have the same modular spectrum, i.e., $\Sp\Delta_\omega=\Sp\Delta_{\omega^\F}$. 
    If $\omega$ is embezzling, then $\omega^\F$ admits exact embezzlement by \cref{cor:error_free} and hence $\Sp\Delta_\omega = \RR^+$ by \cref{cor:exact-mbz-eigenvalues}, which finishes the proof.
\end{proof}

Let us note that non-separable Hilbert spaces also allow for exact embezzling states in the bipartite sense:

\begin{corollary}\label{cor:exact-mbz}
    There exists a standard bipartite system $(\H,\M,\M')$ and a unit vector $\Omega\in\H$ such that for all states $\Psi,\Phi\in\CC^n\ox\CC^n$ with marginals of full support, there exist unitaries $u\in M_n(\M)$ and $u'\in\M_n(\M')$ such that 
    \begin{equation}
        \Omega\ox\Psi = uu'\, \Omega\ox\Phi.
    \end{equation}
    However, this forces $\H$ to be a non-separable Hilbert space.
\end{corollary}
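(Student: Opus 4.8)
The plan is to build the system by an ultrapower and then deduce non-separability from the spectral consequences of exact embezzlement. First I would fix a $\sigma$-finite type $\III_1$ factor $\M_0$ and a faithful normal state $\omega_0$ on it, which is embezzling by \cref{introthm:tablekappa}. By \cref{cor:error_free}, the ultrapower $\M:=\M_0^\F$ carries the faithful state $\omega:=\omega_0^\F$ that admits \emph{exact} monopartite embezzlement: for every $n$ and all faithful $\phi,\psi\in S(M_n)$ there is $u\in\U(M_n(\M))$ with $u(\omega\ox\phi)u^*=\omega\ox\psi$. Realize $\M$ in standard form $(\H,J,\P)$ with commutant $\M'=J\M J$ and set $\Omega:=\Omega_\omega\in\P$; then $(\H,\M,\M')$ is a standard bipartite system. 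For fixed $n$ the amplified system $(\H\ox\CC^n\ox\CC^n,M_n(\M),M_n(\M'))$ is, by \cref{lem:standard_amplification}, the standard form of $M_n(\M)$, and the cone vectors of product states factorize, $\Omega_{\omega\ox\phi}=\Omega_\omega\ox\Omega_\phi$ with $\Omega_\phi\in\CC^n\ox\CC^n$ the $M_n$-cone vector of $\phi$ (cf.\ the proof of \cref{prop:std_form_mbz} and \cref{exa:M_n_std_form}).

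\textbf{Bipartite exact embezzlement.} Given faithful $\phi,\psi\in S(M_n)$ and $u\in\U(M_n(\M))$ with $u(\omega\ox\phi)u^*=\omega\ox\psi$, put $u':=j\up n(u)\in M_n(\M')$. Applying \eqref{eq:purification_formula} in the standard form of $M_n(\M)$ gives $uu'(\Omega_\omega\ox\Omega_\phi)=\Omega_{u(\omega\ox\phi)u^*}=\Omega_\omega\ox\Omega_\psi$, i.e.\ $uu'(\Omega\ox\Omega_\phi)=\Omega\ox\Omega_\psi$. For arbitrary $\Phi,\Psi\in\CC^n\ox\CC^n$ whose marginals $\phi,\psi$ have full support, the polar-decomposition argument used in the proof of \cref{prop:std_form_mbz} supplies unitaries $w_\Phi,w_\Psi$ acting on the last tensor factor (so lying in $1\ox 1\ox M_n\subseteq M_n(\M')$) with $w_\Phi\Phi=\Omega_\phi$ and $w_\Psi\Psi=\Omega_\psi$. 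Since $M_n(\M)$ and $M_n(\M')$ commute, $\Omega\ox\Psi = w_\Psi^*uu'w_\Phi(\Omega\ox\Phi) = u\,(w_\Psi^*u'w_\Phi)(\Omega\ox\Phi)$, which is the asserted exact bipartite embezzlement, with $u\in M_n(\M)$ and $w_\Psi^*u'w_\Phi\in M_n(\M')$.

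\textbf{Necessity of non-separability.} Conversely, let $(\H,\M,\M')$ be \emph{any} standard bipartite system with unit vector $\Omega$ realizing exact bipartite embezzlement, and let $\omega$ be its marginal on $\M$. Feeding in the cone vectors $\Phi=\Omega_\phi$, $\Psi=\Omega_\psi$ of faithful $\phi,\psi\in S(M_n)$ and restricting the resulting vector identity to states on $M_n(\M)$ (conjugation by the $M_n(\M')$-part is trivial on states of $M_n(\M)$) yields $u(\omega\ox\phi)u^*=\omega\ox\psi$ for a unitary $u\in\U(M_n(\M))$. Passing to the support corner $s(\omega)\M s(\omega)$, on which $\omega$ restricts to a \emph{faithful} state inheriting this exact monopartite embezzlement, \cref{cor:exact-mbz-eigenvalues} shows that every $\lambda>0$ is an eigenvalue of the associated modular operator, hence there are uncountably many pairwise orthogonal eigenvectors; by the corollary preceding \cref{cor:exact-mbz}, $s(\omega)\M s(\omega)$ is not separable, and since it acts faithfully on $s(\omega)\H$ this forces $\H$ to be non-separable.

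\textbf{Main obstacle.} The constructive half is essentially an assembly of \cref{cor:error_free}, \cref{lem:standard_amplification}, and \eqref{eq:purification_formula}; the point requiring care is correctly tracking which tensor factor each local unitary acts on and reducing arbitrary $\Phi,\Psi$ with full-support marginals to the cone vectors via unitaries on Bob's ancilla. For the necessity half, the subtlety is the reduction to a faithful state on a corner so that \cref{cor:exact-mbz-eigenvalues} applies and so that non-separability of the corner propagates back to $\H$.
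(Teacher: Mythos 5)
Your proposal is correct and follows essentially the same route as the paper: take the ultrapower state from \cref{cor:error_free} in standard form, set $u'=j\up n(u)$ and use \eqref{eq:purification_formula} to pass to cone vectors, and derive non-separability from the eigenvalue argument of \cref{cor:exact-mbz-eigenvalues}. You are somewhat more explicit than the paper in reducing arbitrary full-support $\Phi,\Psi$ to cone vectors and in the converse (support-corner) direction, but these are elaborations of steps the paper treats as "without loss of generality" or delegates to the preceding corollaries.
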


\begin{proof}
    Let $\omega$ be an exact embezzling state as in \cref{cor:exact-mbz-eigenvalues} and let $(\H,J,\P)$ be its standard form.
    Set $\Omega=\Omega_\omega$.
    Let $\psi,\phi\in S(M_n)$ and pick a unitary $u\in M_n(\M)$ such that $\omega\ox\psi=u(\omega\ox\phi)u^*$.
    Without loss of generality, we may assume that $\Psi$ and $\Phi$ are in the positive cone of $\CC^n\ox\CC^n$.
    The claim then result then follows with $u' = J\up n u J\up n$ where $J\up n$ is the modular conjugation of the standard form of $M_n(\M)$ as in \cref{lem:standard_amplification}.
\end{proof}

\section{Embezzlement and the flow of weights}\label{sec:mbz_fow}

\begin{assumption*}[Separability]
    From here on, we only consider separable von Neumann algebras, i.e., von Neumann algebras admitting a faithful representation on a separable Hilbert space $\H$ (see \cref{subsec:pre:hilbert-spaces}).
\end{assumption*}

The flow of weights assigns to a von Neumann algebra $\M$ a dynamical system $(X,\mu,F)$, where $(X,\mu)$ is a standard Borel space and $F=(F_t)_{t\in\RR}$ is a one-parameter group of non-singular Borel transformations, in a canonical way.\footnote{To be precise, the measure $\mu$ is determined by $\M$ only up to equivalence of measures.}
It owes its name to the first construction using equivalence classes of weights on $\M$.
The flow of weights encodes many properties of the algebra $\M$, e.g., $\M$ is a factor if and only if the flow is ergodic.
Furthermore, there is a canonical map taking normal states $\omega$ on $\M$ to absolutely continuous probability measures $P_\omega$ on $X$, which captures exactly the distance of unitary orbits
\begin{equation}\label{eq:dinstance_Uorbits}
    \inf_{u\in\U(\M)} \norm{u\omega_1 u^*-\omega_2} = \norm{P_{\omega_1}-P_{\omega_2}}, \qquad \omega_1,\omega_2\in S_*(\M),
\end{equation}
where the distance of probability measures on $X$ is measured with the $L^1$-distance of their densities with respect to $\mu$.
Additionally, the flow of weights behaves well if $\M$ is replaced by $M_n(\M)$. The probability measure $P_{\omega\ox\psi}$, where $\psi\in S(M_n)$, is the convolution of $P_\omega$ with the spectrum of the density operator $\rho_\psi\in M_n$ along the flow $F$.
These two properties make the flow of weights a perfect tool to study embezzlement!

Before we explain how the flow of weights can be constructed, we recall how it can be used to classify type $\III$ factors \cite[Def.~XII.1.5]{takesaki2}:
Let $\M$ be a type $\III$ factor with flow of weights $(X,\mu,F)$. Then 
\begin{itemize}
    \item $\M$ is type $\III_0$ if the flow of weights is aperiodic, i.e., no $T>0$ exists such that $F_T=\id_X$,
    \item $\M$ is type $\III_\lambda$, $0<\lambda<1$, if the flow of weights is periodic with period $T=-\log\lambda$,
    \item $\M$ is type $\III_1$ if the flow of weights is trivial, i.e., $X=\{*\}$ and, hence, $F_s=\id_X$ for all $s$.
\end{itemize}

To be precise, the periodicity and aperiodicity of $F$ are only required almost everywhere.
Since every nontrivial ergodic flow is either periodic or aperiodic, this definition covers all type $\III$ factors.
Another equivalent way to obtain this classification is through the \emph{diameter of the state space} \cite{connes1985diameters}. 
For this, one considers the quotient of the state space $S_*(\M)$ modulo \emph{approximate unitary equivalence}
\begin{equation}
    \omega\sim\varphi :\iff \forall_{\eps>0}\ \exists_{u\in\U(\M)} : \norm{\omega- u\varphi u^*}<\eps,\qquad\omega,\varphi\in S_*(\M).
\end{equation}
It turns out that a type $\III$ factor $\M$ has type $\III_\lambda$, $0\le\lambda\le1$ if and only if
\begin{equation}\label{eq:diameter}
    \diam(S_*(\M)\big/\!\sim) = 2\frac{1-\sqrt\lambda}{1+\sqrt\lambda},
\end{equation}
where the diameter is measured with the quotient metric
\begin{equation}
    d([\omega],[\varphi])=\inf_{u\in\U(\M)} \norm{u\omega u^*-\varphi},\qquad \omega,\varphi\in S_*(\M).
\end{equation}
\cref{eq:diameter} was shown in \cite{connes1985diameters} for factors with separable predual and extended to the general case in \cite{haagerup1990equivalence}.

We will often identify the flow of weights $(X,\mu,F)$ of a von Neumann algebra $\M$ with the induced one-parameter group of automorphisms $\theta$ on the abelian von Neumann algebra $L^\oo(X,\mu)$, i.e., we identify
\begin{equation}\label{eq:fow_2sides}
    (X,\mu,F) \equiv (L^\oo(X,\mu),\theta),\qquad \theta_s(f)(x) =f(F_s(x)),\quad s\in\RR.
\end{equation}

We now describe a way to construct the flow of weights for a given von Neumann algebra $\M$.
The construction requires the choice of a normal semifinite faithful weight $\phi$ on $\M$ which is used to construct the crossed product
\begin{equation}
    \N = \M \rtimes_{\sigma^\phi} \RR
\end{equation}
of $\M$ by the modular flow $\sigma^\phi$ generated by $\phi$ (see \cite{takesaki2}).
The crossed product $\N$ is generated by an embedding $\pi:\M\to\N$ and a one-parameter group of unitaries $\lambda(t)$, $t\in\RR$, implementing the modular flow
\begin{equation}
    \lambda(t)\pi(x)\lambda(t)^* = \pi(\sigma_t^\phi(x)), \qquad t\in\RR.
\end{equation}
We denote the dual action $\widehat{\sigma^\omega}$ by $\tilde\theta$ (see \cref{subsec:pre:crossed-products}). I.e., $\tilde\theta$ is the $\RR$-action  given by
\begin{equation}
    \tilde\theta_s(\pi(x))=\pi(x)\qandq\tilde\theta_s(\lambda(t)) = e^{-its} \lambda(t).
\end{equation}
In the following we supress the embedding $\pi$ and identify $\M$ and $\pi(\M)\subset \N$.
To every weight $\varphi$ on $\M$, a dual weight $\tilde\varphi$ on $\N$ is associated by averaging over the dual action:
\begin{equation}
    \tilde\varphi(y) = \varphi\bigg(\int_{-\oo}^\oo \tilde\theta_s(y)\,ds\bigg),\qquad y\in\N^+.
\end{equation}
Clearly, the dual weight is invariant under the dual action, i.e., $\tilde\varphi\circ\tilde\theta_s=\tilde\varphi$ for all $s\in\RR$.
The centralizer $\N^{\tilde\theta}$ of the dual action is exactly $\M$, and the relative commutant $\M'\cap\N$ is exactly the center $Z(\N)$ of $\N$.
Let $h\ge0$ be the positive self-adjoint operator affiliated with $\N$ such that $h^{it}=\lambda(t)$ and set
\begin{equation}
    \tau(y) = \tilde\phi(h^{-1/2}yh^{-1/2}),\qquad y\in\N^+.
\end{equation} 
 On $\N$, the modular flow of the dual weight $\tilde\phi$ is implemented by $\lambda(t)$ as $\sigma^{\tilde \phi}_t(y) = \lambda(t)y\lambda(-t)$ for $y\in\N$. In particular, $\sigma^{\tilde\phi}_s(\lambda(t))=\lambda(t)$. Due to $\tilde\theta_s(\lambda(t))=e^{-its}\lambda(t)$,  we further have $\tilde\theta_s(h)=e^{-s}h$.
As a consequence, $\tau$ is a normal semifinite faithful trace on $\N$ which is scaled by the dual action\footnote{This follows from $(D\tau:D\tilde\phi)_t = \lambda(-t)$: Since $\sigma^{\tilde\phi}_s(\lambda(t)) =\lambda(t)$, $h$ is affilated with the centralizer $\N^{\tilde\phi}$. Therefore, \cite[Thm.~2.11]{takesaki2} implies that $\sigma^\tau_t = h^{-it} \sigma^{\tilde\phi}_t(\placeholder)h^{it} = \lambda(-t)\sigma^{\tilde\phi}_t(\placeholder)\lambda(t)$.  We get $\sigma^\tau_s(x\lambda(t)) = \lambda(-s) \sigma^\phi_s(x) \lambda(s) \lambda(t) = x\lambda(t)$, where we used $\sigma^{\tilde \phi}_s(x)=\sigma^\phi_s(x)$ for $x\in\M$. Thus, the modular flow $\sigma^\tau$ is trivial and $\tau$ is a trace.}
\begin{equation}\label{eq:scaling}
    \tau\circ\tilde\theta_s = e^{-s} \tau,\qquad s\in\RR.
\end{equation}
It can be shown that $(\N,\tau,\tilde\theta)$ does not depend, up to isomorphism, on the choice of normal semifinite faithful weight $\phi$ on $\M$ \cite{takesaki2}.\footnote{Since the triple $(\N,\tau,\tilde\theta)$ is unique only up to isomorphism, the scaling of the trace $\tau$ is not canonical, indeed, $(\N,\tau',\tilde\theta)$, where $\tau'=k\cdot \tau$ for any $k>0$, is an isomorphic triple which results from the construction above if the weight $\phi$ is replaced by $\phi'=k\cdot \phi$.}
From this triple, the flow of weights is constructed as follows:
\begin{center}\it
    The flow of weights of $\M$ is the center $Z(\N)$ equipped with \\the restriction $\theta_s = \tilde\theta_{s\,|Z(\N)}$ of the dual action.
\end{center}
Using the correspondence between abelian von Neumann algebras and measure spaces, one can rephrase this as a dynamical system $(X,\mu,F)$ of non-singular transformations $F=(F_s)$ on a standard Borel space $(X,\mu)$ such that $L^\oo(X,\mu)\cong Z(\N)$.
As mentioned earlier, the flow of weights is ergodic, i.e., the fixed point subalgebra algebra of $Z(\N)$ is trivial, if and only if $\M$ is a factor \cite{takesaki2}.

\subsection{The spectral state}\label{sec:spectral_state}

We will now describe how one associates to each normal state $\omega$ on a von Neumann algebra $\M$ a normal state $\hat\omega$ on $Z(\N)$, or, equivalently, an absolutely continuous probability measure $P_\omega$ on $X$ where $L^\oo(X,\mu)=Z(\N)$.
This association was discovered by Haagerup and St\o rmer in \cite{haagerup1990equivalence}, and we will refer to $\hat\omega$ (or $P_\omega$) as the spectral state of $\omega$.

We begin by noting that we can write $h$ (the affiliated positive operator such that $h^{it}=\lambda(t)$) as the Radon-Nikodym derivative $d\tilde\phi/d\tau$ using the trace $\tau$, i.e., $\tilde\phi(y) = \tau(h y)$.
Generalizing this, we associate to a normal, semifinite weight $\varphi$ on $\M$, the positive, self-adjoint operator 
\begin{equation}
    h_\varphi = \frac{d\tilde\varphi}{d\tau}
\end{equation}
affiliated with $\N$ such that $\tilde{\varphi} = \tau(h_{\varphi}\placeholder)$. In particular, this can be done if $\varphi$ is a normal state. One can show that $h_{u\varphi u^*} = uh_\varphi u^*$ for all $u\in\U(\M)$ and \eqref{eq:scaling} translates to
\begin{equation}
    \tilde\theta_s(h_\varphi) = e^{-s} h_\varphi,\qquad s\in\RR.
\end{equation}

\begin{lemma}\label{thm:hat_normalization}
    Let $\omega$ be a positive normal functional on $\M$. Consider the spectral projection $e_\omega=\chi_1(h_\omega)\in\N$, where $\chi_1$ is the indicator function of $(1,\oo)$. Then 
    \begin{equation}
        \tau(e_\omega x) = \omega(x),\qquad x\in\M.
    \end{equation}
\end{lemma}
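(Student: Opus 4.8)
The plan is to push everything through the dual weight $\tilde\omega$ and the canonical operator-valued weight $T(y)=\int_{\RR}\tilde\theta_s(y)\,ds$ that defines it (so $\tilde\omega=\omega\circ T$), and then to evaluate $T$ on the one element that matters by a short Fourier computation which turns the answer into an average of $x$ over the modular flow $\sigma^\omega$. By passing to the support projection of $\omega$ we may assume $\omega$ is faithful: indeed $e_\omega\le s(h_\omega)=s(\tilde\omega)\le\pi(s(\omega))$ in $\N$, so $\tau(e_\omega x)$ only depends on $s(\omega)xs(\omega)$.

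First I would introduce the bounded Borel function $g(\mu)=\mu^{-1/2}\chi_{(1,\infty)}(\mu)$, so that $g(h_\omega)=h_\omega^{-1/2}e_\omega$ is bounded, $g(h_\omega)h_\omega^{1/2}=h_\omega^{1/2}g(h_\omega)=e_\omega$, and $h_\omega\,g(h_\omega)^2=e_\omega$. Using $\tilde\omega(y)=\tau(h_\omega^{1/2}yh_\omega^{1/2})$ (the meaning of $h_\omega=d\tilde\omega/d\tau$) together with the trace property of $\tau$, for $x\in\M^+$ one gets $\tau(e_\omega x)=\tau(e_\omega x e_\omega)=\tilde\omega\big(g(h_\omega)\,x\,g(h_\omega)\big)=\omega\big(T(g(h_\omega)\,x\,g(h_\omega))\big)$; the case $x=1$ already reads $\tau(e_\omega)=\tilde\omega(g(h_\omega)^2)$, which the computation below shows is finite, making all of this legitimate.

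Then I would compute $T(g(h_\omega)xg(h_\omega))$. Since $\tilde\theta_s(h_\omega)=e^{-s}h_\omega$ and $\tilde\theta_s$ fixes $\M$ pointwise, $\tilde\theta_s\big(g(h_\omega)xg(h_\omega)\big)=g(e^{-s}h_\omega)\,x\,g(e^{-s}h_\omega)=e^{s}\,h_\omega^{-1/2}\chi_{(e^s,\infty)}(h_\omega)\,x\,h_\omega^{-1/2}\chi_{(e^s,\infty)}(h_\omega)$; integrating over $s$, substituting $u=e^s$, and using $\int_0^\infty\chi_{(u,\infty)}(\mu_1)\chi_{(u,\infty)}(\mu_2)\,du=\min(\mu_1,\mu_2)$ yields
\[
  T\!\big(g(h_\omega)\,x\,g(h_\omega)\big)=\int\!\!\int\frac{\min(\mu_1,\mu_2)}{\sqrt{\mu_1\mu_2}}\;de(\mu_1)\,x\,de(\mu_2),
\]
where $e$ is the spectral measure of $h_\omega$ (for $x=1$ this is $\int de(\mu)=1$, confirming finiteness). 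Now the two key observations: with $\mu_i=e^{a_i}$ one has $\frac{\min(\mu_1,\mu_2)}{\sqrt{\mu_1\mu_2}}=e^{-|a_1-a_2|/2}$, and $e^{-|b|/2}=\int_{\RR}F(t)e^{itb}\,dt$ for $F(t)=\frac{2}{\pi(1+4t^2)}\ge0$ with $\int_{\RR}F(t)\,dt=1$. Substituting this and using the functional calculus of $\log h_\omega$ gives $T\big(g(h_\omega)xg(h_\omega)\big)=\int_{\RR}F(t)\,h_\omega^{it}\,x\,h_\omega^{-it}\,dt$. Since $h_\omega^{it}$ implements the modular flow $\sigma^{\tilde\omega}$ of $\tilde\omega$ relative to the trace $\tau$, and the modular flow of a dual weight restricts on $\M$ to that of the original weight, $h_\omega^{it}xh_\omega^{-it}=\sigma^\omega_t(x)$ for $x\in\M$. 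Applying $\omega$ and invoking its modular invariance $\omega\circ\sigma^\omega_t=\omega$, we conclude $\tau(e_\omega x)=\omega\big(\int_{\RR}F(t)\,\sigma^\omega_t(x)\,dt\big)=\big(\int_{\RR}F(t)\,dt\big)\,\omega(x)=\omega(x)$; the general $x\in\M$ follows by linearity.

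The conceptual core is the elementary Fourier identity for $e^{-|b|/2}$ (the Cauchy characteristic function); the real obstacle is the technical bookkeeping: evaluating $\tilde\omega=\omega\circ T$ on a concrete element, interchanging $\omega$ (via normality) and $\tau$ with the operator-valued integrals, and manipulating spectral integrals of the unbounded $h_\omega$. All of this stays under control by positivity — the integrands are positive operator-valued, and the domination $g(h_\omega)xg(h_\omega)\le\|x\|\,g(h_\omega)^2$ with $T(g(h_\omega)^2)=1\in\M^+$ reduces every limit interchange to monotone/dominated convergence and also shows $T(g(h_\omega)xg(h_\omega))$ is a genuine bounded element of $\M^+$.
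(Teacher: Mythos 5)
Your proof is correct, and its skeleton coincides with the paper's: reduce to $\omega$ faithful, write $\tau(e_\omega x)$ through the dual weight $\tilde\omega=\omega\circ T$ with $T(y)=\int\tilde\theta_s(y)\,ds$, and exploit the scaling $\tilde\theta_s(h_\omega)=e^{-s}h_\omega$. The difference lies in how you factor $e_\omega$ out of the trace. You use the \emph{symmetric} factorization $e_\omega=h_\omega^{1/2}g(h_\omega)$ with $g(\mu)=\mu^{-1/2}\chi_1(\mu)$, which leaves $x$ sandwiched between two functions of $h_\omega$; you then need the double spectral integral, the Fourier representation of $e^{-|b|/2}$ (the Cauchy density), the identification $h_\omega^{it}xh_\omega^{-it}=\sigma^\omega_t(x)$, and finally the modular invariance $\omega\circ\sigma^\omega_t=\omega$ to collapse the average. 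The paper instead uses the \emph{asymmetric} choice $g(t)=t^{-1}\chi_1(t)$, so that $\tau(xe_\omega)=\tilde\omega(xg(h_\omega))=\omega\bigl(x\int_{-\infty}^{\infty}g(e^{-s}h_\omega)\,ds\bigr)$ and the entire computation reduces to the scalar identity $\int_{-\infty}^{\infty}g(e^{-s}t)\,ds=\chi_{(0,\infty)}(t)$ — no Fourier analysis, no appeal to the modular flow of $\omega$. What your route buys is a pleasant structural byproduct (the explicit averaging of $x$ over $\sigma^\omega$ against a probability density, which makes the invariance of the construction under the modular flow transparent) at the cost of extra machinery and more delicate interchanges of spectral integrals; the paper's route buys brevity, since traciality moves the single factor $g(h_\omega)$ past $x$ and the remaining integral is purely scalar functional calculus in $h_\omega$. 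Both are legitimate; if you keep your version, the one step deserving a sentence of justification is the passage $\sigma^{\tilde\omega}_t|_{\M}=\sigma^{\omega}_t$ together with the dominated-convergence argument that lets you pull $\omega$ and $\tau$ through the operator-valued integrals — which you do flag correctly via the bound $g(h_\omega)xg(h_\omega)\le\norm{x}g(h_\omega)^2$ and $T(g(h_\omega)^2)=s(h_\omega)=1$.
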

\begin{proof}[Proof sketch]
    For simplicity, we assume $\omega$ to be faithful. The general case can be proven similarly (see \cite[Lem.~3.1]{haagerup1990equivalence}).
    Recall that $\tilde\omega\circ \tilde\theta_s=\tilde\omega$.
    Setting $g(t)=t^{-1}\chi_1(t)$ for $t\in[0,\infty)$, we have
    \begin{align*}
        \tau(xe_\omega) = \tilde\omega(xg(h_\omega))
        =\omega\bigg(x\int_{-\oo}^\oo \theta_s(g(h_\omega))ds\,\bigg)
        = \omega\bigg(x\int_{-\oo}^\oo g(e^{-s}h_\omega) ds\bigg)
        = \omega(xs(h_\omega) )=\omega(x),
    \end{align*}
    where we used that $\int_{-\oo}^\oo g(e^{-s}t)ds = \chi_{(0,\infty)}(t)$ and that $s(h_\omega)=s(\omega) =1$ if $\omega$ is faithful.
\end{proof}

\begin{definition}[Spectral states, cp.~{\cite{haagerup1990equivalence}}]
    For a normal state $\omega\in S_*(\M)$ on a von Neumann algebra $\M$, its {\bf spectral state} is the normal state $\hat\omega$ on $Z(\N)$ given by
    \begin{equation}\label{eq:hat}
        \hat\omega(z)  = \tau(e_\omega z),\qquad  z\in Z(\N).
    \end{equation}
\end{definition}

For technical reasons, we sometimes need the spectral functional $\hat\varphi$ of a non-normalized positive linear functional $\varphi\in \M_*^+$, also by \eqref{eq:hat}, which satisfies $\norm{\hat\varphi}=\norm\varphi$.
We make the cautionary remark that the mapping $\M_*^+\ni\omega\mapsto\hat\omega\in Z(\M)_*^+$ is not affine and, in fact, not even homogeneous.
Instead, it follows from $h_{\lambda\omega} = \lambda h_\omega = \tilde{\theta}_{-\log\lambda}(h_\omega)$ that 
\begin{equation}\label{eq:hat_for_nonstates}
    \hat{\lambda\omega} = \lambda \hat\omega \circ\theta_{\log\lambda}, \qquad \lambda>0.
\end{equation}
More generally, the left-hand side of \eqref{eq:hat} defines a normal state on the full crossed product $\N$ from which one obtains $\hat\omega$ by restricting to the center. Interestingly, the restriction to $\M\subset\N$ recovers $\omega$.

In a concrete realization $(X,\mu,F)$ of the flow of weights as a standard Borel space with a flow $F$,\footnote{To be precise, a concrete realization of the flow of weights means a triple $(X,\mu,F)$ of a standard measure space $(X,\mu)$ and a one-parameter group.}
we will denote the $\mu$-absolutely continuous probability measures implementing the states $\hat\omega$ by $P_\omega$, i.e., 
\begin{equation}\label{eq:P_omega}
    \hat\omega(z) = \int_X z(x)\,dP_\omega(x),\qquad z\in Z(\N)=L^\oo(X,\mu).
\end{equation}
Despite the map $\omega\mapsto \hat\omega$ not being affine, it is extremely useful as the following result, which is the main theorem of \cite{haagerup1990equivalence}, shows:

\begin{theorem}[{\cite{haagerup1990equivalence}}]\label{thm:distance_of_unitary_orbits}
    Let $\omega_1,\omega_2$ be normal states on a von Neumann algebra $\M$, then
    \begin{equation}
        \inf_{u\in\U(\M)} \norm{u\omega_1 u^*-\omega_2} = \norm{\hat\omega_1-\hat\omega_2} = \norm{P_{\omega_1}-P_{\omega_2}}.
    \end{equation}
In particular, $\hat\omega_1=\hat\omega_2$ if and only if they are approximately unitarily equivalent.
\end{theorem}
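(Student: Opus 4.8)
The plan is to follow the original argument of Haagerup and St\o rmer \cite{haagerup1990equivalence}; the second equality $\norm{\hat\omega_1-\hat\omega_2}=\norm{P_{\omega_1}-P_{\omega_2}}$ is then just the translation \eqref{eq:P_omega} to a concrete measure-theoretic model $(X,\mu,F)$ of the flow of weights, so the work lies in the first equality, which splits into an easy lower bound and a harder upper bound. For the lower bound $\inf_{u\in\U(\M)}\norm{u\omega_1u^*-\omega_2}\ge\norm{\hat\omega_1-\hat\omega_2}$, I would first record that $\omega\mapsto\hat\omega$ is invariant under inner conjugation: since $h_{u\omega u^*}=uh_\omega u^*$ we have $e_{u\omega u^*}=ue_\omega u^*$, and as $u\in\M$ commutes with $Z(\N)=\M'\cap\N$ it follows that $\widehat{u\omega u^*}(z)=\tau(ue_\omega u^*z)=\tau(e_\omega z)=\hat\omega(z)$ for $z\in Z(\N)$. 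It then suffices to prove the contraction estimate $\norm{\hat\varphi-\hat\psi}\le\norm{\varphi-\psi}$ for normal states $\varphi,\psi$, which reduces to bounding $\tau((e_\varphi-e_\psi)z)$ for central $z$ with $\norm z\le1$; this is done by comparing the spectral projections $\chi_{(1,\oo)}(h_\varphi)$ and $\chi_{(1,\oo)}(h_\psi)$ of the additive assignment $\varphi\mapsto h_\varphi$, and is exactly \cite[Lem.~4.3]{haagerup1990equivalence}, which already underlies \cref{prop:spectral_distance}. Combining invariance with the contraction estimate and taking the infimum over $u$ gives the lower bound.

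The upper bound is the heart of the matter: given $\eps>0$ one must exhibit $u\in\U(\M)$ with $\norm{u\omega_1u^*-\omega_2}<\norm{\hat\omega_1-\hat\omega_2}+\eps$. The first reduction is to faithful states, using that $\hat\omega_1,\hat\omega_2$ control the Murray--von Neumann classes of the support projections $s(\omega_i)$ (\cite[Lem.~2.3]{haagerup1990equivalence}, already invoked for \cref{eq:approx_inv_mod_spec}); conjugating and passing to supporting corners reduces to the case where both states are faithful. The core step is carried out inside the \emph{semifinite} crossed product $\N=\M\rtimes_{\sigma^\phi}\RR$: there $\omega_i$ is encoded by the finite projection $e_{\omega_i}=\chi_{(1,\oo)}(h_{\omega_i})\in\N$ with $\tau(e_{\omega_i}\placeholder)=\omega_i$ on $\M$ by \cref{thm:hat_normalization}, and $\hat\omega_i$ is precisely the central distribution $z\mapsto\tau(e_{\omega_i}z)$. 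Since $\N$ is semifinite, two finite projections with the same central-valued trace are Murray--von Neumann equivalent; I would organize the argument as (i) decompose over $Z(\N)$; (ii) on a central projection where the densities $P_{\omega_1},P_{\omega_2}$ agree up to $\eps$, construct an \emph{exact} intertwiner \emph{lying in $\M$} carrying $e_{\omega_1}$ to $e_{\omega_2}$, using the comparison theory of $\N$ together with the scaling $\tilde\theta_s(h_\varphi)=e^{-s}h_\varphi$ of the trace by the dual action; (iii) bound the contribution of the non-overlapping part of $P_{\omega_1}-P_{\omega_2}$ by its total variation via a Hahn decomposition. Assembling the local contributions yields the required $u$.

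The step I expect to be the main obstacle is (ii): transferring the Murray--von Neumann equivalence of the spectral projections $e_{\omega_1},e_{\omega_2}$ inside $\N$ to an \emph{approximate} conjugacy realized by unitaries drawn from the \emph{subalgebra} $\M\subset\N$, not from all of $\N$. The inner invariance, the contraction estimate, and the support-projection reductions are all soft; it is this transfer that forces the full use of the crossed-product structure --- the triviality of $\M'\cap\N$ modulo the center together with the ergodic scaling flow $\tilde\theta$ --- and it is exactly the content of \cite[Thm.~4.4]{haagerup1990equivalence}. As \cref{thm:distance_of_unitary_orbits} is used here only as a black-box tool, in the write-up I would present the skeleton above and cite \cite{haagerup1990equivalence} for the details of step (ii).
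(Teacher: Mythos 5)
The paper does not prove this theorem itself; it is imported verbatim as the main theorem of \cite{haagerup1990equivalence}, so your proposal — which correctly reproduces the structure of that cited proof (unitary invariance of $\omega\mapsto\hat\omega$ plus the contraction estimate of \cite[Lem.~4.3]{haagerup1990equivalence} for the lower bound, and the comparison theory of spectral projections in the semifinite crossed product for the upper bound) while deferring the genuinely hard transfer step to \cite[Thm.~4.4]{haagerup1990equivalence} — is consistent with how the paper treats the statement. The parts you spell out (the computation $\widehat{u\omega u^*}=\hat\omega$ via traciality of $\tau$ and centrality of $z$, and the reduction of the second equality to the identification \eqref{eq:P_omega}) are correct.
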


The study of the map $\omega\mapsto\hat\omega$ and the flow of weights in general can often be reduced to the case where $\M$ is a factor.
The reason is the following observation from \cite[Sec.~8]{haagerup1990equivalence} (used in the proof of \cref{thm:distance_of_unitary_orbits}):
Let $\M = \int^\oplus_Y \M_y\, d\nu(y)$ be a direct integral representation of von Neumann algebras $\M(y)$ over a measure space $(Y,\nu)$.
Typically, but not always, we will assume that ($\nu$-almost) each $\M_y$ is a factor which implies that $Z(\M) = L^\oo(Y,\nu)$.
The direct integral implies that the triple $(\N,\tilde\theta,\tau)$ associated to $\M$ can be obtained from the triples $(\N_y,\tilde\theta_y,\tau_y)$ of $\M_y$, $y\in Y$, via direct integration
\begin{equation}
    (\N,\tilde\theta,\tau)
    = \int^\oplus_Y (\N_y,\tilde\theta_y,\tau_y)\,d\nu(y) 
\end{equation}
where the direct integral is understood component-wise.
This can be seen easily by constructing the crossed product $\N=\M\rtimes_{\sigma^\phi}\RR$ with a weight $\phi = \int^\oplus_Y\phi_y\,d\nu(y)$ and by using $\phi_y$ to construct $\N_y=\M_y\rtimes_{\sigma^{\phi_y}}\RR$.
With this, it is clear that the flow of weights $(Z(\N),\theta)$ decomposes as a direct integral of $Z((\N_y),\theta_y)$.
Any state $\omega\in S_*(\M)$ decomposes into a direct integral $\omega=\int_Y^\oplus \omega_y\,d\nu(y)$ of positive linear functionals $\omega_y\in \M_*^+$.
The spectral state on the flow of weights is then given by
\begin{equation}
    \hat\omega = \int_Y^\oplus \hat\omega_y d\nu(y).
\end{equation}

To see this, observe that $h_\omega = d\tilde\omega/d\tau = \int^\oplus_Y (d\tilde\omega_y/d\tau_y)\,d\nu(y) = \int_Y^\oplus h_{\omega_y}\,d\nu(y)$ gives $e_\omega = \chi_1(h_\omega)=\int_Y^\oplus \chi_1(h_{\omega_y})d\nu(y)=\int^\oplus_Y e_{\omega_y}d\nu(y)$ from which \eqref{eq:hat_direct_integral} follows directly.
We summarize these findings in the following equation
\begin{equation}\label{eq:FOW_direct_integral}
    (Z(\N),\theta,\,\hat\ \,) = \int_Y^\oplus (Z(\N_y),\theta_y,\,\hat\ \,)\,d\nu(y).
\end{equation}

We remark that another natural decomposition of a state $\omega$ on a direct integral as above is $\omega = \int_Y^\oplus \varphi_y\,p(y)d\nu(y)$ where $p(y)=\omega_y(1)$ and $\varphi_y$ is a measurable state-valued map such that $p(y)\varphi_y = \omega_y$.
It follows that the spectral state on the flow of weights is
\begin{equation}\label{eq:hat_direct_integral}
    \hat\omega= \int_Y^\oplus \hat\varphi_y \circ \theta_{\log p(y)} \,p(y)d\nu(y).
\end{equation}

We continue by discussing the flow of weights and, specifically, the spectral state construction for semifinite von Neumann algebras and Type $\III_{\lambda}$ factors ($0<\lambda<1$).

\subsubsection{Semifinite von Neumann algebras}\label{sec:FOW_semifinite}
Let $(\M,\tr)$ be a semifinite von Neumann algebra. 
Since the modular flow of a trace is trivial, we can identify the triple $(\N,\tau,\tilde\theta)$ as
\begin{equation}\label{eq:triple_semifinite}
    (\N,\tau,\tilde\theta) = \Big(\M\ox L^\oo(\RR),\,\tr\ox \int_{-\oo}^\oo\placeholder\,e^{-\gamma}d\gamma,\,\id \ox\, \theta\Big),\qquad \theta_s \Psi(\gamma)=\Psi(\gamma-s).
\end{equation}
Therefore, the flow of weights of $\M$ is simply $Z(\M)\ox L^\oo(\RR)$ with $\theta$ acting as translation on $\RR$ (cf.\ \cref{ex:groupalgebra}).
In this representation, the unitaries $\lambda(t)$ act by mulitplication with $e^{it\gamma}$. Since $\theta_s$ acts according to \eqref{eq:triple_semifinite}, this is consistent with $\tilde\theta_s(\lambda(t)) = e^{-its}\lambda(t)$ and results from a Fourier transformation.

The dual weight $\tilde\varphi$ of a weight $\varphi$ on $\M$ is
\begin{equation}
    \widetilde\varphi = \varphi\bigg( \int_{-\oo}^\oo \tilde\theta_s(\placeholder)\,ds\bigg) = \varphi \ox \int_{-\oo}^\oo \placeholder\,ds.
\end{equation}
Since the operator $h$ such that $h^{it}=\lambda(t)$ is the multiplication operator $h(\gamma) = e^\gamma$, the trace $\tau$ on $\N$ is indeed given by 
\begin{equation}
    \tau = \widetilde\tr(h^{-1/2}(\placeholder)h^{-1/2}) = \tr \ox \int_{-\oo}^\oo \placeholder e^{-\gamma}d\gamma.
\end{equation}
We see that the dual action indeed scales the trace: $\tau \circ \tilde\theta_{s} = \tr\ox\int \placeholder\,e^{-(\gamma+s)}d\gamma = e^{-s} \tau$.
For a state $\omega\in S_*(\M)$ denote the density operator by $\rho_\omega = (d\omega/d\!\tr)$, it follows 
\begin{equation}
    h_\omega = \frac{d\tilde\omega}{d\tau} = \rho_\omega \ox \exp \equiv \int_\RR^\oplus \rho_\omega\,e^{\gamma}d\gamma.
\end{equation}
The benefit of the direct integral representation is that it lets us compute 
\begin{equation}
    \chi_t(h_\omega) = \int_\RR^\oplus \chi_t(e^{\gamma} \rho_\omega) d\gamma = \int_\RR^\oplus \chi_{te^{-\gamma}}(\rho_\omega)\,d\gamma, \qquad t>0,
\end{equation} 
easily.
We now specialize to the case where $\M$ is a factor.
In this case, we have $Z(\N)=Z(\M)\ox L^\oo(\RR)=L^\oo(\RR)$.
The state $\hat\omega$ on $L^\oo(\RR)$ is thus given by
\begin{equation}\label{eq:hat_omega_semifinite}
    \hat\omega(g) = \tau(\chi_1(h_\omega)g) 
    =\int_{-\oo}^\oo \tr(\chi_{e^{-\gamma}}(\rho_\omega)) g(\gamma) e^{-\gamma}d\gamma
    =\int_0^\oo D_\omega(t) g(-\log t) dt,
\end{equation}
where $D_\omega(t) = \tr(\chi_t(\rho_\omega))$ is the distribution function of $\omega$ (see \cref{sec:preliminaries}).
We further identify $Z(\N)=L^\oo(\RR)$ with $L^\oo(\RR^+)$ via the logarithm.
This gives us the following geometric realization $(X,\mu,F)$ of the flow of weights
\begin{equation}
    X = (0,\oo),\qquad \mu = dt, \qquad F_s(t) = e^{s}t,
\end{equation}
and, by \eqref{eq:hat_omega_semifinite}, $\hat\omega$ is implemented by the probability measure
\begin{equation}
    dP_\omega(t) = D_\omega(t)\,dt
\end{equation}
Using direct integration, as in \eqref{eq:FOW_direct_integral}, we can lift this result to general von Neumann algebras:

\begin{proposition}[Flow of weights for semifinite von Neumann algebras]\label{prop:FOW_semifinite}
    Let $\M$ be a semifinite von Neumann algebra with separable predual and let $\M=\int_Y^\oplus\M_y\,d\nu(y)$ be the direct integral decomposition over the center $Z(\M) = L^\oo(Y,\nu)$.
    Then the flow of weights $(X,\mu,F)$ is $X=Y\times(0,\oo)$, $\mu = \nu\times dt$ and $F_s(y,t) = (y,e^{-s}t)$. The spectral state $\hat\omega$ of a state $\omega=\int^\oplus_Y\omega_yd\nu(y)$ is implemented by the probability measure
    \begin{equation}\label{eq:FOW_semifinite}
        dP_\omega(y,t) = D_{\omega_y}(t)\,d\nu(y)\,dt.
    \end{equation}
\end{proposition}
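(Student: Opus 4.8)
The plan is to deduce the statement from the factor case treated in \cref{sec:FOW_semifinite} together with the direct-integral compatibility \eqref{eq:FOW_direct_integral}. The factor case is essentially done already: if $\M$ is a semifinite \emph{factor}, taking $\phi=\tr$ in the construction of the flow of weights makes $\sigma^\phi=\id$, so one obtains the triple \eqref{eq:triple_semifinite}; then $Z(\N)=Z(\M)\ox L^\oo(\RR)=L^\oo(\RR)$ with $\tilde\theta$ acting by translation, and identifying $L^\oo(\RR)\cong L^\oo((0,\oo))$ via the logarithm turns $\tilde\theta$ into the scaling flow on $(0,\oo)$, while \eqref{eq:hat_omega_semifinite} shows that $\hat\omega$ is implemented by $D_\omega(t)\,dt$. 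The first thing I would record is that the derivation of \eqref{eq:hat_omega_semifinite} used only that $\omega$ is a normal positive functional, not that it is normalised or faithful: for every $\varphi\in\M_*^+$ one has $\hat\varphi(g)=\int_0^\oo D_\varphi(t)\,g(-\log t)\,dt$, so $\hat\varphi$ is implemented by the positive measure $D_\varphi(t)\,dt$, of total mass $\int_0^\oo D_\varphi(t)\,dt=\tr\rho_\varphi=\varphi(1)$.

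Next I would pass to the general case via the central decomposition. Let $\M$ have separable predual and write $\M=\int_Y^\oplus\M_y\,d\nu(y)$ for its central decomposition, so that $(Y,\nu)$ is standard, $Z(\M)=L^\oo(Y,\nu)$, and $\M_y$ is a factor for $\nu$-a.e.\ $y$; since $\M$ is semifinite and the type decomposition is central, each $\M_y$ is a semifinite factor a.e. Choose a normal semifinite faithful trace $\tr$ on $\M$ and decompose it as $\tr=\int_Y^\oplus\tr_y\,d\nu(y)$ with each $\tr_y$ a normal semifinite faithful trace on $\M_y$. Using $\phi=\tr$ to build $\N=\M\rtimes_{\sigma^\tr}\RR=\M\ox L^\oo(\RR)$, as in the discussion preceding \eqref{eq:FOW_direct_integral}, the triple $(\N,\tau,\tilde\theta)$ is the direct integral of the fibre triples $(\N_y,\tau_y,\tilde\theta_y)=(\M_y\ox L^\oo(\RR),\,\dots)$, and the factor-case analysis of the previous paragraph applies to each fibre: $Z(\N_y)\cong L^\oo((0,\oo))$, $\tilde\theta_y$ is the scaling flow, and for $\omega_y\in(\M_y)_*^+$ the functional $\hat\omega_y$ is implemented by $D_{\omega_y}(t)\,dt$.

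Finally I would assemble the fibres using \eqref{eq:FOW_direct_integral}. Writing $\omega=\int_Y^\oplus\omega_y\,d\nu(y)$ (with $\int_Y\omega_y(1)\,d\nu(y)=\omega(1)=1$), the flow of weights $(Z(\N),\theta)$ of $\M$ is the direct integral over $Y$ of the fibre flows, hence is realised on $X=Y\times(0,\oo)$ with $\mu=\nu\times dt$ and $F$ the scaling flow in the second variable only (i.e.\ $F_s(y,t)=(y,e^{-s}t)$ in the normalisation of the statement), while $\hat\omega=\int_Y^\oplus\hat\omega_y\,d\nu(y)$ is implemented by $dP_\omega(y,t)=D_{\omega_y}(t)\,d\nu(y)\,dt$. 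This is a probability measure because $\int_{Y\times(0,\oo)}D_{\omega_y}(t)\,dt\,d\nu(y)=\int_Y\omega_y(1)\,d\nu(y)=1$.

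The genuine content, beyond reading off the factor case, is the measure-theoretic bookkeeping behind \eqref{eq:FOW_direct_integral} in this setting: one must know that $\tr$, the crossed product $\N$, the dual weights, the affiliated operators $h_{\omega_y}=d\tilde\omega_y/d\tau_y$ together with their spectral projections $\chi_1(h_{\omega_y})$, and therefore the maps $y\mapsto D_{\omega_y}(t)$, depend measurably on $y$, so that the direct integrals commute with all these constructions. I expect the main (standard but slightly technical) obstacle to be exactly the input that a semifinite von Neumann algebra with separable predual carries a normal semifinite faithful trace whose central decomposition consists of normal semifinite faithful traces on the semifinite factor fibres; granting this and \eqref{eq:FOW_direct_integral}, the proposition is just the factor case of \cref{sec:FOW_semifinite} applied fibrewise.
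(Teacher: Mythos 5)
Your proposal is correct and follows essentially the same route as the paper: the explicit factor-case computation of \cref{sec:FOW_semifinite} (giving $Z(\N)\cong L^\oo(0,\oo)$ with the scaling flow and $dP_\omega(t)=D_\omega(t)\,dt$), lifted to general semifinite algebras via the direct-integral compatibility \eqref{eq:FOW_direct_integral}. Your additional remarks — that \eqref{eq:hat_omega_semifinite} applies to subnormalized positive functionals $\omega_y$, and that the measurable dependence on $y$ of the fibre constructions is the only real bookkeeping — match the paper's own observations following the proposition.
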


We remark that the cumulative distribution function $D_{\omega_y}(t) = \tr_y \chi_t(\rho_{\omega_y})$ is not the distribution function of a state but of a subnormalized positive linear function $\omega_y\in (\M_y)_*^+$. 
If $\M = L^\oo(Y,\nu)$ is abelian, we have $\omega_y=p(y)$ is and $D_{\omega_y}(t) = \chi_t(p(y))$, so that
\begin{equation}
    dP_\omega(y,t) = \chi_{[0,p(y))}(t) \,d\nu(y) dt.
\end{equation}
Thus, the flow of weights of $L^\oo(Y,\nu)$ is the space $Y\times (0,\oo)$ with the flow $F_s(y,t)=(y,e^{-s}t)$ and the spectral state of a probability measure on $Y$ is the uniform distribution on the area under the graph of its probability density function.

\subsubsection{Type III{$_{\lambda}$} factors}\label{sec:FOW_type_III_lambda}

Let $\M$ be a factor of type $\III_\lambda$, $0<\lambda<1$. We will follow the construction of the flow of weights for type $\III_\lambda$ factors in \cite[Sec.~5]{haagerup1990equivalence}.
For this, we need to pick a generalized trace\footnote{Generalized traces were introduced on type $\III_\lambda$ factors by Connes in \cite[Sec.~4]{connes1973classIII}.}, i.e., a normal strictly semifinite faithful weight $\phi$ with $\phi(1)=\oo$, satisfying one, hence, all, of the following equivalent properties (see, \cite[Thm.~4.2.6]{connes1973classIII}):
\begin{enumerate}[(a)]
    \item $\sigma_{t_0}^\phi=\id$, where $t_0 = \frac{2\pi}{-\log\lambda}$,
    \item $\Sp(\Delta_\phi) = \{0,\lambda^n : n\in\ZZ\}$,
    \item the centralizer $\M^\phi=\{ x \in \M : \sigma^\phi_t(x)=x\ \forall t\ge0\}\subset\M$ is a (type $\II_\oo$) factor.
\end{enumerate}
We briefly explain how a generalized trace $\phi$ can be constructed from the so-called discrete decomposition of a type $\III_\lambda$ factor \cite[Thm.~VII.2.1]{takesaki2}. The discrete decomposition allows one to write the factor $\M$ as the crossed product
\begin{equation}\label{eq:discrete_decomposition}
    \M = \P \rtimes_\alpha \ZZ
\end{equation}
where $\P$ is a type $\II_\oo$ factor and $\alpha$ is a $\ZZ$-action on $\P$ which scales the trace $\tr$ on $\P$ by $\tr\circ\alpha_n = \lambda^n\,\tr$, $n\in\ZZ$.
The generalized trace $\phi$ is now obtained as the normal, semifinite, faithful weight dual to the trace $\tr$ on $\P$, i.e., $\phi=\tilde\tr$.
This is indeed a generalized trace because $\sigma^\phi_{t}$ is $t_0$-periodic by construction (we identify the dual group $\hat\ZZ$ with $\RR/t_0\ZZ$).

With $\phi$ being a generalized trace, we have a periodic modular flow so that we may consider the crossed product
\begin{equation}
    \N_0=\M\rtimes_{\sigma^\phi} (\RR/t_0\ZZ) 
\end{equation}
in addition to the full crossed product $\N=\M\rtimes_{\sigma^\phi}\RR$.
We realize $\N_0$ on $L^2(\RR/t_0\ZZ,\H)\equiv\int_{[0,t_0)}^\oplus\H\,dt$ where it is the von Neumann algebra generated by the operators $\pi_0(x)$, $x\in\M$ and $\lambda_0(t)$, $t\in\RR/t_0\ZZ$, given by
\begin{equation}
     \pi_0(x)\Psi(s) = \sigma^\phi_{-s}(x)\Psi(s)\quad\text{and}\quad \lambda_0(t)\Psi(s) = \xi(s-t),\qquad \Psi\in L^2(\RR/t_0\ZZ,\H).
\end{equation}
Let $h_0$ be the self-adjoint positive operator affiliated with $\N_0$ such that $h_0^{it}=\lambda_0(t)$.
Denote by $\theta_0\in\Aut(\N_0)$ the automorphism generating the $\ZZ$-action dual to $\sigma^\phi$, which is determined by
\begin{equation}
    \begin{aligned}
        \theta_0(\pi_0(x)\lambda_0(t)) = e^{-i\gamma_0t} \pi_0(x) \lambda_0(t), \qquad \gamma_0:=-\log\lambda \equiv \frac{2\pi}{t_0}.
    \end{aligned}
\end{equation}
In particular, it follows that $\theta_0(h_0)=\lambda h_0$.
The weight $\bar\varphi$ on $\N_0$ dual to a weight $\varphi$ on $\M$ is
\begin{equation}
    \bar\varphi(a) = \varphi\bigg(\sum_{n\in\ZZ}\theta_0^n(a)\bigg),\qquad a\in \N_0^+.
\end{equation}
We get a faithful normal trace $\tau_0$ on $\N_0$ from 
\begin{equation}\label{eqref:trace-discrete}
    \tau_0(a) = \bar\phi(h_0^{-1/2}ah_0^{-1/2}),\qquad a\in\N_0^+,
\end{equation}
or in other words $d\bar\phi/d\tau_{0}=h_{0}$.
Crucially, this trace is scaled by the dual action: $\tau_0\circ\theta_0=\lambda\tau_0$.

It is shown in \cite[Prop.~5.6]{haagerup1990equivalence} that the full crossed product $\N=\M\rtimes_{\sigma^\phi}\RR$ can be identified with $\N_0\ox L^\oo[0,\gamma_0)$ via
\begin{equation}
    \pi(x) = \pi_0(x)\ox 1,\quad\text{and}\quad \lambda(t) = \lambda_0(t) \ox e^{it}, \quad t\in\RR,
\end{equation}
where it is understood that $\lambda_0(t)$, $t\in\RR$, means $\lambda_0([t])$ with $[t]$ denoting the equivalence class in $\RR/t_0 \ZZ$.\footnote{The dual action $\tilde\theta$ on $\N$ can be spelled out explicitly using the automorphism $\theta_0$ and the translations mod $\gamma_0$ on $L^\oo[0,\gamma_0]$. As it becomes a bit cumbersome to write down and will not be needed in the following, we refer the interested reader to the proof of \cite[Prop.~5.7]{haagerup1990equivalence}.}
Since $\N_0$ is a factor, the center $Z(\N)$ is exactly $L^\oo[0,\gamma_0)$.
With this identification, the dual weights are given by  
\begin{equation}
   \tilde\varphi = \bar\varphi\ox \int_0^{\gamma_0}\placeholder\,d\gamma,
\end{equation}
and we have
\begin{equation}
    \tau = \tau_0 \ox\int_0^{\gamma_0} \placeholder\,e^{-\gamma}d\gamma, \qandq 
    h_\varphi:= \frac{d\tilde\varphi}{d\tau}= \frac{d\bar\varphi}{d\tau_0} \ox \exp,
\end{equation}
where $\bar{h}_{\varphi}:=\tfrac{d\bar\varphi}{d\tau_0}$ is a positive, self-adjoint operator affiliated with $\N_{0}$.
The flow of weights, i.e., the restriction of the dual action $\tilde\theta_s$ to the center $Z(\N)=L^\oo[0,\gamma_0)$, is given by translations
\begin{equation}
    \theta_sz(\gamma) = z(\gamma-s), \qquad s\in\RR,\ z\in Z(\N)=L^\oo[0,\gamma_0),
\end{equation}
where the subtraction is modulo $\gamma_0$.
To obtain an explicit description of the spectral states, we follow \cite[Sec.~5]{haagerup1990equivalence} and use the crossed product $\N_0$ to define for each normal state $\omega\in S_*(\M)$, a function $f_\omega:(0,\oo)\to(0,\oo)$ by
\begin{equation}\label{eq:f}
    f_\omega(t) = \tau_0(\chi_t(\bar{h}_{\omega})),\qquad t>0.
\end{equation}
The spectral state $\hat{\omega}$ on the flow of weights $Z(\N)=L^\oo[0,\gamma_0)$ is given by
\begin{align}
    \hat\omega(z) 
    =\tau(z \chi_1(h_\omega)) 
    = \int_0^{\gamma_0} z(\gamma) \tau_0(\chi_1(e^\gamma \bar{h}_\omega))\,e^{-\gamma}d\gamma 
    = \int_0^{\gamma_0} z(\gamma) f_{\omega}(e^{-\gamma}) e^{-\gamma}d\gamma.
\end{align}

\begin{definition}\label{def:spectral_distribution}
    The function $f_{\omega}$ defined by \cref{eq:f} is called the {\bf spectral distribution function} of the normal state $\omega$.
\end{definition}

We conclude:

\begin{proposition}[Flow of weights for $\III_\lambda$ factors]\label{prop:FOW_of_type_III_lambda}
    Let $\M$ be a type $\III_\lambda$, $0<\lambda<1$ factor. Then the flow of weights $(X,\mu,F)$ is given by the periodic translations $F_s(\gamma) = \gamma-s$ mod $\gamma_0$ on $X= [0,\gamma_0)$ equipped with the Lebesgue measure $\mu=dt$ and the spectral state $\hat\omega$ of $\omega\in S_*(\M)$ is implemented by the probability measure
    \begin{equation}\label{eq:FOW_of_type_III_lambda}
        dP_\omega(t) = f_\omega(e^{-t})e^{-t} dt,
    \end{equation}
    associated with the spectral distribution function $f_{\omega}$ of $\omega$.
\end{proposition}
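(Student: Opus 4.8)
The statement is a summary of the constructions carried out above, so the plan is simply to assemble the pieces that have already been recorded: (i) read off the geometric realization $(X,\mu,F)$ from the identification of $Z(\N)$ and the restricted dual action; (ii) identify the measure $P_\omega$ from the explicit formula for $\hat\omega$; (iii) check that $P_\omega$ is a probability measure.

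For step (i), recall that we have identified $\N=\M\rtimes_{\sigma^\phi}\RR$ with $\N_0\ox L^\oo[0,\gamma_0)$, where $\N_0=\M\rtimes_{\sigma^\phi}(\RR/t_0\ZZ)$ is a factor by the discrete-decomposition argument (see \cite[Sec.~5]{haagerup1990equivalence}); hence $Z(\N)=1\ox L^\oo[0,\gamma_0)$, and, as recorded above, the restriction of the dual action $\tilde\theta_s$ to this center acts by the translation $z(\gamma)\mapsto z(\gamma-s)$ modulo $\gamma_0$. Passing to the geometric picture via the correspondence between abelian von Neumann algebras and standard Borel spaces then turns $(L^\oo[0,\gamma_0),\theta)$ into $(X,\mu,F)$ with $X=[0,\gamma_0)$, $\mu=dt$, $F_s(\gamma)=\gamma-s\bmod\gamma_0$. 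Since $\gamma_0=2\pi/t_0=-\log\lambda$, this is a periodic translation flow of period $-\log\lambda$; it is automatically ergodic, as it must be because $\M$ is a factor, and it matches the flow-of-weights definition of type $\III_\lambda$.

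For step (ii), I would unwind the definition $\hat\omega(z)=\tau(e_\omega z)$ with $e_\omega=\chi_1(h_\omega)$, using $\tau=\tau_0\ox\int_0^{\gamma_0}\placeholder\,e^{-\gamma}\,d\gamma$ and $h_\omega=\bar h_\omega\ox\exp$, so that $\chi_1(h_\omega)$ is the direct integral over $\gamma\in[0,\gamma_0)$ of the projections $\chi_1(e^\gamma\bar h_\omega)=\chi_{e^{-\gamma}}(\bar h_\omega)$. This gives, for $z\in Z(\N)=L^\oo[0,\gamma_0)$,
\[
  \hat\omega(z)=\tau\big(z\,\chi_1(h_\omega)\big)=\int_0^{\gamma_0} z(\gamma)\,\tau_0\big(\chi_{e^{-\gamma}}(\bar h_\omega)\big)\,e^{-\gamma}\,d\gamma=\int_0^{\gamma_0} z(\gamma)\,f_\omega(e^{-\gamma})\,e^{-\gamma}\,d\gamma,
\]
using the definition $f_\omega(t)=\tau_0(\chi_t(\bar h_\omega))$ from \cref{def:spectral_distribution}. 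Comparing with \eqref{eq:P_omega}, this identifies $P_\omega$ as the $\mu$-absolutely continuous measure with density $\gamma\mapsto f_\omega(e^{-\gamma})e^{-\gamma}$, i.e.\ $dP_\omega(t)=f_\omega(e^{-t})e^{-t}\,dt$.

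For step (iii), nonnegativity of the density is clear since $f_\omega\ge0$, and the total mass is $P_\omega([0,\gamma_0))=\hat\omega(1)=\tau(e_\omega)=\omega(1)=1$ by \cref{thm:hat_normalization}. The only genuinely nontrivial inputs are the factoriality of $\N_0$ and the isomorphism $\N\cong\N_0\ox L^\oo[0,\gamma_0)$, both imported from \cite[Sec.~5]{haagerup1990equivalence}; everything else is bookkeeping, and the one place to be careful is keeping the two period conventions straight ($t_0=2\pi/(-\log\lambda)$ for the modular flow versus $\gamma_0=2\pi/t_0=-\log\lambda$ for the dual flow). I expect the write-up to be short.
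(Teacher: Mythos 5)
Your proposal is correct and follows essentially the same route as the paper: the paper likewise treats the proposition as a summary of the preceding construction, identifying $Z(\N)=L^\oo[0,\gamma_0)$ with the translation flow and computing $\hat\omega(z)=\tau(z\chi_1(h_\omega))=\int_0^{\gamma_0}z(\gamma)f_\omega(e^{-\gamma})e^{-\gamma}\,d\gamma$ exactly as you do. Your added normalization check via \cref{thm:hat_normalization} is consistent with \cref{lem:f} and harmless.
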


We collect some properties of the spectral distribution functions, which are analogous to those of distribution functions of states on semifinite von Neumann algebras (cp.~\cref{prop:spectral_distance}).
By \eqref{eq:FOW_of_type_III_lambda}, these directly transfer to properties of the probability measures $P_\omega$, $\omega\in S_*(\M)$.

\begin{lemma}[{\cite[Lem.~5.1, Thm.~5.5]{haagerup1990equivalence}}]\label{lem:f}
    For every $\omega\in S_*(\M)$, the spectral distribution function $f_\omega$ is a right-continuous, non-increasing and satisfies
    \begin{equation}\label{eq:f_props}
        f_\omega(\lambda t) = \tfrac1\lambda f_\omega(t),\quad t>0, \qandq 
        \int_\lambda^1 f_\omega(t) dt =1.
    \end{equation}
    Conversely, if $f:(0,\oo)\to(0,\oo)$ is right-continuous, non-increasing and satisfies \eqref{eq:f_props} then $f= f_\omega$ for some $\omega\in S_*(\M)$.
    Furthermore, it holds that
\begin{equation}\label{eq:hat_distance_with_f}
    \norm{\hat\omega_1-\hat\omega_2}
    = \norm{P_{\omega_1}-P_{\omega_2}} = \int_\lambda^1 \abs{f_{\omega_1}(t)-f_{\omega_2}(t)} dt.
\end{equation}
\end{lemma}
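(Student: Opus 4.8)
The plan is to do everything inside the discrete crossed product $\N_0 = \M\rtimes_{\sigma^\phi}(\RR/t_0\ZZ)$, reading off all properties of $f_\omega(t) = \tau_0(\chi_t(\bar h_\omega))$ from the operator $\bar h_\omega = d\bar\omega/d\tau_0$ affiliated with $\N_0$. Two equivariance facts drive the argument. First, the dual weight $\bar\omega$ is invariant under the dual $\ZZ$-action $\theta_0$ (it is built by summing over $\ZZ$), whereas $\tau_0\circ\theta_0=\lambda\tau_0$; comparing $\bar\omega(\theta_0(a)) = \tau_0(\bar h_\omega\theta_0(a)) = \lambda\tau_0(\theta_0^{-1}(\bar h_\omega)a)$ with $\bar\omega(a) = \tau_0(\bar h_\omega a)$ forces $\theta_0(\bar h_\omega)=\lambda\,\bar h_\omega$, exactly parallel to $\theta_0(h_0)=\lambda h_0$. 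Second, $\tau_0$ is a normal semifinite faithful trace. Given these, monotonicity of $f_\omega$ is immediate since $t\mapsto\chi_t(\bar h_\omega)=\chi_{(t,\oo)}(\bar h_\omega)$ is a decreasing family of projections; right-continuity follows because as $s\downarrow t$ the projections $\chi_s(\bar h_\omega)$ \emph{increase} strongly to $\chi_t(\bar h_\omega)$, so the monotone-convergence property of the normal weight $\tau_0$ gives $f_\omega(s)\uparrow f_\omega(t)$.

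For the scaling relation I would use that the automorphism $\theta_0$ commutes with Borel functional calculus, so $\theta_0(\chi_s(\bar h_\omega)) = \chi_s(\lambda\bar h_\omega) = \chi_{s/\lambda}(\bar h_\omega)$; taking $s=\lambda t$ gives $\chi_t(\bar h_\omega) = \theta_0(\chi_{\lambda t}(\bar h_\omega))$, hence
\begin{equation}
f_\omega(t) = \tau_0\big(\theta_0(\chi_{\lambda t}(\bar h_\omega))\big) = \lambda\,\tau_0(\chi_{\lambda t}(\bar h_\omega)) = \lambda\,f_\omega(\lambda t),
\end{equation}
i.e.\ $f_\omega(\lambda t)=\tfrac1\lambda f_\omega(t)$. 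The normalization comes from evaluating the formula for $\hat\omega$ preceding \cref{def:spectral_distribution} (equivalently \cref{prop:FOW_of_type_III_lambda}) at $z=1$: $\hat\omega(1)=\int_0^{\gamma_0} f_\omega(e^{-\gamma})e^{-\gamma}\,d\gamma$, and the substitution $t=e^{-\gamma}$ (so $\gamma=0\mapsto t=1$ and $\gamma=\gamma_0=-\log\lambda\mapsto t=\lambda$) turns this into $\int_\lambda^1 f_\omega(t)\,dt$; on the other hand $\hat\omega(1)=\tau(\chi_1(h_\omega))=\tau(e_\omega)=\omega(1)=1$ by \cref{thm:hat_normalization}.

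For the converse, given right-continuous non-increasing $f:(0,\oo)\to(0,\oo)$ with $f(\lambda t)=\tfrac1\lambda f(t)$ and $\int_\lambda^1 f=1$, I would reconstruct $\omega$ as follows. Using that $\N_0$ is a type $\II_\oo$ factor carrying the trace $\tau_0$ scaled by $\theta_0$, build a positive self-adjoint operator $k$ affiliated with $\N_0$ satisfying $\theta_0(k)=\lambda k$ and $\tau_0(\chi_t(k))=f(t)$; this is the inverse of the distribution-function correspondence on the semifinite factor $\N_0$, and the prescribed scaling of $f$ is precisely what permits an $\theta_0$-equivariant choice of $k$. Then $\psi:=\tau_0(k\,\cdot\,)$ is a normal semifinite weight on $\N_0$ which is $\theta_0$-invariant, since $\psi(\theta_0(a))=\lambda\tau_0(\theta_0^{-1}(k)a)=\tau_0(ka)=\psi(a)$ using $\theta_0^{-1}(k)=\lambda^{-1}k$; hence $\psi=\bar\omega$ for a unique normal semifinite weight $\omega$ on $\M$, and by construction $\bar h_\omega=k$, so $f_\omega=f$. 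Finally $\omega(1)=\tau(\chi_1(h_\omega))=\int_\lambda^1 f_\omega=1$, so $\omega$ is a state. The distance identity is then just unwinding \cref{thm:distance_of_unitary_orbits}: $\norm{\hat\omega_1-\hat\omega_2}$ equals the $L^1$-norm on $([0,\gamma_0),\mu)$ of the difference of the densities of $P_{\omega_1}$ and $P_{\omega_2}$, namely $\int_0^{\gamma_0}\abs{f_{\omega_1}(e^{-\gamma})-f_{\omega_2}(e^{-\gamma})}e^{-\gamma}\,d\gamma$, and the same change of variables $t=e^{-\gamma}$ yields $\int_\lambda^1 \abs{f_{\omega_1}(t)-f_{\omega_2}(t)}\,dt$.

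The main obstacle is the equivariant lift in the converse: producing a positive operator affiliated with the type $\II_\oo$ factor $\N_0$ that simultaneously realizes the prescribed distribution function $f$ and satisfies $\theta_0(k)=\lambda k$. Everything else is bookkeeping with dual weights, the scaling of $\tau_0$, and a single change of variables; the delicate existence statement is exactly the content of Haagerup–St\o rmer's Lem.~5.1 and Thm.~5.5.
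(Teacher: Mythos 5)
The paper does not prove this lemma at all: it is imported verbatim from Haagerup--St\o rmer \cite[Lem.~5.1, Thm.~5.5]{haagerup1990equivalence}, so your derivation is necessarily a different route --- you actually reconstruct most of the statement from the crossed-product data of \cref{sec:FOW_type_III_lambda}. Your forward direction is complete and correct: $\theta_0(\bar h_\omega)=\lambda\bar h_\omega$ does follow from $\theta_0$-invariance of the dual weight $\bar\omega$ together with $\tau_0\circ\theta_0=\lambda\tau_0$ and uniqueness of the Radon--Nikodym derivative with respect to the trace; monotonicity and right-continuity come from normality of $\tau_0$ applied to the increasing net $\chi_s(\bar h_\omega)\uparrow\chi_t(\bar h_\omega)$ as $s\downarrow t$; the scaling law via $\theta_0(\chi_{\lambda t}(\bar h_\omega))=\chi_t(\bar h_\omega)$, the normalization via $\hat\omega(1)=\omega(1)=1$ (\cref{thm:hat_normalization}) and the substitution $t=e^{-\gamma}$, and the distance identity (which is \cref{thm:distance_of_unitary_orbits} plus the explicit density $f_\omega(e^{-\gamma})e^{-\gamma}$ of $P_\omega$) are all exactly right. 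The only place where your argument is a sketch rather than a proof is the converse: producing a positive operator $k$ affiliated with the type $\II_\oo$ factor $\N_0$ with $\tau_0(\chi_t(k))=f(t)$ \emph{and} $\theta_0(k)=\lambda k$, and then invoking the characterization of $\theta_0$-invariant normal semifinite weights on $\N_0$ as dual weights of weights on $\M$. You correctly identify this as the delicate step; it is precisely the content of the cited Haagerup--St\o rmer results, so at that single point your proof and the paper's citation coincide. What your approach buys is the observation that everything else in the lemma is elementary bookkeeping with structures the paper has already introduced.
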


We conclude this subsection by analyzing the behavior of the spectral distribution function under a change of the generalized trace $\phi$.

\begin{lemma}
\label{lem:f_invariance}
Let $\phi,\psi$ be generalized traces on a type $\III_{\lambda}$ factor $\M$. If $\omega\in S_{*}(\M)$ is normal state with associated spectral distribution functions $f\up\phi_{\omega}$ and $f\up\psi_{\omega}$ respectively, then there exists a unique $\mu\in(\lambda,1]$ such that
\begin{equation}
    \label{eq:f_invariance}
    f\up\psi_{\omega}(t) = \mu f\up\phi_{\omega}(\mu t) \qquad \forall t>0.
\end{equation}
\end{lemma}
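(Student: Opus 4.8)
The plan is to exploit that the continuous core $(\N,\tau,\tilde\theta)$ together with the spectral state map $\omega\mapsto\hat\omega$ on its center $Z(\N)$ is canonically attached to $\M$: it does not depend, up to isomorphism fixing $\M$, on the normal semifinite faithful weight used to build it, and in particular not on the choice of generalized trace. Thus $\phi$ and $\psi$ merely furnish two concrete realizations of one and the same abstract structure, and the task is to track how that realization -- and with it the function $f_\omega$ -- changes under the comparison isomorphism. Concretely, there is an isomorphism $\Phi\colon\N^\phi\to\N^\psi$ which is the identity on $\M=(\N^\phi)^{\tilde\theta}=(\N^\psi)^{\tilde\theta}$ and intertwines the dual actions; since the core trace is canonical only up to a positive scalar (cf.\ the footnote after \eqref{eq:scaling}), one has $\Phi_*\tau^\phi=k\,\tau^\psi$ for some $k>0$.

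For a fixed normal state $\omega$ the dual weight $\tilde\omega$ is determined by $\omega|_\M$ and the dual action alone, so $\tilde\omega^\psi\circ\Phi=\tilde\omega^\phi$; feeding this through $h_\omega=d\tilde\omega/d\tau$ gives $h^\psi_\omega=k\,\Phi(h^\phi_\omega)$, hence $e^\psi_\omega=\chi_1(h^\psi_\omega)=\Phi\big(\chi_{1/k}(h^\phi_\omega)\big)$. Using the two scaling identities $\tilde\theta_s(h_\omega)=e^{-s}h_\omega$ and $\tau\circ\tilde\theta_s=e^{-s}\tau$ (the very identities responsible for the non-homogeneity \eqref{eq:hat_for_nonstates}), one rewrites $\chi_{1/k}(h^\phi_\omega)=\tilde\theta^\phi_{-\log k}(e^\phi_\omega)$ and obtains, for all $w\in Z(\N^\phi)$,
\begin{equation*}
    \hat\omega^\psi(\Phi(w)) \;=\; k^{-1}\tau^\phi\!\big(\tilde\theta^\phi_{-\log k}(e^\phi_\omega)\,w\big) \;=\; \hat\omega^\phi\big(\theta^\phi_{\log k}(w)\big).
\end{equation*}
In other words, after transporting along $\Phi$, the spectral state seen through $\psi$ equals the one seen through $\phi$ shifted by the single flow automorphism $\theta^\phi_{\log k}$, the shift being independent of $\omega$.

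It remains to read this off on the level of $f_\omega$. Restricting $\Phi$ to the centers yields a non-singular Borel self-isomorphism of the flow of weights; by \cref{prop:FOW_of_type_III_lambda} both sides are the translation flow on the circle $X=[0,\gamma_0)$ with $\gamma_0=-\log\lambda$, and the only such self-isomorphisms are rotations $\gamma\mapsto\gamma+b$. Composing the rotation with the flow shift $\theta^\phi_{\log k}$ shows that $P^\psi_\omega$ is obtained from $P^\phi_\omega$ by translating by a fixed amount $c=c(\phi,\psi)\in\RR/\gamma_0\ZZ$. Substituting $dP_\omega(t)=f_\omega(e^{-t})e^{-t}\,dt$ into this relation and setting $s=e^{-t}$ turns it into $f^\psi_\omega(s)=\mu\,f^\phi_\omega(\mu s)$ with $\mu=e^{c}>0$, which is exactly \eqref{eq:f_invariance} up to normalizing $\mu$. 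For the latter, the quasi-periodicity $f^\phi_\omega(\lambda t)=\lambda^{-1}f^\phi_\omega(t)$ from \cref{lem:f} shows that replacing $\mu$ by $\lambda\mu$ leaves the relation intact, so $\mu$ is determined only modulo $\lambda^{\ZZ}$; since $(\lambda,1]$ is a fundamental domain for multiplication by $\lambda^{\ZZ}$ on $(0,\infty)$, there is a unique admissible $\mu\in(\lambda,1]$. Conversely, if $\mu_1,\mu_2\in(\lambda,1]$ both satisfy \eqref{eq:f_invariance}, then $f^\phi_\omega(r t)=r^{-1}f^\phi_\omega(t)$ with $r=\mu_2/\mu_1\in(\lambda,1/\lambda)$, which together with the $\lambda$-quasiperiodicity forces $r\in\lambda^{\ZZ}$, hence $r=1$ -- the only exception being the degenerate case $f^\phi_\omega(t)=1/(\gamma_0 t)$, i.e.\ when $P_\omega$ is $F$-invariant, in which every $\mu$ works but the $\mu$ produced above is still canonically fixed by $\phi,\psi$.

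The main obstacle is the middle step: propagating the scalar ambiguity $k$ of the core trace through $h_\omega\mapsto e_\omega=\chi_1(h_\omega)\mapsto\hat\omega$ and checking that it contributes only one overall rotation of the flow, the same for all states $\omega$, so that the resulting $\mu$ is genuinely independent of $\omega$; this is precisely where the scaling identities $\tilde\theta_s(h_\omega)=e^{-s}h_\omega$ and $\tau\circ\tilde\theta_s=e^{-s}\tau$ are essential. A minor secondary point is to ensure that the comparison isomorphism of the flow of weights is an honest rotation of the circle rather than a reflection, which holds once the sign conventions in the two realizations $Z(\N)\cong L^\infty[0,\gamma_0)$ are fixed consistently. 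One could alternatively bypass the core: on a type $\III_\lambda$ factor any two generalized traces differ by a positive scalar $\mu_0$ and an inner automorphism; the inner part leaves $f_\omega$ unchanged by unitary invariance of $\tau_0$ on $\N_0$, and the scalar yields $f^\psi_\omega(t)=\mu_0 f^\phi_\omega(\mu_0 t)$ via the direct computation $\tau_0\mapsto\mu_0\tau_0$, $\bar h_\omega\mapsto\mu_0^{-1}\bar h_\omega$ in \eqref{eq:f} -- but this route needs the somewhat heavier uniqueness statement for generalized traces.
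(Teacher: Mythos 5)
Your argument is correct, but it takes a genuinely different route from the paper's. The paper's proof is in fact the one you relegate to your final sentence: it invokes Takesaki's uniqueness theorem for generalized traces (\!\cite*{takesaki2}, Thm.~XII.2.2) to write $\psi=\mu\,u\phi u^{*}$ for a unique $\mu\in(\lambda,1]$, disposes of the inner part by transporting the \emph{discrete} core $\N_{0}$ along the isomorphism induced by the Connes cocycle $(D\phi_{u}:D\phi)_{t}=u\sigma^{\phi}_{t}(u^{*})$ (which is $t_{0}$-periodic and hence descends to $\M\rtimes_{\sigma^\phi}(\RR/t_{0}\ZZ)$), and reads off the scalar part from $\tau_{0}\up{\mu\phi}=\mu\tau_{0}\up\phi$ and $\bar h_{\omega}\up{\mu\phi}=\mu^{-1}\bar h_{\omega}\up\phi$ in \eqref{eq:f} --- exactly the computation you sketch. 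Your main route instead stays on the continuous core: you use that $(\N,\tau,\tilde\theta)$ is canonical up to an isomorphism fixing $\M$ and up to a scalar on $\tau$, push that scalar through $h_{\omega}\mapsto e_{\omega}\mapsto\hat\omega$ with the identities $\tilde\theta_{s}(h_{\omega})=e^{-s}h_{\omega}$ and $\tau\circ\tilde\theta_{s}=e^{-s}\tau$, and then use that a flow-commuting automorphism of the ergodic translation flow on $[0,\gamma_{0})$ must be a rotation. What this buys is independence from the classification of generalized traces and a transparent statement that the entire effect of changing $\phi$ is a single $\omega$-independent rotation of the flow of weights; what it costs is that you must pin down the comparison isomorphism of the cores (identity on $\M$, intertwining the dual actions, which also excludes reflections) and invoke the rigidity of the circle flow --- standard facts, but extra ingredients the paper's two-line cocycle-plus-rescaling computation does not need.

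One caveat on the uniqueness step. Your argument that two admissible $\mu_{1},\mu_{2}\in(\lambda,1]$ coincide is incomplete: from $f\up\phi_{\omega}(rt)=r^{-1}f\up\phi_{\omega}(t)$ with $r=\mu_{2}/\mu_{1}$ you conclude $r\in\lambda^{\ZZ}$ ``except when $f\up\phi_{\omega}\propto 1/t$'', but there are intermediate cases. By the converse direction of \cref{lem:f}, for any $q\ge2$ there is a state $\omega$ whose $f_{\omega}$ is the Powers-type step function with the finer multiplicative period $\lambda^{1/q}$; for such $\omega$ the functional equation \eqref{eq:f_invariance} is solved by several $\mu\in(\lambda,1]$ even though $f_{\omega}\not\propto 1/t$. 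This is really a defect of a literal reading of the statement rather than of your proof --- the paper likewise only produces the canonical $\mu$ (the unique scalar with $\psi\sim\mu\phi$) and never shows that no other solution of the displayed equation exists --- but the uniqueness should be phrased as uniqueness of the canonically constructed $\mu$, not of solutions of \eqref{eq:f_invariance}.
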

\begin{proof}
    We consider the crossed products $\N\up\phi_{0}$ and $\N\up\psi_{0}$ relative to the two generalized traces.
    By \cite[Thm.~XII.2.2]{takesaki2}, $\psi$ is unitarily equivalent to $\mu \phi$ for a unique $\mu\in (\lambda,1]$, i.e., there exists a unitary $u\in\M$ such that $\psi = \mu\phi_{u}$ for  $\phi_{u}=u\phi u^{*}$.
    
    Let us first analyze the effect of unitarily rotating the generalized trace $\psi$ by $u\in\U(\M)$.
    Following \cite[Prop.~3.5]{takesaki_duality_1973}, there is a *-isomorphism $\chi_{0}:\N\up{\phi_{u}}_{0}\to\N\up\phi_{0}$ induced by the Connes cocycle derivative $(D\phi_{u}:D\phi)_{t} = u\sigma^{\phi}_{t}(u^{*})$ because the latter is $t_{0}$-periodic. $\chi_{0}$ is explicitly given in terms of the generators of the crossed products by
    \begin{equation*}
        \chi_{0}(\pi\up{\phi_{u}}_{0}(a)) = \pi\up\phi_{0}(a), \qquad \chi_{0}(\lambda_{0}(t)) = \pi\up\phi_{0}((D\phi_{u}:D\phi)_{t})\lambda_{0}(t),
    \end{equation*}
    for all $a\in\M$ and $t\in\RR/t_{0}\ZZ$. In analogy with the reasoning in \cite[Sec.~3]{haagerup1990equivalence}, we find $\tau\up\phi_{0}\circ\chi_{0}=\tau\up{\phi_{u}}_{0}$ and $\bar{h}\up\phi_{\omega} = \chi_{0}(\bar{h}\up{\phi_{u}}_{\omega})$. Therefore, we have
    \begin{equation*}
        f\up{\phi_{u}}_{\omega} = f\up\phi_{\omega}.
    \end{equation*}
    As a second step, we analyze the effect of rescaling $\phi$ by $\mu$.
    Since the Connes cocycle derivative is given by $(D(\mu\phi):D\phi)_{t} = \mu^{it}$, the modular flow is scale-invariant:
    \begin{equation*}
    \sigma^{\mu\phi}_{t}(a) = (D(\mu\phi):D\phi)_{t}\sigma^{\phi}_{t}(a)(D(\mu\phi):D\phi)_{t}^{*} = \sigma^{\phi}_{t}(a), \qquad \forall a\in\M,\ t\in\RR/t_{0}\ZZ.
    \end{equation*}
    Therefore, the constructions of $\N\up\phi_{0}$ and $\N\up{\mu\phi}_{0}$ as well as the associated dual actions $\theta\up\phi_{0}$ and $\theta\up{\phi_{u}}_{0}$ agree, but the canonically associated traces differ by the scaling factor $\mu$ (cf.~\cref{eqref:trace-discrete}):
    \begin{equation*}
        \tau\up{\mu\phi}_{0} = \mu\tau\up\phi_{0}.
    \end{equation*}
    As direct consequence, we find $\bar{h}\up{\mu\phi}_{\omega} = \mu^{-1}\bar{h}\up\phi_{\omega}$, which leads to
    \begin{equation*}
        f_\omega \up{\mu\phi} = \mu f_\omega\up\phi(\mu t).
    \end{equation*}
    This implies the claim because $f_\omega\up{\psi}(t) = f_\omega\up{\mu\phi_{u}}(t) = \mu f_\omega\up{\phi_{u}}(\mu t) = \mu f_\omega\up\phi(\mu t)$ for all $t>0$.
\end{proof}

\subsection{The flow of weights and semifinite amplifications}\label{sec:fow_semifinite}

The goal of this subsection is to understand the flow of weight of $M_n(\M)$ in terms of the flow of weights on $\M$ and to obtain formulae for the spectral states of product states.
For later usage, we will consider the more general case of the flow of weights on $\M\ox\P$ where $\P$ is any semifinite factor.

Let $\P$ be a semifinite factor with trace $\tr$.
Let us briefly go through the construction of the flow of weights on $\M\ox\P$.
Given a normal semifinite weight $\phi$ on $\M$, we consider on $\M\up1=\M\ox\P$ the weight $\phi\up1=\phi\ox\tr$.
Then $\sigma^{\phi\up1} = \sigma^\phi\ox\id$ and, hence,
\begin{equation}
    \N\up1 = \M\up1\rtimes_{\sigma^{\phi\up1}}\RR = (\M\rtimes_{\sigma^\phi}\RR)\ox\P=\N\ox\P.
\end{equation}
The dual action and the trace on $\N\up1$ are given by $\tilde\theta\up1 = \tilde\theta\ox \id_\P$ and $\tau\up1=\tau\ox\tr$.
Let us summarize this as
\begin{equation}
    (\N\up1,\tau\up1,\tilde\theta\up1) = (\N\ox\P,\tau\ox\tr,\tilde\theta\ox\id).
\end{equation}
Since $\P$ is assumed to be a factor, it follows that the flow of weights of $\M\ox\P$ and $\M$ are equal
\begin{equation}
    (Z(\N\up1),\theta\up1) = (Z(\N),\theta).
\end{equation}
It is natural to ask how the spectral state of a product state $\omega\ox\psi$ relates to the spectral states of $\omega$ and $\psi$.
With the notion of distribution functions and spectral scales of states on semifinite von Neumann algebras, we have:

\begin{proposition}\label{prop:dual_state}
    Let $\omega$ be a normal state on a von Neumann algebra $\M$ and let $\psi$ be a normal state on a semifinite factor $\P$ with trace $\tr$.
    Then 
    \begin{itemize}
    \item[1.] 
    \begin{align}\label{eq:semifinite_conv}
        (\omega\ox\psi)^\wedge & = \tau((D_{\psi}\ast p_{\omega})(1)\placeholder) = \tau(D_{\psi}(h_{\omega}^{-1})\,\placeholder\,),
    \end{align}
    which is the analog of the convolution formula \cref{eq:dist_convolution}. Here, $D_{\psi}$ denotes the distribution function of $\psi$.
    \item[2.] 
    \begin{align}\label{eq:functional_calc}
        (\omega\ox\psi)^\wedge & = \int_0^\oo \lambda_\psi(t)\,\hat\omega\circ \theta_{\log\lambda_\psi(t)}\, dt =  \int_0^\oo \widehat{\lambda_\psi(t)\omega}\, dt,
    \end{align} 
    where $\lambda_{\psi}$ is the spectral scale of $\psi$. 
    \end{itemize}
\end{proposition}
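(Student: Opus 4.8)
The plan is to reduce Proposition~\ref{prop:dual_state} to the semifinite computations of \cref{lem:dist_convolution}, applied not to $\M$ itself but to the crossed product $\N$. The crucial observation is that $\N\up1 = \N\ox\P$ is semifinite (with trace $\tau\up1 = \tau\ox\tr$), and the spectral state of $\omega\ox\psi$ on $\M\ox\P$ is, by \cref{thm:hat_normalization}, given by $\tau\up1(\chi_1(h_{\omega\ox\psi})\,\placeholder\,)$ restricted to $Z(\N\up1)=Z(\N)$. So the first step is to identify $h_{\omega\ox\psi}$. From the dual weight construction one has $\widetilde{\omega\ox\psi} = \tilde\omega\ox\tr$ as weights on $\N\ox\P$, hence
\begin{equation*}
    h_{\omega\ox\psi} = \frac{d(\tilde\omega\ox\tr)}{d(\tau\ox\tr)} = h_\omega\ox \rho_\psi^{-1},
\end{equation*}
where $\rho_\psi = d\psi/d\!\tr \in L^1(\P,\tr)$ (using that on the abelian-in-the-second-factor level, the Radon--Nikodym derivative of $\tr$ against $\tr(\rho_\psi\,\placeholder)$ is $\rho_\psi^{-1}$; one must be slightly careful with pseudoinverses if $\psi$ is not faithful, but $\chi_1$ kills the kernel).

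Second, I would compute $\chi_1(h_{\omega\ox\psi}) = \chi_1(h_\omega\ox\rho_\psi^{-1})$ using the tensor convolution of spectral measures exactly as in \cref{eq:spectral_convolution}: with $p_\omega$ the spectral measure of $h_\omega$ (affiliated with $\N$) and $p_\psi$ that of $\rho_\psi$ (so $p_{\rho_\psi^{-1}}$ is the pushforward under $t\mapsto t^{-1}$), one gets
\begin{equation*}
    \chi_1(h_\omega\ox\rho_\psi^{-1}) = \int_0^\oo \int_0^\oo \chi_1(s\,r^{-1})\,dp_\omega(s)\ox dp_\psi(r) = \int_0^\oo \chi_r(h_\omega)\ox dp_\psi(r).
\end{equation*}
Pairing with $\tau\up1 = \tau\ox\tr$ and a test element $z\in Z(\N)$, and using $\Tr_\P(dp_\psi(r)) = dD_\psi$-measure (the distribution function of $\psi$ is the cumulative of its spectral measure's trace, cf.\ \cref{eq:distribution_function}), yields
\begin{equation*}
    (\omega\ox\psi)^\wedge(z) = \int_0^\oo \tau(z\,\chi_r(h_\omega))\, \Tr_\P(dp_\psi(r)) = \tau\big(z\,(D_\psi\ast p_\omega)(1)\big),
\end{equation*}
which is the first claimed formula once one recognizes $(D_\psi\ast p_\omega)(1) = D_\psi(h_\omega^{-1})$ via the convolution-of-function-and-measure identity \cref{eq:f_m_conv} — this is the direct analogue of the last line of \cref{eq:dist_convolution_calc}, now with the measure $p_\omega$ of the (generally unbounded) operator $h_\omega$ in place of $p_\omega$ on $\M$.

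For part~2, I would simply rewrite $\Tr_\P(dp_\psi(r))$ as the pushforward of Lebesgue measure on $(0,\oo)$ along the spectral scale $\lambda_\psi$, exactly the content of \cref{eq:trace_formula}. Thus $\int_0^\oo \tau(z\,\chi_r(h_\omega))\,\Tr_\P(dp_\psi(r)) = \int_0^\oo \tau(z\,\chi_{\lambda_\psi(t)}(h_\omega))\,dt$. Finally, $\tau(z\,\chi_{a}(h_\omega)) = \tau(z\,\chi_1(a^{-1}h_\omega)) = \tau(\tilde\theta_{\log a}(z\,\chi_1(h_\omega)))\cdot$(scaling) — more precisely, using $\tilde\theta_s(h_\omega) = e^{-s}h_\omega$ and $\tau\circ\tilde\theta_s = e^{-s}\tau$, one gets $\tau(z\,\chi_a(h_\omega)) = a\cdot\tau(\theta_{-\log a}(z)\,\chi_1(h_\omega)) = a\,\hat\omega(\theta_{-\log a}(z)) = a\,(\hat\omega\circ\theta_{\log(1/a)})(z)$; taking $a=\lambda_\psi(t)$ gives $\lambda_\psi(t)\,\hat\omega\circ\theta_{\log\lambda_\psi(t)}$ after matching sign conventions for $\theta$, and this integrand is $\widehat{\lambda_\psi(t)\omega}$ by \cref{eq:hat_for_nonstates}. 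Integrating over $t$ gives \cref{eq:functional_calc}.

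The main obstacle I anticipate is bookkeeping the sign/scaling conventions: the relation $\tilde\theta_s(h_\varphi) = e^{-s}h_\varphi$, the trace scaling $\tau\circ\tilde\theta_s = e^{-s}\tau$, and the non-homogeneity formula \cref{eq:hat_for_nonstates} must be threaded together consistently so that the two shifts $\theta_{\pm\log\lambda_\psi(t)}$ come out with the right sign, and one must double-check the identity $\int_{-\oo}^\oo g(e^{-s}t)\,ds$-type normalizations used implicitly when translating $\chi_r(h_\omega)$ statements into statements about $\hat\omega$. A secondary technical point is justifying the spectral-measure manipulations for the possibly unbounded, possibly non-faithful operators $h_\omega$ and $\rho_\psi^{-1}$; but since $\chi_1$ and $\chi_r$ are bounded Borel functions vanishing near $0$, these are legitimate, and I would cite \cref{eq:trace_formula} and the functional calculus to keep it short.
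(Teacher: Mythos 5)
Your overall strategy is the paper's: compute $h_{\omega\ox\psi}$ on $\N\up1=\N\ox\P$, apply $\chi_1$ via the tensor convolution of spectral measures, pair with $\tau\up1=\tau\ox\tr$, and for part 2 push the Lebesgue measure forward along the spectral scale and use $\tilde\theta_s(h_\omega)=e^{-s}h_\omega$ together with $\tau\circ\tilde\theta_s=e^{-s}\tau$. However, there is a genuine error at the very first step. The dual action on $\N\ox\P$ is $\tilde\theta\ox\id$, so the averaging in the dual-weight construction only touches the $\N$ factor: $\widetilde{\omega\ox\psi}=\tilde\omega\ox\psi$, \emph{not} $\tilde\omega\ox\tr$. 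Consequently
\begin{equation*}
    h_{\omega\ox\psi}=\frac{d(\tilde\omega\ox\psi)}{d(\tau\ox\tr)}=h_\omega\ox\rho_\psi,
\end{equation*}
with $\rho_\psi$ and not $\rho_\psi^{-1}$ in the second factor. This is not a matter of sign conventions that can be fixed at the end: with your $h_{\omega\ox\psi}=h_\omega\ox\rho_\psi^{-1}$ the projection $\chi_1(h_{\omega\ox\psi})$ is the spectral integral over the region $\{(s,r): s r^{-1}>1\}$, whereas the correct region is $\{(s,r): sr>1\}$, which is what actually matches $(D_\psi\ast p_\omega)(1)=\int D_\psi(s^{-1})\,dp_\omega(s)$. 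Your asserted equality $\int_0^\oo\tau(z\,\chi_r(h_\omega))\,\Tr_\P(dp_\psi(r))=\tau\bigl(z\,(D_\psi\ast p_\omega)(1)\bigr)$ is therefore false as written.

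The error propagates into part 2 and does not cancel. Carrying your computation through consistently, $\tau(z\,\chi_a(h_\omega))=a^{-1}\,\hat\omega\circ\theta_{-\log a}(z)$ (since $\chi_a(h_\omega)=\tilde\theta_{\log a}(e_\omega)$ and $\tau\circ\tilde\theta_{\log a}=a^{-1}\tau$), so your integrand becomes $\lambda_\psi(t)^{-1}\,\hat\omega\circ\theta_{-\log\lambda_\psi(t)}$ rather than the required $\lambda_\psi(t)\,\hat\omega\circ\theta_{\log\lambda_\psi(t)}=\widehat{\lambda_\psi(t)\omega}$. A quick sanity check with $\psi=\tfrac1n\Tr$ exposes the problem: your formula yields $n^2\,\hat\omega\circ\theta_{\log n}$, which is not even normalized, whereas \cref{lem:dual_state_Mn} requires $\hat\omega\circ\theta_{-\log n}$. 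The fix is simply to start from $h_{\omega\ox\psi}=h_\omega\ox\rho_\psi$; then $e_{\omega\ox\psi}=\int_{\RR^+}\tilde\theta_{-\log\lambda}(e_\omega)\ox dp_\psi(\lambda)$ and both formulas drop out exactly as in the paper. The rest of your outline (the pushforward via \cref{eq:trace_formula}, the use of \cref{eq:hat_for_nonstates}, and the remark that $\chi_1$ handles non-faithfulness) is fine once this is corrected.
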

Both sides of equation \eqref{eq:functional_calc} depend on the scaling of the trace $\tr$ on $\P$. The left-hand side depends on the trace via the construction of the spectral states (see the proof), and the right-hand side depends on the trace through the spectral scale $\lambda_\omega$.
Of course, we pick the same scaling on both sides.

\begin{proof}
We denote the density operator of $\psi$ by $\rho_\psi$ and its spectral measure by $p_\psi$.
With the above construction of the crossed products $\N_1$ and $\N$, the dual weight is
\begin{equation}
    \widetilde{\omega\ox\psi} = \tilde\omega\ox\psi
\end{equation}
and its Radon-Nikodym derivative with respect to $\tau\up1=\tau\ox\tr$ is
\begin{align}
    h_{\omega\ox\psi} := \frac{d\widetilde{\omega\ox\psi}}{d\tau\up1} & = h_\omega \ox \rho_\psi = \int_{\RR^{+}}dp_{\omega}(\mu)\ox\mu\,\rho_{\psi} \nonumber \\
    & = \int_{\RR^+} \lambda \,h_\omega \ox dp_\psi(\lambda) = \int_{\RR^+} \tilde\theta_{-\log\lambda}(h_\omega)\ox dp_\psi(\lambda),
\end{align}
where $h_\omega = d\tilde\omega/d\tau$, and $p_{\omega}$ is its spectral measure.
Therefore, we find
\begin{align}
    e_{\omega\ox\psi} := \chi_1(h_{\omega\ox\psi}) & = \int_{\RR^{+}}dp_{\omega}(\mu)\ox\chi_{1}(\mu\,\rho_{\psi}) = \int_{\RR^{+}}dp_{\omega}(\mu)\ox\chi_{\mu^{-1}}(\rho_{\psi})\nonumber \\
    & = \int_{\RR^+} \chi_1(\tilde\theta_{-\log\lambda}(h_\omega)) \ox dp_\psi(\lambda) = \int_{\RR^+} \tilde\theta_{-\log\lambda}(e_\omega) \ox dp_\psi(\lambda), 
\end{align}
where $e_\omega = \chi_1(h_\omega)$. Now, let $z\in Z(\N)$. Then \cref{eq:semifinite_conv} follows
\begin{align}
    (\omega\ox\psi)^\wedge(z) = \tau\up1(z e_{\omega\ox\psi}) & = \int_{\RR^{+}}(\tau\ox\Tr)\Big(dp_{\omega}(\mu)z\ox\chi_{\mu^{-1}}(\rho_{\psi})\Big)\nonumber \\
    & = \int_{\RR^{+}}\tau(dp_{\omega}(\mu)z)\Tr(\chi_{\mu^{-1}}(\rho_{\psi})) \nonumber \\
    & = \int_{\RR^{+}}\tau(dp_{\omega}(\mu)z)D_{\psi}(\mu^{-1}) \nonumber \\ 
    & = \tau\Big(\Big(\int_{\RR^{+}}dp_{\omega}(\mu) D_{\psi}(\mu^{-1})\Big)z\Big) \nonumber \\
    & = \tau((D_{\psi}\ast p_{\omega})(1)z) = \tau(D_{\psi}(h_{\omega}^{-1})z),
\intertext{and similarly \cref{eq:functional_calc}}
    (\omega\ox\psi)^\wedge(z) = \tau\up1(z e_{\omega\ox\psi}) 
    &= \int_{\RR^+}(\tau\ox\tr)\Big(z \tilde\theta_{-\log\lambda}(e_\omega)\ox dp_\psi(\lambda)\Big)\nonumber\\
    &= \int_{\RR^+} (\tau\circ\theta_{-\log\lambda})(\theta_{\log\lambda}(z)e_\omega) \, \tr dp_\psi(\lambda) \nonumber \\
    &= \int_{\RR^+} \lambda \hat\omega\circ \theta_{\log\lambda}(z) \, \tr dp_\psi(\lambda) = \int_{\RR^+} \widehat{\lambda\omega}(z) \, \tr dp_\psi(\lambda)\nonumber\\
    &= \int_{\RR^+} \lambda_\psi(t) \,\hat\omega\circ\theta_{\log\lambda_\psi(t)}(z)\,dt = \int_{\RR^+} \widehat{\lambda_\psi(t)\omega}(z)\,dt.
\end{align}
where we used \cref{eq:hat_for_nonstates} and, in the last line, the fact that $\tr dp_\psi(\lambda)$ is the push-forward of the Lebesgue measure $dt$ along the spectral scale $\lambda_\psi$ by \cref{eq:trace_formula}. 
\end{proof}

\begin{corollary}\label{cor:projections}
    Let $\omega$ be a normal state on a von Neumann algebra $\M$ and let $\P$ be a semifinite factor with trace $\tr$.
    For a finite projection $p\in \proj(\P)$ consider the normal state $\pi = (\tr p)^{-1} \tr p(\placeholder)$ on $\P$.
    Then
    \begin{equation}
        (\omega\ox\pi)^\wedge = \hat\omega \circ \theta_{\log(\tr p)}.
    \end{equation}
\end{corollary}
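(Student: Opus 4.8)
The plan is to obtain the corollary by specializing the second displayed identity of \cref{prop:dual_state} to the state $\psi=\pi$, once we have identified the spectral scale of $\pi$. First I would compute this spectral scale. The density operator of $\pi=(\tr p)^{-1}\tr(p\,\placeholder\,)$ on the semifinite factor $\P$ is $\rho_\pi=(\tr p)^{-1}p$, a scalar multiple of a projection of trace $\tr p$. Hence by \cref{eq:distribution_function} its distribution function is $D_\pi(s)=\tr\chi_s(\rho_\pi)$, which equals $\tr p$ for $0\le s<(\tr p)^{-1}$ and $0$ for $s\ge(\tr p)^{-1}$; reflecting about the diagonal as in \cref{eq:spectral_scale} gives $\lambda_\pi(t)=(\tr p)^{-1}\chi_{[0,\tr p)}(t)$. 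This is precisely the semifinite analogue of the matrix formula $\lambda_{\frac1n\!\tr}(t)=\tfrac1n\chi_{[0,n)}(t)$ from \cref{exa:M_n_spectrum}.

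Next I would substitute this into $(\omega\ox\psi)^\wedge=\int_0^\oo\widehat{\lambda_\psi(t)\omega}\,dt$ from \cref{prop:dual_state}(2). Since $\widehat{0\cdot\omega}$ is the zero functional, the integrand vanishes for $t\ge\tr p$, and on $[0,\tr p)$ it is the constant $\widehat{(\tr p)^{-1}\omega}$; thus $(\omega\ox\pi)^\wedge=(\tr p)\cdot\widehat{(\tr p)^{-1}\omega}$. Applying the rescaling rule \cref{eq:hat_for_nonstates}, $\widehat{\lambda\omega}=\lambda\,\hat\omega\circ\theta_{\log\lambda}$, with $\lambda=(\tr p)^{-1}$ cancels the prefactor $\tr p$ and leaves $\hat\omega$ composed with the flow transformation at time $\log(\tr p)$, which is the claimed identity. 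Alternatively one can bypass the spectral scale entirely and plug $D_\pi=(\tr p)\chi_{[0,(\tr p)^{-1})}$ into the convolution form \cref{eq:semifinite_conv}, $(\omega\ox\pi)^\wedge=\tau(D_\pi(h_\omega^{-1})\,\placeholder\,)$, using $D_\pi(h_\omega^{-1})=(\tr p)\,\chi_{(\tr p,\oo)}(h_\omega)=(\tr p)\,\tilde\theta_{\log(\tr p)}(e_\omega)$ together with $\tau\circ\tilde\theta_s=e^{-s}\tau$; this yields the same conclusion directly.

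The argument is essentially immediate, so there is no genuine obstacle; the only points that need a line of care are (i) pinning down the endpoint $\tr p$ of the support of $\lambda_\pi$ correctly, and (ii) the non-faithful case, where one has the usual support-projection bookkeeping for $h_\omega$ and its pseudoinverse. The latter can be handled either by invoking that \cref{prop:dual_state} is already stated for arbitrary normal states, or by passing to the support corner $s(\omega)\M s(\omega)$ before applying the proposition.
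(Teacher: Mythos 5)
Your approach coincides with the paper's: the paper's entire proof is the observation that $\lambda_\pi(t)=(\tr p)^{-1}\chi_{[0,\tr p)}(t)$ followed by an appeal to \cref{prop:dual_state}, and your computation of the spectral scale as well as both of your routes (via \eqref{eq:functional_calc} and via \eqref{eq:semifinite_conv}) are the intended argument.

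However, your final step contains a sign slip that you should not gloss over. With $\lambda=(\tr p)^{-1}$, the rescaling rule \eqref{eq:hat_for_nonstates} gives
\begin{equation*}
\widehat{(\tr p)^{-1}\omega}=(\tr p)^{-1}\,\hat\omega\circ\theta_{\log((\tr p)^{-1})}=(\tr p)^{-1}\,\hat\omega\circ\theta_{-\log(\tr p)},
\end{equation*}
so the integral $\int_0^{\tr p}\widehat{(\tr p)^{-1}\omega}\,dt$ evaluates to $\hat\omega\circ\theta_{-\log(\tr p)}$, not $\hat\omega\circ\theta_{\log(\tr p)}$. The same sign comes out of your alternative route: $D_\pi(h_\omega^{-1})=(\tr p)\,\tilde\theta_{\log(\tr p)}(e_\omega)$ is correct, but moving the automorphism onto $z$ via $\tau\circ\tilde\theta_{\log\tr p}=(\tr p)^{-1}\tau$ again yields $\hat\omega\circ\theta_{-\log(\tr p)}$. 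This is the answer consistent with the special case $(\omega\ox\tfrac1n\!\tr)^\wedge=\hat\omega\circ\theta_{-\log n}$ recorded in \cref{lem:dual_state_Mn} (take $p=1$ in $M_n$, so $\tr p=n$), whereas the sign printed in the statement of \cref{cor:projections} disagrees with that special case. So either you have located a sign typo in the statement, or you owe an explanation of where the extra minus sign comes from; as written, your claim that the computation ``leaves $\hat\omega$ composed with the flow transformation at time $\log(\tr p)$'' contradicts your own preceding formulas. The discrepancy is harmless for the downstream applications (\cref{thm:kappa} and \cref{thm:typeII_mbz}), where only $\sup_{s\in\RR}\norm{\hat\omega\circ\theta_s-\hat\omega}$ is used, but the step should be stated correctly.
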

\begin{proof}
    Since the spectral scale of $\pi$ is given by $\lambda_\pi(t) = (\tr p)^{-1} \chi_{[0,\tr p)}(t)$, the result follows from the previous one.
\end{proof}

Since we are mostly interested in the case where $\P=M_n$, we explicitly state the formulae for matrix algebras in terms of eigenvalues:

\begin{corollary}\label{lem:dual_state_Mn}
    Let $\omega$ be a normal state on a von Neumann algebra $\M$ and $\psi$ be a state on $M_n$.
    Then
    \begin{equation}\label{eq:dual_state_Mn}
        (\omega\ox\psi)^\wedge = \sum_i p_i \,\hat\omega\circ\theta_{\log p_i},
    \end{equation}
    where $(p_i)$ are the eigenvalues of the density operator of $\psi$ (repeated according to their multiplicity).
    In particular, we have the following special cases:
    \begin{equation}
        (\omega\ox\bra1\placeholder\ket1)^\wedge=\hat\omega, \qquad (\omega\ox\tfrac1n\!\tr)^\wedge= \hat\omega\circ\theta_{-\log n}.
    \end{equation}
\end{corollary}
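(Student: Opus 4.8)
The plan is to read \eqref{eq:dual_state_Mn} off directly from the second formula \eqref{eq:functional_calc} of \cref{prop:dual_state}, specialized to the semifinite factor $\P=M_n$ equipped with its \emph{standard} (non-normalized) trace $\tr_n$, for which $\tr_n(1)=n$ and the ``density operator'' of a state $\psi\in S(M_n)$ is the usual density matrix $\rho_\psi$ with $\tr_n\rho_\psi=1$. (One could equally well start from the convolution formula \eqref{eq:semifinite_conv} with the distribution function $D_\psi$, but the route via the spectral scale is the shortest.)

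First I would invoke \cref{exa:M_n_spectrum} for the explicit shape of $\lambda_\psi$: ordering the eigenvalues of $\rho_\psi$ decreasingly as $p_1\ge p_2\ge\cdots$ with multiplicities $m_i$, one has $\lambda_\psi(t)=\sum_i p_i\,\chi_{[d_{i-1},d_i)}(t)$ with $d_i=m_1+\cdots+m_i$ and $d_0=0$; thus $\lambda_\psi$ is the step function equal to $p_i$ on the interval $[d_{i-1},d_i)$ of length $m_i$, and equal to $0$ for $t\ge\tr_n s(\psi)$. Plugging this into \eqref{eq:functional_calc} and integrating the step function piece by piece (the integrand vanishes past $t=\tr_n s(\psi)$) gives
\begin{equation*}
    (\omega\ox\psi)^\wedge=\int_0^\oo\bigl(\lambda_\psi(t)\,\omega\bigr)^\wedge\,dt=\sum_i m_i\,(p_i\,\omega)^\wedge,
\end{equation*}
the sum running over the nonzero eigenvalues. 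Now \eqref{eq:hat_for_nonstates} applies with $\lambda=p_i>0$, giving $(p_i\,\omega)^\wedge=p_i\,\hat\omega\circ\theta_{\log p_i}$, so that $(\omega\ox\psi)^\wedge=\sum_i m_i p_i\,\hat\omega\circ\theta_{\log p_i}$. Re-indexing the sum so that each eigenvalue is repeated according to its multiplicity absorbs the factors $m_i$ and yields exactly \eqref{eq:dual_state_Mn}.

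The two special cases then fall out of the general formula. For $\psi=\bra1\placeholder\ket1$ the density matrix is the rank-one projection $\kettbra1$, whose only nonzero eigenvalue is $1$ (so $\lambda_\psi=\chi_{[0,1)}$, cf.\ \eqref{eq:D_l_examples}), and the formula collapses to $\hat\omega\circ\theta_{\log1}=\hat\omega$. For $\psi=\tfrac1n\tr_n$ the density matrix is $\tfrac1n 1$, with the single eigenvalue $\tfrac1n$ of multiplicity $n$ (so $\lambda_\psi=\tfrac1n\chi_{[0,n)}$), and the formula gives $n\cdot\tfrac1n\,\hat\omega\circ\theta_{-\log n}=\hat\omega\circ\theta_{-\log n}$.

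There is essentially no obstacle: the statement is a direct specialization of \cref{prop:dual_state}. The only points that need a moment's care are the bookkeeping of the trace normalization on $M_n$ — one must work with $\tr_n$ rather than the normalized trace, so that the spectral scale is the one computed in \cref{exa:M_n_spectrum} and the products $\hat\omega\circ\theta_{-\log n}$ come out with the right sign — and the harmless fact that a non-faithful $\psi$ contributes eigenvalues $0$ which drop out of the sum, so that $\theta_{\log p_i}$ is only ever evaluated at $p_i>0$.
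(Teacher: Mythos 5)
Your proposal is correct and follows exactly the paper's own route: the paper's proof is the one-line remark that the corollary ``follows immediately from \cref{prop:dual_state} and \cref{exa:M_n_spectrum}'', and your computation --- plugging the step-function spectral scale of \cref{exa:M_n_spectrum} into \eqref{eq:functional_calc}, applying \eqref{eq:hat_for_nonstates} to each term, and re-indexing by multiplicity --- is precisely the intended elaboration. The two special cases are also handled correctly.
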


\begin{proof}
    This follows immediately from \cref{prop:dual_state} and \cref{exa:M_n_spectrum}.
\end{proof}

If $\M$ is semifinite, we recover the convolution formula \cref{eq:dist_convolution} for the distribution function $D_{\omega\ox\psi}$ from \cref{prop:dual_state}, and the transformation formulas \cref{eq:D_amplification,eq:l_amplification} are implied by \cref{lem:dual_state_Mn}.

If $\M$ is a type $\III_\lambda$ factor, we can derive analogous formulas for the spectral distribution function $f_{\omega\ox\psi}$ since $\M\ox\P$ is again a $\III_{\lambda}$ factor \cite[Prop.~28.4]{stratila2020modular}. We may pick a generalized trace $\phi$ on $\M$ and choose $\phi\ox\Tr$ as a generalized trace on $\M\ox\P$.
Similar to \cref{prop:dual_state}, we find:
\begin{align}\label{eq:spectral_dist_conv_1}
f_{\omega\ox\psi}(t) & = (\tau_{0}\ox\Tr)(\chi_{t}(\bar{h}_{\omega\ox\psi})) =  (\tau_{0}\ox\Tr)(\chi_{t}(\bar{h}_{\omega}\ox\rho_{\psi})) \nonumber \\
& = \int_{\RR^{+}}\!\!\!\tau_{0}(\chi_{t}(s\, \bar{h}_{\omega}))\,\Tr dp_{\psi}(s) = \int_{\RR^{+}}\!\!\!f_{\omega}(s^{-1}t)\,\Tr dp_{\psi}(s) \nonumber \\
& = \Tr((f_{\omega}\!\ast\!p_{\psi})(t)) = \Tr(f_{\omega}(t\,\rho_{\psi}^{-1})),
\intertext{and equivalently}
f_{\omega\ox\psi}(t) & = \!\int_{\RR^{+}}\!\!\!\tau_{0}(d\bar{p}_{\omega}(\mu))\Tr(\chi_{t}(\mu\,\rho_{\psi})) = \!\int_{\RR^{+}}\!\!\!\tau_{0}(d\bar{p}_{\omega}(\mu))D_{\psi}(\mu^{-1}t)\nonumber \\ \label{eq:spectral_dist_conv_2}
& = \tau_{0}((D_{\psi}\!\ast\!\bar{p}_{\omega})(t)) = \tau_{0}(D_{\psi}(t\bar{h}_{\omega}^{-1})),
\end{align}
where $\bar{p}_{\omega}$ and $p_{\psi}$ are the spectral measures of $\bar{h}_{\omega}$ and $\rho_{\psi}$. By \cref{eq:functional_calc}, we may write \cref{eq:spectral_dist_conv_1} in terms of the spectral scale $\lambda_{\psi}$ of $\psi$:
\begin{align}\label{eq:spectral_dist_conv_3}
    f_{\omega\ox\psi}(t) & = \int_{\RR^{+}}f_{\omega}(\lambda_{\psi}(s)^{-1}t)ds.
\end{align}
It follows immediately that the spectral distribution function $f_{\omega}$ satisfies the conditions of \cref{thm:spectral_charac}. Due to \cref{lem:f} we conclude:
\begin{corollary}\label{cor:mbz_f}
    Let $\M$ be a type $\III_{\lambda}$ factor and $\omega\in S_{*}(\M)$ a normal state. Then, $\omega$ is embezzling if and only if $f_{\omega}(t)\propto\tfrac{1}{t}$ or, equivalently, $dP_{\omega}(t) \propto dt$ (i.e.~$P_{\omega}$ is translation invariant). 
\end{corollary}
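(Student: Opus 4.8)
The plan is to read off both implications from two facts established above: the Haagerup--St\o rmer isometry \cref{thm:distance_of_unitary_orbits}, which turns the distance of unitary orbits into the $L^1$-distance of spectral states, and the convolution formula \cref{lem:dual_state_Mn} for $(\omega\ox\psi)^\wedge$. Recall from \cref{sec:embezzlingstates} that $\omega$ is embezzling if and only if $\omega\ox\psi\sim\omega\ox\phi$ for all states $\psi,\phi\in S(M_n)$ and all $n$, which by \cref{thm:distance_of_unitary_orbits} means $(\omega\ox\psi)^\wedge = (\omega\ox\phi)^\wedge$. I will also use the elementary dictionary between the two descriptions of the spectral state in the $\III_\lambda$ case: by \cref{eq:FOW_of_type_III_lambda}, $dP_\omega(t) = f_\omega(e^{-t})e^{-t}\,dt$ on $[0,\gamma_0)$, so $f_\omega(t)\propto 1/t$ on $(0,\oo)$ is equivalent to $dP_\omega$ being a constant multiple of $dt$; since $P_\omega$ is a probability measure this forces $dP_\omega=\gamma_0^{-1}\,dt$, i.e.\ $P_\omega$ is the normalized Haar measure of the circle $\RR/\gamma_0\ZZ$, equivalently $P_\omega$ is $F$-invariant, equivalently $\hat\omega\circ\theta_s=\hat\omega$ for all $s\in\RR$.

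For ``$\Leftarrow$'', suppose $f_\omega(t)\propto 1/t$, so that $\hat\omega\circ\theta_s=\hat\omega$ for all $s$. For any $\psi\in S(M_n)$ with density-operator eigenvalues $p_i$ (repeated with multiplicity), \cref{lem:dual_state_Mn} gives
\begin{equation}
    (\omega\ox\psi)^\wedge = \sum_i p_i\,\hat\omega\circ\theta_{\log p_i} = \Big(\sum_i p_i\Big)\hat\omega = \hat\omega,
\end{equation}
and likewise $(\omega\ox\phi)^\wedge=\hat\omega$ for every other $\phi\in S(M_n)$. Hence $(\omega\ox\psi)^\wedge=(\omega\ox\phi)^\wedge$ for all $\psi,\phi$ and all $n$, so by \cref{thm:distance_of_unitary_orbits} the state $\omega$ is embezzling.

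For ``$\Rightarrow$'', I apply \cref{thm:spectral_charac} in its distribution-function form to the assignment $\omega\mapsto f_\omega$, which is defined on $\M$ and, via the generalized trace $\phi\ox\Tr$, on each $M_n(\M)\cong\M\ox M_n$ (again a $\III_\lambda$ factor). The two hypotheses hold: $f_\omega$ is right-continuous by \cref{lem:f}; if $\omega\sim\varphi$ then $\hat\omega=\hat\varphi$ by \cref{thm:distance_of_unitary_orbits}, so $P_\omega=P_\varphi$, so $f_\omega=f_\varphi$ a.e.\ by \cref{eq:FOW_of_type_III_lambda} and hence, using the scaling relation $f_\omega(\lambda t)=\lambda^{-1}f_\omega(t)$ of \cref{lem:f}, a.e.\ on all of $(0,\oo)$, whence everywhere by right-continuity; and specializing the convolution formula \cref{eq:spectral_dist_conv_1} to $\psi=\bra1\placeholder\ket1$ and to $\psi=\tfrac1n\Tr$ yields $f_{\omega\ox\bra1\placeholder\ket1}(t)=f_\omega(t)$ and $f_{\omega\ox\frac1n\Tr}(t)=n\,f_\omega(nt)$, which is exactly the amplification rule \cref{eq:D_amplification} required by \cref{thm:spectral_charac}. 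Thus $\omega$ embezzling forces $f_\omega(t)\propto 1/t$, which by the dictionary above is the same as $dP_\omega(t)\propto dt$.

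I do not expect a genuine obstacle; the only delicate points are bookkeeping — keeping straight the correspondence between the spectral distribution function on $(0,\oo)$ and the probability measure $P_\omega$ on the circle $[0,\gamma_0)$, upgrading almost-everywhere to everywhere equality via right-continuity, and making sure the two chosen amplifications reproduce the distribution-function transformation rule \cref{eq:D_amplification} rather than the spectral-scale rule \cref{eq:l_amplification}.
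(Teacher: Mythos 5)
Your proposal is correct and follows essentially the same route as the paper, which derives the forward implication by checking that $f_\omega$ satisfies the hypotheses of \cref{thm:spectral_charac} via the convolution formulas \eqref{eq:spectral_dist_conv_1}--\eqref{eq:spectral_dist_conv_3}, and the converse from \cref{lem:f} together with the Haagerup--St\o rmer distance formula; your use of \cref{lem:dual_state_Mn} for the converse is just the same computation phrased for $\hat\omega$ instead of $f_\omega$. The bookkeeping points you flag (a.e.\ versus everywhere via right-continuity, and matching the $D$-type rather than $\lambda$-type amplification rule) are exactly the details the paper leaves implicit, and you resolve them correctly.
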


\subsection{Quantification of embezzlement}\label{sec:quantification}

To quantify how good a state $\omega$ is at the task of embezzling, we define
\begin{equation}\label{eq:kappa_def}
    \kappa(\omega) =  \sup_{\psi,\,\phi}\, \inf_{u}\, \norm{\omega\ox \psi - u(\omega\ox\phi)u^*},
\end{equation}
where the supremum is over all states $\psi,\phi$ on $M_n$ (and over all $n\in\NN$) and where the infimum is over all unitaries $u\in M_n(\M)$. 

\begin{theorem}\label{thm:kappa}
    Let $\omega$ be a normal state on a von Neumann algebra $\M$. 
    Then
    \begin{equation}
        \kappa(\omega) = \sup_{s\in\RR}\, \norm{\hat\omega\circ\theta_s - \hat\omega}.
    \end{equation}
\end{theorem}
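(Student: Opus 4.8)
The plan is to transport the claimed identity to the flow of weights $(Z(\N),\theta)$ of $\M$ and then settle it by an elementary measure-theoretic comparison. Since the matrix amplification $M_n(\M)$ has the same flow of weights as $\M$ (see \cref{sec:fow_semifinite}), \cref{thm:distance_of_unitary_orbits} applied to $M_n(\M)$ gives, for all states $\psi,\phi\in S(M_n)$,
\begin{equation*}
    \inf_{u\in\U(M_n(\M))}\norm{\omega\ox\psi-u(\omega\ox\phi)u^*}=\norm{(\omega\ox\psi)^\wedge-(\omega\ox\phi)^\wedge},
\end{equation*}
where the norm on the right is taken in the predual $Z(\N)_*$. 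By \cref{lem:dual_state_Mn}, $(\omega\ox\psi)^\wedge=\sum_i p_i\,\hat\omega\circ\theta_{\log p_i}=\int_\RR\hat\omega\circ\theta_s\,d\nu_\psi(s)$, where the $p_i$ are the eigenvalues of the density operator of $\psi$ and $\nu_\psi:=\sum_i p_i\,\delta_{\log p_i}$ is a finitely supported probability measure on $\RR$, and likewise for $\phi$. Setting $g(s):=\hat\omega\circ\theta_s-\hat\omega\in Z(\N)_*$ and using that $\nu_\psi,\nu_\phi$ are probability measures, we obtain
\begin{equation*}
    \kappa(\omega)=\sup_{n\in\NN}\ \sup_{\psi,\phi\in S(M_n)}\ \left\|\int_\RR g(s)\,d\nu_\psi(s)-\int_\RR g(s)\,d\nu_\phi(s)\right\|.
\end{equation*}
Thus the theorem reduces to showing that this supremum equals $\sup_{s\in\RR}\norm{g(s)}$.

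For the upper bound, fix $\psi,\phi$ and take the Jordan decomposition $\nu_\psi-\nu_\phi=\mu_+-\mu_-$ of the (finitely supported, total mass zero) signed measure $\nu_\psi-\nu_\phi$ into mutually singular non-negative measures. Then $\mu_+(\RR)=\mu_-(\RR)=:c$, and $c\le1$ since $\mu_+\le\nu_\psi$ as measures. If $c=0$ the two spectral states agree; otherwise write $\tilde\mu_\pm:=c^{-1}\mu_\pm$ for the associated probability measures, so that, using $\int d\tilde\mu_\mp=1$,
\begin{equation*}
    \int_\RR g\,d\nu_\psi-\int_\RR g\,d\nu_\phi
    = c\int_\RR\int_\RR\bigl(\hat\omega\circ\theta_s-\hat\omega\circ\theta_t\bigr)\,d\tilde\mu_+(s)\,d\tilde\mu_-(t).
\end{equation*}
Now $\hat\omega\circ\theta_s-\hat\omega\circ\theta_t=g(s-t)\circ\theta_t$, and since $\theta_t$ is a $^*$-automorphism of $Z(\N)$ it acts isometrically on the predual; hence $\norm{\hat\omega\circ\theta_s-\hat\omega\circ\theta_t}=\norm{g(s-t)}\le\sup_{r\in\RR}\norm{g(r)}$. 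Taking norms in the displayed identity and using $c\le1$ yields $\norm{(\omega\ox\psi)^\wedge-(\omega\ox\phi)^\wedge}\le\sup_{r}\norm{g(r)}$, whence $\kappa(\omega)\le\sup_{s\in\RR}\norm{\hat\omega\circ\theta_s-\hat\omega}$.

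For the lower bound, I would test $\kappa(\omega)$ on maximally mixed states. Given $N,M\in\NN$, let $\psi$ (resp.\ $\phi$) be the state on $M_{NM}$ whose density operator is the maximally mixed state on $M_N$ (resp.\ $M_M$) padded with zero eigenvalues; then $\nu_\psi=\delta_{-\log N}$ and $\nu_\phi=\delta_{-\log M}$, so $(\omega\ox\psi)^\wedge=\hat\omega\circ\theta_{-\log N}$, $(\omega\ox\phi)^\wedge=\hat\omega\circ\theta_{-\log M}$, and by the isometry property of $\theta_{-\log M}$,
\begin{equation*}
    \kappa(\omega)\ \ge\ \norm{\hat\omega\circ\theta_{-\log N}-\hat\omega\circ\theta_{-\log M}}\ =\ \norm{g(\log(M/N))}.
\end{equation*}
Hence $\kappa(\omega)\ge\sup_{q\in\QQ,\,q>0}\norm{g(\log q)}$. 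Since the dual action $\tilde\theta$ is point-$\sigma$-weakly continuous, so is its restriction $\theta$ to $Z(\N)$, and therefore the induced one-parameter group of isometries on $Z(\N)_*$ is strongly continuous (a standard fact for $W^*$-dynamical systems). Consequently $s\mapsto\norm{g(s)}$ is continuous, so its supremum over the dense set $\{\log q:q\in\QQ,\,q>0\}$ equals $\sup_{s\in\RR}\norm{g(s)}$, and combining the two bounds proves the claim.

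The step I expect to be the main obstacle is the upper bound: a direct triangle inequality only gives the twice-too-large estimate $\kappa(\omega)\le 2\sup_s\norm{g(s)}$, and the non-affineness of $\omega\mapsto\hat\omega$ blocks a naive linear argument, so the essential point is to route the estimate through the normalized Jordan decomposition of $\nu_\psi-\nu_\phi$, which produces the saving factor $c\le1$. A secondary point is the passage from rational to arbitrary real $s$ in the lower bound, where the strong continuity of the flow on the predual is used.
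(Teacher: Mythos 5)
Your proof is correct and follows essentially the same route as the paper's: both reduce $\kappa(\omega)$ via Theorem~\ref{thm:distance_of_unitary_orbits} and Corollary~\ref{lem:dual_state_Mn} to comparing convex combinations $\sum_i p_i\,\hat\omega\circ\theta_{\log p_i}$, obtain the lower bound from normalized traces of projections of different ranks together with density of $\log\QQ^+$ and continuity of the flow, and obtain the upper bound by a coupling argument that avoids the naive factor of $2$. The only (cosmetic) difference is in the upper bound, where the paper couples $\nu_\psi$ and $\nu_\phi$ directly by writing the difference as $\sum_{ij}p_iq_j(\hat\omega\circ\theta_{\log p_i}-\hat\omega\circ\theta_{\log q_j})$, whereas you first pass through the Jordan decomposition of $\nu_\psi-\nu_\phi$ before coupling the normalized parts; both exploit the same isometry $\norm{\hat\omega\circ\theta_s-\hat\omega\circ\theta_t}=\norm{\hat\omega\circ\theta_{s-t}-\hat\omega}$.
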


\begin{proof}
    We denote the right-hand side by $\nu(\omega)$. By \cref{thm:distance_of_unitary_orbits}, we have $\kappa(\omega) = \sup_{\psi,\phi} \norm{(\omega\ox\psi)^\wedge-(\omega\ox\phi)^\wedge}$.
    Fix some $n\in\NN$. Let $m\le n$, let $p_m\in M_n$ be an $m$-dimensional projection and set $\pi_m = \frac1m\tr(p_m\placeholder)$.
    Note that $\pi_n=\frac1n\!\tr$.
    By \cref{lem:dual_state_Mn}, we have
    \begin{align*}
        \kappa(\omega) \ge \norm{(\omega\ox\pi_n)^\wedge-(\omega\ox\pi_m)^\wedge} = \norm{\hat\omega\circ\theta_{-\log n}-\hat\omega\circ\theta_{-\log m}}
        =\norm{\hat\omega\circ\theta_{\log\frac mn}-\hat\omega}.
    \end{align*}
    Since $\log(\mathbb Q^+)$ is dense in $\RR$ and since $\theta$ is continuous, we obtain $\nu(\omega)\le\kappa(\omega)$ by taking the supremum over all $n,m$.
    Conversely, let $\psi,\phi$ be states on $M_n$ with eigenvalues $(p_i)$ and $(q_i)$, respectively (repeated according to their multiplicity). Then
    \begin{align*}
        \norm{(\omega\ox\psi)^\wedge-(\omega\ox\phi)^\wedge}
        &= \norm{\sum_ip_i\hat\omega\circ\theta_{\log p_i}-\sum_j q_j\hat\omega\circ\theta_{\log q_j}}\\
        &= \norm{\sum_{ij} p_iq_j (\hat\omega\circ\theta_{\log p_i}-\hat\omega\circ\theta_{\log q_j})}\\
        &\le \sum_{ij} p_iq_j\norm{\hat\omega\circ\theta_{\log p_i}-\hat\omega_{\log q_j}}\\
        &= \sum_{ij}p_iq_j \norm{\hat\omega \circ \theta_{\log\frac{p_j}{q_i}}-\hat\omega}
        \le \sum_{ij} p_iq_j \nu(\omega) = \nu(\omega).
    \end{align*}
    Since $\psi,\phi$ were arbitrary, this shows $\nu(\omega)\ge \kappa(\omega)$.
\end{proof}

\begin{corollary}\label{cor:invariant_iff_mbz}
    A normal state $\omega$ on a von Neumann algebra $\M$ is embezzling if and only if $\hat\omega$ is invariant under the flow of weights, i.e., $\hat\omega\circ\theta_s=\hat\omega$ for all $s\in\RR$.
    Conversely, if $Z(\N)$ has a normal state $\chi$ that is invariant under the flow of weights, then there exists an embezzling state $\omega\in S_*(\M)$ with $\chi=\hat\omega$.
\end{corollary}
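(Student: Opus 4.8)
The plan is to deduce \cref{cor:invariant_iff_mbz} from \cref{thm:kappa}. For the first assertion, recall that $\omega$ is embezzling if and only if $\kappa(\omega)=0$: indeed, $\omega$ is embezzling precisely when $\omega\ox\bra1\placeholder\ket1\sim\omega\ox\psi$ for every state $\psi\in S(M_n)$ and every $n$, and by \cref{thm:distance_of_unitary_orbits} this is equivalent to $\norm{(\omega\ox\bra1\placeholder\ket1)^\wedge-(\omega\ox\psi)^\wedge}=0$ for all such $\psi$; taking the supremum over $\psi,\phi$ shows this holds exactly when $\kappa(\omega)=0$ (one direction is the definition of $\kappa$, the other uses that a single supremum vanishes iff each term does, together with the triangle inequality as in the proof of \cref{thm:kappa}). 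By \cref{thm:kappa}, $\kappa(\omega)=\sup_{s\in\RR}\norm{\hat\omega\circ\theta_s-\hat\omega}$, which vanishes if and only if $\hat\omega\circ\theta_s=\hat\omega$ for all $s\in\RR$, i.e.\ $\hat\omega$ is invariant under the flow of weights. This proves the first statement.

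For the converse, suppose $\chi$ is a normal state on $Z(\N)$ with $\chi\circ\theta_s=\chi$ for all $s\in\RR$. I want to produce a normal state $\omega$ on $\M$ with $\hat\omega=\chi$; by the first part, any such $\omega$ is automatically embezzling. The key point is that the map $\omega\mapsto\hat\omega$ from $S_*(\M)$ to the normal state space of $Z(\N)$ is surjective. This surjectivity should be extracted from the Haagerup--St\o rmer machinery: in the explicit realizations of the flow of weights given in \cref{prop:FOW_semifinite} and \cref{prop:FOW_of_type_III_lambda}, every $\mu$-absolutely continuous probability measure $P$ on $X$ arises as $P_\omega$ for some normal state $\omega$ (e.g.\ in the type $\III_\lambda$ case this is exactly the "conversely" clause of \cref{lem:f}, which says every right-continuous non-increasing $f$ satisfying \eqref{eq:f_props} equals $f_\omega$ for some $\omega$; in the semifinite case one inverts \eqref{eq:FOW_semifinite} fibrewise using that every right-continuous non-increasing probability density on $\RR^+$ is a distribution function/spectral scale). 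Since a normal state $\chi$ on $Z(\N)=L^\oo(X,\mu)$ is implemented by a $\mu$-absolutely continuous probability measure, we obtain $\omega$ with $\hat\omega=\chi$, and since $\chi$ is $\theta$-invariant, $\omega$ is embezzling.

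The main obstacle is justifying surjectivity of $\omega\mapsto\hat\omega$ in the generality of an arbitrary separable von Neumann algebra, rather than only for semifinite algebras or type $\III_\lambda$ factors where explicit parametrizations are available. One clean route is a direct-integral reduction via \eqref{eq:FOW_direct_integral}: decompose $\M=\int_Y^\oplus\M_y\,d\nu(y)$ over its center into factors, so that $(Z(\N),\theta,\,\hat{}\,)=\int_Y^\oplus(Z(\N_y),\theta_y,\,\hat{}\,)\,d\nu(y)$, and reduce to the factor case; for factors one uses the type classification, handling type $\I$, $\II$, $\III_\lambda$ ($0\le\lambda<1$) and $\III_1$ separately — the $\III_1$ case being trivial since $Z(\N)=\CC$ and the $\III_0$ case requiring a separate construction of a state with prescribed spectral measure. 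Alternatively, and more economically, one can cite the Haagerup--St\o rmer result directly: \cref{thm:distance_of_unitary_orbits} together with the construction in \cite{haagerup1990equivalence} identifies the quotient $S_*(\M)/\!\sim$ isometrically with the set of $\mu$-absolutely continuous probability measures on $X$, which already contains the surjectivity statement we need. I would structure the write-up around this citation, with the direct-integral argument mentioned as the underlying mechanism, so the only genuinely new content is the short equivalence $\kappa(\omega)=0\iff\omega$ embezzling $\iff\hat\omega$ is $\theta$-invariant.
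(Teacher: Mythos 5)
The first half of your argument (embezzling $\iff$ $\kappa(\omega)=0$ $\iff$ $\hat\omega$ is $\theta$-invariant, via \cref{thm:kappa}) is correct and is exactly how the paper handles that direction.

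The converse direction has a genuine gap: the map $S_*(\M)\ni\omega\mapsto\hat\omega$ is \emph{not} surjective onto $S_*(Z(\N))$, and the Haagerup--St\o rmer theorem does not identify $S_*(\M)/\!\sim$ with \emph{all} $\mu$-absolutely continuous probability measures on $X$. Its range is the proper subset $\{\chi\in S_*(Z(\N)) : \chi\circ\theta_s\ge e^{-s}\chi\ \forall s>0\}$ (this is precisely how the paper uses the main theorem of \cite{haagerup1990equivalence}, e.g.\ in the proof of \cref{thm:universal_embezzzeling_algebra}). Your own cited evidence already shows non-surjectivity: in the semifinite case $P_\omega=D_\omega(t)\,dt$ with $D_\omega$ non-increasing and taking values in the range of the trace on projections, so e.g.\ for $\M=\B(\H)$ the uniform measure on $[1,2]\subset(0,\oo)$ is not any $P_\omega$; similarly, in the type $\III_\lambda$ case the "conversely" clause of \cref{lem:f} produces states only for \emph{non-increasing} $f$ satisfying \eqref{eq:f_props}, which again carves out a proper subset of the absolutely continuous measures on $[0,\gamma_0)$. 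So the premise your converse rests on is false, even though the conclusion is true.

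The repair is short and is what the paper does: first split $\M=\M_0\oplus\M_1$ into its semifinite and type $\III$ summands and note, via \cref{prop:FOW_semifinite} and \eqref{eq:FOW_direct_integral}, that the flow of weights of $\M_0$ admits no invariant normal state, so any invariant $\chi$ is of the form $0\oplus\chi_1$ supported on the type $\III$ part. Then observe that $\theta$-invariance trivially implies the Haagerup--St\o rmer range condition, since $\chi_1\circ\theta_s=\chi_1\ge e^{-s}\chi_1$ for $s>0$; hence $\chi_1=\hat\omega_1$ for some $\omega_1\in S_*(\M_1)$, and $\omega=0\oplus\omega_1$ is the desired embezzling state by the first part. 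Your proposed case-by-case direct-integral reduction is unnecessary once one uses the range characterization rather than surjectivity.
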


\begin{proof}
    The first statement is a direct consequence of \cref{thm:kappa}. 
    We can decompose $\M = \M_0 \oplus \M_1$ with $\M_0$ semifinite and $\M_1$ type $\III$.
    By \eqref{eq:FOW_direct_integral}, the flow of weights of $\M$ is the direct sum of the flow of weights of $\M_0$ and $\M_1$.
    By \cref{prop:FOW_semifinite}, the flow of weights of $\M_0$ does not admit an invariant state.
    Hence, an invariant state $\chi\in S_*(Z(\N))=S_*(Z(\N_0)\oplus Z(\N_1))$ is of the form $\chi=0\oplus\chi_1$.
    Since $\theta_1$-invariance trivially implies the inequality $\chi_1\circ(\theta_1)_s \ge e^{-s} \chi_1$, $s>0$, the main theorem of \cite{haagerup1990equivalence} implies that there exists an $\omega_1\in S_*(\M_1)$ such that $\hat\omega_1=\chi_1$.
    Setting $\omega=0\oplus\omega_1\in S_*(\M)$, we obtain a state whose spectral state is invariant under the flow of weights. Thus, $\omega$ is embezzling.
\end{proof}

\begin{corollary}\label{cor:embezzlers_are_unitarily_equivalent}
    If $\M$ is a factor, any two embezzling states $\omega_1,\omega_2$ are (approximately) unitarily equivalent.
\end{corollary}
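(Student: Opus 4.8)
The statement is non-vacuous only when $\M$ admits an embezzling state, so suppose $\omega_1,\omega_2\in S_*(\M)$ are both embezzling. By \cref{cor:invariant_iff_mbz}, their spectral states $\hat\omega_1,\hat\omega_2$ on $Z(\N)$ are invariant under the flow of weights $\theta$; equivalently, in a geometric realization $(X,\mu,F)$ of the flow of weights, the $\mu$-absolutely continuous probability measures $P_{\omega_1},P_{\omega_2}$ are $F$-invariant. Since $\M$ is a factor, the flow $(X,\mu,F)$ is ergodic, i.e.\ every $F$-invariant Borel set is $\mu$-null or $\mu$-conull. The plan is to show that an ergodic flow carries at most one $\mu$-absolutely continuous invariant probability measure, whence $\hat\omega_1=\hat\omega_2$, and then to conclude by the distance formula of \cref{thm:distance_of_unitary_orbits}.

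For the uniqueness step I would set $\nu=\tfrac12(P_{\omega_1}+P_{\omega_2})$. Then $\nu$ is a genuinely $F$-invariant probability measure that is still absolutely continuous with respect to $\mu$, and $P_{\omega_i}\ll\nu$ with density $g_i=dP_{\omega_i}/d\nu\in L^1(X,\nu)$. Because both $P_{\omega_i}$ and $\nu$ are $F$-invariant, each $g_i$ is $F$-invariant $\nu$-almost everywhere. Every $F$-invariant set is $\mu$-null or $\mu$-conull, hence (as $\nu\ll\mu$) is $\nu$-null or $\nu$-conull, so $F$-invariant functions are $\nu$-a.e.\ constant; since $\int g_i\,d\nu=P_{\omega_i}(X)=1=\nu(X)$, that constant is $1$. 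Therefore $P_{\omega_1}=\nu=P_{\omega_2}$, i.e.\ $\hat\omega_1=\hat\omega_2$. Finally, \cref{thm:distance_of_unitary_orbits} gives $\inf_{u\in\U(\M)}\norm{u\omega_1u^*-\omega_2}=\norm{\hat\omega_1-\hat\omega_2}=0$, which is exactly approximate unitary equivalence $\omega_1\sim\omega_2$.

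The only point requiring real care is the passage from the merely quasi-invariant reference measure $\mu$ to the honestly invariant $\nu$: one cannot argue directly that $dP_{\omega_i}/d\mu$ is $F$-invariant, since the Radon--Nikodym derivatives with respect to the quasi-invariant $\mu$ satisfy a cocycle identity rather than strict invariance. Working inside $(X,\nu)$, where $F$ acts by measure-preserving transformations, bypasses this, and ergodicity is inherited from $\mu$ simply because $F$-invariant sets that are $\mu$-trivial remain $\nu$-trivial. Everything else is immediate from the cited results, so I expect no further obstacle.
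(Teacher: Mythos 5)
Your proof is correct and follows essentially the same route as the paper: embezzling states have $\theta$-invariant spectral states (Cor.~\ref{cor:invariant_iff_mbz}), ergodicity of the flow of weights for a factor forces uniqueness of the invariant absolutely continuous probability measure, and Thm.~\ref{thm:distance_of_unitary_orbits} converts $\hat\omega_1=\hat\omega_2$ into approximate unitary equivalence. The only difference is that you spell out the uniqueness step (passing to the honestly invariant measure $\nu=\tfrac12(P_{\omega_1}+P_{\omega_2})$ to sidestep the quasi-invariance of $\mu$), which the paper simply asserts; this added detail is correct.
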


\begin{proof}
    Since $\M$ is a factor, the flow of weights $(Z(\N),\theta)$ is ergodic. Hence, $S_*(Z(\N))$ contains at most one state which is invariant under the flow of weights. 
    Therefore, $Z(\N)$ can have at most one invariant normal state.
    By \cref{thm:distance_of_unitary_orbits}, this implies the claim: if $\omega_1$ and $\omega_2$ are embezzling, then they have the same spectral state on the flow of weights and, hence, are unitarily equivalent.
    Therefore, $\hat\omega_1=\hat\omega_2$ if both $\omega_1$ and $\omega_2$ are embezzling states.
\end{proof}

\begin{corollary}\label{cor:embezzling_needs_typeIII}
    If $\M$ is a von Neumann algebra and $\omega\in S_*(\M)$ is an embezzling state, then $s(\omega)\M s(\omega)$ is a type $\III$ von Neumann algebra.
\end{corollary}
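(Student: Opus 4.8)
The plan is to reduce immediately to the faithful case and then invoke the semifinite/type~$\III$ dichotomy already established. First I would set $\M_0 = s(\omega)\M s(\omega)$ and let $\omega_0$ denote the restriction of $\omega$ to $\M_0$. By construction $\omega_0$ is a \emph{faithful} normal state on $\M_0$ (its support projection is $s(\omega)$, which is the identity of $\M_0$), and by \cref{cor:faithful} the fact that $\omega$ is embezzling on $\M$ is equivalent to $\omega_0$ being embezzling on $\M_0$. So it suffices to show: a von Neumann algebra carrying a faithful embezzling state is of type~$\III$.

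Next I would use the canonical direct sum decomposition $\M_0 = \P\oplus\R$, where $\P$ is semifinite and $\R$ is of type~$\III$ (this decomposition exists and is unique for every von Neumann algebra). Applying \cref{cor:no_semifinite_embezzling} to $\M_0$ with this decomposition, the fact that $\omega_0$ is embezzling forces $\omega_0 = 0\oplus\phi$ for some embezzling state $\phi$ on $\R$; in particular the $\P$-component of $\omega_0$ vanishes, i.e.\ $\omega_0(p)=0$ where $p\in Z(\M_0)$ is the central projection with $\P = p\M_0 p$.

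Finally, faithfulness of $\omega_0$ closes the argument: $\omega_0(p)=0$ together with $p\ge0$ and $\omega_0$ faithful gives $p=0$, hence $\P=\{0\}$ and $\M_0 = \R$ is of type~$\III$. There is essentially no obstacle here once \cref{cor:faithful,cor:no_semifinite_embezzling} are in place — the only point to be careful about is that the support-corner really does make $\omega_0$ faithful and that $\M_0$ inherits separability (a corner of a separable von Neumann algebra is separable), so that the standing assumptions under which \cref{cor:no_semifinite_embezzling} is stated remain in force.
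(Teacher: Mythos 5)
Your proof is correct and follows essentially the same route as the paper's: both rest on the canonical decomposition into a semifinite and a type $\III$ summand together with the fact (\cref{cor:no_semifinite_embezzling}, or equivalently the flow-of-weights argument in \cref{cor:invariant_iff_mbz}) that an embezzling state must vanish on the semifinite part. The only difference is cosmetic — the paper decomposes $\M$ first and then notes that a corner of a type $\III$ algebra is type $\III$, whereas you pass to the support corner first via \cref{cor:faithful} and use faithfulness to kill the semifinite summand; both closing steps are immediate.
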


\begin{proof}
    Let $\M=\M_0\oplus\M_1$ be the direct sum decomposition into a semifinite and a type $\III$ von Neumann algebra. 
    As argued in \cref{cor:invariant_iff_mbz}, an embezzling state $\omega$ is of the form $\omega=0\oplus\omega_1$ with $\omega_1\in S_*(\M)$ being embezzling. Since corners of type $\III$ algebras are type $\III$, the result follows.
\end{proof}

In particular, there are no embezzling states on semifinite von Neumann algebras.
This can also be seen from the explicit realization of the flow of weights of a semifinite von Neumann algebra $\M$ (see \cref{sec:FOW_semifinite}), which shows that the flow of weights cannot admit an invariant state.
With this approach, we can show that more is true:
Not only do semifinite von Neumann algebras admit no embezzling states, they also do not admit any form of approximate embezzlement:

\begin{corollary}\label{cor:kappa_semifinite}
    If $\M$ is a semifinite von Neumann algebra with separable predual, we have $\kappa(\omega) = 2$ for all normal states $\omega$ on $\M$.
\end{corollary}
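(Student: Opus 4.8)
The plan is to compute $\kappa(\omega)$ exactly via the flow of weights and to show that the worst-case distance between unitary orbits is already maximal. By Theorem~\ref{thm:kappa} (cf.\ \eqref{eq:intro_kappa}), $\kappa(\omega)=\sup_{s\in\RR}\norm{F_s(P_\omega)-P_\omega}$; since $P_\omega$ and $F_s(P_\omega)$ are probability measures on the standard Borel space $X$, this is automatically at most $2$, so the whole task is to exhibit $s$ making $\norm{F_s(P_\omega)-P_\omega}$ arbitrarily close to $2$. Writing $\M=\int_Y^\oplus\M_y\,d\nu(y)$ over its center (as in Proposition~\ref{prop:FOW_semifinite}), the flow of weights is $X=Y\times(0,\oo)$, $\mu=\nu\times dt$, $F_s(y,t)=(y,e^{-s}t)$, and $P_\omega$ has $\mu$-density $f(y,u):=D_{\omega_y}(u)$. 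A change of variables $v=e^su$ shows that $F_s(P_\omega)$ has $\mu$-density $g_s(y,u)=e^sD_{\omega_y}(e^su)$, with $\int_0^\oo g_s(y,u)\,du=\int_0^\oo D_{\omega_y}(v)\,dv=\omega_y(1)$ independent of $s$. In particular $f,g_s\ge0$ are densities of probability measures, and $\int_X f\,d\mu=\int_Y\omega_y(1)\,d\nu(y)=\omega(1)=1$, so $f\in L^1(X,\mu)$.

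Next I would use $|a-b|=a+b-2\min(a,b)$ for $a,b\ge0$ to write $\norm{F_s(P_\omega)-P_\omega}=\int_X|f-g_s|\,d\mu=2-2\int_X\min(f,g_s)\,d\mu$, reducing the claim to $\int_X\min(f,g_s)\,d\mu\to0$ as $s\to\oo$. Fix $\eps>0$. Since $f\in L^1(X,\mu)$ and $\chi_{\{u<1/N\}}\to0$ $\mu$-a.e.\ as $N\to\oo$, dominated convergence yields $N$ with $\int_X\chi_{\{u<1/N\}}f\,d\mu<\eps$. Split $X=\{u<1/N\}\cup\{u\ge1/N\}$. On the first part, $\min(f,g_s)\le f$ contributes less than $\eps$. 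On the second part, $\min(f,g_s)\le g_s$ and the substitution $v=e^su$ give $\int_Y\int_{1/N}^\oo e^sD_{\omega_y}(e^su)\,du\,d\nu(y)=\int_Y\int_{e^s/N}^\oo D_{\omega_y}(v)\,dv\,d\nu(y)=\int_X\chi_{\{u>e^s/N\}}f\,d\mu$, which tends to $0$ as $s\to\oo$ by dominated convergence ($\chi_{\{u>e^s/N\}}\to0$ $\mu$-a.e., dominated by $f$). Choosing $s$ large enough that this term is less than $\eps$ gives $\int_X\min(f,g_s)\,d\mu<2\eps$, hence $\norm{F_s(P_\omega)-P_\omega}>2-4\eps$; letting $\eps\to0$ shows $\sup_{s\in\RR}\norm{F_s(P_\omega)-P_\omega}=2$, i.e.\ $\kappa(\omega)=2$.

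The step I expect to be the main obstacle — and the reason the naive picture ``the density of $F_s(P_\omega)$ runs off to infinity'' must be handled with care — is that $\rho_{\omega_y}$ need not be a bounded operator, so $D_{\omega_y}$ need not be compactly supported and $g_s$ does not have shrinking support as $s\to\oo$. This is precisely why the argument is organized around the integrability of $D_{\omega_y}$ (equivalently $f\in L^1(X,\mu)$) rather than any boundedness, with the truncation to $\{u\ge1/N\}$ absorbing the non-compactly-supported tails near $u=0$. A secondary bookkeeping point is that in the direct integral the components $\omega_y$ are only positive functionals, not states, but $y\mapsto\omega_y(1)$ lies in $L^1(Y,\nu)$, which supplies the domination needed for all the limit passages.
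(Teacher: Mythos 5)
Your proof is correct and follows essentially the same route as the paper's: both exploit that on a semifinite algebra the flow of weights is dilation on $(0,\oo)$ with $P_\omega$ given by the integrable density $D_\omega$, so that dilating by an arbitrarily large factor makes the translated measure asymptotically mutually singular with the original. The only differences are refinements of bookkeeping: you treat the non-factor case directly through the direct integral over the center (the paper proves the factor case and asserts the lift), and you replace the paper's reduction to compactly supported densities by the identity $\norm{f-g}_{L^1}=2-2\int\min(f,g)\,d\mu$ combined with dominated convergence.
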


\begin{proof}
    We only prove the case where $\M$ is a factor. 
    The same argument can be lifted to von Neumann algebras by using direct integration \eqref{eq:FOW_direct_integral}.
    First, note that \cref{thm:kappa} implies 
    \begin{align}
        2\geq\kappa(\omega) \ge
         \norm{\hat\omega\circ\theta_{-\log n}-\hat\omega}, \qquad n\in\NN,\ \omega\in S_*(\M).
    \end{align}
    We will show that the right-hand side converges to $2$ for all normal states $\omega$ on $\M$.
    For this, we use the explicit description of the flow of weights $(X,\mu,F)$ of a semifinite factor as $X=(0,\oo)$, $F_s(t)=e^{-s}t$ and $\mu=dt$ (see in \cref{sec:FOW_semifinite})
    \begin{align}
        \norm{\hat\omega\circ\theta_{-\log n}-\hat\omega} = \int_0^\infty | nD_\omega(nt) - D_\omega(t)|\,dt.
    \end{align}
    In fact, the right-hand side converges to $2$ for any probability density $g\in L^1(0,\oo)$.
    To see this, we may assume that $g$ is supported on $(0,b]$.
    Then $ng(tn)$ is supported on the set $(0,b/n]$ whose measure relative to $g(t)dt$ goes to zero as $n\to\oo$. 
    Set $f_n = \chi_{(0,b/n]}-\chi_{(b/n,\oo)}\in L^\oo(0,\oo)$ and note that $\norm f=1$.
    Then 
    \begin{align}
        \int_0^\oo \abs{ng(nt)-g(t)}\,dt 
        &= \int_0^{b/n}  \abs{ng(nt)-g(t)}\,dt + \int_{b/n}^\oo + \abs{ng(nt)-g(t)}\,dt\nonumber \\
        &\ge \int_0^{b/n} ng(nt)\,dt -\int_0^{b/n}g(t)\,dt + \int_{b/n}^\oo g(t)dt -\int_{b/n}^\oo ng(nt)\,dt\nonumber \\
        &\xrightarrow{n\to\oo} 1-0+1-0 =2.
    \end{align}
\end{proof}

\begin{corollary}\label{cor:kappa_diameter}
    Let $\omega$ be a state on a von Neumann algebra $\M$ which is not a finite type $\I$ factor, then $\kappa(\omega)$ is bounded by
    \begin{equation}
        \kappa(\omega)\le \diam(S_*(\M)/\!\sim).
    \end{equation}
\end{corollary}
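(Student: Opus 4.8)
The plan is to sidestep the flow of weights entirely and reduce the bound to a comparison of state-space diameters under matrix amplification. The starting observation is that, straight from the definition \eqref{eq:kappa_def}, for each $n$ and each pair $\psi,\phi\in S(M_n)$ the inner quantity $\inf_{u\in\U(M_n(\M))}\norm{\omega\ox\psi-u(\omega\ox\phi)u^*}$ is nothing but the quotient-metric distance between the classes of $\omega\ox\psi$ and $\omega\ox\phi$ in $S_*(M_n(\M))/\!\sim$, hence at most $\diam\bigl(S_*(M_n(\M))/\!\sim\bigr)$. Taking the supremum over $\psi,\phi$ and $n\in\NN$ gives
\begin{equation*}
    \kappa(\omega)\ \le\ \sup_{n\in\NN}\,\diam\bigl(S_*(M_n(\M))/\!\sim\bigr),
\end{equation*}
so it remains to show that $\diam\bigl(S_*(M_n(\M))/\!\sim\bigr)=\diam\bigl(S_*(\M)/\!\sim\bigr)$ for every $n$, given that $\M$ is not a finite type~$\I$ factor.

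To establish this I would split into cases, using that $Z(M_n(\M))=Z(\M)$, so $M_n(\M)$ is a factor exactly when $\M$ is. If $\M$ is not a factor then neither is $M_n(\M)$, and any non-factor $\N$ has $\diam\bigl(S_*(\N)/\!\sim\bigr)=2$: choose a central projection $z$ with $0\neq z\neq 1$ and normal states $\varphi_1,\varphi_2$ with $s(\varphi_1)\le z$ and $s(\varphi_2)\le 1-z$; since $z$ is central, $s(u\varphi_1u^*)=us(\varphi_1)u^*\le uzu^*=z$ for all $u\in\U(\N)$, so $u\varphi_1u^*$ and $\varphi_2$ are carried by the complementary central summands $z\N$ and $(1-z)\N$ and thus $\norm{u\varphi_1u^*-\varphi_2}=\norm{u\varphi_1u^*}+\norm{\varphi_2}=2$, while $\diam\le 2$ always. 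If $\M$ is a properly infinite factor (type $\I_\infty$, $\II_\infty$, or $\III$), then $1_\M$ splits into $n$ mutually orthogonal projections each equivalent to $1_\M$, which produces a system of $n\times n$ matrix units in $\M$ and hence $M_n(\M)\cong\M$, so the diameters coincide trivially. The only case left for a factor not of finite type~$\I$ is that $\M$ is finite and not of type~$\I$, i.e.\ a type~$\II_1$ factor; then $M_n(\M)$ is again a type~$\II_1$ factor, and a type~$\II$ factor has state-space diameter $2$ \cite{connes1985diameters}, so once more the diameters agree. Feeding the equality back into the displayed bound finishes the proof.

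The only real point of care is the case analysis certifying that amplification cannot enlarge the diameter; the single scenario where it genuinely would is $\M=M_k$ with $k<\infty$, where $M_n(\M)=M_{nk}$ has diameter $2(1-\tfrac1{nk})>2(1-\tfrac1k)$, which is exactly why the hypothesis excludes finite type~$\I$ factors. Everywhere else one of three elementary facts does the job (non-factors have state-space diameter $2$; properly infinite algebras are isomorphic to all their matrix amplifications; type~$\II_1$ factors have state-space diameter $2$), and no reference to spectral states or the flow of weights is needed for this corollary.
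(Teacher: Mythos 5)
Your argument is correct, and it takes a somewhat different route from the paper's. The paper also splits into cases, but differently: for a non-factor it simply notes the diameter is $2$; for a semifinite factor it invokes $\kappa(\omega)=2$ from \cref{cor:kappa_semifinite} (which rests on the flow-of-weights computation and the standing separability assumption) together with the Connes--St\o rmer value $\diam=2$; and for a type $\III$ factor it uses unitaries $v_n\in M_{n,1}(\M)$ to carry $\omega\ox\psi$ and $\omega\ox\phi$ into $S_*(\M)$ and bound their distance by the diameter there --- which is exactly your properly-infinite amplification isomorphism $M_n(\M)\cong\M$ in spatial form. What you do differently is (i) first isolate the clean inequality $\kappa(\omega)\le\sup_n\diam\bigl(S_*(M_n(\M))/\!\sim\bigr)$, and then (ii) prove stability of the diameter under amplification by a three-way case split that covers type $\I_\infty$ and $\II_\infty$ via the same isomorphism used for type $\III$, handles type $\II_1$ purely by the Connes--St\o rmer diameter formula, and supplies a short self-contained proof that non-factors have diameter $2$ (which the paper only asserts). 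The payoff of your route is that it bypasses \cref{cor:kappa_semifinite} and hence the flow of weights and the separable-predual hypothesis entirely for this corollary, and it pinpoints precisely why finite type $\I$ must be excluded (amplification strictly increases $2(1-\tfrac1k)$ to $2(1-\tfrac1{nk})$); the paper's version is shorter given that the $\kappa=2$ result for semifinite algebras is already in hand and is needed elsewhere anyway.
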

\begin{proof}
    If $\M$ is not a factor, the right-hand side is equal to $2$, which is trivially an upper bound on $\kappa(\omega)$.
    If $\M$ is a semifinite factor, $\kappa(\omega)=2$ by \cref{cor:kappa_semifinite} and $\diam(S_*(\M)/\sim)=2$ by \cite{connes1985diameters}.
    If $\M$ is a type $\III$ factor, then we can pick unitaries $v_n \in M_{n,1}(\M)$ for each $n$ showing that 
    \begin{align}
        \kappa(\omega) &= \sup_n \sup_{\psi,\phi} \inf_{u\in\U(M_n(\M))} \norm{\omega\ox\psi - u(\omega\ox\phi)u^*}\nonumber\\
        &=\sup_n \sup_{\psi,\phi} \inf_{u\in\U(\M)} \norm{v_n(\omega\ox\psi)v_n^* - u(v_n(\omega\ox\phi)v_n^*)u^*} \nonumber\\
        &\le \diam(S_*(\M)/\!\sim).
    \end{align}
\end{proof}

\begin{corollary}\label{cor:kappa_disintegration}
    Let $\M$ be a von Neumann algebra with separable predual. Let $\M = \int^\oplus_Y\M_y\,d\nu(y)$ be a disintegration into factors $\M_y$.
    Let $\omega = \int_Y^\oplus p(y) \omega_y\,d\nu(y)$ be a normal state on $\M$ with $p(y)$ a $\nu$-absolutely continuous probability density and $Y\ni y\mapsto \omega_y\in S_*(\M_y)$ a measurable state-valued map. Then:
    \begin{enumerate}[(i)]
        \item\label{it:kappa_disintegration1}
            $\omega$ is embezzling if and only if $\omega_y\in S_*(\M_y)$ is embezzling for $\nu$-almost all $y$ with $p(y)>0$.
        \item\label{it:kappa_disintegration2}
            $y\mapsto \kappa(\omega_y)$ is measurable and $\kappa(\omega)$ is bounded by
            \begin{equation}\label{eq:kappa_non_factors}
                \kappa(\omega) \le \int_Y p(y) \kappa(\omega_y)\,d\nu(y).
            \end{equation}
        \item\label{it:kappa_disintegration3}
            If $\M$ has a discrete center,  then \eqref{eq:kappa_non_factors} holds with equality.
    \end{enumerate}
\end{corollary}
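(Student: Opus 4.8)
The plan is to reduce all three parts to the behaviour of the spectral state under direct integrals, using the two characterisations already available: $\kappa(\omega)=\sup_{s\in\RR}\norm{\hat\omega\circ\theta_s-\hat\omega}$ (\cref{thm:kappa}) and ``$\omega$ embezzling $\iff$ $\hat\omega$ is invariant under the flow of weights'' (\cref{cor:invariant_iff_mbz}). Write $(Z(\N),\theta,\,\hat{}\;)=\int_Y^\oplus(Z(\N_y),\theta_y,\,\hat{}\;)\,d\nu(y)$ as in \eqref{eq:FOW_direct_integral}, and combine with \eqref{eq:hat_direct_integral} to get $\hat\omega=\int_Y^\oplus p(y)\,\chi_y\,d\nu(y)$ with $\chi_y=\hat\omega_y\circ(\theta_y)_{\log p(y)}$. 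Since each $(\theta_y)_r$ is a $*$-automorphism of $Z(\N_y)$, its predual adjoint is an isometry, so for every $s$ one has $\norm{\chi_y\circ(\theta_y)_s-\chi_y}=\norm{\hat\omega_y\circ(\theta_y)_s-\hat\omega_y}=:g_y(s)$ (the twist by $\log p(y)$ is invisible), and the same isometry argument shows $g_y(-s)=g_y(s)$ and $\sup_s g_y(s)=\kappa(\omega_y)$ by \cref{thm:kappa}. Finally, since $Z(\N)_*=\int_Y^\oplus Z(\N_y)_*\,d\nu(y)$ is an $L^1$-type direct integral, $\norm{\hat\omega\circ\theta_s-\hat\omega}=\int_Y p(y)\,g_y(s)\,d\nu(y)$.

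For \ref{it:kappa_disintegration1}: by \cref{cor:invariant_iff_mbz}, $\omega$ is embezzling iff for every $s$ the function $y\mapsto p(y)g_y(s)$ vanishes in $L^1(\nu)$, i.e.\ iff $g_y(s)=0$ for $\nu$-a.a.\ $y$ with $p(y)>0$. Fixing a countable dense set $\{s_k\}\subset\RR$ and discarding the union of the corresponding null sets, this is equivalent to $g_y(s_k)=0$ for all $k$ and a.a.\ such $y$; by continuity of $\theta_y$ this forces $g_y(s)\equiv0$, i.e.\ $\hat\omega_y$ is $\theta_y$-invariant, i.e.\ (applying \cref{cor:invariant_iff_mbz} to the factor $\M_y$) $\omega_y$ is embezzling. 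This is exactly the asserted statement.

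For \ref{it:kappa_disintegration2}: measurability of $y\mapsto\kappa(\omega_y)$ follows because, exactly as in the proof of \cref{thm:kappa} (density of $\log\QQ^+$ in $\RR$ plus continuity of $\theta_y$), one has $\kappa(\omega_y)=\sup_{q\in\QQ^+}g_y(\log q)$, a \emph{countable} supremum of the measurable functions $y\mapsto g_y(\log q)$ (measurability of the latter being built into the measurable structure of the direct integral of the triples $(\N_y,\tau_y,\tilde\theta_y)$). The inequality is then immediate:
\[
 \kappa(\omega)=\sup_{s\in\RR}\int_Y p(y)\,g_y(s)\,d\nu(y)\ \le\ \int_Y p(y)\,\sup_{s\in\RR}g_y(s)\,d\nu(y)=\int_Y p(y)\,\kappa(\omega_y)\,d\nu(y),
\]
using only $\sup_s\int\le\int\sup_s$.

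For \ref{it:kappa_disintegration3}: a discrete centre means $\M=\bigoplus_{i\in I}\M_i$ with $I$ countable, $\nu$ atomic, and $\omega=\bigoplus_i p(i)\,\omega_i$. The direct-sum computation preceding \cref{cor:no_semifinite_embezzling} gives, for all states $\psi,\phi$ on any $M_n$, $\inf_{u}\norm{\omega\ox\psi-u(\omega\ox\phi)u^*}=\sum_i p(i)\,d_i(\psi,\phi)$, where $d_i(\psi,\phi)=\inf_{u_i}\norm{\omega_i\ox\psi-u_i(\omega_i\ox\phi)u_i^*}$ and $\sup_{\psi,\phi}d_i(\psi,\phi)=\kappa(\omega_i)$; hence $\kappa(\omega)=\sup_{\psi,\phi}\sum_i p(i)\,d_i(\psi,\phi)$. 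The bound ``$\le$'' is part \ref{it:kappa_disintegration2}. For ``$\ge$'' it suffices, given a finite $S\subset I$ and $\eps>0$, to exhibit a single pair $(\psi,\phi)$ with $\sum_{i\in S}p(i)d_i(\psi,\phi)>\sum_{i\in S}p(i)\kappa(\omega_i)-\eps$; by \cref{lem:dual_state_Mn}, $d_i(\pi_a,\pi_b)=g_i(\log(b/a))$ for the uniform states $\pi_a=\tfrac1a\Tr_a$, so what remains is to make the functions $g_i$, $i\in S$, simultaneously $\eps$-close to their suprema $\kappa(\omega_i)$ at one common dilation parameter $s$ (with $e^s\in\QQ$, by density). \textbf{This simultaneity is the main obstacle}: it is false for arbitrary continuous functions, and the proof must use the concrete structure of the flows of weights of the factors $\M_i$ --- each $g_i$ is even, for semifinite $\M_i$ one has $g_i(s)\to2$ as in \cref{cor:kappa_semifinite}, for type $\III_\lambda$ ($0<\lambda<1$) $g_i$ is $(-\log\lambda)$-periodic, for type $\III_1$ one has $g_i\equiv0$, and the type $\III_0$ case is reduced, as for $\kappa_{\textit{max}}$ there, to an aperiodic topological system via Gelfand theory --- so that the suprema are all approached along arbitrarily large $s$ and a joint recurrence/equidistribution argument over the finitely many blocks in $S$ produces the required common $s$. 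Taking $\psi,\phi$ the corresponding uniform matrix states then finishes the proof.
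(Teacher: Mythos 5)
Parts \ref{it:kappa_disintegration1} and \ref{it:kappa_disintegration2} of your argument are correct and follow essentially the same route as the paper: reduce to the spectral states via \eqref{eq:FOW_direct_integral} and \eqref{eq:hat_direct_integral}, observe that the shifts $\theta_{\log p(y)}$ are isometries on the preduals and hence invisible in $g_y(s)=\norm{\hat\omega_y\circ(\theta_y)_s-\hat\omega_y}$, get measurability from a countable supremum over $\log\QQ^+$ (the paper invokes measurability of the field of automorphisms from Haagerup--St{\o}rmer at the point where you appeal to ``the measurable structure of the direct integral''), and obtain \eqref{eq:kappa_non_factors} from $\sup_s\int\le\int\sup_s$.

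Part \ref{it:kappa_disintegration3} is where your proposal has a genuine gap, and you have correctly located it: everything reduces to the interchange $\sup_s\sum_i p_i\,g_i(s)=\sum_i p_i\sup_s g_i(s)$, which you leave unproven. The repair you sketch --- a joint recurrence/equidistribution argument producing a common near-maximizing $s$ for finitely many blocks --- cannot work in general. Take $\M=\R_{\lambda}\oplus\R_{\lambda^2}$ with $\omega_1,\omega_2$ the respective Powers states. By the computation in the proof of \cref{thm:type_III_lambda}, $g_1(s)=\tfrac{2}{1-\lambda}(1-e^{-s})(1-\lambda e^{s})$ on a period, so $g_1$ is $(-\log\lambda)$-periodic and attains its maximum $\kappa(\omega_1)$ \emph{only} at $s\in(-\log\lambda)(\tfrac12+\ZZ)$, dipping strictly below it elsewhere; likewise $g_2$ is $(-2\log\lambda)$-periodic and attains $\kappa(\omega_2)$ only at $s\in(-\log\lambda)(1+2\ZZ)$. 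These two sets are disjoint, both functions are continuous and periodic, so $\sup_s\bigl(p_1g_1(s)+p_2g_2(s)\bigr)<p_1\kappa(\omega_1)+p_2\kappa(\omega_2)$ strictly --- no choice of $s$, nor of a single pair $(\psi,\phi)$ (whose contribution is a convex combination of values $g_i(\log(p_k/q_l))$ at common arguments), can approach both suprema simultaneously. The same phenomenon already occurs for two states on copies of one and the same $\III_\lambda$ factor whose spectral measures peak at different phases of the period. Note that the paper's own proof of \ref{it:kappa_disintegration3} consists of the single remark that the block unitaries $u=\oplus_i u_i$ can be chosen independently; that only yields $\kappa(\omega)=\sup_{\psi,\phi}\sum_i p_i\,d_i(\psi,\phi)$ and does not address the interchange either. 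So the obstacle you flagged is not an artifact of your approach: it is the actual mathematical content of the claim, it is not resolved by your sketch, and the example above indicates that the asserted equality needs additional hypotheses (e.g., that the suprema $\kappa(\omega_i)$ are all approached as $s\to\infty$, as happens for semifinite blocks).
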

\begin{proof}
    Recall that the direct integral decomposition of von Neumann algebras implies the  direct integral decomposition of the flow of weights of general von Neumann algebras (see \cref{eq:FOW_direct_integral}). 

    \ref{it:kappa_disintegration1}:
    By \cref{cor:invariant_iff_mbz}, $\omega$ is an embezzling state if and only if $\hat\omega$ is invariant.
    Set $\varphi_\lambda = p(y) \omega_y$.
    Since $\hat\omega = \int_Y^\oplus \hat\varphi_y\,d\nu(y)$ and since $\omega\circ\theta_s = \int_Y^\oplus \hat\varphi_y\circ(\theta_y)_s \,d\nu(y)$, the spectral state $\hat\omega$ can only be invariant if for $\nu$-almost all $y$ the positive linear functional $\varphi_y$ is invariant.
    The latter is indeed equivalent to $\omega_y$ being an embezzler of $p(y)=0$ for $\nu$-almost all $y$.

    \ref{it:kappa_disintegration2}:
    We show that $y\mapsto \kappa(\omega_y)$ is a measurable function. 
    By \cite[Prop.~8.1]{haagerup1990equivalence}, $y\mapsto (\theta_y)_s$ is a measurable field of automorphisms. Therefore, $y\mapsto \norm{\hat\omega_y - \hat\omega_y\circ\theta_s}$ is measurable.
    By continuity of each $(\theta_y)_s$, we can write $\kappa$ as the pointwise-supremum of countably many measurable functions:
    \begin{equation}
        \kappa(\omega_y) = \sup_{s\in\RR} \norm{\hat\omega_y - \hat\omega_y\circ\theta_s} = \sup_{s\in\mathbb Q} \norm{\hat\omega_y - \hat\omega_y\circ\theta_s}.
    \end{equation}
    This shows that $y\mapsto \kappa(\omega_y)$ is measurable.
    The inequality \eqref{eq:kappa_non_factors} follows from dominated convergence.

    \ref{it:kappa_disintegration3}:
    This is straightforward from the definition \eqref{eq:kappa_def} of $\kappa(\omega)$ because unitaries $u = \oplus_y u_y\in\M$ are direct sums of unitaries $u_y$ which can be chosen independently.
\end{proof}

\begin{theorem}\label{thm:type_III_lambda}
    Let $\M$ be a type $\III_\lambda$ factor, $0<\lambda<1$, with separable predual.
    If $\omega$ is a faithful normal state whose modular flow $\sigma^\omega$ is periodic with minimal period $t_0=\frac{2\pi}{-\log\lambda}$, then
    \begin{equation}\label{eq:kappa_powers}
        \kappa(\omega) = 2\frac{1-\sqrt\lambda}{1+\sqrt\lambda} = \diam(S_*(\M)/\!\sim).
    \end{equation}
    Furthermore, $\M$ always admits embezzling states.
\end{theorem}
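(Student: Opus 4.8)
\emph{Setup.} The whole statement is read off from the flow of weights of a type $\III_\lambda$ factor. By \cref{prop:FOW_of_type_III_lambda}, this flow is the rotation $F_s(\gamma)=\gamma-s\ (\mathrm{mod}\ \gamma_0)$ of the circle $[0,\gamma_0)$, $\gamma_0=-\log\lambda$, and the spectral state is implemented by $dP_\omega(t)=f_\omega(e^{-t})e^{-t}\,dt$ with $f_\omega$ the spectral distribution function; moreover $\kappa(\omega)=\sup_{s\in\RR}\norm{\hat\omega\circ\theta_s-\hat\omega}$ by \cref{thm:kappa}. So everything reduces to understanding $f_\omega$.

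\emph{Embezzling states exist, and an easy upper bound.} The function $f(t)=(\gamma_0 t)^{-1}$ is right-continuous, non-increasing, and satisfies \eqref{eq:f_props}; by the converse part of \cref{lem:f} it equals $f_{\omega_0}$ for some $\omega_0\in S_*(\M)$, and since $f_{\omega_0}(t)\propto 1/t$, \cref{cor:mbz_f} shows $\omega_0$ is embezzling (equivalently, the rotation flow carries the invariant density $\gamma_0^{-1}dt$, so \cref{cor:invariant_iff_mbz} applies). This settles the last sentence of the theorem. For the main assertion, $\kappa(\omega)\le 2\frac{1-\sqrt\lambda}{1+\sqrt\lambda}$ is immediate from \cref{cor:kappa_diameter} together with \eqref{eq:diameter}, so it suffices to attain this value.

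\emph{Identifying $f_\omega$.} Periodicity with minimal period $t_0$ forces $\Delta_\omega^{it_0}=1$: since $\sigma^\omega_{t_0}=\id$ on $\M$, the unitary $\Delta_\omega^{it_0}$ lies in $\M'$, and it fixes the separating vector $\Omega_\omega$ (as $\Delta_\omega\Omega_\omega=\Omega_\omega$), hence equals $1$. Therefore $\Sp\Delta_\omega\subseteq\{0\}\cup\lambda^{\ZZ}$, and faithfulness of $\omega$ gives equality. From this one argues that $\bar h_\omega=d\bar\omega/d\tau_0$ on the discrete crossed product $\N_0$ has spectrum contained in a single dilation orbit $c\lambda^{\ZZ}\cup\{0\}$ --- this is the discrete decomposition seen from the state side (the centralizer $\M^\omega$ is a type $\II_1$ factor, $\omega$ has a geometric ``eigenvalue distribution'' along the spectral grading of $\sigma^\omega$; equivalently, the Connes cocycle $[D\omega:D\phi]$ is $t_0$-periodic) --- so $f_\omega$ is a step function with exactly one jump in each fundamental domain of $t\mapsto\lambda t$. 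Since changing the generalized trace via \cref{lem:f_invariance} merely rotates $\hat\omega$ and hence leaves $\kappa(\omega)=\sup_s\norm{\hat\omega\circ\theta_s-\hat\omega}$ unchanged, we may place the jumps at $\lambda^{\ZZ}$, and then $\int_\lambda^1 f_\omega=1$ forces $f_\omega\equiv(1-\lambda)^{-1}$ on $[\lambda,1)$ (extended by $f_\omega(\lambda t)=\lambda^{-1}f_\omega(t)$); thus $dP_\omega(t)=(1-\lambda)^{-1}e^{-t}\,dt$ on $[0,\gamma_0)$.

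\emph{The sawtooth estimate and the obstacle.} For $s\in[0,\gamma_0)$ put $x=e^s\in[1,\lambda^{-1})$; the $s$-translate of this decreasing sawtooth density equals $x$ times it on $[s,\gamma_0)$ and $\lambda x$ times it on $[0,s)$, so a short computation gives $\norm{\hat\omega\circ\theta_s-\hat\omega}=\tfrac{2}{1-\lambda}\bigl(1+\lambda-x^{-1}-\lambda x\bigr)$, maximized at $x=\lambda^{-1/2}$ with value $\tfrac{2(1-\sqrt\lambda)^2}{1-\lambda}=2\tfrac{1-\sqrt\lambda}{1+\sqrt\lambda}$. By \cref{thm:kappa} this equals $\kappa(\omega)$, which by \eqref{eq:diameter} is $\diam(S_*(\M)/\!\sim)$. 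The main obstacle is the middle step: proving that a faithful state with modular flow of minimal period $t_0$ has a step-function spectral distribution with a single jump per fundamental domain. This is where the structure theory of periodic states genuinely enters, and the delicate point is to extract it cleanly from the $t_0$-periodicity of $[D\omega:D\phi]$ (equivalently, of $\Sp\Delta_\omega$) without reconstructing the discrete decomposition; the rest is the flow-of-weights dictionary plus the elementary computation above.
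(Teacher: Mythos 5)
Your peripheral steps are all correct and match the paper's route: the existence of embezzling states via the converse part of \cref{lem:f} applied to $f(t)=(\gamma_0 t)^{-1}$, the upper bound from \cref{cor:kappa_diameter} and \eqref{eq:diameter}, the observation that a change of generalized trace only dilates $f_\omega$ (hence rotates $P_\omega$) and so does not affect $\kappa$, and the sawtooth computation $\norm{\hat\omega\circ\theta_s-\hat\omega}=\tfrac{2}{1-\lambda}(1+\lambda-x^{-1}-\lambda x)$ maximized at $x=\lambda^{-1/2}$ — this is exactly the computation in the paper's proof, written with $(1-e^{-s})(1-e^s\lambda)$ in place of your expanded form. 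Your derivation that $\Delta_\omega^{it_0}=1$ (hence $\Sp\Delta_\omega\subseteq\{0\}\cup\lambda^{\ZZ}$) from minimal $t_0$-periodicity is also sound.

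However, the step you yourself flag as "the main obstacle" is a genuine gap, and it is precisely the content of the paper's \cref{lem:type_III_lambda}(2): passing from $\Sp\Delta_\omega\subseteq\{0\}\cup\lambda^{\ZZ}$ to the statement that $\bar h_\omega=d\bar\omega/d\tau_0$ has spectrum in a single dilation orbit $c\lambda^{\ZZ}\cup\{0\}$, so that $f_\omega$ is a step function with one jump per fundamental domain. The spectrum of $\Delta_\omega$ lives on $\H$ and records ratios, while $\bar h_\omega$ lives in the crossed product $\N_0$ and its spectrum is only determined up to the global scale you call $c$; the cocycle identity only gives $(D\omega:D\phi)_{t_0}\in Z(\M)=\CC$, a phase that has to be tracked, and your sketch does not close this loop. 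The paper avoids the issue entirely by a different trick: it builds the generalized trace \emph{out of $\omega$ itself}, setting $\psi=\sum_k u_k^*\omega u_k$ for a Cuntz family $(u_k)\subset\M$, so that $\sigma^\psi$ is $t_0$-periodic and $\psi$ is a generalized trace by \cite[Thm.~4.2.6]{connes1973classIII}. Relative to \emph{this} choice, $h_0=d\bar\psi/d\tau_0=u^*(\bar h_\omega\ox 1)u$, and since $h_0^{it}=\lambda_0(t)$ is $t_0$-periodic by construction of $\N_0$, one gets $\Sp\bar h_\omega\subseteq\{0,\lambda^n\}$ for free; uniqueness of the normalized, $\lambda$-homogeneous step function with jumps at $\lambda^n$ (from \cref{lem:f}) then pins down $f_\omega=F$, and \cref{lem:f_invariance} transports the result to an arbitrary generalized trace at the cost of the dilation $\mu$. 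If you want to complete your version, you should either reproduce this Cuntz-isometry construction or carry out the cocycle bookkeeping you allude to; as written, the identification of $f_\omega$ is asserted rather than proved.
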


\begin{remark}\label{rem:powers_state}
    States with $t_0$-periodic modular flow exist on all type $\III_\lambda$ factors, $0<\lambda<1$ (see the introduction of \cite{haagerup1989injective}).
    An example is the Powers state 
    \begin{equation}
        \omega_\lambda = \otimes_{n=1}^\oo \varphi_\lambda, \qquad \varphi_\lambda([x_{ij}])=\frac1{1+\lambda} (\lambda x_{11} + x_{22}),
    \end{equation}
    on the hyperfinite type $\III_\lambda$ factor $\R_\lambda = \bigotimes_{n=1}^\oo (M_2,\varphi_\lambda)$.
    Since $\kappa(\omega)$ is bounded by the diameter $\diam(S_*(\M)/\!\sim)=2\frac{1-\sqrt\lambda}{1+\sqrt\lambda}$ for type $\III_\lambda$ factors, it follows that states with $t_0$-periodic modular flow, such as the Powers state, perform the worst at embezzling among all normal states.
\end{remark}

\begin{lemma}\label{lem:type_III_lambda}
    Let $\M$ be a type $\III_\lambda$ factor with separable predual where $0<\lambda\le1$.
    Let $\omega$ be a normal state on $\M$ and let $f_\omega:(0,\oo)\to(0,\oo)$ be its spectral distribution function relative to some generalized trace on $\M$ (see \cref{sec:FOW_type_III_lambda}). Then
    \begin{enumerate}
        \item $\omega$ is embezzling if and only if
        \begin{equation}\label{eq:f_embezzler}
            f_\omega(t) = \frac1{-\log\lambda} \,\frac1t,\qquad t>0.
        \end{equation}
        \item Suppose that $\omega$ faithful and that its modular flow has minimal period $t_0 = -\tfrac{2\pi}{\log\lambda}$, then $f_\omega(t) = \mu F(\mu t)$, for some $\lambda<\mu\le 1$, where
        \begin{equation}\label{eq:f_Powers}
            F(t)= \frac1{1-\lambda}\sum_{n\in\ZZ} \lambda^n \chi_{[\lambda^{n+1},\lambda^n)}(t) = \frac{\lambda^{n_\lambda(t)}}{1-\lambda} , \qquad n_\lambda(t)=\lfloor \log_\lambda(t)\rfloor,\  t>0.
        \end{equation}
    \end{enumerate}
\end{lemma}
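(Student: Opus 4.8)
Throughout we take $0<\lambda<1$; for $\lambda=1$ the flow of weights is trivial, every normal state is embezzling, and there is no generalized trace (hence no $f_\omega$), so nothing is asserted. For part~1, \cref{cor:mbz_f} already gives that $\omega$ is embezzling if and only if $f_\omega(t)\propto 1/t$, so it only remains to identify the constant. If $f_\omega(t)=c/t$, the normalization $\int_\lambda^1 f_\omega(t)\,dt=1$ from \cref{lem:f} yields $-c\log\lambda=1$, i.e.\ $c=1/(-\log\lambda)$. Conversely, $t\mapsto 1/((-\log\lambda)t)$ is right-continuous, non-increasing, satisfies $f(\lambda t)=\lambda^{-1}f(t)$, and is normalized, so by the converse half of \cref{lem:f} it equals $f_\omega$ for some normal state $\omega$, which is then embezzling; this also reproves that $\M$ admits embezzling states.

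For part~2 the plan is to compute $f_\omega$ relative to one conveniently chosen generalized trace and then transport the answer by \cref{lem:f_invariance}: if $f\up\phi_\omega=F$ for some generalized trace $\phi$, then for an arbitrary generalized trace $\psi$ we get $f\up\psi_\omega(t)=\mu\,f\up\phi_\omega(\mu t)=\mu F(\mu t)$ with a unique $\mu\in(\lambda,1]$, which is precisely the claim (and $F$ itself corresponds to $\mu=1$). To build such a $\phi$, take any generalized trace $\phi_1$; since $\sigma^\omega_{t_0}=\id$ by hypothesis and $\sigma^{\phi_1}_{t_0}=\id$ because $\phi_1$ is a generalized trace, the relative modular cocycle $u_t:=(D\omega:D\phi_1)_t$ satisfies $\Ad(u_{t_0})=\sigma^\omega_{t_0}\circ(\sigma^{\phi_1}_{t_0})^{-1}=\id_\M$, hence $u_{t_0}\in Z(\M)=\CC$. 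Replacing $\phi_1$ by a suitable positive scalar multiple $\phi=s\phi_1$ (still a generalized trace, since rescaling leaves the modular flow unchanged) we may assume $(D\omega:D\phi)_{t_0}=1$.

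Now pass to the discrete core $\N_0=\M\rtimes_{\sigma^\phi}(\RR/t_0\ZZ)$ with its canonical trace $\tau_0$, dual $\ZZ$-action $\theta_0$ (satisfying $\tau_0\circ\theta_0=\lambda\tau_0$ and $\theta_0(h_0)=\lambda h_0$), and the operator $h_0$ with $h_0^{it}=\lambda_0(t)$, all as in \cref{prop:FOW_of_type_III_lambda}. The cocycle chain rule for dual weights gives $\bar h_\omega^{it}=(D\bar\omega:D\tau_0)_t=(D\bar\omega:D\bar\phi)_t\,(D\bar\phi:D\tau_0)_t=u_t\,h_0^{it}$; at $t=t_0$, using $h_0^{it_0}=\lambda_0(t_0)=1$ and $u_{t_0}=1$, this yields $\bar h_\omega^{it_0}=1$, so $\Sp\bar h_\omega\subseteq\{0\}\cup\{\lambda^n:n\in\ZZ\}$. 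Writing $\bar h_\omega=\sum_n\lambda^n Q_n$ for its spectral projections, the relation $\theta_0(\bar h_\omega)=\lambda\bar h_\omega$ forces $\theta_0(Q_n)=Q_{n-1}$; since $\bar\omega$ is faithful, $\bar h_\omega\ne0$, so some (hence, by the shift, every) $Q_n$ is nonzero and $\Sp\bar h_\omega=\{\lambda^n:n\in\ZZ\}$ is pure point. Then $\tau_0\circ\theta_0=\lambda\tau_0$ forces $b_n:=\tau_0(Q_n)=\lambda^{-1}b_{n-1}$, so $b_n=\lambda^{-n}b_0$, and summing the resulting geometric series shows $f_\omega(t)=\tau_0(\chi_t(\bar h_\omega))=\sum_{n:\lambda^n>t}b_n$ is constant on each interval $[\lambda^{m+1},\lambda^m)$; the normalization of \cref{lem:f} fixes $b_0$ and identifies $f_\omega$ with $F$, which finishes part~2.

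The step I expect to cost the most effort is the discrete-core computation: installing the dual weights $\bar\omega,\bar\phi$ on $\N_0$, verifying the cocycle chain rule $\bar h_\omega^{it}=u_t h_0^{it}$ and $d\bar\phi/d\tau_0=h_0$, and tracking every normalization (the scaling of $\tau_0$, the choice of $s$, and the way the rescaling of $\phi$ ``aligns'' $\Sp\bar h_\omega$ with the integer powers of $\lambda$). None of this is conceptually deep --- it is the modular-theoretic bookkeeping underlying the type $\III_\lambda$ analysis of \cite{haagerup1990equivalence} --- but it is where the real work sits; once it is in place, part~2 reduces to the geometric-series computation above together with the appeal to \cref{lem:f_invariance}.
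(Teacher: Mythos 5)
Your proof is correct; part 1 follows essentially the paper's own argument (invariance of $P_\omega$ under the periodic flow, via \cref{cor:mbz_f}/\cref{prop:FOW_of_type_III_lambda}, plus the normalization from \cref{lem:f}), but part 2 takes a genuinely different route. The paper manufactures a generalized trace directly from $\omega$: it picks Cuntz isometries $u_k\in\M$ and sets $\psi=\sum_k u_k^*\omega u_k$, so that the canonical operator $h_0$ of the discrete core built from $\psi$ equals $u^*(\bar h_\omega\ox 1)u$; the constraint $\Sp h_0\subset\{0\}\cup\{\lambda^n:n\in\ZZ\}$ coming from the $t_0$-periodicity of $\lambda_0$ is then inherited by $\bar h_\omega$, and $f_\omega$ is identified as the unique normalized, scaling-covariant, locally constant function with jumps only at the points $\lambda^n$. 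You instead keep an arbitrary generalized trace, rescale it by a scalar so that the Connes cocycle $(D\omega:D\phi)_{t_0}$ becomes $1$, and extract the same spectral constraint from the chain rule $\bar h_\omega^{it}=u_t h_0^{it}$ in the discrete core; your explicit computation of the traces $b_n=\tau_0(Q_n)$ then replaces the paper's uniqueness argument. Both routes land on the same function; yours avoids the Cuntz isometries (whose existence is automatic here anyway) at the cost of the dual-weight cocycle bookkeeping you rightly flag as the main labor, and it is somewhat more quantitative. One remark: your computation yields $f_\omega(t)=\tfrac{\lambda^{-m}}{1-\lambda}$ on $[\lambda^{m+1},\lambda^m)$ after normalizing $b_0=1$, i.e.\ coefficients $\lambda^{-n}$ rather than the $\lambda^{n}$ appearing in the displayed formula \eqref{eq:f_Powers}. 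The displayed $F$ is non-decreasing and satisfies $F(\lambda t)=\lambda F(t)$, which contradicts the constraints of \cref{lem:f}; so your version is the correct one and \eqref{eq:f_Powers} contains a sign error in the exponent --- your claim to ``identify $f_\omega$ with $F$'' should be read against the corrected formula.
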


\begin{proof}
    The first item follows from \cref{prop:FOW_of_type_III_lambda}: The unique probability distribution on $[0,\gamma_0)$, $\gamma_0=-\log\lambda$, which is invariant under the periodic shift is the uniform distribution $dP(t) = \frac1{\gamma_0}dt$.
    Thus, \eqref{eq:FOW_of_type_III_lambda} implies that if $\omega$ is an embezzling state, $f_\omega(e^{-t})= e^{-t}\frac1{\gamma_0}$.
    This forces $f_\omega(t)$ to be proportional to $\frac1t$. The proportionality factor $(-\log\lambda)^{-1}$ is determined by the normalization condition
    \begin{equation}
        1 = \int_0^{\gamma_0} dP_\omega(t) = \int_\lambda^1 f_\omega(t) dt.
    \end{equation}
    
    For the second item, we use the notation of \cref{sec:FOW_type_III_lambda}.
    The strategy is as follows: First, we construct a generalized trace $\psi$ from the state $\omega$, and show that the spectral distribution function of the state $\omega$ is given by \eqref{eq:f_Powers} if $\N_{0}=\M\rtimes_{\sigma^\psi} (\RR/t_0\ZZ)$ is constructed relative to this generalized trace.
    Second, we conclude using \cref{lem:f_invariance} that the spectral distribution function of $\omega$ will pick up a dilation if we use a different generalized trace $\phi$ for the construction.
    
    Let $u_k \in \M$, $k\in\NN$, be a realization of the $\mathcal O_\oo$-Cuntz relations, i.e., operators such that 
    \begin{equation}
        u_k^*u_l=\delta_{kl}, \qquad \sum_{k=1}^\oo u_ku_k^* =1,
    \end{equation}
    and let $u : \H\to \H \ox \ell^2(\NN)$ be the resulting unitary operator $u\Psi = \sum_i u_i \Psi \ox \ket i$.
    We consider the normal (strictly) semifinite faithful weight $\psi = u^*(\omega\ox\tr)u = \sum_k u_k^*\omega u_k$. By construction, $\sigma^{\psi}_t$ is $t_0$-periodic and $\psi(1)=+\infty$.
    Therefore, $\psi$ is a generalized trace \cite[Thm.~4.2.6]{connes1973classIII}. Since the dual action $\theta_{0}$ leaves $\M$ inside $\N_{0}$ pointwise invariant, we have $\bar\psi = u^{*}(\bar\omega\ox\tr)u$.
    This yields
    \begin{equation}\label{eq:h_0_and_Powers}
        h_0=\frac{d\bar\psi}{d\tau_0} = \sum_k u_k^* \frac{d\bar\omega}{d\tau_0} u_k = u^*\Big(\frac{d\bar\omega}{d\tau_0}\ox1\Big)u,
    \end{equation}
    which, since $h_0^{it}=\lambda_0(t)$ is $t_0$-periodic, implies
    \begin{equation}
        \Sp\big(\tfrac{d\bar\omega}{d\tau_0}\big)=\Sp h_0 \subset\{0,\lambda^n : n\in\ZZ\}.
    \end{equation}
    Consequently $f_\omega(t) = \tau_0(\chi_t(d\bar\omega/d\tau_0))$ is locally constant with jumps at the eigenvalues $t=\lambda^n$.
    Among all right-continuous non-increasing functions $f:(0,\oo)\to(0,\oo)$ such that $\int_\lambda^1 f(t)dt=1$ and $f(\lambda t)=\lambda^{-1}f(t)$, the right-hand side of \eqref{eq:f_Powers} is the unique one which is locally constant with discontinuities at the points $t=\lambda^n$, $n\in\ZZ$ (see \cref{lem:f}).
\end{proof}

\begin{proof}[Proof of \cref{thm:type_III_lambda}]
By \cref{lem:type_III_lambda}, the spectral distribution function of $\omega$ can be represented as 
\begin{align}
    f_\omega = \frac1{1-\lambda} \sum_{n\in\ZZ} \lambda^n \chi_{[\lambda^{n+1},\lambda^n)}
\end{align}
and, by \eqref{eq:hat_distance_with_f}, we have
\begin{align}
    \kappa(\omega) = \sup_s \norm{\hat\omega \circ\theta_s - \hat\omega} = \sup_s \int_\lambda^1 \left|\e^{-s}f_\omega(\e^{-s}t) -f_\omega(t)\right|\,dt.
\end{align}
Since $\lambda f_\omega(\lambda t) = f_\omega(t)$, we can choose $s$ so that $\lambda \leq e^{-s} \leq 1$. With this choice we find
\begin{align}
       \int_\lambda^1 \left|\e^{-s}f_\omega(e^{-s}t) f_\omega(t)\right|\,dt &= \frac{1}{1-\lambda}\left[\int_\lambda^{\exp(s)\lambda}\left(\frac{e^{-s}}{\lambda}-1\right)dt + \int_{\exp(s) \lambda}^1\left(1-e^{-s}\right)dt\right] \nonumber\\
    &= \frac{2}{1-\lambda}\left(1-e^{-s}\right)\left(1-e^s \lambda\right)\nonumber\\
    &= \frac{2}{1+\sqrt\lambda} \frac{1}{1-\sqrt{\lambda}}\left(1-e^{-s}\right)\left(1-e^s \lambda\right).
\end{align}
Choosing $\e^{-s} = \sqrt{\lambda}$ then yields $\kappa(\omega) \ge 2(1-\sqrt\lambda)/(1+\sqrt\lambda)$. 
\cref{cor:kappa_diameter} and \eqref{eq:diameter} show the reverse inequality so that \eqref{eq:kappa_powers} follows.

The claim that $\M$ admits embezzling states is seen as follows: By \cref{lem:f}, there exists a state $\omega\in S_*(\M)$ with $f_\omega$ equal to the right-hand side of \eqref{eq:f_embezzler} (clearly, the function is right-continuous, non-increasing and properly normalized) and by \cref{lem:type_III_lambda} this state is embezzling. 
\end{proof}

We also note the following behavior of $\kappa$ under tensor products:

\begin{lemma}\label{lem:kappa_TP}
    Let $\M$ and $\N$ be von Neumann algebras let $\omega$ and $\varphi$ be normal states on $\M$ and $\N$, respectively.
    Then
    \begin{equation}\label{eq:kappa_TP}
        \kappa(\omega\ox\varphi) \le \min\{\kappa(\omega),\kappa(\varphi)\}.
    \end{equation}
    Moreover, there exist examples where $\kappa(\omega\ox\varphi)=0$ but $\kappa(\omega),\kappa(\varphi)>0$.
\end{lemma}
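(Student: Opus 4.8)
The plan is to prove the inequality $\kappa(\omega\ox\varphi)\le\kappa(\omega)$ directly from the definition \eqref{eq:kappa_def}; since $\omega\ox\varphi$ is unitarily equivalent to $\varphi\ox\omega$, the same argument with the roles of $\M$ and $\N$ exchanged gives $\kappa(\omega\ox\varphi)\le\kappa(\varphi)$, and taking the smaller of the two bounds yields \eqref{eq:kappa_TP}. Fix $n\in\NN$ and states $\psi,\phi\in S(M_n)$, and let $u\in\U(M_n(\M))=\U(\M\ox M_n)$ be arbitrary. Regard $u\ox 1_\N$ as a unitary in $\M\ox M_n\ox\N$ and transport it, via the canonical shuffle isomorphism $\M\ox M_n\ox\N\cong(\M\ox\N)\ox M_n=M_n(\M\ox\N)$, to a unitary $\tilde u\in\U(M_n(\M\ox\N))$. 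Under this same identification, the normal functional $(\omega\ox\psi)\ox\varphi$ on $\M\ox M_n\ox\N$ is carried to $(\omega\ox\varphi)\ox\psi$ and $\bigl(u(\omega\ox\phi)u^*\bigr)\ox\varphi$ is carried to $\tilde u\bigl((\omega\ox\varphi)\ox\phi\bigr)\tilde u^*$.

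Setting $\delta=(\omega\ox\psi)-u(\omega\ox\phi)u^*\in(\M\ox M_n)_*$, the previous step shows that $\|(\omega\ox\varphi)\ox\psi-\tilde u\bigl((\omega\ox\varphi)\ox\phi\bigr)\tilde u^*\|$ equals $\|\delta\ox\varphi\|$, because the shuffle isomorphism is isometric on preduals. Since $\varphi$ is a normal state, $\|\delta\ox\varphi\|=\|\delta\|$: using the Jordan decomposition $\delta=\delta_+-\delta_-$ with $\|\delta\|=\|\delta_+\|+\|\delta_-\|$, together with $\|\rho\ox\varphi\|=\|\rho\|\,\|\varphi\|=\|\rho\|$ for positive normal functionals $\rho$, gives $\|\delta\ox\varphi\|\le\|\delta\|$, while the reverse inequality follows by testing $\delta\ox\varphi$ against elements of the form $x\ox 1_\N$. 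Hence for every unitary $u$ we have $\inf_{w\in\U(M_n(\M\ox\N))}\|(\omega\ox\varphi)\ox\psi-w\bigl((\omega\ox\varphi)\ox\phi\bigr)w^*\|\le\|\delta\|$, so this infimum is $\le\inf_{u\in\U(M_n(\M))}\|(\omega\ox\psi)-u(\omega\ox\phi)u^*\|$. Taking the supremum over $n$ and over $\psi,\phi$ yields $\kappa(\omega\ox\varphi)\le\kappa(\omega)$, as desired.

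For the strict-improvement example, pick $0<\lambda_1,\lambda_2<1$ with $\log\lambda_1/\log\lambda_2\notin\QQ$, let $\M_i=\R_{\lambda_i}=\bigotimes_{k=1}^\oo(M_2,\varphi_{\lambda_i})$ be the hyperfinite type $\III_{\lambda_i}$ ITPFI factor, and take $\omega_i=\bigotimes_{k=1}^\oo\varphi_{\lambda_i}$ to be the corresponding Powers state. By \cref{rem:powers_state}, the modular flow of $\omega_i$ is periodic with minimal period $\tfrac{2\pi}{-\log\lambda_i}$, so \cref{thm:type_III_lambda} gives $\kappa(\omega_i)=2\tfrac{1-\sqrt{\lambda_i}}{1+\sqrt{\lambda_i}}>0$. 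By the Araki--Woods classification \cite{araki1968factors}, the rational independence of $\log\lambda_1$ and $\log\lambda_2$ forces $\M_1\ox\M_2$ to be of type $\III_1$; its flow of weights is therefore trivial, so $\hat\omega\circ\theta_s=\hat\omega$ holds automatically for every normal state $\omega$ on $\M_1\ox\M_2$, and \cref{cor:invariant_iff_mbz} (equivalently \cref{thm:kappa}) shows that every such state is embezzling. In particular $\kappa(\omega_1\ox\omega_2)=0$ while $\kappa(\omega_1),\kappa(\omega_2)>0$.

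The only point requiring genuine care is the norm identity $\|\delta\ox\varphi\|=\|\delta\|$ for the Hermitian normal functional $\delta$ -- that is, the isometry of tensoring with a normal state on the predual of a von Neumann tensor product; the reduction via the Jordan decomposition to the case of positive functionals (where the statement is the standard fact $\|\rho\ox\varphi\|=\|\rho\|\,\|\varphi\|$) is the cleanest route around it. Everything else is bookkeeping with the shuffle isomorphism $\M\ox M_n\ox\N\cong M_n(\M\ox\N)$ and invocations of \cref{thm:type_III_lambda,rem:powers_state,cor:invariant_iff_mbz} together with the cited tensor-product computation for ITPFI factors.
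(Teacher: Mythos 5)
Your proof is correct and follows essentially the same route as the paper: the inequality is the direct definition-chasing argument (the paper dismisses it as "trivial from the definition," and your careful justification of $\norm{\delta\ox\varphi}=\norm{\delta}$ via the Jordan decomposition is exactly the right way to make it rigorous), and your example of two ITPFI factors $\R_{\lambda_1}\ox\R_{\lambda_2}$ with $\log\lambda_1/\log\lambda_2\notin\QQ$ is precisely the paper's example, with the Powers states making $\kappa(\omega_i)>0$ explicit via \cref{thm:type_III_lambda}.
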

\begin{proof}
    \eqref{eq:kappa_TP} is trivial from the definition of $\kappa$ (see \cref{eq:kappa_def}).
    For the last claim let $\lambda,\mu>0$ and consider $\III_\lambda$ and type $\III_\mu$ ITPFI factors $\R_\lambda$ and $\R_\mu$.
    If $\frac{\log\lambda}{\log\mu}\notin \QQ$ the tensor product $\R_\lambda\otimes\R_\mu$ is a type $\III_1$ factor \cite{araki1968factors}.
    Let $\omega$ and $\varphi$ be any non-embezzling states.
    Then $\kappa(\omega\ox\varphi)=0$ because every state on a type $\III_1$ factor is embezzling.
\end{proof}

\subsection{Universal embezzlers}\label{sec:universal}

\begin{definition}
    Let $\M$ be a von Neumann algebra. Then $\M$ is called a \emph{universal embezzler} (or a universal embezzling algebra) if all normal states $\omega\in S_*(\M)$ are embezzling.
\end{definition}

\begin{theorem}\label{thm:universal_embezzzeling_algebra}
    Let $\M$ be a von Neumann with separable predual. Then $\M$ is a universal embezzler if and only if it is a direct integral of type $\III_1$ factors.
\end{theorem}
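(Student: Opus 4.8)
The plan is to push everything through Corollary~\ref{cor:invariant_iff_mbz} and the Haagerup--St\o rmer correspondence $\omega\mapsto\hat\omega$, reducing the theorem to the dynamical statement that $\M$ is a universal embezzler if and only if its flow of weights is \emph{trivial}, i.e.\ $\theta_s=\id_{Z(\N)}$ for all $s\in\RR$. The easy direction is then immediate: if $\M=\int_Y^\oplus\M_y\,d\nu(y)$ with each $\M_y$ of type $\III_1$, then $Z(\N_y)=\CC$ and $(\theta_y)_s=\id$ for $\nu$-a.e.\ $y$, so by the direct-integral formula \eqref{eq:FOW_direct_integral} the flow of weights $(Z(\N),\theta)$ of $\M$ satisfies $\theta_s=\id$ for every $s$. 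Hence $\hat\omega\circ\theta_s=\hat\omega$ for every normal state $\omega$, and Corollary~\ref{cor:invariant_iff_mbz} shows every $\omega\in S_*(\M)$ is embezzling; i.e.\ $\M$ is a universal embezzler. (Equivalently one can invoke Theorem~\ref{thm:kappa} to get $\kappa(\omega)=0$.)

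For the converse, assume all normal states on $\M$ are embezzling. Since $\M$ has separable, hence $\sigma$-finite, predual, there is a faithful normal state $\omega_0$; it is embezzling, so by Corollary~\ref{cor:embezzling_needs_typeIII} $\M=s(\omega_0)\M s(\omega_0)$ is of type $\III$. By Corollary~\ref{cor:invariant_iff_mbz}, $\hat\omega$ is $\theta$-invariant for every $\omega\in S_*(\M)$. The key step is to upgrade this to the statement that \emph{every} normal state on $Z(\N)$ is $\theta$-invariant. Given an arbitrary normal state $\chi_0$ on $Z(\N)$, set $\chi:=\int_0^\infty e^{-s}\,\chi_0\circ\theta_{-s}\,ds$; the integrand is norm-continuous in $s$ (the predual action of a $\sigma$-weakly continuous flow is a $C_0$-group) and bounded by $e^{-s}$, so this is a norm-convergent Bochner integral defining a normal state on $Z(\N)$. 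The substitution $u=s-t$ gives $\chi\circ\theta_t=e^{-t}\int_{-t}^\infty e^{-u}\,\chi_0\circ\theta_{-u}\,du\ge e^{-t}\chi$ for $t>0$, since the integrand is positive. By the main theorem of \cite{haagerup1990equivalence} (applied to the type $\III$ algebra $\M$, exactly as in the proof of Corollary~\ref{cor:invariant_iff_mbz}), any normal state on $Z(\N)$ with $\chi\circ\theta_s\ge e^{-s}\chi$ for $s>0$ is of the form $\hat\omega$, hence $\chi$ is invariant. Substituting back into the displayed identity and differentiating in $t$ (Leibniz applies, the flow parameter enters a bounded continuous integrand) yields $\chi_0\circ\theta_t=\chi_0$ for all $t$. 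As $\chi_0$ was arbitrary and the normal states span the predual of $Z(\N)$, we conclude $\theta_s=\id_{Z(\N)}$ for all $s$.

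It remains to transport $\theta_s=\id$ back into the structure of $\M$. Using separability, disintegrate $\M=\int_Y^\oplus\M_y\,d\nu(y)$ over its center into factors $\M_y$; by \eqref{eq:FOW_direct_integral} the flow of weights decomposes as $(Z(\N),\theta)=\int_Y^\oplus(Z(\N_y),\theta_y)\,d\nu(y)$. Testing $\theta_s=\id$ against a countable generating family of central elements and against rational $s$, and using continuity of the flow, gives $(\theta_y)_s=\id$ for all $s$ and $\nu$-a.e.\ $y$. Since each $\M_y$ is a factor its flow of weights is ergodic, so its fixed-point algebra being all of $Z(\N_y)$ forces $Z(\N_y)=\CC$; and since $\M$ is of type $\III$, so is $\M_y$ for a.e.\ $y$, whence a type $\III$ factor with trivial flow of weights is of type $\III_1$. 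Therefore $\M_y$ is of type $\III_1$ for $\nu$-a.e.\ $y$, i.e.\ $\M$ is a direct integral of type $\III_1$ factors, finishing the proof.

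The main obstacle, and the place where the deep input is used, is the surjectivity statement from \cite{haagerup1990equivalence}: that the range of $\omega\mapsto\hat\omega$ is precisely $\{\chi\in S_*(Z(\N)):\chi\circ\theta_s\ge e^{-s}\chi\ \forall s>0\}$. This is what makes the resolvent-type averaging $\chi_0\mapsto\int_0^\infty e^{-s}\chi_0\circ\theta_{-s}\,ds$ effective and lets us avoid any measurable-selection argument on the fibers $\M_y$; an alternative route would argue fiberwise via Corollary~\ref{cor:kappa_disintegration}(i) together with Corollary~\ref{cor:universal_mbz_spectrum}, but that needs a measurable-selection theorem to produce a non-embezzling measurable field of states over any positive-measure set of non-$\III_1$ fibers. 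One should also verify carefully the two analytic points flagged above — that $\chi$ is a well-defined normal state, and that $t\mapsto\chi\circ\theta_t$ may be differentiated — both of which are routine from the $C_0$-property of the flow.
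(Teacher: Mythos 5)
Your proof is correct and follows essentially the same route as the paper: both directions reduce to the equivalence ``$\M$ is a direct integral of type $\III_1$ factors $\iff$ the flow of weights is trivial,'' and the converse is settled by using the Haagerup--St\o rmer surjectivity of $\omega\mapsto\hat\omega$ onto the subinvariant states $\{\chi:\chi\circ\theta_s\ge e^{-s}\chi\}$ to force $\theta_s=\id$ on all of $Z(\N)_*$. The only difference is that where the paper cites \cite[Prop.~6.3]{haagerup1990equivalence} for the density of the span of that set in $Z(\N)_*$, you reprove the needed fact directly via the resolvent average $\chi_0\mapsto\int_0^\infty e^{-s}\chi_0\circ\theta_{-s}\,ds$ and differentiation, which is a sound, self-contained substitute.
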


\begin{corollary}
    A hyperfinite universal embezzler $\M$ is of the form $\M \cong Z(\M) \ox \R_\oo$ where $\R_\oo$ is the hyperfinite type $\III_1$ factor.
\end{corollary}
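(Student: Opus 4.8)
The plan is to deduce this from \cref{thm:universal_embezzzeling_algebra} together with the uniqueness of the injective type $\III_1$ factor. Since $\M$ is a universal embezzler with separable predual, \cref{thm:universal_embezzzeling_algebra} provides a central decomposition $\M\cong\int_Y^\oplus\M_y\,d\nu(y)$ over a standard measure space $(Y,\nu)$ with $Z(\M)\cong L^\oo(Y,\nu)$, in which $\nu$-almost every fibre $\M_y$ is a type $\III_1$ factor (with separable predual). First I would push hyperfiniteness down to the fibres: by Connes' theorem hyperfiniteness is equivalent to injectivity, and injectivity is local for the central decomposition, i.e.\ $\M$ is injective if and only if $\M_y$ is injective for $\nu$-almost every $y$ (see \cite{connes1976injective} and \cite{takesaki3}). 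Hence for $\nu$-almost every $y$ the fibre $\M_y$ is an injective type $\III_1$ factor with separable predual, and by Haagerup's uniqueness theorem \cite{haagerup_uniqueness_1987} (with \cite{connes1976injective} for $0<\lambda<1$) such a factor is unique up to isomorphism, namely $\R_\oo$. Thus $\M_y\cong\R_\oo$ for $\nu$-almost every $y$.

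The remaining step is to upgrade this pointwise statement to a global isomorphism by identifying a direct integral of a measurable field of factors that is almost everywhere isomorphic to a fixed separable factor $N$ with $L^\oo(Y,\nu)\ox N$. For this one constructs a measurable field of $^*$-isomorphisms $\theta_y:\M_y\to N$; the set of such isomorphisms is a Borel subset with non-empty sections inside a suitable Polish space of maps, so a measurable selection exists, and this is standard in direct integral theory (cf.\ \cite{takesaki3}). Patching the $\theta_y$ together gives $\M\cong\int_Y^\oplus N\,d\nu(y)=L^\oo(Y,\nu)\ox N$, which with $N=\R_\oo$ and $L^\oo(Y,\nu)\cong Z(\M)$ is exactly the asserted form $\M\cong Z(\M)\ox\R_\oo$. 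Conversely, $Z(\M)\ox\R_\oo$ is a direct integral of type $\III_1$ factors, hence a universal embezzler by \cref{thm:universal_embezzzeling_algebra}, so the two descriptions are consistent.

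The hard part is precisely this last measurability argument — turning ``$\M_y\cong\R_\oo$ for almost every $y$'' into a genuine isomorphism $\M\cong Z(\M)\ox\R_\oo$; the earlier steps are routine once \cref{thm:universal_embezzzeling_algebra} is in hand. One should also take minor care that the notion of type $\III_1$ appearing in \cref{thm:universal_embezzzeling_algebra} refers to the fibres of the central decomposition, so that the uniqueness theorem is being applied to honest factors.
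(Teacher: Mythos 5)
Your argument is correct and is essentially the route the paper intends: the corollary is stated as an immediate consequence of \cref{thm:universal_embezzzeling_algebra} (direct integral of type $\III_1$ fibres), locality of injectivity in the central decomposition, and Haagerup's uniqueness of the injective type $\III_1$ factor. Your additional care with the measurable-selection step needed to pass from ``$\M_y\cong\R_\oo$ a.e.'' to the global isomorphism $\M\cong Z(\M)\ox\R_\oo$ is a point the paper leaves implicit, and you handle it correctly.
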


\begin{corollary}
    Up to isomorphism, the hyperfinite type $\III_1$ factor $\R_\oo$ is the unique hyperfinite universal embezzler.
\end{corollary}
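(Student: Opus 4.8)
The plan is to deduce this from \cref{thm:universal_embezzzeling_algebra} together with the classification of injective (equivalently, hyperfinite) factors, so that essentially no new work is required. For the \emph{existence} half, observe that the Araki--Woods factor $\R_\infty$ is by construction approximately finite dimensional, has separable predual, and is a factor of type $\III_1$. Viewing it as the (trivial) direct integral of type $\III_1$ factors over a one-point base, \cref{thm:universal_embezzzeling_algebra} shows that every normal state on $\R_\infty$ is embezzling, i.e.\ $\R_\infty$ is a hyperfinite universal embezzler. (Equivalently, one may invoke \cref{introthm:tablekappa}: a factor is a universal embezzler exactly when $\kappa_{\textit{max}}=0$, which happens precisely in type $\III_1$.)

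For \emph{uniqueness}, let $\M$ be a hyperfinite factor that is a universal embezzler. By \cref{thm:universal_embezzzeling_algebra}, $\M$ is a direct integral of type $\III_1$ factors; since $\M$ has trivial center, the base is a point and $\M$ is itself a factor of type $\III_1$. As $\M$ is hyperfinite with separable predual, the uniqueness of the hyperfinite type $\III_1$ factor --- Connes \cite{connes1976injective} for $0<\lambda<1$ and Haagerup \cite{haagerup_uniqueness_1987} for $\lambda=1$ --- forces $\M\cong\R_\infty$. If one does not assume $\M$ to be a factor, the previous corollary instead gives $\M\cong Z(\M)\ox\R_\infty$, so $\M$ is \emph{the} hyperfinite type $\III_1$ factor exactly when $Z(\M)=\CC\cdot 1$; this is the sense in which the uniqueness assertion is to be read, since $L^\infty(X)\ox\R_\infty$ is a hyperfinite universal embezzler for any standard probability space $X$ (by \cref{cor:kappa_disintegration} each fibre state is embezzling).

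There is no genuine obstacle inside this corollary: all the substantive content is already packaged in the two inputs being cited --- (i) \cref{thm:universal_embezzzeling_algebra} itself, which rests on the flow-of-weights analysis and the Haagerup--St\o rmer theorem \cite{haagerup1990equivalence}, and (ii) the Connes--Haagerup classification of the hyperfinite type $\III_1$ factor. The only point that needs a moment's care when writing the proof is the scope of the word ``unique'', which, as explained above, is resolved by restricting attention to factors (equivalently, to hyperfinite algebras with trivial center), exactly as the phrase ``the hyperfinite type $\III_1$ factor'' already does.
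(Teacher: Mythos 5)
Your proof is correct and follows essentially the same route as the paper: the corollary is an immediate consequence of \cref{thm:universal_embezzzeling_algebra} (a hyperfinite universal embezzler is a direct integral of type $\III_1$ factors, hence, if a factor, is itself type $\III_1$) combined with Haagerup's uniqueness theorem for the hyperfinite type $\III_1$ factor \cite{haagerup_uniqueness_1987}. Your remark on the scope of ``unique'' is also consistent with the paper, whose preceding corollary handles the non-factor case via $\M\cong Z(\M)\ox\R_\oo$.
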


\begin{proof}[Proof of \cref{thm:universal_embezzzeling_algebra}]
    By the compatibility of the flow of weights and direct integration (see \eqref{eq:FOW_direct_integral}), direct integrals of type $\III_1$ factors may be characterized as follows:
    A von Neumann algebra with separable predual is a direct integral of $\III_1$ factors if and only if its flow of weights $(Z(\N),\theta)$ is trivial $\theta=\id$ for all $s\in\RR$.
    Since a state $\omega\in S_*(\M)$ is embezzling if and only if $\hat\omega$ is $\theta$-invariant, every state on the flow of weights of a direct integral of type $\III_1$ factors is embezzling.
    Conversely, assume that $\hat\omega$ is invariant for all $\omega\in S_*(\M)$. 
    From \cref{prop:FOW_semifinite} and direct integration, it follows that $\M$ has no semifinite direct summand. Therefore, the main theorem of \cite{haagerup1990equivalence} implies that $\{\hat\omega:\omega\in S_*(\M)\}$ equals $\{\chi\in S_*(Z(\N)) : \chi\circ\theta_s\ge e^{-s}\chi\ \forall s>0\}$ which has dense span in $Z(\N)_*$ by
     \cite[Prop.~6.3]{haagerup1990equivalence}.
    Therefore, the flow of weights acts trivially on all normal states on $Z(\N)$ and, hence, must be trivial. Thus, $\M$ is a direct integral of type $\III_1$ factors.
\end{proof}

\begin{remark}
    One can show that a factor is a universal embezzler if and only if it has type $\III_1$ without using the flow of weights.
    The "only if" part is proved in \cref{cor:universal_mbz_spectrum}.
    The argument for the "if" part is based on the "homogeneity of the state space" of a type $\III_1$ factor, i.e., the fact that all normal states on a type $\III_1$ factor are approximately unitarily equivalent \cite{connes_homogeneity_1978}.
    Let $\omega$ a normal state on $\M$ and $\psi$ a state on $M_n$ and let $u\in M_{n,1}(\M)$ be a unitary.
    If $\M$ is type $\III_1$, the homogeneity of the state space implies that $\omega$ and $u^*\omega \ox\psi u$ are approximately unitarily equivalent.
    Multipliying these unitaries with the unitary $u\in M_{n,1}$ shows that for every $\eps>0$, there exists a unitary $v\in M_{n,1}(\M)$ such that $\norm{v\omega v^*- \omega\ox\psi}<\eps$. Since $\psi$ was arbitrary, $\omega$ is embezzling and since $\omega$ was arbitrary, $\M$ is universally embezzling.
\end{remark}

\subsection{Classification of type III factors via embezzlement}\label{sec:classification}

We introduce two algebraic invariants based on the quantifier $\kappa$, quantifying the worst and best embezzlement performance among all states: If $\M$ is a von Neumann algebra, we define
\begin{equation}
    \kappa_{\textit{min}}(\M) := \inf_{\omega\in S_*(\M)} \kappa(\omega) \qandq\kappa_{\textit{max}}(\M) := \sup_{\omega\in S_*(\M)} \kappa(\omega).
\end{equation}
In the following, we will give a detailed analysis of these invariants. We start with the second invariant $\kappa_{max}$, which turns out to be essentially equal to a well-known invariant in the case of factors:

\begin{theorem}\label{thm:max}
    Let $\M$ be a factor with separable predual, which is not of finite type $\I$. 
    \begin{equation}
        \diam(S_*(\M)/\!\sim) = \kappa_{\textit{max}}(\M)
    \end{equation}
    where $\omega_1\sim\omega_2$ if for all $\eps>0$ there exists a unitary $u\in\U(\M)$ such that $\norm{\omega_1-u\omega_2u^*}<\eps$ and where $S_*(\M)/\!\sim$ is equipped with the quotient metric $d([\omega_1],[\omega_2])=\inf_u \norm{\omega_1-u\omega_2u^*}$.
\end{theorem}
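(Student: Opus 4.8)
The plan is to prove both inequalities. The bound $\kappa_{\textit{max}}(\M)\le\diam(S_*(\M)/\!\sim)$ is free: it is exactly \cref{cor:kappa_diameter}, which applies because $\M$ is not a finite type $\I$ factor. For the reverse inequality I would go through the (sub)types of $\M$. If $\M$ is semifinite (type $\I_\infty$ or type $\II$), then $\kappa(\omega)=2$ for all normal $\omega$ by \cref{cor:kappa_semifinite}, while $\diam(S_*(\M)/\!\sim)=2$ by \cite{connes1985diameters}; if $\M$ is type $\III_1$, the flow of weights is trivial, so $\kappa(\omega)=\sup_s\norm{\hat\omega\circ\theta_s-\hat\omega}=0=\diam(S_*(\M)/\!\sim)$ by \cref{thm:kappa} and \eqref{eq:diameter}; and if $\M$ is type $\III_\lambda$ with $0<\lambda<1$, then \cref{thm:type_III_lambda} produces a state with $\kappa(\omega)=2\tfrac{1-\sqrt\lambda}{1+\sqrt\lambda}=\diam(S_*(\M)/\!\sim)$. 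In all these cases $\kappa_{\textit{max}}(\M)=\diam(S_*(\M)/\!\sim)$ follows at once, so the entire content is the type $\III_0$ case, where $\diam(S_*(\M)/\!\sim)=2$ and one must show $\kappa_{\textit{max}}(\M)=2$.

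For type $\III_0$, let $(X,\mu,F)$ be a realization of the flow of weights; it is an aperiodic ergodic non-singular flow. By \cref{thm:kappa} and \cref{thm:distance_of_unitary_orbits}, and using that for a type $\III$ algebra the main theorem of \cite{haagerup1990equivalence} identifies $\{\hat\omega:\omega\in S_*(\M)\}$ with the set of $\mu$-absolutely continuous probability measures $P$ on $X$ satisfying $F_s(P)\ge e^{-s}P$ for all $s>0$ (call these \emph{admissible}), the claim reduces to the purely dynamical statement: for every $\eps>0$ there are an admissible $P$ and a $t\in\RR$ with $\norm{F_t(P)-P}>2-\eps$, i.e.\ $P$ and $F_t(P)$ almost mutually singular.

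A convenient supply of admissible measures: for any finite $\nu\ll\mu$ on $X$ set $P_\nu:=\frac1{2\nu(X)}\int_{-\infty}^{\infty}e^{-|s|}F_s(\nu)\,ds$. This is a $\mu$-absolutely continuous probability measure, and it is admissible because $F_t(P_\nu)=\frac1{2\nu(X)}\int e^{-|r-t|}F_r(\nu)\,dr$ while $|r-t|\le|r|+t$ for $t\ge0$, so $e^{-|r-t|}\ge e^{-t}e^{-|r|}$ pointwise and hence $F_t(P_\nu)\ge e^{-t}P_\nu$ as measures. It then remains to choose $\nu$ so that $P_\nu$ concentrates on a thin ``tube'' $\{F_s(y):y\in E,\ |s|<T\}$ and some large translate $F_t$ carries that tube into an almost disjoint set. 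Here I would follow the authors and pass, via Gelfand duality, to a topological model of the flow: realize $(X,\mu,F)$ up to measure equivalence as a jointly continuous, topologically free $\RR$-flow on a compact metrizable space carrying a quasi-invariant measure in the right class (possible because the flow is aperiodic), take $\nu$ supported on a small flow box $E\times(-T,T)$ around a point with trivial stabilizer, and use freeness together with a Baire-category argument to avoid the at most countably many resonant return times; the $e^{-|s|}$-tails beyond $|s|=T$ contribute only $O(e^{-T})$. Letting $T\to\infty$ then gives $\norm{F_t(P_\nu)-P_\nu}\to2$, hence $\kappa_{\textit{max}}(\M)=2=\diam(S_*(\M)/\!\sim)$.

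The main obstacle is precisely this type $\III_0$ step. Admissibility is handled cheaply by the smearing identity above; the genuine difficulty is that a naively concentrated measure can be ``refreshed'' by the recurrence of the flow --- the flow of weights of a $\III_0$ factor may be built under a bounded ceiling function, so that orbits from $E$ return to $E$ within bounded time --- which is exactly why one must leave the merely measurable, properly ergodic picture and work with a \emph{free topological} flow, where flow boxes exist and the resonant return times form a meager set. Everything else is immediate from results already established in the paper.
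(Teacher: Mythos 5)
Your overall architecture matches the paper's: the upper bound is \cref{cor:kappa_diameter}, the semifinite, type $\III_1$, and type $\III_\lambda$ ($0<\lambda<1$) cases follow from \cref{cor:kappa_semifinite}, \cref{thm:kappa}, and \cref{thm:type_III_lambda} respectively, and the whole content is the type $\III_0$ case, where you also correctly pass to a topological model of the flow via the $C^*$-algebra of $\theta$-continuous elements. However, your resolution of the type $\III_0$ step has a genuine gap. You insist on working with $\mu$-absolutely continuous (i.e.\ normal) admissible measures $P_\nu$ concentrated on a flow box, and you propose to separate $P_\nu$ from $F_t(P_\nu)$ by "freeness together with a Baire-category argument to avoid the at most countably many resonant return times." This mechanism does not work: for a properly ergodic nonsingular flow the set of times $t$ with $\mu(F_t(B)\cap B)>0$ for a positive-measure flow box $B$ is in general a set of positive Lebesgue measure (conservativity forces abundant returns), not a countable or meager set, and "topological freeness" of the model is neither established by \cref{lem:period} (which only yields that the period is $\oo$ almost everywhere, hence that \emph{some} aperiodic point exists) nor sufficient to control returns of a set of positive measure. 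Making your route work would require something like a Rokhlin-tower construction to manufacture flow boxes with long return times, which you do not supply.

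The paper avoids this difficulty entirely, and the step you are missing is \cref{lem:kappa_via_A}: the supremum of $\sup_s\norm{\chi-\chi\circ\theta_s}$ over the normal admissible states $P$ equals the supremum over the larger set $Q$ of \emph{all} (not necessarily normal) admissible states of $\A\cong C(X)$, by $w^*$-density of $S_*(Z(\N))$ in $S(\A)$ combined with $w^*$-lower semicontinuity of the norm and the affine parametrization $\chi_\omega=\int_{-\oo}^0 e^s\,\omega\circ\theta_s\,ds$. Once singular states are allowed, one takes a single aperiodic point $x$ (whose existence follows from \cref{lem:period}) and the orbit measure $\chi=\int_{-\oo}^0 e^t\delta_{F_t(x)}\,dt$; since the orbit map is injective, $\chi$ and $\chi\circ\theta_s$ decompose into pieces with literally disjoint supports and one computes $\norm{\chi-\chi\circ\theta_s}=2-2e^{-s}\to 2$ exactly (\cref{lem:kappa_if_aperiodic}), with no recurrence issue to fight. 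Your admissibility trick via the two-sided exponential smearing is fine, but it is the enlargement from normal states to Radon measures on the Gelfand spectrum — not a topological-freeness or category argument — that is the key idea you are missing.
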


If $\M$ is a type $\I_n$ factor then $\diam(S_*(\M)/\!\sim)=2(1-\frac1n)$ while $\kappa_{\textit{max}}(\M)=2$ as we will see shortly.
The values of $\diam(S_*(\M)/\!\sim)$ are computed in terms of the classification of factors in \cite{connes1985diameters}. For semifinite factors that are not matrix algebras, the diameter is $2$, and for type $\III_\lambda$ factors with $0\le\lambda\le1$, it is $2\frac{1-\sqrt\lambda}{1+\sqrt\lambda}$ (which is a strictly monotonous function of $\lambda$).
Since $\kappa_{\textit{max}}$ computes the diameter by \cref{thm:max}, we can determine the subtype of a type $\III$ factor from its embezzlement performance: A type $\III$ factor $\M$ is type $\III_\lambda$, where $\lambda\in[0,1]$ is uniquely determined by
\begin{equation}
    \lambda = \left(\frac{1-\frac{1}{2}\kappa_{\textit{max}}(\M)}{1+\frac{1}{2}\kappa_{\textit{max}}(\M)}\right)^{2}.
\end{equation}

For semifinite factors, \cref{thm:max} follows from \cref{cor:kappa_semifinite} and for type $\III_\lambda$, $0<\lambda\le1$, it follows from \cref{thm:type_III_lambda}.
For the proof of the type $\III_0$ case, we need some preparatory results. The underlying idea for the following is that if $Z$ is an abelian von Neumann algebra and $\mc A\subset Z$ is an ultraweakly dense $C^*$-subalgebra, we can use the Gelfand-Naimark theorem to realize $\mc A$ as $C(X)$ for a compact Hausdorff space $X$, namely $X=\hat\A$ is the Gelfand spectrum of $\A$. If $Z$ admits a faithful normal state, it induces a full-support Borel probability measure $\mu$ on $X$ and $Z\cong L^\infty(X,\mu)$. 

If $(\theta_t)$ is a point-ultraweakly continuous one-parameter group of automorphisms on $Z$, we call $z\in Z$ a $\theta$-continuous element if $\norm{\theta_t(z) -z}\rightarrow 0$ as $t\rightarrow 0$. 
The set $\A$ of $\theta$-continuous elements is always an ultraweakly dense $C^*$-subalgebra of $Z$.\footnote{Density can be seen from the following: For any $z\in Z$ the sequence of $\theta$-continuous elements
\begin{align}
z_n = \sqrt{\frac{n}{\pi}}\int_{-\infty}^\infty \e^{-n t^2} \theta_t(z)\, dt
\end{align}
converges to $z$ ultraweakly.}
By definition, $(\A\cong C(X),\RR,\theta)$ is a $C^*$-dynamical system, which induces a flow $F:\RR\times X\rightarrow X$ via
\begin{align}
    \theta_t(a)(x) = a(F_{t}(x)),\qquad x\in X,\ a\in\A. 
\end{align}

The flow $F$ is jointly continuous: If $(t_i,x_i)\in \RR\times X$ is a net converging to $(t,x)$, we have for every $a \in \A$
\begin{align}
|a(F_{-{t_i}}(x_i)) - a(F_{-t}(x))| &\leq |a(F_{-t_i}(x_i)) - a(F_{-t}(x_i))| + |a(F_{-t}(x_i))-a(F_{-t}(x)| \nonumber\\
&\leq \norm{\theta_{t_i}(a)-\theta_t(a)} + |a(F_{-t}(x_i))-a(F_{-t}(x))| \rightarrow 0,
\end{align}
where we used that $\theta$ is norm-continuous on $\A$ and $a$ is a continuous function on $X$.

\begin{lemma}\label{lem:kappa_via_A}
    Let $Z$ be an abelian von Neumann algebra and let $(\theta_t)$ be a point-ultraweakly continuous one-parameter group of automorphisms.
    Let $\A\subset Z$ be the ultraweakly dense $C^*$-subalgebra of $\theta$-continuous elements. Then $\A$ is $\theta$-invariant and we have 
    \begin{equation}\label{eq:PandQ}
        P := \{\chi\in S_*(Z) : \chi\circ\theta_s \ge e^{-s}\chi,\ s\ge0\} \subseteq
        Q := \{\chi\in S(\A) : \chi\circ\theta_s \ge e^{-s}\chi,\ s\ge0\}.
    \end{equation}
    Note that $P \subset Q$ and that both are convex sets. The set $Q$ can be characterised as follows: Let $\chi\in S(\A)$, then
    \begin{equation}\label{eq:chi_omega}
        \chi\in Q \iff \exists\omega\in S(\A) \,:\, \chi = \int_{-\oo}^0 e^s \omega\circ \theta_s\,ds
    \end{equation}
    Furthermore, it holds that
    \begin{equation}\label{eq:PorQ}
        \kappa_{\textit{max}}(Z) := \sup_{\chi\in P} \sup_s \norm{\chi-\chi\circ\theta_s} = \sup_{\chi\in Q} \sup_s \norm{\chi-\chi\circ\theta_s}.
    \end{equation}
\end{lemma}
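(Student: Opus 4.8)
The plan is to establish the three claims in turn: (i) that $\A$ is $\theta$-invariant and $P\subseteq Q$; (ii) the characterisation \eqref{eq:chi_omega} of $Q$; and (iii) the equality of suprema \eqref{eq:PorQ}. For (i), $\theta$-invariance of $\A$ is immediate: if $z$ is $\theta$-continuous then so is $\theta_r(z)$ since $\norm{\theta_t(\theta_r(z))-\theta_r(z)} = \norm{\theta_{t+r}(z)-\theta_r(z)} = \norm{\theta_t(\theta_r(z))-\theta_r(z)}$ and one uses that $\theta_r$ is isometric together with continuity at $t=0$. The inclusion $P\subseteq Q$ is even more direct: a normal state $\chi$ on $Z$ restricts to a state on the subalgebra $\A$, and the inequality $\chi\circ\theta_s\ge e^{-s}\chi$ is an inequality of positive functionals that is inherited by restriction to the positive cone of $\A$. (Normality of the restriction is automatic and not needed; $Q$ only asks for states on the $C^*$-algebra $\A$.) Both sets are convex because the defining conditions are preserved under convex combinations.

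For (ii), I would argue both directions. If $\chi = \int_{-\infty}^0 e^s\,\omega\circ\theta_s\,ds$ for some state $\omega\in S(\A)$ (the integral converging in norm since $s\mapsto \omega\circ\theta_s$ is norm-bounded and, on $\A$, norm-continuous, so Bochner-integrable against $e^s\,ds$ on $(-\infty,0]$), then $\chi$ is a state because $\int_{-\infty}^0 e^s\,ds = 1$, and for $t\ge0$ a change of variables gives $\chi\circ\theta_t = \int_{-\infty}^0 e^s\,\omega\circ\theta_{s+t}\,ds = \int_{-\infty}^t e^{r-t}\,\omega\circ\theta_r\,dr = e^{-t}\chi + e^{-t}\int_0^t e^r\,\omega\circ\theta_r\,dr \ge e^{-t}\chi$, since the second term is a positive functional. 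Conversely, given $\chi\in Q$, one recovers $\omega$ from the generator of the semigroup $t\mapsto \chi\circ\theta_t$ on $(-\infty,0]$: formally $\omega = \chi - \frac{d}{dt}\big|_{t=0^+}\chi\circ\theta_t$, and the point is that $s\mapsto \chi\circ\theta_s$, together with the domination $\chi\circ\theta_s\ge e^{-s}\chi$, forces $e^{s}\chi\circ\theta_s$ to be non-decreasing in $s$ (as functionals on $\A^+$), hence of bounded variation; writing $\mu(ds)$ for the (positive, functional-valued) measure $d(e^s\chi\circ\theta_s)$ on $(-\infty,0]$ and checking total mass $1$ yields a state $\omega$ realising the formula. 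This is the step where I expect to have to be careful: making the "derivative" rigorous requires treating $e^s\chi\circ\theta_s$ as an increasing function of $s$ valued in the (order-complete) dual cone and invoking a Radon–Nikodym / bounded-variation decomposition in that setting — the cleanest route is probably to test against a countable norming family of positive elements of $\A$, obtain a scalar Stieltjes measure for each, and patch them into a single functional-valued measure using separability of $\A_*$ (guaranteed since $Z$ has separable predual). I would likely cite the analogous scalar statement in \cite{haagerup1990equivalence} (the proof that the image of $\omega\mapsto\hat\omega$ is exactly $\{\chi:\chi\circ\theta_s\ge e^{-s}\chi\}$, e.g.\ \cite[Prop.~6.3]{haagerup1990equivalence}) to shortcut the functional-analytic bookkeeping.

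For (iii), the inequality $\sup_{\chi\in P}\sup_s\norm{\chi-\chi\circ\theta_s}\le \sup_{\chi\in Q}\sup_s\norm{\chi-\chi\circ\theta_s}$ is immediate from $P\subseteq Q$ once we note that for $\chi\in P$ the norm $\norm{\chi-\chi\circ\theta_s}$ computed in $Z_*$ coincides with the norm computed in $\A^*$ — which holds because $\A$ is ultraweakly dense in $Z$ and $\chi,\chi\circ\theta_s$ are normal, so their difference is a normal functional whose norm is already attained (approximately) on the unit ball of $\A$. For the reverse inequality, take $\chi\in Q$; by \eqref{eq:chi_omega}, $\chi$ extends: indeed $\omega\in S(\A)$ extends (Hahn–Banach / Gelfand) to a normal state on $Z$ — more carefully, $\omega$ need not be normal, but the averaged functional $\tilde\chi := \int_{-\infty}^0 e^s\,\tilde\omega\circ\theta_s\,ds$ built from \emph{any} Hahn–Banach state extension $\tilde\omega$ of $\omega$ on $Z$ lies in $P$ if it is normal; to force normality one replaces $\tilde\omega$ by a normal state approximating it in the weak-$*$ topology restricted to $\A$, using that $S_*(Z)$ is weak-$*$ dense in $S(\A)$ (again since $\A$ is ultraweakly dense in $Z$). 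Then $\tilde\chi\in P$ and $\tilde\chi|_\A = \chi$, so $\norm{\chi-\chi\circ\theta_s}_{\A^*}\le\norm{\tilde\chi-\tilde\chi\circ\theta_s}_{Z_*}$, and taking suprema finishes the argument; the opposite-side estimate shows these are in fact equal. The one subtlety here — the main obstacle alongside the bounded-variation point above — is the interplay between "state on the $C^*$-algebra $\A$" and "normal state on $Z$": one must make sure that passing between the two does not change the relevant norm, which is exactly guaranteed by ultraweak density of $\A$ together with normality of the functionals in $P$.
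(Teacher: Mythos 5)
Your overall route is the same as the paper's: parts (i) and (ii) are handled by deferring the hard direction of \eqref{eq:chi_omega} to \cite[Sec.~6]{haagerup1990equivalence} (the paper cites exactly this, including the fact that \eqref{eq:chi_omega} is an affine bijection between $S(\A)$ and $Q$), and part (iii) is attacked by approximating $\omega\in S(\A)$ with normal states of $Z$ in the weak-$*$ topology and pushing the approximants through the map $\omega\mapsto\chi_\omega$ to land in $P$. Your direct computation for the ``if'' direction of \eqref{eq:chi_omega} and your observation that $\norm{\placeholder}_{Z_*}=\norm{\placeholder}_{\A^*}$ on normal functionals are both correct (the latter by Kaplansky density), though the integral $\int_{-\oo}^0 e^s\,\omega\circ\theta_s\,ds$ should be read as a weak-$*$ (Gel'fand--Pettis) integral rather than a Bochner integral: $s\mapsto\omega\circ\theta_s$ is only pointwise norm-continuous on elements of $\A$, not norm-continuous as an $\A^*$-valued map.

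There is, however, a genuine gap in the last step of part (iii). After you replace the Hahn--Banach extension $\tilde\omega$ by a normal state $\omega_\alpha\in S_*(Z)$ approximating $\omega$ in the weak-$*$ topology, the resulting $\chi_{\omega_\alpha}\in P$ no longer restricts to $\chi$ on $\A$; it only converges to $\chi$ weak-$*$. So the displayed inequality $\norm{\chi-\chi\circ\theta_s}_{\A^*}\le\norm{\tilde\chi-\tilde\chi\circ\theta_s}_{Z_*}$ does not follow as written --- you are comparing $\chi$ with functionals that merely approximate it, and the norm is not weak-$*$ continuous. The missing ingredient is precisely what the paper uses to close the argument: weak-$*$ \emph{lower semicontinuity} of the norm on $\A^*$, which gives
$\norm{\chi-\chi\circ\theta_s}\le\liminf_\alpha\norm{\chi_{\omega_\alpha}-\chi_{\omega_\alpha}\circ\theta_s}\le\sup_{\chi'\in P}\sup_t\norm{\chi'-\chi'\circ\theta_t}$,
and hence $\kappa(Q)\le\kappa(P)$ after taking $\eps\to0$. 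With that substitution your argument is complete and coincides with the paper's.
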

\begin{proof}
    All statements except for eq.~\eqref{eq:PorQ} are proved in \cite[Sec.~6]{haagerup1990equivalence} where it is also established that \eqref{eq:chi_omega} defines an affine bijection between the convex sets $Q$ and $S(\A)$. We denote the state $\chi\in Q$ is given by $\omega\in S(\A)$ as in \eqref{eq:chi_omega} by $\chi_\omega$.
    Denote the left and right-hand sides of \eqref{eq:PorQ} by $\kappa(P)$ and $\kappa(Q)$, respectively.
    Clearly, $P\subset Q$ implies $\kappa(P)\le \kappa(Q)$.
    Let $\eps>0$ and pick $\chi=\chi_\omega\in Q$ and $s>0$ such that $\norm{\chi-\chi\circ\theta_s}\ge \kappa(Q)-\eps$.
    Picking a net $\omega_\alpha \in S_*(Z)$ such that $\omega_\alpha\to \omega$ in the $w^*$-topology and setting $\chi_\alpha=\chi_{\omega_\alpha}$, gives us a net $\chi_\alpha\in P$ which $w^*$-converges to $\chi\in Q$.
    From $w^*$-lower semicontinuity of the norm on $\A^*$, we conclude
    \begin{align*}
        \kappa(P)-\eps \le \norm{\chi-\chi\circ\theta_s} \le \liminf_\alpha \norm{\chi_\alpha-\chi_\alpha\circ\theta_s} \le \lim_\alpha \kappa(Q) = \kappa(Q).
    \end{align*}
\end{proof}

\begin{lemma}\label{lem:period}
    Let $X$ be a compact Hausdorff space and let $F$ be a continuous flow.
    Then the period-function 
    \begin{equation}
        p:X\to[0,\oo],\quad p(x) = \inf\{t>0,\oo : F_t(x)=x\}
    \end{equation}
    is lower semicontinuous.
    If $\mu$ is a Borel measure of full support, which is quasi-invariant under the flow and such that $F$ is $\mu$-ergodic, then the period $p$ is constant almost everywhere.
\end{lemma}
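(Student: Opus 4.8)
The plan is to prove the two assertions in turn, after one structural remark. First I would note that for each $x\in X$ the stabiliser $\Gamma_x=\{t\in\RR:F_t(x)=x\}$ is a closed subgroup of $(\RR,+)$ — closed because $t\mapsto F_t(x)$ is continuous, a subgroup because $F$ is a flow — and hence equals $\{0\}$, $\RR$, or $c\ZZ$ for a unique $c>0$, so that $p(x)=\oo$, $p(x)=0$ (exactly when $x$ is a flow fixed point), or $p(x)=c$. In particular, whenever $p(x)<\oo$ the infimum defining $p(x)$ is \emph{attained}: $F_{p(x)}(x)=x$. This last point is what makes the period usable as a time-coordinate.

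For lower semicontinuity I would show that each sublevel set $A_c=\{x:p(x)\le c\}$, $c\in[0,\oo)$, is closed. Given a net $x_i\to x$ with $p(x_i)\le c$, pass to a subnet with $p(x_i)\to s\in[0,c]$. If $s>0$ then $p(x_i)>0$ eventually, so $F_{p(x_i)}(x_i)=x_i$, and joint continuity of $F$ gives $F_s(x)=\lim_i F_{p(x_i)}(x_i)=\lim_i x_i=x$, whence $p(x)\le s\le c$. If $s=0$ one shows $x$ is a fixed point (so $p(x)=0\le c$): if $p(x_i)=0$ cofinally this is immediate from continuity, and otherwise $0<p(x_i)\to0$ eventually, in which case for any fixed $t>0$ one writes $t=k_ip(x_i)+r_i$ with $k_i\in\ZZ_{\ge0}$, $r_i\in[0,p(x_i))$, so that $F_t(x_i)=F_{r_i}(x_i)\to F_0(x)=x$ because $r_i\to0$; letting $i$ run shows $F_t(x)=x$ for all $t>0$.

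For the second assertion, $p$ is $F$-invariant since $\Gamma_{F_s(x)}=\Gamma_x$ for all $s$, and it is Borel by the first part, so each $A_c$ is a Borel set with $F_s^{-1}(A_c)=A_c$ for all $s$. Quasi-invariance and ergodicity of $\mu$ then force every $A_c$ to be null or co-null. With $c_0:=\inf\{c\in[0,\oo]:A_c\text{ co-null}\}$ (the set is nonempty as $A_\oo=X$), one gets that $\{p<c_0\}=\bigcup_{q\in\QQ\cap[0,c_0)}A_q$ is null and $\{p>c_0\}=X\setminus A_{c_0}=\bigcup_n\bigl(X\setminus A_{c_0+1/n}\bigr)$ is null, hence $p=c_0$ almost everywhere (allowing $c_0=\oo$, i.e.\ $p\equiv\oo$ a.e.).

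The step I expect to be the main obstacle is the $s=0$ case of the lower-semicontinuity argument: one must exclude accumulation of points with arbitrarily small periods at a non-fixed point, and the crucial ingredients there are that the period is attained (so it may legitimately be used as the time argument) and joint — not merely separate — continuity of the flow. The remaining pieces (the subgroup dichotomy, invariance of $p$, and the null/co-null dichotomy from ergodicity) are routine once $p$ is known to be Borel.
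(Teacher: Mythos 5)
Your proof is correct and follows essentially the same strategy as the paper's: lower semicontinuity via closedness of the sublevel sets, established by extracting a convergent subnet of periods and invoking compactness together with joint continuity of the flow, and a.e.\ constancy from ergodicity applied to the $F$-invariant Borel function $p$. The only mechanical difference is that where you split into the cases $s>0$ and $s=0$ (disposing of the latter with the division-with-remainder argument), the paper multiplies each period $s_n'\le t$ by an integer so that it lands in the compact interval $(\tfrac12 t,t]$, which keeps the limiting period bounded away from zero and avoids the degenerate case entirely; your explicit closed-subgroup observation, which guarantees that the period is attained, is used implicitly in the paper's argument as well.
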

\begin{proof}
    Since $p$ is a positive function, lower semicontinuity will follow if we show that all sublevel sets $p^{-1}([0,t])$ for any $t>0$ are closed. Note, that this also implies that $p^{-1}(\{0\})$ is closed since $p^{-1}(\{0\})=\bigcap_{t>0}p^{-1}([0,t])$.
Consider a net $x_n\in p^{-1}([0,t])$ with limit $x\in X$. Then, for all $n$, there exist $s_n\in(\tfrac{1}{2}t,t]$ such that $F_{s_n}(x_n)=x_n$. To see this, we note that by assumption, for all $n$, there exists $s'_{n}\in(0,t]$ with $F_{s'_n}(x_n)=x_n$. If, for some $n$, we have $s'_{n}\leq\tfrac{1}{2}t$, there exists an integer $N_{n}$ such that $N_{n}s'_{n}\in(\tfrac{1}{2}t,t]$ and $F_{N_{n}s'_{n}}(x_{n})=(F_{s'_{n}})^{N_{n}}(x_{n})=x_{n}$. By compactness of $[\tfrac{1}{2}t,t]$, there exists a subnet $s_{n(\alpha)}$ that converges to some $s\in [\tfrac{1}{2}t,t]$. Using continuity of $F$, we conclude $x\in p^{-1}([\tfrac{1}{2}t,t])$ from
    \begin{equation}
        x =  \lim_\alpha x_{n(\alpha)} = \lim_{\alpha} F_{s_{n(\alpha)}}(x_{n(\alpha)}) = F_{\lim_\alpha s_{n(\alpha)}}\Big(\lim_\alpha x_{n(\alpha)}\Big) = F_s(x).
    \end{equation}
    If $F$ is ergodic, $p$ must be constant almost everywhere because it is a measurable $F$-invariant function (to see this, observe that the preimage of every measurable subset of $[0,\oo]$ is $F$-invariant and measurable).
\end{proof}

\begin{lemma}\label{lem:kappa_if_aperiodic}
    Let $X$ be a compact Hausdorff space, let $F$ be a continuous flow on $X$ and let $\theta_s$ be the corresponding strongly continuous action on $C(X)$, i.e., $\theta_s(f)(x) = f(F_s(x))$.
    If $x\in X$ is an aperiodic point, i.e., $x\ne F_t(x)$ for all $0\neq t\in\RR$, then there exists a state $\chi$ on $C(X)$ such that
    \begin{equation}
        \chi \circ\theta_s \ge e^{-s} \chi,\quad s>0,\qandq
        \lim_{s\to\oo} \norm{\chi-\chi\circ\theta_s} =2.
    \end{equation}
\end{lemma}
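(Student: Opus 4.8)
The plan is to use the state obtained by concentrating the parametrization \eqref{eq:chi_omega} of the set $Q$ at the point $x$. Define $\chi$ on $C(X)$ by
\[
    \chi(f)=\int_{-\oo}^0 e^r\, f(F_r(x))\,dr,\qquad f\in C(X).
\]
This is well defined because $r\mapsto f(F_r(x))$ is continuous, it is positive, and $\chi(1)=\int_{-\oo}^0 e^r\,dr=1$, so $\chi$ is a state. The substitution $u=r+s$ gives $(\chi\circ\theta_s)(f)=\int_{-\oo}^0 e^r f(F_{r+s}(x))\,dr=e^{-s}\int_{-\oo}^{s}e^u f(F_u(x))\,du$, which for $s>0$ and $f\ge0$ is at least $e^{-s}\int_{-\oo}^0 e^u f(F_u(x))\,du=e^{-s}\chi(f)$; hence $\chi\circ\theta_s\ge e^{-s}\chi$. (In the notation of \cref{lem:kappa_via_A}, $\chi$ is the member $\chi_{\delta_x}$ of $Q$ associated with the Dirac measure at $x$.)

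For the second assertion, since $\chi$ and $\chi\circ\theta_s$ are states one always has $\norm{\chi-\chi\circ\theta_s}\le 2$, so I only need a matching lower bound. Writing $\phi(r)=F_r(x)$ and subtracting the two integral formulas above gives
\[
    (\chi-\chi\circ\theta_s)(g)=(1-e^{-s})\int_{-\oo}^0 e^r g(\phi(r))\,dr-\int_0^s e^{r-s}g(\phi(r))\,dr,\qquad g\in C(X).
\]
Fix $\eps>0$ and pick $0<b<a$ with $2(e^{-b}-e^{-a})>2-\eps$. The compact sets $\phi([-a,-b])$ and $\phi([0,s])$ are disjoint: if $F_{r_1}(x)=F_{r_2}(x)$ with $r_1\in[-a,-b]$, $r_2\in[0,s]$, then $F_{r_2-r_1}(x)=x$ with $r_2-r_1\ge b>0$, contradicting the aperiodicity of $x$. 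Since a compact Hausdorff space is normal, Urysohn's lemma provides a continuous $g\colon X\to[-1,1]$ with $g\equiv1$ on $\phi([-a,-b])$ and $g\equiv-1$ on $\phi([0,s])$. For this $g$ the second integral equals $-(1-e^{-s})$, while the first is at least $2\int_{-a}^{-b}e^r\,dr-\int_{-\oo}^0 e^r\,dr=2(e^{-b}-e^{-a})-1$, so
\[
    (\chi-\chi\circ\theta_s)(g)\ \ge\ 2(1-e^{-s})(e^{-b}-e^{-a})\ >\ (1-e^{-s})(2-\eps),
\]
whence $\norm{\chi-\chi\circ\theta_s}\ge(1-e^{-s})(2-\eps)$. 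Letting $s\to\oo$ and then $\eps\to0$ yields $\lim_{s\to\oo}\norm{\chi-\chi\circ\theta_s}=2$.

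The one delicate point, and the only place aperiodicity enters, is that the full backward orbit $\{F_r(x):r<0\}$ and the forward orbit segment $\{F_r(x):0\le r\le s\}$ both accumulate at $x$, so no single continuous function separates them outright; the fix is to truncate the backward orbit to a compact segment $[-a,-b]$ bounded away from $0$—where aperiodicity makes the two segments genuinely disjoint—and then exploit that the exponential weight makes the truncation error vanish as $a\to\oo$ and $b\to0$. This lemma provides the topological ingredient which, combined with \cref{lem:kappa_via_A}, is used to evaluate $\kappa_{\textit{max}}$ for type $\mathrm{III}_0$ factors in \cref{thm:max}.
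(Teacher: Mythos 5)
Your proof is correct and takes essentially the same approach as the paper: the same state $\chi=\int_{-\infty}^0 e^t\,\delta_{F_t(x)}\,dt$ and the same use of aperiodicity to make the backward- and forward-orbit pieces disjoint. The only difference is that the paper reads off the exact value $\norm{\chi-\chi\circ\theta_s}=2-2e^{-s}$ from the mutual singularity of the two measures, whereas you obtain a matching lower bound via a truncated Urysohn test function (and you additionally spell out the verification of $\chi\circ\theta_s\ge e^{-s}\chi$, which the paper leaves implicit); both are fine.
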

\begin{proof}
    We identify states on $C(X)$ and Radon probability measures on $X$ and set $\chi = \int_{-\oo}^0 e^{t} \delta_{F_{t}(x)}dt$.
    Since $x$ is an aperiodic point, we have
    \begin{align*}
        \norm{\chi-\chi\circ\theta_s} 
        &= \norm{\int_{-\oo}^0e^t\delta_{F_t(x)}dt-\int_{-\oo}^s e^{t-s}\delta_{F_t(x)} dt}\\
        &= \norm{\int_{-\oo}^0 (1-e^{-s})e^t \delta_{F_t(x)} dt - \int_0^s e^{t-s} \delta_{F_t(x)} dt} \\
        &= (1-e^{-s})\norm{ \int_{-\oo}^0 e^t \delta_{F_t(x)} dt} + \norm{\int_0^s e^{t-s}\delta_{F_t(x)}dt}\\
        &= (1-e^{-s}) + e^{-s}\int_0^s e^{t} dt\\
        &= (1-e^{-s}) +e^{-s} (e^s-1) =2 - 2e^{-s} \to 2,
    \end{align*}
    where we used, from the second to the third line, that the two integrals define measures with orthogonal support.
\end{proof}

\begin{proof}[Proof of \cref{thm:max} for type $\III_0$ factors]
    Let $\M$ be a type $\III_0$ factor with separable predual and let $(Z(\N),\theta)$ be its flow of weights.
    Since $\M$ is a factor, $\theta$ is an ergodic flow.
    As discussed above, by considering the $C^*$-subalgebra $\A\cong C(X)$ of $\theta$-continuous elements, we obtain a continuous flow $F:\RR\times X\rightarrow X$.
    Let $\phi$ be a faithful normal state on $Z(\N)$ (such a state exists because $Z(\N)$ has separable predual). Restriction of this state to $\A$ gives a full-support Borel measure i$\mu$ on $X$ such that the isomorphism $A\cong C(X)$ extends to $L^\oo(X,\mu)\cong Z(\N)$.
    In the following, we make the identifications
    \begin{equation}
        \A = C(X) \subset L^\oo(X,\mu) = Z(\N),\qquad S_*(Z(\N))\subset S(\A) = M(X)_1^+,
    \end{equation}
    where $M(X)_1^+$ is the set of Radon probability measures on $X$.
    By the main theorem of \cite{haagerup1990equivalence}, the range of the map $S_*(\M)\ni\omega\mapsto\hat\omega\in S_*(Z(\N))$ is exactly the convex set $P$ in \eqref{eq:PandQ}.
    Therefore, \cref{lem:kappa_via_A} implies
    \begin{equation}\label{eq:typeIII_0_kappa}
        \kappa_{\textit{max}}(\M) = \sup_{\omega\in S_*(\M),\,s>0} \norm{\hat\omega-\hat\omega\circ\theta_s} = \sup_{\chi\in P,\, s>0} \norm{\chi-\chi\circ\theta_s}
        = \sup_{\chi\in Q,\, s>0} \norm{\chi-\chi\circ\theta_s}. 
    \end{equation}
    By \cref{lem:period}, almost all points of $X$ have the same period $T\in[0,\oo]$. If $T$ were finite, the flow of weights would be periodic, which contradicts the assumption that $\M$ is a type $\III_{0}$ factor (see \cref{sec:mbz_fow}).
    Thus, $T=\oo$. In particular, there exists a single point, which is aperiodic.
    Therefore, \cref{lem:kappa_if_aperiodic} implies that the right-hand side of \eqref{eq:typeIII_0_kappa} is equal to $2$, which is equal to the diameter of the state space \cite{connes1985diameters}.
\end{proof}

Let us turn our attention to the algebraic invariant $\kappa_{\textit{min}}$.
In stark contrast to $\kappa_{max}$ it is bivalent, only attaining the extremal values $0$ or $2$:
\begin{theorem}\label{thm:min}
    Let $\M$ be a von Neumann algebra with separable predual.
    Then exactly one of the following is true:
    \begin{enumerate}[(i)]
        \item There exist embezzling states and $\kappa_{\textit{min}}(\M)=0$.
        \item There exist no embezzling states and $\kappa_{\textit{min}}(\M)=2$.
    \end{enumerate}
    If $\M$ is semifinite, then $\kappa_{\textit{min}}(\M)=2$. If $\M$ is type $\III_\lambda$ factor with $0<\lambda\le1$, then $\kappa_{\textit{min}}(\M)=0$.
    There exist type $\III_0$ factors with $\kappa_{\textit{min}}(\M)=0$ and there exist $\III_0$ factors with $\kappa_{\textit{min}}(\M)=2$.
\end{theorem}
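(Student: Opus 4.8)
The plan is to establish the dichotomy first and then treat the individual cases. For the dichotomy, suppose that $\kappa_{\textit{min}}(\M)<2$, so there is a normal state $\omega$ with $\kappa(\omega)<2$. By \cref{thm:kappa}, this means $\sup_{s}\norm{\hat\omega\circ\theta_s-\hat\omega}<2$, i.e.\ the spectral state $\hat\omega$ on $Z(\N)$ has a unitary orbit under the flow of weights that stays bounded away from the "antipodal" value $2$. The goal is to produce from $\hat\omega$ an honest $\theta$-invariant normal state $\chi$ on $Z(\N)$; by \cref{cor:invariant_iff_mbz} this gives an embezzling state and hence $\kappa_{\textit{min}}(\M)=0$. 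To get invariance, I would average: first reduce to the case where $\M$ has no semifinite summand (on a semifinite summand \cref{cor:kappa_semifinite} already gives $\kappa(\omega)=2$, so such a summand is incompatible with $\kappa(\omega)<2$ unless the state does not charge it), then use the description from \cref{thm:distance_of_unitary_orbits} and \cite[Prop.~6.3]{haagerup1990equivalence} that $\{\hat\varphi:\varphi\in S_*(\M)\}$ is exactly $\{\chi\in S_*(Z(\N)):\chi\circ\theta_s\ge e^{-s}\chi\ \forall s>0\}$. The key point is that, because all iterates $\hat\omega\circ\theta_s$ lie in a norm-ball of radius $<2$ about $\hat\omega$ inside the predual of an abelian von Neumann algebra, one can take a weak$^*$-cluster point of ergodic (Cesàro-in-$s$) averages $\frac1T\int_0^T\hat\omega\circ\theta_s\,ds$ and argue it is a nonzero $\theta$-invariant positive functional; normalizing gives $\chi$. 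This is the step I expect to be the main obstacle: making sure the cluster point is (a) nonzero and (b) genuinely $\theta$-invariant, and (c) that a corresponding $\omega'\in S_*(\M)$ with $\hat{\omega'}=\chi$ exists — for (c) one checks $\chi\circ\theta_s\ge e^{-s}\chi$, which is automatic for $\theta$-invariant $\chi$ and $s\ge0$, so Haagerup--Størmer applies. Once an embezzling state exists, $\kappa_{\textit{min}}(\M)=0$ trivially; and if no embezzling state exists, then no state has $\kappa(\omega)<2$ by the contrapositive of the above, so $\kappa_{\textit{min}}(\M)=2$.

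For the semifinite case, $\kappa(\omega)=2$ for \emph{every} normal state $\omega$ by \cref{cor:kappa_semifinite}, hence $\kappa_{\textit{min}}(\M)=2$ directly. For $\M$ a type $\III_\lambda$ factor with $0<\lambda<1$, \cref{thm:type_III_lambda} (via \cref{lem:type_III_lambda} and \cref{lem:f}) produces an explicit embezzling state — namely the state whose spectral distribution function is $f_\omega(t)=\frac1{-\log\lambda}\frac1t$ — so $\kappa_{\textit{min}}(\M)=0$; the case $\lambda=1$ is immediate since every state on a type $\III_1$ factor is embezzling (\cref{thm:universal_embezzzeling_algebra} or the homogeneity of the state space \cite{connes_homogeneity_1978}). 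It remains to exhibit both kinds of type $\III_0$ factors. For $\kappa_{\textit{min}}(\M)=0$: by \cref{cor:invariant_iff_mbz} it suffices to construct a type $\III_0$ factor whose flow of weights — an aperiodic ergodic nonsingular flow on a standard Borel space — admits an invariant probability measure equivalent to $\mu$. Such flows exist: take an ergodic \emph{measure-preserving} aperiodic flow (e.g.\ an irrational-rotation-type flow, or a flow built under a function over an ergodic transformation with an invariant probability measure), realize it as the flow of weights of an ITPFI (or more generally hyperfinite) factor via the Krieger/Araki--Woods correspondence between ergodic nonsingular flows and flows of weights of hyperfinite factors, and the resulting $\M$ has type $\III_0$ (aperiodicity) and $\kappa_{\textit{min}}(\M)=0$ (the invariant probability measure is the spectral state of an embezzling state). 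For $\kappa_{\textit{min}}(\M)=2$: I would invoke the proof of \cref{thm:max} in the type $\III_0$ case, which via \cref{lem:kappa_if_aperiodic} shows that at \emph{any} aperiodic point of the topological model $(X,F)$ there is a state $\chi\in Q$ with $\sup_s\norm{\chi-\chi\circ\theta_s}=2$; but this only bounds $\kappa_{\textit{max}}$. To force $\kappa_{\textit{min}}=2$ one instead needs a type $\III_0$ factor whose flow of weights admits \emph{no} invariant probability measure at all — equivalently an ergodic nonsingular aperiodic flow with no equivalent invariant probability measure (a "properly ergodic of type $\text{III}$-in-the-flow" situation). Classical ergodic theory supplies such flows (e.g.\ flows associated to type $\text{III}$ odometer-type transformations), and realizing any such flow as the flow of weights of a hyperfinite factor yields the desired $\M$. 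By the dichotomy just proved, absence of an invariant probability measure forces $\kappa_{\textit{min}}(\M)=2$.

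The overall structure is therefore: (1) prove the $0/2$ dichotomy by the averaging argument on the flow of weights, using \cref{thm:kappa}, \cref{cor:invariant_iff_mbz}, and the Haagerup--Størmer range description; (2) semifinite $\Rightarrow$ $2$ from \cref{cor:kappa_semifinite}; (3) $\III_\lambda$, $\lambda>0$ $\Rightarrow$ $0$ from \cref{thm:type_III_lambda}; (4) for $\III_0$, translate "$\kappa_{\textit{min}}=0$" and "$\kappa_{\textit{min}}=2$" into "the flow of weights has / has no equivalent invariant probability measure" and quote the existence of ergodic aperiodic nonsingular flows of both kinds together with their realization as flows of weights of hyperfinite factors. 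The genuinely delicate point, and the one deserving the most care in the writeup, is the nonvanishing and invariance of the averaged functional in step (1) — in particular ruling out that the Cesàro averages escape to $0$ in norm, which is where the hypothesis $\kappa(\omega)<2$ (strict) rather than $\le 2$ is essential.
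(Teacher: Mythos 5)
Your overall architecture is sound, and steps (2)--(4) coincide with the paper's proof: semifinite follows from \cref{cor:kappa_semifinite}, type $\III_\lambda$ with $\lambda>0$ from \cref{thm:type_III_lambda}, and for type $\III_0$ the paper does exactly what you propose --- it quotes that every properly ergodic flow is the flow of weights of a type $\III_0$ factor \cite[Thm.~XVIII.2.1]{takesaki3}, takes the irrational line flow on the torus for $\kappa_{\textit{min}}=0$, and an odometer-type flow with no invariant probability measure for $\kappa_{\textit{min}}=2$ (made explicit in \cref{rem:stefaan}). (Minor point: you only need the invariant measure to be absolutely continuous, not equivalent to $\mu$; and not every properly ergodic flow is the flow of weights of an \emph{ITPFI} factor, only of a hyperfinite one, so drop the ITPFI qualifier.)

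Where you genuinely diverge from the paper is the dichotomy itself, and that is also where your plan has a real gap. The paper does not average at all: it reduces the dichotomy to \cref{lem:amine}, an ergodic-theoretic statement proved via the Koopman representation $U_t\xi(x)=m_t(x)\xi(F_{-t}(x))$ and \cite[Props.~4.4--4.5]{arano_ergodic_2021}, which says that absence of an invariant probability measure is equivalent to $\sup_t\norm{P-F_t(P)}=2$ for \emph{all} nonsingular $P$. Your Ces\`aro-averaging route can in principle replace this, but as written it stalls exactly where you say it does, and the obstacle is not quite the one you name: the issue is not that the averages ``escape to $0$ in norm'' (they are states, of norm $1$), but that a weak$^*$ cluster point of $h_T=\frac1T\int_0^T\hat\omega\circ\theta_s\,ds$ lives a priori in the full dual $Z(\N)^*$ and may be a \emph{singular} (non-normal) invariant state --- the classical invariant-mean pathology --- in which case \cref{cor:invariant_iff_mbz} does not apply. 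To close this, one can argue as follows: writing $g_s$ for the density of $\hat\omega\circ\theta_s$ and $g$ for that of $\hat\omega$, the bound $\norm{\hat\omega\circ\theta_s-\hat\omega}\le 2-\delta$ gives $\int\min(g_s,g)\,d\mu\ge\delta/2$, hence by concavity $\int\min(h_T,g)\,d\mu\ge\delta/2$ for all $T$; the family $\{\min(h_T,g)\}_T$ is dominated by $g$, hence uniformly integrable, so along a subnet it converges weakly in $L^1$ to some $k\ge0$ with $\int k\,d\mu\ge\delta/2$; any weak$^*$ cluster point $\chi$ of $h_T\,d\mu$ then dominates the nonzero normal functional $k\,d\mu$, so its normal part $\chi_n$ is nonzero, and $\chi_n$ inherits $\theta$-invariance because the normal/singular decomposition is preserved by normal automorphisms. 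Normalizing $\chi_n$ and invoking \cref{cor:invariant_iff_mbz} then yields the embezzling state. Without some such argument your step (1) is a reduction of the theorem to an unproved claim; with it, your route is a legitimate, more self-contained alternative to citing \cite{arano_ergodic_2021}.
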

\cref{thm:min} follows from a corresponding dichotomy for ergodic systems, which we apply to the flow of weights. This was kindly pointed out to us by Amine Marrakchi.
\begin{lemma}\label{lem:amine}
    Consider a standard measure space $(X,\mu)$ with a nonsingular flow $F$, i.e., a one-parameter group of nonsingular Borel transformations $(F_t)$, that leaves $\mu$ quasi-invariant.
    The following are equivalent:
    \begin{enumerate}[(a)]
        \item\label{it:amine1} There is no nonsingular $F$-invariant probability measure on $X$.
        \item\label{it:amine2} $\sup_{t\in\RR} \norm{P-F_t(P)}=2$ for some faithful nonsingular probability measure $P$ on $X$. 
        \item\label{it:amine3} $\sup_{t\in\RR} \norm{P-F_t(P)}=2$ for all nonsingular probability measure $P$ on $X$.
        \end{enumerate}
\end{lemma}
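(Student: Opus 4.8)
The plan is to prove the cyclic implications (a)$\Rightarrow$(c)$\Rightarrow$(b)$\Rightarrow$(a), after rewriting the three conditions in terms of \emph{overlap mass}: for probability measures $P,Q$ on $X$ one has $\norm{P-Q}=2-2(P\wedge Q)(X)$, where $P\wedge Q$ is the greatest lower bound in the lattice of measures, i.e.\ $d(P\wedge Q)=\min\bigl(dP/d\mu,\,dQ/d\mu\bigr)\,d\mu$. Thus (b) and (c) say exactly that $\inf_{t\in\RR}(P\wedge F_t(P))(X)=0$ for some faithful, respectively every, nonsingular probability measure $P$, while (a) is the nonexistence of a nonsingular $F$-invariant probability measure. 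With this reformulation, (c)$\Rightarrow$(b) is trivial, since a standard measure space always carries a faithful nonsingular probability measure (normalize $\mu$ if it is finite, else take $\sum_n 2^{-n}\mu(\,\cdot\cap X_n)/\mu(X_n)$ for an increasing exhaustion $X_n\uparrow X$), to which (c) applies.

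For (b)$\Rightarrow$(a) I argue by contraposition. Let $\nu\sim\mu$ be an $F$-invariant probability measure; I claim $\sup_{t}\norm{P-F_t(P)}<2$ for \emph{every} faithful nonsingular $P$, which negates (b). Write $dP=g\,d\nu$ with $g>0$ $\nu$-a.e.\ and $\int g\,d\nu=1$; invariance of $\nu$ gives $d(F_t(P))/d\nu=g\circ F_{-t}$ and $\int g\circ F_{-t}\,d\nu=1$, so that $\norm{P-F_t(P)}=2-2\int\min(g,g\circ F_{-t})\,d\nu$. Choose $R>0$ with $\nu(A_R)>3/4$ for $A_R=\{R^{-1}\le g\le R\}$ (possible since $g\in L^1(\nu)$ is $\nu$-a.e.\ positive). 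Since $\nu(F_t(A_R))=\nu(A_R)$ and on $A_R\cap F_t(A_R)$ both $g$ and $g\circ F_{-t}$ lie in $[R^{-1},R]$, we obtain $\int\min(g,g\circ F_{-t})\,d\nu\ge R^{-1}\nu(A_R\cap F_t(A_R))\ge R^{-1}\bigl(2\nu(A_R)-1\bigr)>\tfrac12 R^{-1}$, uniformly in $t$; hence $\sup_t\norm{P-F_t(P)}\le 2-R^{-1}<2$.

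The heart of the proof is (a)$\Rightarrow$(c), again by contraposition: given a nonsingular $P$ with $\delta:=\inf_t(P\wedge F_t(P))(X)>0$, I must produce a nonsingular $F$-invariant probability measure. The plan is a Krylov--Bogolyubov-type argument. Form the Cesàro averages $\bar P_T:=\tfrac1T\int_0^T F_t(P)\,dt$ (Bochner integrals in the Banach space of finite measures, well defined since $t\mapsto F_t(P)$ is measurable). Two facts: (i) $\norm{\bar P_T-F_u(\bar P_T)}\le 2\abs{u}/T\to0$ as $T\to\infty$, so the $\bar P_T$ are asymptotically $F$-invariant; (ii) using concavity of $s\mapsto\min(s,\,\cdot\,)$ together with $F_s(P\wedge F_u(P))=F_s(P)\wedge F_{s+u}(P)$, one gets $(\bar P_S\wedge\bar P_T)(X)\ge\delta$ for all $S,T$. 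Since the pushforward $F_u$ acts as a norm-isometry on finite measures, hence weak-$*$-continuously on $L^\infty(\mu)^*$, any weak-$*$ cluster point $\phi$ of $(\bar P_T)_{T\to\infty}$ inside the weak-$*$ compact set of $\mu$-absolutely-continuous positive functionals of norm $\le 1$ is a normalized $F$-invariant functional. Decomposing $\phi=\phi_{\mathrm{ca}}+\phi_{\mathrm{pa}}$ by Yosida--Hewitt — both parts $F$-invariant because the decomposition is canonical and $F_u^*$ preserves countable and pure additivity — one wants $\phi_{\mathrm{ca}}\neq 0$; then $\phi_{\mathrm{ca}}/\phi_{\mathrm{ca}}(X)$ is an $F$-invariant probability measure $\ll\mu$, which after passing to the flow-saturation of its support (conull by ergodicity of the flow of weights of a factor, the case relevant to \cref{thm:min}; handled by disintegration over the fixed-point algebra in general) becomes $\sim\mu$, contradicting (a).

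The step I expect to be the main obstacle is exactly showing that the uniform overlap bound $(\bar P_S\wedge\bar P_T)(X)\ge\delta$ forces the countably additive part $\phi_{\mathrm{ca}}$ of the limit to be nonzero — equivalently, that an asymptotically invariant, uniformly $\delta$-overlapping family of absolutely continuous probability measures cannot converge weak-$*$ to something purely finitely additive. The purely combinatorial bound ``pairwise $\delta$-overlap'' is genuinely insufficient here (one can build abstract families with positive overlap but vanishing meet), so the argument must exploit the flow structure — presumably via a uniform-integrability argument for a dominated subfamily such as $\{d(\bar P_S\wedge\bar P_T)/d\mu\}$, together with the $L^1$-continuity and the pushforward-isometry properties of $F$ — before invoking Dunford--Pettis. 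By comparison, the reformulation, (c)$\Rightarrow$(b), (b)$\Rightarrow$(a), and the ergodicity reduction are routine.
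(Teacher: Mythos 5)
Your reformulation via overlap mass, the implication (c)$\Rightarrow$(b), and the contrapositive argument for (b)$\Rightarrow$(a) (the $A_R=\{R^{-1}\le g\le R\}$ truncation giving a uniform lower bound $R^{-1}(2\nu(A_R)-1)$ on the overlap) are all correct. The problem is that the entire content of the lemma sits in (a)$\Rightarrow$(c), and there your argument is not complete — as you yourself flag. The Krylov--Bogolyubov route through Ces\`aro averages $\bar P_T$ in the space of finite measures runs head-on into the finitely additive pathology: for a nonsingular (non-measure-preserving) flow the weak-$*$ cluster points of $\bar P_T$ in $L^\infty(\mu)^*$ can perfectly well be purely finitely additive, and the pairwise overlap bound $(\bar P_S\wedge\bar P_T)(X)\ge\delta$ does not by itself rescue countable additivity (you note this). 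Since you do not supply the missing uniform-integrability input, the chain (a)$\Rightarrow$(c)$\Rightarrow$(b)$\Rightarrow$(a) is broken at its only nontrivial link.

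The fix — and this is what the paper does, outsourcing the key step to Arano--Isono--Marrakchi — is to leave $L^1$ and work in $L^2(X,\mu)$ with the Koopman representation $U_t\xi(x)=m_t(x)\xi(F_{-t}(x))$, $m_t=\sqrt{dF_t(\mu)/d\mu}$. Setting $\xi_P=\sqrt{dP/d\mu}$, one has $\langle U_t\xi_P,\xi_P\rangle\ge (P\wedge F_t(P))(X)$ and conversely $\langle U_t\xi_P,\xi_P\rangle\le\sqrt{2}\,\sqrt{(P\wedge F_t(P))(X)}$, so your hypothesis $\delta>0$ becomes a uniform lower bound $\langle U_t\xi_P,\xi_P\rangle\ge\delta$ on matrix coefficients. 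Now take the unique minimal-norm element $\eta$ of the \emph{norm-closed convex hull} of the orbit $\{U_t\xi_P\}$: it is $U_t$-invariant (uniqueness of the circumcenter), nonnegative (the orbit consists of nonnegative functions), and nonzero because $\langle\eta,\xi_P\rangle\ge\delta$. Then $\eta^2\,d\mu$, normalized, is an $F$-invariant probability measure absolutely continuous with respect to $\mu$. Weak compactness of bounded closed convex sets in the Hilbert space replaces the Dunford--Pettis/Yosida--Hewitt machinery and eliminates the purely-finitely-additive escape route entirely; this is exactly the step your $L^1$ approach cannot reproduce without additional input. (The paper also extracts from the cited proof the stronger statement that one vanishing correlation sequence $\langle U_{t_n}\xi_P,\xi_P\rangle\to0$ forces $\langle\eta,U_{t_n}\eta'\rangle\to0$ for all $\eta,\eta'$, which is how it gets (b)$\Rightarrow$(c) directly; in your cyclic arrangement this is absorbed into (a)$\Rightarrow$(c), which is structurally fine once that implication is actually established.)
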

\begin{proof}   
    The proof relies on \cite[Prop.~4.4,Prop.~4.5]{arano_ergodic_2021}.
    Consider the Koopman action on $L^2(X,\mu)$, i.e., the unitary one-parameter group $U_t\xi(x) = m_t(x)\xi(F_{-t}(x))$ where $m_t=\sqrt{dF_{t}(\mu)/d\mu}$.
    For any $\mu$-absolutely continuous probability measure $P$ on $X$, we write $\xi_P = \sqrt{dP/d\mu}$. Using standard techniques it follows that
    \begin{equation}\label{eq:L2_orth}
        \sup_{t\in\RR}\ \norm{P-F_t(P)} =2 \iff \inf_{t\in\RR}\ \ip{U_t\xi_P}{\xi_P}_{L^2(X,\mu)} =0.
    \end{equation}
    The equivalence between items \ref{it:amine1} and \ref{it:amine2} now follows from \cite[Prop.~4.5]{arano_ergodic_2021}. 
    Clearly \ref{it:amine3} implies \ref{it:amine2}. 
    For the converse, we observe that there exists a sequence of numbers $t_n\in\RR$ such that $\lim_n \ip{\xi_P}{U_{t_n}\xi_P}_{L^2(X,\mu)}=0$ by \eqref{eq:L2_orth}. 
    A closer look at the proof of \cite[Prop.~4.4]{arano_ergodic_2021}  shows that this implies $\lim_n \ip{\eta}{U_{t_n}\eta'}_{L^2(X,\mu)}=0$ for all $\eta,\eta'\in L^2(X,\mu)$.
    In particular, $\lim_n \ip{\xi_Q}{U_{t_n}\xi_Q}_{L^2(X,\mu)}=0$ for all nonsingular probability measures $Q$ on $X$. Thus, item \ref{it:amine3} follows from \eqref{eq:L2_orth}.
\end{proof}
\begin{proof}[Proof of \cref{thm:min}]
    The first assertion follows \cref{lem:amine} and the connection between $\kappa(\omega)$ and invariance of spectral states under the flow of weights (i.e., \cref{thm:kappa,lem:amine,cor:invariant_iff_mbz}).
    $\kappa_{\textit{min}}(\M)=2$ for semifinite factors is a consequence of \cref{cor:kappa_semifinite}.
    $\kappa_{\textit{min}}(\M)=0$ for type $\III_\lambda$ factors with $0<\lambda\le1$ follows from \cref{thm:type_III_lambda}.
    It remains to be shown that both $0$ and $2$ are attained among type $\III_0$ factors.
    Since every properly ergodic flow arises as the flow of weights of some type $\III_0$ factor \cite[Thm.~XVIII.2.1]{takesaki3}, it suffices to show the existence of properly ergodic flows that do and that do not satisfy the equivalent properties of \cref{lem:amine}.
    Properly ergodic flows that admit an invariant probability measure are easy to construct. For instance, we can take the irrational line flow on the torus, i.e., the flow induced by $(x,y)\mapsto (x+\alpha t,y+\alpha t)$ on $\RR^2/\ZZ^2$ with $\alpha \in[0,1]$ irrational.
    The existence of properly ergodic flows that do not admit invariant probability measures is also well-known.
    In \cref{rem:stefaan}, we explain a construction that was communicated to us by Stefaan Vaes, which exhibits the relevant property that $\sup_t \norm{P-F_t(P)}=2$ for all nonsingular probability measures $P$.
\end{proof}

\begin{remark}\label{rem:stefaan}
    In the following we describe a construction of a properly ergodic flow $(X,\mu,F)$ for which one sees that
    \begin{equation}\label{eq:dichotomy}
        \sup_t \norm{P-F_t(P)}=2
    \end{equation}
    holds for all nonsingular probability measures $P$ on $X$, without invoking \cref{lem:amine}.
    Since every properly ergodic flow arises as the flow of weights of a type $\III_0$ factor \cite[Thm.~XVIII.2.1]{takesaki3}, this yields the construction of a type $\III_0$ factors with  $\kappa_{\textit{min}}(\M)=2$.
    The construction uses a 2-odometer, which is defined as follows: Consider the standard Borel space $Y=\{0,1\}^\NN$ of infinite bit strings, which can be viewed as positive integers with infinite digits written in opposite order. The odometer transformation $T:Y\to Y$ is defined as addition of $1$ with carry-over, e.g.,
    \begin{align}
        T(0100\cdots) = 1100\cdots,\qquad T(1100\cdots) = 0010\cdots,
    \end{align}
    and more generally $T(0.y) = 1.y$ and $T(1.y) = 0.T(y)$ with $T(111\cdots) = 000\cdots$. Here $z.y$ denotes the concatenation of a finite bitstring $z$ with the infinite bitstring $y$. 
    Let $(a_n)$ be a sequence of numbers $0<a_n<1$, such that $\sum_n a_n = +\infty = \sum_n(1-a_n)$, and $a_n\rightarrow 1$. 
    We define a Borel probability measure $\nu$ on $Y$ by choosing for each $n\in\NN$ the probability measure $\nu_n$ on $\{0,1\}$ with $\nu_n(\{0\})=a_n \in (0,1)$ and setting $\nu = \prod_n \nu_n$. 
    The resulting measure $\nu$ is non-atomic and it is well known that the odometer transformation $T$ is ergodic.
    
    Now consider an event $E\subset Y$ such that $T(E)\subset Y\setminus E$. It follows for any probability measure $P$ on $X$ that
    \begin{align}
        \norm{P - T(P)} \geq 2- 4P(Y\setminus E),
    \end{align}
    where $T(P)$ denotes the Borel measure $T(P)(A)=P(T^{-1}(A))$.
    To see this, first note that $\norm{P - P|_E} = P(Y\setminus E)$, where $P|_E(A) := P(E\cap A)$. 
    Then, by the triangle inequality, we have 
    \begin{align}
        \norm{P-T(P)} \geq \norm{P|_E - T(P|_E)} - 2P(Y\setminus E).
    \end{align}
    But since $T(P|_E) = T(P)|_{T(E)}$ and since $T(E)$ and $E$ are disjoint by assumption, we find 
    \begin{align}
        \norm{P-T(P)} \geq P(E) + T(P)(T(E)) - 2P(Y\setminus E) = 2- 4P(Y\setminus E).
    \end{align}
    Now for any $P\ll \nu$ the events $Y_n = \{y\in Y: y_n = 0\}$ for $n\in\NN$ have the property that $P(Y\setminus Y_n)\rightarrow 0$ while $T^{2^n}(Y_n) \subset Y\setminus Y_n$.\footnote{If the density $f=dP/d\nu$ is $\nu$-essentially bounded, this follows from the estimate $P(A)\le \norm{f}_{L^\oo(\nu)} \nu(A)$. For $f$ unbounded, one can approximate $P$ in total variation by measures $P_k$ with bounded density and use the estimate $P(A) \le P_k(A) + \norm{P-P_k}$.}
    Repeating the above argument for $T^{2^n}$ in place of $T$, we thus find $\sup_{n\in\NN} \norm{P - T^{2^n}(P)} = 2$ for any $P\ll\nu$.
    The existence of an ergodic flow with the property \eqref{eq:dichotomy} holds for all nonsingular probability measures now follows as the flow under a ceiling function induced by $(Y,\nu,T)$, see, e.g., \ \cite{takesaki3}.
\end{remark}
Taken together, \cref{thm:max,thm:min} yield the table in \cref{introthm:tablekappa} in the Introduction.

It is interesting to ask which factors admit embezzling states. So far, we have answered this in terms of the flow of weights: A factor admits an embezzling state if and only if the flow of weights admits an invariant probability measure.
However, since the flow of weights is often hard to make explicit, a direct algebraic characterization is desirable.
In this respect, Haagerup and Musat obtained necessary and sufficient conditions for the existence of an invariant probability on the flow of weights in \cite{haagerup_classification_2009}.
Their result yields the following characterization of factors admitting embezzling states:

\begin{proposition}[Haagerup-Musat \cite{haagerup_classification_2009}]
    Let $\M$ be a type $\III$ factor with separable predual. 
    The following are equivalent:
    \begin{enumerate}[(a)]
        \item $\M$ admits a normal state that is embezzling,
        \item for each $\lambda\in(0,1)$, $\M$ admits an embedding of the hyperfinite type $\III_\lambda$ factor with normal conditional expectation,
        \item $\M$ admits an embedding of the hyperfinite type $\III_1$ factor with normal conditional expectation.
    \end{enumerate}
\end{proposition}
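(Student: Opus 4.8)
The plan is to translate the characterization in \cref{haagerup_classification_2009} into our language via the equivalence \textquotedblleft$\omega$ embezzling $\iff$ $\hat\omega$ invariant under the flow of weights\textquotedblright\ established in \cref{cor:invariant_iff_mbz}. Recall that by that corollary, $\M$ admits a normal embezzling state if and only if the flow of weights $(Z(\N),\theta)$ of $\M$ admits a $\theta$-invariant normal state; equivalently, by \cref{prop:FOW_semifinite} and ergodicity (since $\M$ is a factor), if and only if the flow of weights admits an invariant \emph{probability} measure (as opposed to only an infinite invariant measure, which always exists). So statement (a) is exactly the condition \textquotedblleft the flow of weights of $\M$ admits a finite invariant measure.\textquotedblright

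The main input is then the theorem of Haagerup and Musat \cite{haagerup_classification_2009}, which provides precisely this: for a type $\III$ factor $\M$ with separable predual, the flow of weights of $\M$ admits an invariant probability measure if and only if $\M$ contains a copy of the hyperfinite type $\III_\lambda$ factor $\R_\lambda$ (for some/all $\lambda\in(0,1)$) with a normal conditional expectation onto it, and if and only if $\M$ contains a copy of the hyperfinite type $\III_1$ factor $\R_\infty$ with a normal conditional expectation. Thus the first step is to cite this result, being careful to check that the quantities match: Haagerup--Musat phrase their criterion in terms of the \emph{Connes--Takesaki module} / flow of weights having a finite invariant measure, which is exactly our condition via \cref{prop:FOW_of_type_III_lambda} and \cref{prop:FOW_semifinite}. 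The equivalence of \textquotedblleft for some $\lambda\in(0,1)$\textquotedblright\ and \textquotedblleft for all $\lambda\in(0,1)$\textquotedblright\ is also part of their statement; alternatively one gets it from the fact that $\R_\lambda \ox \R_\mu \cong \R_\lambda$ when $\log\lambda/\log\mu\in\QQ$ and more generally from the tensor-product behaviour of hyperfinite type $\III$ factors together with the existence of normal conditional expectations onto tensor factors with respect to product states.

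So the proof is essentially a one-line reduction: combine \cref{cor:invariant_iff_mbz} (embezzling state $\iff$ finite invariant measure on the flow of weights) with the cited Haagerup--Musat characterization. The only genuine work, and the step I expect to be the main obstacle, is making sure the bridge between \textquotedblleft embezzling state exists\textquotedblright\ and \textquotedblleft finite invariant measure exists\textquotedblright\ is airtight: one must rule out that an embezzling state corresponds to an invariant measure that is merely $\sigma$-finite but not finite. But this is automatic, since $\hat\omega$ is by construction a \emph{normal state} on $Z(\N)=L^\oo(X,\mu)$, hence implemented by a genuine (absolutely continuous) probability measure $P_\omega$ on $X$ via \eqref{eq:P_omega}; invariance of $\hat\omega$ under $\theta$ is exactly invariance of the probability measure $P_\omega$ under $F$. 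Hence no subtlety remains, and the proposition follows immediately.

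\begin{proof}
    By \cref{cor:invariant_iff_mbz}, $\M$ admits a normal embezzling state if and only if its flow of weights $(Z(\N),\theta)$ admits a $\theta$-invariant normal state, equivalently (via the realization $(X,\mu,F)$ and \eqref{eq:P_omega}) if and only if the flow $F$ admits an $F$-invariant absolutely continuous probability measure on $X$. By \cite[Thm.~6.1]{haagerup_classification_2009}, for a type $\III$ factor $\M$ with separable predual, the latter condition is equivalent to each of the following: $\M$ admits, for every $\lambda\in(0,1)$, an embedding of the hyperfinite type $\III_\lambda$ factor with a normal conditional expectation; and $\M$ admits an embedding of the hyperfinite type $\III_1$ factor with a normal conditional expectation. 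This proves the asserted equivalences.
\end{proof}
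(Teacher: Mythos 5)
Your proposal is correct and is exactly the argument the paper intends: the paper gives no separate proof but states immediately beforehand that Haagerup--Musat characterize when the flow of weights admits an invariant probability measure, and the bridge to embezzling states is precisely \cref{cor:invariant_iff_mbz}, as you use. Your observation that $\hat\omega$ is automatically a normal state (hence a genuine probability measure, ruling out any $\sigma$-finite subtlety) is the right sanity check, and the reduction is complete.
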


\subsection{Embezzling infinite systems}

The main result of this subsection is the following:

\begin{theorem}\label{thm:kappa_bound}
    Let $\M$ be a von Neumann algebras and let $\P$ be a hyperfinite factor.
    Let $\omega$ be a normal state on $\M$ and let $\psi,\phi$ be normal states on $\P$.
    Then 
    \begin{equation}\label{eq:kappa_bound}
        \inf_{u\in \U(\M\ox\P)} \norm{u(\omega\ox\psi)u^*-\omega\ox\phi} \le \kappa(\omega).
    \end{equation}
\end{theorem}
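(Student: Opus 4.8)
The plan is to reduce the general statement to the case where $\P$ is an ITPFI factor — more precisely, to the case where $\psi$ and $\phi$ are approximated by states living on a finite-dimensional corner — and then invoke \cref{thm:kappa} via the key identity $\kappa(\omega)=\sup_s\norm{\hat\omega\circ\theta_s-\hat\omega}$. The starting observation is that since $\P$ is hyperfinite, there is an increasing net of finite-dimensional subfactors $\P_\alpha\cong M_{n_\alpha}$ with $\bigcup_\alpha\P_\alpha$ ultraweakly dense in $\P$, and normal conditional expectations $E_\alpha:\P\to\P_\alpha$. Given $\eps>0$ and normal states $\psi,\phi$ on $\P$, one can first replace $\psi,\phi$ by states that are supported on (and determined by) a common $\P_\alpha$: concretely, approximate $\psi\approx_\eps \psi_\alpha$ and $\phi\approx_\eps\phi_\alpha$ in norm, where $\psi_\alpha,\phi_\alpha$ are normal states on $\P$ whose restriction to $\P_\alpha'$ (inside a standard realization) is a fixed reference state, so that effectively $\psi_\alpha = \tilde\psi\ox\rho$ and $\phi_\alpha = \tilde\phi\ox\rho$ under a decomposition $\P\cong M_{n_\alpha}\ox\P'$ with $\tilde\psi,\tilde\phi\in S(M_{n_\alpha})$ and $\rho$ a common state on the complementary factor $\P'$. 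This step uses only density of $\bigcup\P_\alpha$ and basic properties of normal states; it is the routine part.

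Once we are in this situation, $\omega\ox\psi_\alpha = (\omega\ox\rho)\ox\tilde\psi$ and $\omega\ox\phi_\alpha = (\omega\ox\rho)\ox\tilde\phi$ on $\M\ox\P = (\M\ox\P')\ox M_{n_\alpha}$. Set $\omega' := \omega\ox\rho$, a normal state on the von Neumann algebra $\M':=\M\ox\P'$. Then by the definition \eqref{eq:kappa_def} of $\kappa$ applied to $\omega'$ and the matrix size $n_\alpha$,
\begin{equation}
    \inf_{u\in\U(M_{n_\alpha}(\M'))}\norm{u(\omega'\ox\tilde\psi)u^*-\omega'\ox\tilde\phi}\le \kappa(\omega').
\end{equation}
So the remaining task is purely algebraic: show $\kappa(\omega\ox\rho)\le\kappa(\omega)$. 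This is exactly the content of \cref{lem:kappa_TP} (with $\omega\ox\rho$ in place of $\omega\ox\varphi$), which gives $\kappa(\omega\ox\rho)\le\min\{\kappa(\omega),\kappa(\rho)\}\le\kappa(\omega)$. Combining and letting $\eps\to0$ yields \eqref{eq:kappa_bound}, since the three $\eps$-approximations (for $\psi$, for $\phi$, and the near-optimal unitary) contribute $O(\eps)$ to the norm on the left.

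The main obstacle I anticipate is making the first reduction clean. One must be careful that the "complementary factor" $\P'$ in the splitting $\P\cong M_{n_\alpha}\ox\P'$ is genuinely available for an arbitrary hyperfinite factor $\P$ (type $\I$, $\II$, or $\III$), and that the reference state $\rho$ on $\P'$ can be chosen uniformly for $\psi$ and $\phi$ while keeping the norm errors small — this is where hyperfiniteness is really used, via the fact that $\P$ is the ultraweak closure of an increasing union of matrix algebras together with the tensor-splitting $\P\cong M_{n_\alpha}\ox(\P_\alpha'\cap\P)$ valid when $\P_\alpha$ is a subfactor with normal conditional expectation. An alternative, perhaps smoother route is to use that every normal state on a hyperfinite factor can be approximated in norm by one of the form $\tilde\psi\ox\rho$ under such a splitting (for suitably large $\alpha$), which follows from the local structure of hyperfinite factors; this bypasses any need to track conditional expectations explicitly. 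Either way, once the product structure is in place, the rest is immediate from \cref{thm:kappa} (or directly from \cref{lem:kappa_TP} and the definition of $\kappa$), and no genuinely new estimate is needed. The remark after the theorem — that hyperfiniteness can likely be dropped, and is known to be dispensable for semifinite and type $\III_0$ factors $\P$ — suggests the same proof goes through whenever one has a good enough approximation of states on $\P$ by "split" states, which is the only place the hypothesis enters.
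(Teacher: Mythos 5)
Your reduction works when $\P$ is an ITPFI factor, and there it is essentially a repackaging of the ingredients the paper already has: the norm approximation of $\psi$ and $\phi$ by $\tilde\psi\ox\rho$ and $\tilde\phi\ox\rho$ with a common tail state $\rho=\varphi_{\le n}^{c}$ is \cref{cor:matrixapprox}, and the final step $\kappa(\omega\ox\rho)\le\kappa(\omega)$ is \cref{lem:kappa_TP}. The gap is in the claim that ``every normal state on a hyperfinite factor can be approximated in norm by one of the form $\tilde\psi\ox\rho$ under such a splitting,'' which you offer as the smoother route. By St\o rmer's theorem (\cref{thm:hyperfinite_product_factor}; see also \cref{rem:normal_factorization}) a hyperfinite factor admits norm approximations of all its normal states by product states relative to finite type $\I$ splittings \emph{if and only if} it is ITPFI, and Connes constructed separable hyperfinite factors of type $\III_0$ that are not ITPFI. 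For those factors your approximation step fails: ultraweak density of $\bigcup_\alpha\P_\alpha$ gives only weak, not norm, approximation by split states. Since your argument ultimately bounds the left-hand side of \eqref{eq:kappa_bound} by $\kappa$ evaluated on a matrix amplification — i.e.\ by the literal definition \eqref{eq:kappa_def} — it has no mechanism to reach these factors.

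The paper closes exactly this case by a different route: \cref{lem:martingale} requires only an increasing net of \emph{semifinite} subfactors $\P_\alpha$ (type $\II$ in general, not matrix algebras) with compatible faithful normal conditional expectations satisfying $\norm{\psi-\psi\circ E_\alpha}\to0$, which is available for hyperfinite type $\III_0$ factors by \cite[Prop.~8.3]{haagerup1990equivalence}; the passage to the limit uses the Haagerup--St\o rmer convergence $\norm{\hat\mu-\hat\nu}=\lim_\alpha\norm{(\mu_{|\M\ox\P_\alpha})^\wedge-(\nu_{|\M\ox\P_\alpha})^\wedge}$. The price is that one then needs the inequality for semifinite $\P_\alpha$ not of finite type $\I$, which is \cref{thm:typeII_mbz} and is proved via the flow of weights and \cref{prop:dual_state}. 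So your plan must be supplemented either by that semifinite case together with a martingale argument, or by some other device covering non-ITPFI hyperfinite factors; as written it proves the theorem only for ITPFI $\P$.
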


Even if we exclude $\P= M_n$ and maximize over pairs of normal states $\psi,\phi\in S_*(\P)$, we cannot expect equality.
Indeed, if $\M=\B(\H)$ and $\P$ is type $\III_1$, the left-hand side is zero for all $\psi,\phi$ while the right-hand side is equal to $2$.
However, if $\P$ is a semifinite factor, we do get equality:

\begin{proposition}\label{thm:typeII_mbz}
    Let $\M$ be a von Neumann algebra and let $\P$ be a semifinite factor which is not of finite type $\I$.
    For every normal state $\omega$ on $\M$ it holds that
    \begin{equation}\label{eq:typeII_mbz}
        \kappa(\omega) = \sup_{\psi,\phi} \inf_{u} \norm{u(\omega\ox\psi)u^*-\omega\ox\phi}
    \end{equation}
    where the supremum is over all normal states $\psi, \phi$ on $\P$ and the infimum is over all unitaries $u$ in $\M\ox\P$.
\end{proposition}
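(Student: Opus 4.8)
The plan is to push both sides of \eqref{eq:typeII_mbz} onto the flow of weights of $\M$ and compare them there. Since $\P$ is a semifinite factor, the discussion in \cref{sec:fow_semifinite} identifies the crossed-product triple of $\M\ox\P$ with $(\N\ox\P,\tau\ox\tr,\tilde\theta\ox\id)$, so that $Z(\N\ox\P)=Z(\N)$ and the flow of weights $\theta$ of $\M\ox\P$ is the one of $\M$. Writing $\omega\mapsto\hat\omega$ for the spectral state map of $\M$ and $\varphi\mapsto\varphi^\wedge$ for that of $\M\ox\P$ (both landing in $S_*(Z(\N))$), \cref{prop:dual_state} gives, for any normal state $\psi$ on $\P$,
\begin{equation*}
    (\omega\ox\psi)^\wedge=\int_0^\infty\lambda_\psi(t)\,\hat\omega\circ\theta_{\log\lambda_\psi(t)}\,dt,\qquad \int_0^\infty\lambda_\psi(t)\,dt=1,
\end{equation*}
with $\lambda_\psi$ the spectral scale of $\psi$. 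Applying \cref{thm:distance_of_unitary_orbits} to $\M\ox\P$, the infimum in \eqref{eq:typeII_mbz} equals $\norm{(\omega\ox\psi)^\wedge-(\omega\ox\phi)^\wedge}$, so the right-hand side of \eqref{eq:typeII_mbz} is $\sup_{\psi,\phi}\norm{(\omega\ox\psi)^\wedge-(\omega\ox\phi)^\wedge}$.

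For the inequality ``$\le\kappa(\omega)$'' I would exploit that $(\omega\ox\psi)^\wedge$ is a continuous convex mixture of the translates $\hat\omega\circ\theta_a$. Inserting $\int_0^\infty\lambda_\phi=1$ into the first summand and $\int_0^\infty\lambda_\psi=1$ into the second, one rewrites
\begin{equation*}
    (\omega\ox\psi)^\wedge-(\omega\ox\phi)^\wedge=\int_0^\infty\!\!\int_0^\infty\lambda_\psi(t)\lambda_\phi(s)\bigl(\hat\omega\circ\theta_{\log\lambda_\psi(t)}-\hat\omega\circ\theta_{\log\lambda_\phi(s)}\bigr)\,dt\,ds.
\end{equation*}
Pulling the norm inside the integral and using that $\theta_s$ is isometric on $Z(\N)_*$ together with $\norm{\hat\omega\circ\theta_a-\hat\omega\circ\theta_b}=\norm{\hat\omega\circ\theta_{a-b}-\hat\omega}\le\sup_r\norm{\hat\omega\circ\theta_r-\hat\omega}=\kappa(\omega)$ (\cref{thm:kappa}) yields $\norm{(\omega\ox\psi)^\wedge-(\omega\ox\phi)^\wedge}\le\kappa(\omega)$ for all $\psi,\phi$. (This ``$\le$'' half is precisely the semifinite instance of the mechanism behind \cref{thm:kappa_bound}, so no hyperfiniteness of $\P$ enters.)

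For ``$\ge\kappa(\omega)$'' I would feed in the states $\pi_p=(\tr p)^{-1}\tr(p\,\placeholder)$ for projections $p\in\proj(\P)$ with $0<\tr p<\infty$. By \cref{cor:projections}, $(\omega\ox\pi_p)^\wedge=\hat\omega\circ\theta_{\log\tr p}$, hence for two such projections $p,q$,
\begin{equation*}
    \inf_u\norm{u(\omega\ox\pi_p)u^*-\omega\ox\pi_q}=\norm{\hat\omega\circ\theta_{\log(\tr p/\tr q)}-\hat\omega}.
\end{equation*}
As $p,q$ vary, the ratios $\tr p/\tr q$ exhaust all of $(0,\infty)$ when $\P$ is of type $\II$ and all of $\QQ^+$ when $\P$ is of type $\I_\infty$; in either case $\{\log(\tr p/\tr q)\}$ is dense in $\RR$, so by the same norm-continuity of $s\mapsto\hat\omega\circ\theta_s$ used in the proof of \cref{thm:kappa}, $\sup_{\psi,\phi}\inf_u\norm{\,\cdot\,}\ge\sup_{r\in\RR}\norm{\hat\omega\circ\theta_r-\hat\omega}=\kappa(\omega)$. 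Combining the two inequalities gives \eqref{eq:typeII_mbz}.

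The routine parts here are the identification of the flow of weights of $\M\ox\P$ and the spectral-scale bookkeeping. The only genuinely delicate point is the lower bound: one needs $\P$ to carry enough projections of prescribed trace to make the translations $\theta_{\log(\tr p/\tr q)}$ dense in the full flow, which is exactly where the exclusion of finite type $\I$ is used, and why the type $\I_\infty$ case has to be handled by a density/continuity argument rather than by realizing every real translation exactly (as one can in the type $\II$ case).
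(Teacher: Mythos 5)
Your proposal is correct and follows essentially the same route as the paper's proof: both reduce the right-hand side to $\sup_{\psi,\phi}\norm{(\omega\ox\psi)^\wedge-(\omega\ox\phi)^\wedge}$ via \cref{thm:distance_of_unitary_orbits}, prove ``$\le$'' by writing $(\omega\ox\psi)^\wedge$ as a convex mixture of translates of $\hat\omega$ (\cref{prop:dual_state}) and pulling the norm inside the double integral, and prove ``$\ge$'' by feeding in normalized trace states of finite projections (\cref{cor:projections}) together with density of $\log(\tr p_1/\tr p_2)$ in $\RR$ and continuity of the flow. No substantive differences.
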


Note that \cref{thm:typeII_mbz}, implies that \cref{thm:kappa_bound} holds for semifinite $\P$.
We will first prove \cref{thm:typeII_mbz} and then use it to deduce \cref{thm:kappa_bound}:

\begin{proof}[Proof of \cref{thm:typeII_mbz}]
For the proof, we denote the right-hand side of \eqref{eq:typeII_mbz} by $\nu(\omega)$.
By \cref{thm:distance_of_unitary_orbits}, $\nu(\omega)$ is equal to the supremum of $\norm{(\omega\ox\psi)^\wedge-(\omega\ox\phi)^\wedge}$ over all $\psi,\phi$.
Combining this with \cref{prop:dual_state}, shows 
\begin{align}
    \nu(\omega)
    &=\sup_{\psi,\phi\in S_*(\P)} \bigg\| \int_0^\oo \lambda_\psi(t)\,\hat\omega\circ\theta_{\log\lambda_\psi(t)}\,dt - \int_0^\oo \lambda_\phi(s)\,\hat\omega\circ\theta_{\log\lambda_\phi(s)} ds\bigg\| \nonumber\\
    &=\sup_{\psi,\phi\in S_*(\P)} \bigg\| \int_0^\oo\!\!\!\int_0^\oo\lambda_\psi(t)\lambda_\phi(s)\Big(\hat\omega\circ\theta_{\log\lambda_\psi(t)}-\hat\omega\circ\theta_{\log\lambda_\phi(s)}\Big) \,dsdt\bigg\|
\end{align}
Therefore,
\begin{align*}
    \nu(\omega) 
    &\le \sup_{\psi,\phi\in S_*(\P)} \int_0^\oo\!\!\!\int_0^\oo \lambda_\psi(t)\lambda_\phi(s) \Big\| \hat\omega\circ\theta_{\log\lambda_\psi(t)}-\hat\omega\circ\theta_{\log\lambda_\phi(s)}\Big\| \,dsdt\\
    &= \sup_{\psi,\phi\in S_*(\P)} \int_0^\oo\!\!\!\int_0^\oo \lambda_\psi(t)\lambda_\phi(s) \Big\| \hat\omega\circ\theta_{\log\frac{\lambda_\psi(t)}{\lambda_\phi(s)}}-\hat\omega\Big\| \,dsdt\\
    &\le \kappa(\omega) \sup_{\psi,\phi\in S_*(\P)} \int_0^\oo\!\!\!\int_0^\oo \lambda_\psi(t)\lambda_\phi(s)\,dsdt = \kappa(\omega)
\end{align*}
where we used \cref{thm:kappa} in the last line.
To show the converse inequality, we consider states $\pi_i = (\tr p_i)^{-1}\tr[p_i(\placeholder)]$ defined by finite projections $p_1,p_2\in\proj(\P)$.
By \cref{cor:projections}, it holds that 
\begin{equation}
    \nu(\omega)\ge \norm{(\omega\ox\pi_1)^\wedge-(\omega\ox\pi_2)^\wedge} = \norm{\hat\omega\circ\theta_s-\hat\omega},\qquad s=\log \frac{\tr p_1}{\tr p_2}.
\end{equation}
If $\P$ is type $\I$, the set $S= \{\log\frac{\tr p_1}{\tr p_2} : p_1,p_2 \in \proj_{\textit{fin}}(\P)\}$ is $\log \mathbb Q^+$ and if $\P$ is type $\II$, then $S=\RR$. 
In both cases, $S$ is dense, and, since $\theta_s$ is continuous, \cref{thm:kappa} implies that we get $\kappa(\omega)$ if we take the supremum of $\norm{\hat\omega\circ\theta_s -\hat\omega}$ over all $s\in S$.
This proves $\kappa(\omega)=\nu(\omega)$ and finishes the proof.
\end{proof}

We now turn to the proof of \cref{thm:kappa_bound}.
Note that the case where $\P$ is a semifinite factor follows from \cref{thm:typeII_mbz}.
To reduce \cref{thm:kappa_bound} to the semifinite case, we show the following martingale Lemma:

\begin{lemma}\label{lem:martingale}
    Let $\P$ be a factor. Let $(\mc I,\le)$ be a directed set and let $\P_\alpha$, $\alpha\in\mc I$, be an increasing net of semifinite subfactors with faithful normal conditional expectations $E_\alpha:\P\to\P_\alpha$ such that
    \begin{enumerate}
        \item $E_\alpha\circ E_\beta= E_\alpha$ for $\alpha\ge\beta\in \mc I$,
        \item for all normal states $\psi$ on $\P$, $\lim_\alpha \norm{\psi-\psi\circ E_\alpha}=0$.
    \end{enumerate}
    Let $\M$ be a von Neumann algebra with a normal state $\omega$. If \cref{thm:kappa_bound} holds for all $\P_\alpha$ then it holds for $\P$.
\end{lemma}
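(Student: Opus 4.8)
The plan is to deduce the bound for $\P$ from the bounds for the $\P_\alpha$ by an approximation argument: replace the states $\psi,\phi$ on $\P$ by their images $\psi\circ E_\alpha,\phi\circ E_\alpha$, which are (up to the ambient embedding) states on the semifinite factor $\P_\alpha$, apply \cref{thm:kappa_bound} there to obtain a near-optimal unitary $u_\alpha\in\U(\M\ox\P_\alpha)\subset\U(\M\ox\P)$, and then control the errors introduced by the conditional expectations using property (2).

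First I would make precise in what sense $\psi\circ E_\alpha$ corresponds to a state on $\P_\alpha$. Since $E_\alpha:\P\to\P_\alpha$ is a faithful normal conditional expectation, the restriction $\psi_\alpha := \psi|_{\P_\alpha}$ is a normal state on $\P_\alpha$ and $\psi\circ E_\alpha = \psi_\alpha\circ E_\alpha$; similarly for $\phi$. The key observation is that $\id_\M\ox E_\alpha:\M\ox\P\to\M\ox\P_\alpha$ is again a faithful normal conditional expectation (this is standard for tensor products of conditional expectations, e.g.\ \cite[Ch.~IX]{takesaki2}), so $(\id\ox E_\alpha)$ maps $\M\ox\P$ onto $\M\ox\P_\alpha$ compatibly with the states. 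Applying \cref{thm:kappa_bound} to the semifinite factor $\P_\alpha$ with the states $\psi_\alpha,\phi_\alpha$ on $\P_\alpha$, for any $\eps>0$ there is a unitary $u_\alpha\in\U(\M\ox\P_\alpha)$ with
\begin{equation}
    \norm{u_\alpha(\omega\ox\psi_\alpha)u_\alpha^*-\omega\ox\phi_\alpha} \le \kappa(\omega)+\eps .
\end{equation}
Viewing $u_\alpha$ as a unitary in $\M\ox\P$ via the inclusion $\M\ox\P_\alpha\subset\M\ox\P$, I would then estimate
\begin{align}
    \norm{u_\alpha(\omega\ox\psi)u_\alpha^*-\omega\ox\phi}
    &\le \norm{u_\alpha(\omega\ox(\psi-\psi\circ E_\alpha))u_\alpha^*} + \norm{u_\alpha(\omega\ox\psi\circ E_\alpha)u_\alpha^*-\omega\ox\phi\circ E_\alpha} \notag\\
    &\quad + \norm{\omega\ox(\phi\circ E_\alpha-\phi)} .
\end{align}
The first and third terms are bounded by $\norm{\psi-\psi\circ E_\alpha}$ and $\norm{\phi-\phi\circ E_\alpha}$ respectively, since conjugation by a unitary is isometric and tensoring with a fixed state $\omega$ is isometric on the relevant preduals. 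For the middle term I would use that $u_\alpha\in\M\ox\P_\alpha$ commutes with $\id\ox E_\alpha$ in the sense that $\omega\ox\psi\circ E_\alpha = (\omega\ox\psi_\alpha)\circ(\id\ox E_\alpha)$ and $u_\alpha(\omega\ox\psi_\alpha)\circ(\id\ox E_\alpha)u_\alpha^* = \big(u_\alpha(\omega\ox\psi_\alpha)u_\alpha^*\big)\circ(\id\ox E_\alpha)$, so that the middle term equals $\norm{\big(u_\alpha(\omega\ox\psi_\alpha)u_\alpha^*-\omega\ox\phi_\alpha\big)\circ(\id\ox E_\alpha)} \le \norm{u_\alpha(\omega\ox\psi_\alpha)u_\alpha^*-\omega\ox\phi_\alpha}\le\kappa(\omega)+\eps$, using that precomposition with a unital positive map is norm-nonincreasing on functionals. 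Putting these together yields
\begin{equation}
    \inf_{u\in\U(\M\ox\P)}\norm{u(\omega\ox\psi)u^*-\omega\ox\phi} \le \kappa(\omega) + \eps + \norm{\psi-\psi\circ E_\alpha} + \norm{\phi-\phi\circ E_\alpha},
\end{equation}
and letting $\alpha$ run along the net, property (2) kills the last two terms, while $\eps>0$ was arbitrary; this gives the claim.

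The main obstacle I anticipate is the bookkeeping around the identification of $\psi\circ E_\alpha$ with a state on $\P_\alpha$ and the commutation identity for $u_\alpha$ with $\id\ox E_\alpha$ — in particular making sure that $u_\alpha$, which a priori lives in $\U(\M\ox\P_\alpha)$, genuinely acts the right way after being embedded into $\M\ox\P$, and that $(\id\ox E_\alpha)$ really is a conditional expectation onto $\M\ox\P_\alpha$ that intertwines everything. Property (1), stating $E_\alpha\circ E_\beta=E_\alpha$, is what guarantees the net $(\psi\circ E_\alpha)$ is a martingale and, together with normality, underlies property (2); but for this particular lemma only property (2) is directly needed in the estimate, with property (1) being the structural hypothesis that makes (2) natural. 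One should double-check that no separability or $\sigma$-finiteness subtlety enters — since we are in the standing separability assumption of \cref{sec:mbz_fow}, and $\P$ is a factor with semifinite subfactors $\P_\alpha$, the directed set $\mc I$ can be taken countable and all the functionals involved are normal, so the approximation is unproblematic.
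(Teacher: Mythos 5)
Your proof is correct, and it takes a genuinely different route from the paper's. The paper first converts the distance of unitary orbits into the norm distance of spectral states via \cref{thm:distance_of_unitary_orbits} and then invokes the martingale convergence result of Haagerup--St\o rmer for the net of conditional expectations $\id\ox E_\alpha$ (which gives $\norm{\hat\mu-\hat\nu}=\lim_\alpha\norm{(\mu_{|\M\ox\P_\alpha})^\wedge-(\nu_{|\M\ox\P_\alpha})^\wedge}$), so the whole argument runs through the flow of weights. You instead run a direct $3\eps$-argument: take a near-optimal unitary $u_\alpha\in\U(\M\ox\P_\alpha)$ from the hypothesis, embed it into $\M\ox\P$, and control the errors using the bimodule property $(\id\ox E_\alpha)(u_\alpha^*zu_\alpha)=u_\alpha^*(\id\ox E_\alpha)(z)u_\alpha$ together with the contractivity of precomposition with a unital positive map and assumption (2). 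All the individual steps check out: the identity $\omega\ox(\psi\circ E_\alpha)=(\omega\ox\psi_\alpha)\circ(\id\ox E_\alpha)$, the isometry of unitary conjugation on the predual, and the bound $\norm{\omega\ox\nu}\le\norm{\nu}$ for self-adjoint normal $\nu$ (via the Jordan decomposition) are all valid. Your approach is more elementary and self-contained -- it needs nothing beyond the defining properties of conditional expectations, and in particular does not require the spectral-state machinery or the separability standing assumption underlying it -- whereas the paper's argument, at the cost of invoking \cite[Sec.~7]{haagerup1990equivalence}, yields the stronger statement that the distance of unitary orbits in $\M\ox\P$ is \emph{exactly} the limit of the distances in $\M\ox\P_\alpha$, not merely bounded by $\kappa(\omega)$. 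The only cosmetic point: your remark that $\mc I$ "can be taken countable" is neither needed nor justified, but nothing in your estimate depends on it.
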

\begin{proof}
    For a state $\psi$ on $\P$ denote by $\psi_\alpha\in S_*(\P)$ the restriction to $\P_\alpha$.
    We can apply the arguments in \cite[Sec.~7]{haagerup1990equivalence} to the system of conditional expectations $\id\ox E_\alpha:\M\ox\P\to\M\ox\P_\alpha$.
    It is proved in \cite[223]{haagerup1990equivalence}, that
    \begin{equation}
        \norm{\hat\mu-\hat\nu} =\lim_\alpha \ \norm{(\mu_{|\M\ox\P_\alpha})^\wedge-(\nu_{|\M\ox\P_\alpha})^\wedge},\qquad \mu,\nu\in S_*(\M\ox\P).
    \end{equation}
    With this, we find
    \begin{align*}
        \inf_{u\in\U(\M\ox\P)} \norm{u(\omega\ox\psi)u^*-\omega\ox\phi}
        &=\norm{(\omega\ox\psi)^\wedge-(\omega\ox\phi)^\wedge}\\
        &= \lim_\alpha \norm{(\omega\ox\psi_\alpha)^\wedge-(\omega\ox\phi_\alpha)^\wedge}\\
        &= \lim_\alpha \inf_{u\in\U(\M\ox\P_\alpha)} \norm{u(\omega\ox\psi_\alpha)u^*-\omega\ox\phi_\alpha} \le \kappa(\omega).
    \end{align*}
\end{proof}

\begin{proof}[Proof of \cref{thm:kappa_bound}]
    By \cref{lem:martingale}, we only have to show that all hyperfinite factors $\P$ can be approximated with semifinite factors $\P_\alpha$ as in \cref{lem:martingale}.
    For $\III_0$ factors, this is shown in \cite[Prop.~8.3]{haagerup1990equivalence}. For ITPFI factors, the approximating semifinite algebras are finite type $\I$ algebras, and the conditional expectations are the slice maps (see \cref{sec:itpfi}).
    Since all hyperfinite factors are ITPFI or type $\III_0$ factors (or both), this finishes the proof.
\end{proof}

\begin{remark}
    Let us comment on the assumption of hyperfiniteness in \cref{thm:kappa_bound}.
    While our technique does not work for general factors $\P$, we expect the statement to hold in full generality.
    To prove the general case, one needs to relate the flow of weights of $\P$ and $\M$ to the flow of weights of $\M\ox\P$ and relate $(\omega\ox\psi)^\wedge$ to $\hat\omega$ and $\hat\psi$ (generalizing \cref{prop:dual_state}).
    We leave it to future work to settle this.
\end{remark}

\section{Embezzlement and infinite tensor products}
\label{sec:itpfi}
Historically, the construction of factors of type $\II$ and $\III$ from infinite tensor products of factors of finite type $\I$ (ITPFI) played a crucial role in the classification program \cite{araki1968factors}. ITPFI factors belong to the more general class of hyperfinite (or approximately finite-dimensional) factors, i.e., those admitting an ultraweakly dense filtration by finite type $\I$ factors \cite{elliott1976afd, blackadar_operator_2006}. But, the only separable, hyperfinite factors that are non-ITPFI factors are of type $\III_{0}$ \cite{connes1980_non_itpfi}.\footnote{ITPFI factors are precisely those hyperfinite factors that have an approximatively transitive flow of weights \cite{connes1985_approximately_transitive}.}

In physics, ITPFI and hyperfinite factors appear naturally as well, most evidently in the context of spin chains \cite{bratteli1997oa2}, which also motivated the original construction of a family of type $\III_{\lambda}$ factors, $0<\lambda<1$, known as Powers factors \cite{powers1967uhf}, but also in the context of quantum field theory \cite{buchholz_universal_1987, yngvason_role_2005}.

In this section, we provide an elementary proof of the fact that a universally embezzling ITPFI factor $\M$ must have a homogeneous state space, i.e., $\diam(S_*(\M)/\!\sim)=0$. As the latter implies that $\M$ is of type $\III_{1}$ \cite{connes_homogeneity_1978}, this, together with the uniqueness of the hyperfinite type $\III_{1}$ factor \cite{haagerup_uniqueness_1987}, implies that $\M\cong\R_{\infty}$, known as the Araki-Woods factor.

We briefly recall some essentials about ITPFI factors \cite{araki1968factors} (cf.~\cite[Ch.~XIV]{takesaki3}):
Let us consider a sequence, $\{M_{j}\}$, of finite type $\I_{n_{j}}$ factors and an associated family of (faithful) normal states $\{\varphi_{j}\}_{}$. Then, we can form the infinite tensor product $\M_{0}=\bigotimes_{j=1}^{\infty}\M_{j}$ as an inductive limit via the diagonal inclusions $\M_{\leq n}=\bigotimes_{j=1}^{n}\M_{j}\mapsto\M_{\leq n+1}=\bigotimes_{j=1}^{n+1}\M_{j}$ together with the infinite-tensor-product state $\varphi = \otimes_{j=1}^{\infty}\varphi_{j}$. The corresponding ITPFI factor $\M$ (relative to $\{\varphi_j\}$) is given by
\begin{align}\label{eq:itpfi}
\M & = \pi_{\varphi}(\M_{0})'' = \bigotimes_{j=1}^{\infty}(\M_{j},\varphi_{j}),
\end{align}
where $\pi_{\varphi}:\M_{0}\rightarrow\B(\H_{\varphi})$ is the GNS representation of $\M_{0}$ given by $\varphi$. It follows that $\M$ can equivalently be understood as the algebra generated by the natural representation of $\M_{0}$ on the (incomplete) infinite tensor product Hilbert space $(\H_{\varphi},\Omega_{\varphi}) = \bigotimes_{j=1}^{\infty}(\H_{\varphi_{j}},\Omega_{\varphi_{j}})$ \cite{vonneumann1939itp}. Moreover, since each $\M_{j}$ is assumed to be a factor, $\M$ is also a factor, and $\varphi$ is faithful if each $\varphi_{j}$ is faithful. In particular, we are in standard form in \cref{eq:itpfi} if each $\varphi_{j}$ is faithful.
It is also evident from this construction that ITPFI factors are hyperfinite since $\bigcup_{n}\M_{\leq n}$ is ultraweakly dense in $\M$.

The following result due to Størmer \cite{stoermer1971hyperfinite1} intrinsically characterizes ITPFI factors among hyperfinite factors (see also \cite{effros1977tensor}). To state the result, let us introduce two useful concepts:

First, consider a factor $\M$ together with a type $\I_{n}$ subfactor $\M_{n}\subset\M$. Since $\M_{n}$ admits a system of matrix units $\{e^{(n)}_{ij}\}$ (see \cite[Ch.~IV, Def.~1.7]{takesaki1}), we know that $\M$ can be factorized as \cite[Ch.~IV, Prop.~1.8]{takesaki1}
\begin{align}\label{eq:hyperfinite_factorization}
\M & \cong \M_{n}\ox\M_{n}^{c},
\end{align}
where $\M_{n}^{c}=e^{(n)}_{11}\M e^{(n)}_{11}\cong \M_{n}'\cap\M$ can be identified with the relative commutant of $\M_{n}$ in $\M$. The isomorphism in \cref{eq:hyperfinite_factorization} is given by the unitary $u_{n}:\CC^{n}\ox e^{(n)}_{11}\H\rightarrow\H$ such that
\begin{align*}
u_{n}(\ket{j}\ox\Phi)=e^{(n)}_{j1}\Phi, \qquad \Phi\in e^{(n)}_{11}\H.
\end{align*}
Given a normal state $\omega$ on $\M$, we consider the restrictions $\omega_{n} = \omega_{|\M_{n}}$ and $\omega^{c}_{n} = \omega_{|\M_{n}^{c}}$ from which we can form the product state $\omega_{n}\ox\omega_{n}^{c}$ on $\M_{n}\ox\M_{n}^{c}$.

\begin{definition}[cf.~\cite{KadisonRingrose2}]\label{def:normal_factorization}
The product state $\omega^{\times}_{n}$ on $\M$ uniquely determined by $\omega_{n}\ox\omega_{n}^{c}$ via \cref{eq:hyperfinite_factorization} is called the \emph{normal factorization of $\omega$ relative to $\M_{n}$}.
\end{definition}

Second, we need the definition of a state to be asymptotically a product.

\begin{definition}[\cite{stoermer1971hyperfinite1}]\label{def:asymptotic_product_state}
Let $\M$ be a factor. A normal state $\omega$ on $\M$ is said to be \emph{asymptotically a product state} if, given $\eps>0$ and a type $\I_{m}$ factor $\M_{m}\subset\M$, there exists a type $\I_{n}$ factor $\M_{n}$ such that $\M_{m}\subset\M_{n}\subset\M$ and
\begin{align*}
\norm{\omega - \omega^{\times}_{n}} & < \eps.
\end{align*}
\end{definition}

The fundamental result concerning hyperfinite product factors is:
\begin{theorem}[\cite{stoermer1971hyperfinite1}]\label{thm:hyperfinite_product_factor}
Let $\M$ be a hyperfinite factor. Then, the following are equivalent:
\begin{itemize}
    \item[(a)] every normal state on $\M$ is asymptotically a product state.
    \item[(b)] $\M$ admits a normal state $\omega$ that is asymptotically a product state.
    \item[(c)] $\M$ is isomorphic to an ITPFI factor.
\end{itemize}
\end{theorem}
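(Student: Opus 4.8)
The plan is to establish the cycle $(a)\Rightarrow(b)\Rightarrow(c)\Rightarrow(a)$, with $(a)\Rightarrow(b)$ immediate (a separable factor has normal states). All the content sits in $(b)\Rightarrow(c)$; the implication $(c)\Rightarrow(a)$ is a soft approximation argument.

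For $(c)\Rightarrow(a)$ I would write $\M=\bigotimes_{j=1}^{\oo}(\M_{j},\varphi_{j})$ and, replacing each $\M_{j}$ by the support corner of $\varphi_{j}$, assume every $\varphi_{j}$ faithful, so that $\M$ sits in standard form on $\H_{\varphi}=\bigotimes_{j}(\H_{\varphi_{j}},\Omega_{\varphi_{j}})$. Given a normal state $\omega$, an $\eps>0$, and a type $\I_{m}$ subfactor $\M_{m}$: the first step is that vectors $\eta\ox\Omega_{\varphi_{>k}}$, $\eta\in\H_{\varphi_{\le k}}$, are dense in $\H_{\varphi}$, so by \eqref{eq:state_vector_norms} (or just $\norm{\omega_{\xi}-\omega_{\xi'}}\le 2\norm{\xi-\xi'}$) $\omega$ is norm-approximated by a product state $\psi\ox\varphi_{>k}$ with $\psi\in S_{*}(\M_{\le k})$, and such a state is its own normal factorization relative to $\M_{\le k}$. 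The second step pushes $\M_{m}$ into a coordinate block: since $\bigcup_{k}\M_{\le k}$ is strongly dense and contains approximate matrix units for $\M_{m}$, there is a unitary $v\in\M$ with $\norm{v-1}$ as small as desired and $v\M_{m}v^{*}\subset\M_{\le k}$ for large $k$, so $\M_{n}:=v^{*}\M_{\le k}v$ contains $\M_{m}$. Using norm-continuity of the normal factorization in the state and its naturality under conjugation (replacing $\M_{\le k}$ by $v^{*}\M_{\le k}v$ costs only $O(\norm{v-1})$), I would conclude $\norm{\omega-\omega^{\times}_{n}}<\eps$, so $\omega$ is asymptotically a product state.

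For $(b)\Rightarrow(c)$, fix a normal state $\omega$ that is asymptotically a product and, by hyperfiniteness, an increasing sequence $\Q_{1}\subset\Q_{2}\subset\cdots$ of type $\I$ subfactors with ultraweakly dense union. Choose $\eps_{k}>0$ with $\sum_{k}\sqrt{\eps_{k}}<\oo$ and build inductively an increasing sequence $\R_{1}\subset\R_{2}\subset\cdots$ of type $\I$ subfactors: given $\R_{k-1}$, the finite-dimensional algebra $\R_{k-1}\vee\Q_{k}$ lies in a type $\I$ subfactor of $\M$ (perturb its finitely many matrix units into some $\Q_{\ell}$ by a near-identity unitary), and the asymptotic-product property applied to that subfactor with error $\eps_{k}$ gives a type $\I$ subfactor $\R_{k}$ containing it with $\norm{\omega-\omega^{\times}_{\R_{k}}}<\eps_{k}$. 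Then $\bigcup_{k}\R_{k}\supseteq\bigcup_{k}\Q_{k}$ is dense, so $\M=(\bigcup_{k}\R_{k})''$; setting $Q_{1}=\R_{1}$ and $Q_{k}=\R_{k}\cap\R_{k-1}^{c}$ for $k\ge2$ produces pairwise commuting type $\I$ factors with $\R_{n}\cong Q_{1}\ox\cdots\ox Q_{n}$. Restricting $\omega^{\times}_{\R_{k-1}}=\omega|_{\R_{k-1}}\ox\omega|_{\R_{k-1}^{c}}$ to $\R_{k}=\R_{k-1}\ox Q_{k}$ shows the incremental factorization errors $\norm{\omega|_{\R_{k}}-\omega|_{\R_{k-1}}\ox\omega|_{Q_{k}}}$ are at most $\eps_{k-1}$, hence summable. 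Letting $\tilde\M:=\bigotimes_{k}(Q_{k},\omega|_{Q_{k}})$ with product state $\tilde\omega$, I would then run a summable-correction, Elliott-type two-sided intertwining: correct the canonical isomorphisms $\R_{n}\cong\bigotimes_{k\le n}Q_{k}$ step by step by near-identity unitaries (using that two nearby states on a type $\I$ factor are conjugate by a unitary close to $1$), producing in the limit a $^{*}$-isomorphism $\Phi\colon\M\to\tilde\M$ with $\tilde\omega\circ\Phi=\omega$; hence $\M$ is an ITPFI factor.

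The hard part is this last step of $(b)\Rightarrow(c)$: turning the merely \emph{approximate} tensor factorization into a genuine infinite tensor product. The delicate bookkeeping is to run the induction so that the $\R_{k}$ are simultaneously nested, exhaustive, and carry summable factorization errors, and then to choose the correcting unitaries so that the $n$-th one acts only on the new leg $Q_{n}$ and leaves the already-matched part $\R_{n-1}$ fixed, keeping the step-by-step isomorphisms coherent and convergent; comparing support projections at each stage when $\omega$ is not faithful is an additional routine nuisance. Everything else---density of $\bigcup_{k}\Q_{k}$, containment of finite-dimensional subalgebras of a hyperfinite factor in type $\I$ subfactors, norm-continuity and naturality of the normal factorization, and near-identity unitary equivalence of nearby states on type $\I$ factors---is standard and I would only sketch it.
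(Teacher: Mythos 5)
First, a point of reference: the paper does not prove this theorem at all --- it is quoted from St\o rmer \cite{stoermer1971hyperfinite1} and used as a black box --- so your attempt can only be judged on its own merits. Your (a)$\Rightarrow$(b) and (c)$\Rightarrow$(a) are fine modulo the standard perturbation facts you invoke, and the first half of (b)$\Rightarrow$(c) --- the inductive construction of nested type $\I$ subfactors $\R_k$ with ultraweakly dense union and summable factorization errors, together with the commuting legs $Q_k=\R_k\cap\R_{k-1}'$ and the telescoped incremental estimates --- is exactly the right skeleton.

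The gap is in your final step. The lemma you lean on, that two nearby states on a type $\I$ factor are conjugate by a unitary close to $1$, is false: norm-close states on $M_n$ have only $\ell^1$-close (not equal) spectra, so they need not be unitarily conjugate at all, and even with equal spectra the conjugating unitary can be a large rotation. Worse, the target $\tilde\omega\circ\Phi=\omega$ is unattainable in general: already for $\M=\B(\H)$ a density operator is typically not unitarily equivalent to the tensor product of its marginals, so $\omega$ need not lie in the unitary orbit of any product state. What you actually need --- and all you need --- is \emph{quasi-equivalence} of $\omega|_{\A}$ and $\tilde\omega=\bigotimes_k\omega|_{Q_k}$ on the UHF algebra $\A$ obtained as the norm closure of $\bigcup_k\R_k$. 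Indeed, quasi-equivalent states generate isomorphic von Neumann algebras in their GNS representations; since $\A$ is ultraweakly dense in the factor $\M$ and $\omega$ is normal, $\pi_{\omega|_\A}(\A)''\cong\M e'\cong\M$ for the nonzero projection $e'=[\M\Omega_\omega]\in\M'$, while $\pi_{\tilde\omega}(\A)''$ is by definition the ITPFI factor $\bigotimes_k(Q_k,\omega|_{Q_k})$. The quasi-equivalence follows from your summable errors with no unitaries at all: telescoping from level $m$ gives $\sup_n\norm{(\omega-\tilde\omega)|_{\R_m'\cap\R_n}}\le 2\sum_{k\ge m}\eps_k\to0$, hence $\norm{(\omega-\tilde\omega)|_{\R_m'\cap\A}}\to0$, and Powers' tail criterion for factor states on UHF algebras (both $\omega|_\A$ and the product state $\tilde\omega$ are factor states here) yields quasi-equivalence. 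Replacing your intertwining by this argument closes the proof and is the standard route; in particular only $\eps_k\to0$ is needed, not $\sum_k\sqrt{\eps_k}<\infty$.
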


Clearly, an infinite tensor product state on an ITPFI factor has a naturally associated sequence of normal factorizations.
\begin{remark}\label{rem:state_factorization_approximation}
Let $\M$ be an ITPFI factor constructed from a sequence $\{\M_{j}\}$ of finite type $\I$ factors and faithful normal states $\varphi_{j}$. Then, the normal factorizations $\varphi^{\times}_{\leq n}$ relative to $\M_{\leq n}=\bigotimes_{j=1}^{n}\M_{j}$ agree with $\varphi = \otimes_{j=1}^{\infty}\varphi_{j}$. 
\end{remark}

\begin{remark}\label{rem:normal_factorization}
It can be shown that a hyperfinite factor $\M$ is an ITPFI factor if and only if it admits an ultraweakly dense filtration, $\{\M_{n}\}$, by factors of type $\I$ such that the associated normal factorizations $\{\omega_{n}^{\times}\}$ are convergent in norm.
\end{remark}

\subsection{Tensor-product approximations of normal states}
\label{sec:tp_approximation}

We continue by analyzing universal embezzlement for hyperfinite factors. To this end, we prove an approximation result for normal states on $\B(\H)$, i.e., those states given by density matrices $\rho$. Informally, we show that any density matrix $\rho$ on an infinite-dimensional, separable Hilbert space $\H$ can be approximated arbitrarily well as a tensor product $\rho_{\textup{fin}}\ox\bigotimes_{j}\varphi_{j}$ of a finite-dimensional density matrix $\rho_{\textup{fin}}$ and an infinite tensor product of a fixed sequence of reference states $\{\varphi_{j}\}$. 

To make this statement precise, we represent $\H$ as an infinite tensor product, $\H = \bigotimes_{j=1}^{\infty}(\H_{j},\Phi_{j})$, as discussed below \cref{eq:itpfi}.\footnote{For example, we may choose a spin chain representation relative to the all-down state: $\H_{j}=\CC^{2}$ and $\Phi_{j} = |0\rangle$.} This means, that $\H$ is the inductive limit $\H = \varinjlim_{n}\H_{\leq n}$ of the sequence of Hilbert spaces $\H_{\leq n}$ which are connected by the isometries $V_{n+1,n}\Psi_{n} = \Psi_{n}\ox\Phi_{n+1}$ for all $\Psi_{n}\in\H_{\leq n}$ relative to the sequence $\{\Phi_{j}\}$. 
In addition, we have a compatible factorization of $\B(\H)$ as an infinite tensor product of von Neumann algebras
\begin{align}
\B(\H) & = \bigotimes_{j=1}^{\infty}(\B(\H_{j}),\varphi_{j}),
\end{align}
where $\varphi_{j}=\ip{\Phi_{j}}{(\placeholder)\Phi_{j}}$ (see \cite{KadisonRingrose2, takesaki3} for further details).

Let us denote the asymptotic isometric embeddings of $\H_{\leq n}$ into $\H$ by $V_{n}$, i.e., $V_{n}\Psi_{n} = \Psi_{n}\ox\Phi_{>n}$ with $\Phi_{>n} = \ox_{j=n+1}^{\oo}\Phi_{j}$. The induced sequence of projections,
\begin{align*}
P_{n} & = V_{n}V_{n}^{*}:\H \rightarrow \H,
\end{align*}
converges strongly to the identity on $\H$ (see \cite[Lem.~3.1]{araki1966complete_boolean} for a similar statement):
\begin{align}\label{eq:projection_strong_limit}
\lim_{n\rightarrow\infty}\norm{P_{n}\Psi-\Psi} & = 0, \qquad \forall\Psi\in\H,
\end{align}
which follows directly from $P_{n|\H_{\leq m}} = 1$ and $P_{n}P_{m}=P_{m}$ for $n\ge m$.
The key observation is that we can approximate normal functionals on $\B(\H)$ by truncating with $P_{n}$:
\begin{lemma}
\label{lem:normapprox}
Given any normal functional $\omega$ on $\B(\H)$, i.e., $\omega\in\B(\H)_{*}$ such that $\omega = \tr_{\H}(T_{\omega}\placeholder)$ for some positive, normalized trace-class operator $T_{\omega}\in\B(\H)$, we have:
\begin{align}
\label{eq:normapprox}
\lim_{n\rightarrow\infty}\norm{P_{n}\omega P_{n} - \omega} & = 0.
\end{align}
\end{lemma}
\begin{proof}
Since $P_{n}$ converges strongly to the identity, we can choose, for an arbitrary pair $0\neq\Psi,\Psi'\in\H$ and $\eps>0$, an $n\in\NN$ such that $\norm{\Psi-P_{n}\Psi}<\tfrac{\eps}{\norm{\Psi}+\norm{\Psi'}}$ and $\norm{\Psi'-P_{n}\Psi'}<\tfrac{\eps}{\norm{\Psi}+\norm{\Psi'}}$. This implies the following estimate for finite-rank functionals $\ip{\Psi}{(\placeholder)\Psi'}$:
\begin{align}\label{eq:finiterankapprox}
\norm{P_{n}\ip{\Psi}{(\placeholder)\Psi'}P_{n} - \ip{\Psi}{(\placeholder)\Psi'}} & = \norm{\ip{\Psi}{P_{n}(\placeholder)P_{n}\Psi'}-\ip{\Psi}{(\placeholder)\Psi'}} \nonumber \\
& \leq \norm{\Psi-P_{n}\Psi}\norm{\Psi'}+\norm{P_{n}\Psi}\norm{\Psi'-P_{n}\Psi'} \nonumber \\
& < \eps
\end{align}
As the finite-rank functionals are norm dense in $\B(\H)_{*}$, and the truncation with $P_{n}$ is a normal completely positive map, the claim follows.
\end{proof}

An immediate consequence of this lemma is that we can approximate any normal state $\omega$ in norm via truncating with $P_{n}$:

\begin{corollary}
\label{cor:normapproxstate}
Let $\omega$ be a normal state on $\B(\H)$. Then, the sequence of normalized truncations approximates $\omega$ converges in norm:
\begin{align}
\label{eq:normapproxstate}
\lim_{n\rightarrow\infty}\norm{\tfrac{1}{\omega(P_{n})} P_{n}\omega P_{n} - \omega} & = 0,
\end{align}
where we consider the approximating sequence only for sufficiently large $n$ such that $\omega(P_{n})>0$.
\end{corollary}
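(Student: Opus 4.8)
The plan is to deduce the statement directly from \cref{lem:normapprox}, which already gives $\norm{P_n\omega P_n - \omega}\to 0$; what remains is only a renormalization argument. First I would control the normalization constant $\omega(P_n)$. Since $\omega$ is a state, $(P_n\omega P_n)(1) = \omega(P_n 1 P_n) = \omega(P_n)$ and $\omega(1) = 1$, so testing the convergence in \cref{lem:normapprox} against the unit of $\B(\H)$ gives
\begin{equation*}
    |\omega(P_n) - 1| = |(P_n\omega P_n)(1) - \omega(1)| \le \norm{P_n\omega P_n - \omega} \xrightarrow{n\to\infty} 0.
\end{equation*}
In particular $\omega(P_n)\to 1$, so $\omega(P_n) > 0$ for all sufficiently large $n$ and the normalized truncations $\tfrac{1}{\omega(P_n)}P_n\omega P_n$ are well-defined eventually.

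Next I would split the difference into a "truncation error" part and a "normalization error" part,
\begin{equation*}
    \frac{1}{\omega(P_n)} P_n\omega P_n - \omega = \frac{1}{\omega(P_n)}\big(P_n\omega P_n - \omega\big) + \Big(\frac{1}{\omega(P_n)} - 1\Big)\omega,
\end{equation*}
and estimate by the triangle inequality, using $\norm{\omega} = 1$:
\begin{equation*}
    \Big\|\tfrac{1}{\omega(P_n)} P_n\omega P_n - \omega\Big\| \le \frac{1}{\omega(P_n)}\,\norm{P_n\omega P_n - \omega} + \Big|\frac{1}{\omega(P_n)} - 1\Big|.
\end{equation*}
The first summand tends to $0$ because $\omega(P_n)\to 1$ (so $1/\omega(P_n)$ stays bounded) and $\norm{P_n\omega P_n - \omega}\to 0$ by \cref{lem:normapprox}; the second summand tends to $0$ because $\omega(P_n)\to 1$. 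This yields \eqref{eq:normapproxstate}.

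I do not anticipate any genuine obstacle here: the corollary is a routine consequence of \cref{lem:normapprox}, and the only step that requires a moment of care is establishing $\omega(P_n)\to 1$, which, as above, is obtained simply by evaluating the norm convergence in the lemma on the identity operator.
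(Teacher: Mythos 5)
Your proof is correct and follows the same route the paper intends: the corollary is stated there as an immediate consequence of \cref{lem:normapprox}, with exactly the renormalization argument you spell out (testing against the identity to get $\omega(P_n)\to 1$, then the triangle-inequality split). No issues.
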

\noindent For convenience, let us clarify the form of the dual action of truncating with $P_{n}$ on trace-class operators $T\in \B(\H)$:

It follows from the definition of $P_{n}$ that $P_{n}TP_{n} = V_{n}^{*}TV_{n}\ox\ketbra{\Phi_{>n}}{\Phi_{>n}}$. Thus, we can express the approximating sequence in \cref{eq:normapproxstate} as 
\begin{align*}
\tfrac{1}{\omega_{T}(P_{n})}P_{n}\omega_{T}P_{n} & = \omega_{T_{n}},
\end{align*}
where $\omega_{T} = \tr(T\placeholder)$ and $T_{n} = \tfrac{1}{\tr(P_{n}TP_{n})}P_{n}TP_{n}$. Clearly, the density matrix $T_{n}$ has the desired tensor-product form alluded to at the beginning of this section.\\

We will now argue that \cref{lem:normapprox} and, thus, \cref{cor:normapproxstate} apply to ITPFI factors as well.

Consider an ITPFI factor $\M$ given by a sequence $\M_{j}$ of type $\I_{n_{j}}$ and a family of (faithful) normal states $\varphi_{j}$. By the above, we have:
\begin{align}
\label{eq:vnaitp}
\M = \bigotimes_{j=1}^{\infty}(\M_{j},\varphi_{j}) \subset \bigotimes_{j=1}^{\infty}(\B(\H_{\varphi_{j}}),\omega_{\varphi_{j}}) = \B(\H_{\varphi}), \qquad \H_{\varphi}  = \bigotimes_{j=1}^{\infty}(\H_{\varphi_{j}},\Omega_{\varphi_{j}}),
\end{align}
where $\omega_{\varphi_{j}} = \ip{\Omega_{\varphi_{j}}}{(\placeholder)\Omega_{\varphi_{j}}}$ denotes the vector state implementing $\varphi_{j}$ in its GNS representation.

The space of normal functionals $\M_{*}\subset\B(\H_{\varphi})_{*}$ is a norm-closed subspace, and the finite-rank functionals form a norm-dense subspace (with respect to the norm $\norm{\placeholder}_{\M_{*}}$) \cite{takesaki1}. 
Thus, it is a direct consequence of \cref{lem:normapprox} that truncating with $P_{n}$, which we can intrinsically define on $\M$ because of \cref{rem:state_factorization_approximation}, allows for approximations in norm within $\M_{*}$, i.e., for any $\omega\in\M_{*}$, we have:
\begin{align}
\label{eq:normapproxvna}
\lim_{n\rightarrow\infty}\norm{P_{n}\omega P_{n} - \omega}_{\M^{*}} & = 0.
\end{align}
This can be easily verified by observing that $\omega$ is induced by a normalized trace-class operator $T\in\B(\H_{\varphi})$, and that $\norm{\placeholder}_{\M^{*}}\leq\norm{\placeholder}_{\B(\H_{\varphi})^{*}}$, which entails:
\begin{align}
\label{eq:normapproxvna_estimate}
\norm{P_{n}\omega P_{n} - \omega}_{\M^{*}} & \leq \norm{\omega_{P_{n}TP_{n}}-\omega_{T}}_{\B(\H_{\varphi})^{*}} = \norm{P_nTP_n-T}_{1},
\end{align}
where $\norm{\placeholder}$ denotes the trace norm.

Thus, we obtain the analog of \cref{cor:normapproxstate} for normal states on $\M$ (using the notation of \cref{rem:state_factorization_approximation}).
\begin{corollary}
\label{cor:matrixapprox}
For any normal state $\omega$ on $\M$ and $\eps>0$ there is an $n\in\NN$ and density matrix $\rho_{n}
\in\ox_{j=1}^{n}\B(\H_{\varphi_{j}}) = \B(\H_{\varphi_{\leq n}})$ such that
\begin{align}
\label{eq:matrixapprox}
\norm{\omega_{\rho_{n}}\ox\varphi_{\leq n}^{c} - \omega}_{\M^{*}} & < \eps,
\end{align}
where we use the natural tensor-product splitting $\M=\M_{\leq n}\ox\M_{\leq n}^{c}$ associated with the tensor-product splitting $\H_{\varphi} = \H_{\varphi_{\leq n}}\ox\H_{\varphi_{\leq n}^{c}}$. In particular, we can choose $\rho_{n} = \tfrac{1}{\tr(P_{n}\rho P_{n})}V_{n}^{*}\rho V_{n}$ for any representation $\omega = \tr(\rho\placeholder)$ provided $n$ is sufficiently large.
\end{corollary}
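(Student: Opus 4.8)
The plan is to derive \cref{cor:matrixapprox} directly from \cref{cor:normapproxstate} by working inside the ambient algebra $\B(\H_\varphi)$. First I would invoke the discussion around \cref{eq:vnaitp}: since $\M\subset\B(\H_\varphi)$ with $\H_\varphi=\bigotimes_j(\H_{\varphi_j},\Omega_{\varphi_j})$, every normal state on $\M$ is represented by a density operator on $\H_\varphi$, so fix $\rho$ with $\omega=\tr(\rho\,\placeholder)|_\M=\omega_\rho|_\M$. Note that $\H_\varphi$ is separable because each $\H_{\varphi_j}$ is finite-dimensional, so \cref{cor:normapproxstate} applies to $\omega_\rho$.

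Next I would apply \cref{cor:normapproxstate} with $\H=\H_\varphi$ and $\Phi_j=\Omega_{\varphi_j}$. Since $P_n\to1$ strongly and $\rho$ is trace class, $\omega_\rho(P_n)=\tr(P_n\rho P_n)\to1$, so the normalized truncations $\sigma_n:=\tfrac{1}{\tr(P_n\rho P_n)}P_n\omega_\rho P_n=\omega_{\tau_n}$, with $\tau_n:=\tfrac{1}{\tr(P_n\rho P_n)}P_n\rho P_n$, satisfy $\norm{\tau_n-\rho}_1\to0$ as $n\to\infty$ (for $n$ large enough that $\tr(P_n\rho P_n)>0$). Using the factorization $P_nTP_n=V_n^*TV_n\ox\ketbra{\Omega_{\varphi_{>n}}}{\Omega_{\varphi_{>n}}}$ recorded just before \cref{lem:normapprox} together with $\tr(P_n\rho P_n)=\tr(V_n^*\rho V_n)$, this rewrites as $\tau_n=\rho_n\ox\ketbra{\Omega_{\varphi_{>n}}}{\Omega_{\varphi_{>n}}}$, where $\rho_n:=\tfrac{1}{\tr(P_n\rho P_n)}V_n^*\rho V_n$ is a density operator on $\H_{\varphi_{\leq n}}=\ox_{j=1}^n\H_{\varphi_j}$ — precisely the operator named in the statement.

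The remaining step is the bookkeeping identification of $\sigma_n|_\M$ with the advertised product state. Under the splitting $\M=\M_{\leq n}\ox\M_{\leq n}^c$ induced by $\H_\varphi=\H_{\varphi_{\leq n}}\ox\H_{\varphi_{>n}}$ (compatible with the normal factorizations of \cref{rem:state_factorization_approximation,rem:normal_factorization}), the functional $\omega_{\tau_n}=\omega_{\rho_n}\ox\omega_{\ketbra{\Omega_{\varphi_{>n}}}{\Omega_{\varphi_{>n}}}}$ restricts to $\omega_{\rho_n}|_{\M_{\leq n}}\ox\varphi_{\leq n}^c$, because $\ketbra{\Omega_{\varphi_{>n}}}{\Omega_{\varphi_{>n}}}$ implements the vector state of $\Omega_{\varphi_{>n}}$, whose restriction to $\M_{\leq n}^c\cong\bigotimes_{j>n}(\M_j,\varphi_j)$ is exactly $\varphi_{\leq n}^c=\varphi|_{\M_{\leq n}^c}=\bigotimes_{j>n}\varphi_j$. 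Since restriction of functionals is norm non-increasing, I would conclude
\begin{equation*}
    \norm{\omega_{\rho_n}\ox\varphi_{\leq n}^c-\omega}_{\M^*}=\norm{(\sigma_n-\omega_\rho)|_\M}_{\M^*}\le\norm{\tau_n-\rho}_1\longrightarrow0,
\end{equation*}
so the left-hand side is $<\eps$ for $n$ large, and the explicit form of $\rho_n$ in the ``in particular'' clause is read off the computation above (with the admissible range of $n$ depending on the chosen representation $\rho$).

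I do not anticipate a genuine obstacle here: all the analytic content is already contained in \cref{lem:normapprox}/\cref{cor:normapproxstate}. The only points that require care are (i) recording that a normal state on $\M$ is implemented by a density operator on $\H_\varphi$, and (ii) matching the tensor-factorized truncation $\tau_n$ with the product functional $\omega_{\rho_n}\ox\varphi_{\leq n}^c$ while keeping the normalization factor $\tr(P_n\rho P_n)$ consistent throughout and invoking \cref{rem:state_factorization_approximation} to identify $\varphi_{\leq n}^c$ with $\bigotimes_{j>n}\varphi_j$.
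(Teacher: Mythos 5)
Your proposal is correct and follows essentially the same route as the paper: represent $\omega$ by a trace-class operator on $\H_\varphi$, apply \cref{cor:normapproxstate} there, use the factorization $P_nTP_n=V_n^*TV_n\ox\ketbra{\Omega_{\varphi_{>n}}}{\Omega_{\varphi_{>n}}}$, and pass to $\M$ via the norm inequality $\norm{\placeholder}_{\M^*}\le\norm{\placeholder}_{\B(\H_\varphi)^*}$, which is exactly the paper's estimate \eqref{eq:normapproxvna_estimate}. The identification of the restricted truncation with $\omega_{\rho_n}\ox\varphi_{\leq n}^c$ via \cref{rem:state_factorization_approximation} is also as in the paper.
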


\begin{remark}
\label{rem:normapproxvna}
Since $\M$ is in standard form, any normal state $\omega$ is implemented by a vector $\Psi_\omega\in\H$. This means that we can choose $\rho_{n} = \ketbra{\Psi_{n}}{\Psi_{n}}$ with $\Psi_{n} = \tfrac{1}{\norm{P_{n}\Psi_{\omega}}}V_{n}^{*}\Psi_{\omega}$, and we find:
\begin{align}
\label{eq:normapproxpure_estimate}
\norm{\tfrac{1}{\omega(P_{n})}P_{n}\omega P_{n}-\omega}_{\M^{*}} &\leq \norm{|\Psi_{n}\rangle\langle\Psi_{n}|\ox|\Omega_{\varphi_{\leq n}}\rangle\langle\Omega_{\varphi_{\leq n}}|-|\Psi_{\omega}\rangle\langle\Psi_{\omega}|}_{\B(\H_{\varphi})^{*}} \\ \nonumber
& \leq 2 (\norm{\Psi_{\omega}\!-\!P_{n}\Psi_{\omega}} + (1\!-\!\norm{P_{n}\Psi_{\omega}})).
\end{align}
Thus, we obtain a vector state tensor-product approximation $\ip{\Psi_{n}\ox\Omega_{\phi_{>n}}}{(\placeholder)\Psi_{n}\ox\Omega_{\phi_{>n}}}$ of $\omega$.
\end{remark}

\begin{remark}\label{rem:normal_factorization_approx}
An analog of \cref{cor:matrixapprox} holds relative to a convergent sequence of normal factorizations $\{\omega_{n}^{\times}\}$ of faithful normal state $\omega$ on $\M$. But, because of \cref{rem:normal_factorization}, this does not enlarge the class of factors beyond ITPFI factors to which such a result applies.
\end{remark}

\subsection{The unique universally embezzling hyperfinite factor}
\label{sec:universal_mbz_hyperfinite}

We can apply \cref{cor:matrixapprox} to prove that there is a unique universally embezzling ITPFI factor $\M$, i.e., it is isomorphic to the unique hyperfinite factor of type $\III_{1}$ -- the Araki-Woods factor $\R_{\infty}$ \cite{araki1968factors}.\\

Assume that $\M$ is a universally embezzling ITPFI factor. Given a normal state $\omega\in\M_{*}$ and an $\eps>0$, by \cref{cor:matrixapprox}, we find an $n\in\NN$ and a density matrix $\rho_{n}\in \B(\H_{\varphi_{\leq n}})\cong M_{d_{n}}\ox M_{d_{n}}$ such that \cref{eq:matrixapprox} holds, where $d_{n} = \prod_{j=1}^{n}\dim(\M_{j})^{\frac{1}{2}}$. In this case, we can interpret $\omega_{\rho_{n}}\ox\varphi_{\leq n}^{c}$ as a state on $M_{d_{n}}\ox\M_{\leq n}^{c}$ because $\pi_{\phi_{\leq n}}(\M_{\leq n})\cong M_{d_{n}}$ (with the latter given in standard form). Thus, we obtain:

\begin{theorem}
\label{thm:mbz_to_typeIII}
Let $\M$ be an ITPFI factor (with the same notation as above). If for all $n\in\NN_{0}$ the state $\varphi_{\leq n}^{c}$ (with $\varphi_{0}^{c}=\varphi$) is embezzling, then $\M$ has vanishing state space diameter, i.e., $\diam(S_{*}(\M)/\!\sim)=0$. In particular, if $\M$ is universally embezzling, then $\diam(S_{*}(\M)/\!\sim)=0$.
\end{theorem}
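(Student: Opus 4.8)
The plan is to read off the vanishing of the state space diameter directly from the tensor-product approximation in \cref{cor:matrixapprox}, using the characterisation of embezzling states from \cref{sec:embezzlingstates}. Keep the notation $\M=\bigotimes_{j=1}^{\infty}(\M_{j},\varphi_{j})$, and write $\M_{\leq n}=\bigotimes_{j=1}^{n}\M_{j}$, a finite type $\I$ factor with $\M_{\leq n}\cong M_{d_{n}}$ ($d_n$ as in the paragraph preceding the theorem), and $\M_{>n}=\M_{\leq n}^{c}=\bigotimes_{j>n}(\M_{j},\varphi_{j})$, again an ITPFI factor, so that $\M\cong\M_{\leq n}\ox\M_{>n}$ and $\varphi_{\leq n}^{c}=\varphi|_{\M_{>n}}$. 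The first step is to reformulate the hypothesis. By the characterisation of embezzling states, the assumption that $\varphi_{\leq n}^{c}$ is embezzling on $\M_{>n}$ implies — taking the matrix size to equal $d_{n}$ — that for any two normal states $\psi,\phi$ on $\M_{\leq n}$ the product states $\varphi_{\leq n}^{c}\ox\psi$ and $\varphi_{\leq n}^{c}\ox\phi$ on $\M_{>n}\ox\M_{\leq n}$ are approximately unitarily equivalent. Identifying $\M_{>n}\ox\M_{\leq n}$ with $\M=\M_{\leq n}\ox\M_{>n}$ via the tensor flip and an isomorphism $\M_{\leq n}\cong M_{d_{n}}$, this says exactly that $\psi\ox\varphi_{\leq n}^{c}\sim\phi\ox\varphi_{\leq n}^{c}$ as states on $\M$, with the equivalence implemented by unitaries of $\M$.

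Given this reformulation, the proof is a three-term triangle inequality. Let $\omega,\sigma\in S_{*}(\M)$ and $\eps>0$ be arbitrary. Applying \cref{cor:matrixapprox} to each of $\omega$ and $\sigma$, and choosing $n$ large enough to serve both, we obtain normal states $\psi,\phi$ on $\M_{\leq n}$ with $\norm{\omega-\psi\ox\varphi_{\leq n}^{c}}<\eps$ and $\norm{\sigma-\phi\ox\varphi_{\leq n}^{c}}<\eps$. The reformulated hypothesis provides a unitary $u\in\U(\M)$ with $\norm{u(\psi\ox\varphi_{\leq n}^{c})u^{*}-\phi\ox\varphi_{\leq n}^{c}}<\eps$, and then $\norm{u\omega u^{*}-\sigma}<3\eps$ by unitary invariance of the norm. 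As $\eps>0$ was arbitrary, $\omega\sim\sigma$; as $\omega,\sigma$ were arbitrary, $\diam(S_{*}(\M)/\!\sim)=0$.

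For the "in particular" statement I would verify that a universally embezzling ITPFI factor $\M$ satisfies the hypothesis, i.e.\ that $\varphi_{\leq n}^{c}$ is embezzling on $\M_{>n}$ for every $n$. Since $\M$ is a universal embezzler it admits embezzling states, hence is not semifinite by \cref{cor:no_semifinite_embezzling} (cf.\ also \cref{thm:spectral_charac}); being a factor, it is therefore infinite. The same holds for $\M_{>n}$, as $\M\cong\M_{\leq n}\ox\M_{>n}$ would be finite whenever $\M_{>n}$ were. An infinite factor $B$ satisfies $M_{k}(B)\cong B$ for all $k$, so $\M\cong\M_{\leq n}\ox\M_{>n}\cong M_{d_{n}}\ox\M_{>n}=M_{d_{n}}(\M_{>n})\cong\M_{>n}$. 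Since being a universal embezzler is an isomorphism invariant, $\M_{>n}$ is a universal embezzler as well, so in particular its normal state $\varphi_{\leq n}^{c}$ is embezzling (the case $n=0$ is trivial, $\M_{>0}=\M$). The first part then yields $\diam(S_{*}(\M)/\!\sim)=0$.

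The step I expect to require the most care is the reformulation in the first paragraph: one has to make precise the identification of the ampliation $M_{d_{n}}(\M_{>n})=\M_{>n}\ox M_{d_{n}}$ — in which the embezzling unitaries for $\varphi_{\leq n}^{c}$ a priori live — with $\M=\M_{\leq n}\ox\M_{>n}$, via the tensor flip and an isomorphism $\M_{\leq n}\cong M_{d_{n}}$, and to transport the product states $\varphi_{\leq n}^{c}\ox\psi$ to $\psi\ox\varphi_{\leq n}^{c}$ accordingly, so that the approximants supplied by \cref{cor:matrixapprox} are precisely the states compared by the reformulated hypothesis. Once this identification is in place, the remaining ingredients are the short triangle-inequality estimate and the standard facts that semifinite von Neumann algebras carry no embezzling states and that infinite factors absorb finite type $\I$ factors.
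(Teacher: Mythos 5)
Your proposal is correct and takes essentially the same route as the paper's proof: approximate both states via \cref{cor:matrixapprox} by product states of the form $\rho\ox\varphi_{\leq n}^{c}$, apply the embezzling property of $\varphi_{\leq n}^{c}$ at matrix size $d_{n}$ to connect the two approximants by a unitary of $\M\cong M_{d_{n}}(\M_{\leq n}^{c})$, and finish with the triangle inequality. Your handling of the ``in particular'' clause---deducing that $\M_{\leq n}^{c}$ is universally embezzling because (proper) infiniteness yields $\M\cong M_{d_n}\ox\M_{\le n}^c\cong\M_{\leq n}^{c}$---also matches the paper's argument.
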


\begin{proof}
Since $\M$ has a faithful embezzling state (namely $\varphi$), we know by \cref{thm:partite} that $\M$ is properly infinite. Therefore, we know that there is a spatial isomorphism $\M\cong\M_{\leq n}^{c}$, which is implemented by a unitary $u_{0,n}\in M_{d_{n},1}(\M_{\leq n}^{c})$, because $\M\cong M_{d_{n}}\ox\M_{\leq n}^{c}$ and, thus, there are $d_{n}$ mutually orthogonal, minimal projections $p_{j}\in M_{d_{n}}$ with $p_{j}\sim 1$ and $\sum_{j=1}^{d_{n}}p_{j} = 1$.

Now, given any two normal states $\omega_{1},\omega_{2}$ and $\eps>0$, we choose $n$ and $\rho_{n,1}, \rho_{n,2}$ according to \cref{cor:matrixapprox}, and because $\varphi_{\leq n}^{c}$ is embezzling for $\M_{\leq n}^{c}$, we find a unitary $u_{n}\in\M \cong M_{d_{n}}(\M_{\leq n}^{c})$ such that
\begin{align*}
\norm{u_{n}(\omega_{\rho_{1,n}}\ox\varphi_{\leq n}^{c})u_{n}^{*}-\omega_{\rho_{2,n}}\ox\varphi_{\leq n}^{c}}_{\M^{*}} & < \eps.
\end{align*}
This implies that
\begin{align*}
\norm{u_{n}\omega_{1}u_{n}^{*}-\omega_{2}}_{\M^{*}} & < 3\eps,
\end{align*}
and the first claim follows. For the second claim, we note that $\M$ is universally embezzling if and only if $\M_{\leq n}^{c}$ is universally embezzling.
\end{proof}
As a consequence of the homogeneity of the state space of a universally embezzling ITPFI factor, we have by \cite{connes_homogeneity_1978} and Haagerup's uniqueness result for hyperfinite type $\III_{1}$ factors \cite{haagerup_uniqueness_1987}:
\begin{corollary}\label{cor:itpfi-III1}
    Let $\M$ be an ITPFI factor. $\M$ is universally embezzling if and only if $\M$ is the unique hyperfinite factor of type $\III_{1}$, i.e., $\M\cong\R_{\infty}$.
\end{corollary}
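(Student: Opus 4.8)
The plan is to deduce the corollary by combining \cref{thm:mbz_to_typeIII} with two classical structural theorems, treating the two implications separately.

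For the ``only if'' direction, I would start from a universally embezzling ITPFI factor $\M$ and invoke \cref{thm:mbz_to_typeIII} to conclude that $\diam(S_*(\M)/\!\sim)=0$, i.e.\ that the normal state space of $\M$ is homogeneous. Next I would quote the homogeneity theorem of Connes and St\o rmer \cite{connes_homogeneity_1978}: a factor with separable predual has homogeneous normal state space precisely when it is of type $\III_1$. Since ITPFI factors are hyperfinite (the filtration $\bigcup_n\M_{\leq n}$ is ultraweakly dense, cf.\ \eqref{eq:itpfi}) and separable by our standing assumption, $\M$ is then a separable hyperfinite factor of type $\III_1$, and Haagerup's uniqueness theorem \cite{haagerup_uniqueness_1987} identifies it with $\R_\infty$.

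For the ``if'' direction I would note that $\R_\infty$ is itself an ITPFI factor --- the Araki--Woods factor \cite{araki1968factors} --- so it falls within the scope of the statement, and that it is universally embezzling. The quickest way to see the latter is the argument already recorded in the remark following \cref{thm:universal_embezzzeling_algebra}: $\R_\infty$ is type $\III_1$, hence has homogeneous state space by \cite{connes_homogeneity_1978}, and is properly infinite, hence $\R_\infty\cong M_n(\R_\infty)$ for every $n$; applying homogeneity inside $M_n(\R_\infty)\cong\R_\infty$ shows that $\omega\ox\bra1\placeholder\ket1$ and $\omega\ox\psi$ are approximately unitarily equivalent for every normal state $\omega$ on $\R_\infty$ and every $\psi\in S(M_n)$, so every normal state of $\R_\infty$ is embezzling. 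Alternatively, one may invoke \cref{thm:universal_embezzzeling_algebra} directly, since for a factor being a direct integral of type $\III_1$ factors just means being of type $\III_1$.

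I do not expect a genuine obstacle here: all the analytic content sits in \cref{thm:mbz_to_typeIII}, which is already proved above, and the remaining work is purely the assembly of cited results. The only point requiring a line of verification is that the hypotheses of \cref{thm:mbz_to_typeIII} are met --- namely that $\varphi_{\leq n}^c$ is embezzling for every $n$ --- which is immediate once one observes, as in the proof of that theorem, that $\M_{\leq n}^c\cong\M$ and is therefore universally embezzling whenever $\M$ is.
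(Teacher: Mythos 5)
Your proposal is correct and follows essentially the same route as the paper: the ``only if'' direction is exactly the paper's chain \cref{thm:mbz_to_typeIII} $\Rightarrow$ homogeneity $\Rightarrow$ type $\III_1$ via \cite{connes_homogeneity_1978} $\Rightarrow$ $\R_\infty$ via \cite{haagerup_uniqueness_1987}, and your ``if'' direction reproduces the homogeneity-plus-proper-infiniteness argument recorded in the remark after \cref{thm:universal_embezzzeling_algebra}. Your closing check that $\M_{\leq n}^c$ inherits universal embezzlement is the same observation made at the end of the proof of \cref{thm:mbz_to_typeIII}, so nothing is missing.
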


The uniqueness and ITPFI construction of the hyperfinite type $\III_1$ factor allows us to extend the embezzling property from bipartite pure state to \emph{all} bipartite mixed states:
\begin{corollary}
\label{cor:itpfi-III1-mixed}
    Consider a bipartite system $(\H,\M,\M')$ where $\M$ (and hence $\M'$) are hyperfinite type $\III_1$ factors.
    Then every density operator $\rho$ on $\H$ is embezzling in the sense that for every unit vector $\Psi\in\CC^n\ox\CC^n$ and every $\eps>0$, there exist unitaries $u \in \M\ox M_n\ox 1$ and $u'\in\M\ox1\ox M_n$ such that
    \begin{equation}\label{eq:rho_mbz}
        \norm{uu' (\rho\ox \kettbra{11}) u^*u'^* - \rho \ox \kettbra\Psi}_1 <\eps.
    \end{equation}
\end{corollary}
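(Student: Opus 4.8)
The plan is to reduce the mixed-state statement \eqref{eq:rho_mbz} to the already-established pure-state embezzlement on a hyperfinite type $\III_1$ factor by purifying $\rho$ inside a doubled copy of the system. The key structural input is that the hyperfinite type $\III_1$ factor $\R_\infty$ is an ITPFI factor, hence absorbs itself: $\M\cong\M\ox\R_\infty\cong\M\ox\M$ (more precisely, $\M$ is isomorphic to $\M\bar\ox\M$ since both are hyperfinite type $\III_1$ factors by Haagerup's uniqueness \cite{haagerup_uniqueness_1987}), and similarly for $\M'$. I would fix such an isomorphism together with a compatible spatial picture, so that the original bipartite system $(\H,\M,\M')$ embeds into a standard bipartite system $(\tilde\H,\tilde\M,\tilde\M')$ with $\tilde\M\cong\M$ still hyperfinite type $\III_1$, in such a way that the density operator $\rho$ on $\H$ is the marginal (on the first copy) of a pure state $\Xi\in\tilde\H$. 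Since $\tilde\M$ is again type $\III_1$, \cref{thm:partite} together with \cref{thm:universal_embezzzeling_algebra} (or directly \cref{cor:itpfi-III1}) tells us $\Xi$ is an embezzling vector for $(\tilde\H,\tilde\M,\tilde\M')$.

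Concretely, the cleanest route is probably not to double the algebra by hand but to use the GNS/standard-form picture: let $\omega=\tr(\rho\,\placeholder)$ be the (generally non-normal-looking, but normal on $\B(\H)$) state; since $\M$ is in standard form on $\H$ — this is part of the hypothesis that $(\H,\M,\M')$ is a bipartite system with $\M'$ the commutant and $\M$ type $\III_1$, hence $\sigma$-finite and properly infinite — every normal state of $\M$, in particular the restriction $\omega_\M$ of $\omega$ to $\M$, is a vector state $\omega_\M(\cdot)=\ip{\Omega}{(\cdot)\Omega}$ for some $\Omega\in\H$. The point is that embezzlement only depends on this marginal: by \cref{thm:partite}, $\Omega$ is an embezzling \emph{vector} iff $\omega_\M$ is an embezzling \emph{state}, and the latter holds because $\M$ is type $\III_1$ and hence universally embezzling. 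So I would first argue that it suffices to prove \eqref{eq:rho_mbz} for $\rho=\kettbra\Omega$ a pure state, and then observe that for a general density operator $\rho$ with spectral decomposition $\rho=\sum_k p_k\kettbra{\xi_k}$ one can run the same local unitaries $u,u'$ simultaneously on every eigenvector: since $uu'(\kettbra{\xi_k}\ox\kettbra{11})u^*u'^*\approx \kettbra{\xi_k}\ox\kettbra\Psi$ in trace norm with an error independent of $k$ (the unitaries depend only on the marginal $\omega_\M$ of the relevant vector, and the marginals of the $\xi_k$ are controlled), convexity of the trace norm gives $\norm{uu'(\rho\ox\kettbra{11})u^*u'^* - \rho\ox\kettbra\Psi}_1 \le \sum_k p_k\,\eps_k$. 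The subtlety here is that the different $\xi_k$ have \emph{different} marginals on $\M$, so a single pair $(u,u')$ need not embezzle all of them at once.

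To handle that subtlety — which I expect to be the main obstacle — I would instead purify $\rho$ itself. Amplify: on $\H\ox\CC^N\ox\CC^N$ (taking $N=\mathrm{rank}(\rho)$, or working in $\M\ox M_N$ which is again hyperfinite type $\III_1$ since $\M$ is properly infinite), write $\rho$ as the marginal of a single unit vector $\Xi\in\H\ox\CC^N\ox\CC^N$ via $\Xi=\sum_k\sqrt{p_k}\,\xi_k\ox\ket{kk}$. Then $\Xi$ is an embezzling vector for the bipartite system $(\H\ox\CC^N\ox\CC^N,\M\ox M_N\ox 1,\M'\ox 1\ox M_N)$, because its $\M\ox M_N$-marginal is a normal state on a hyperfinite type $\III_1$ factor and \cref{cor:itpfi-III1} (via \cref{thm:partite}) makes it embezzling. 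Apply the embezzling property of $\Xi$ to the target $\Psi\in\CC^n\ox\CC^n$: there exist unitaries $U\in\M\ox M_N\ox M_n\ox 1$ and $U'\in\M'\ox 1\ox M_N\ox M_n$ with $\norm{UU'(\Xi\ox\ket{11}) - \Xi\ox\Psi}<\delta$. Now trace out the purifying $\CC^N\ox\CC^N$ systems and discard the identity action there: the reduced operation is given by $u\in\M\ox M_n\ox 1$, $u'\in\M'\ox1\ox M_n$ obtained from $U,U'$ by ignoring the ancilla legs — here one must check that $U,U'$ can be taken to act trivially on the ancilla, or alternatively absorb the ancilla unitaries into the "$\M$" factor using $\M\cong\M\ox M_N$ and the fact that the partial trace of $\kettbra\Xi$ over $\CC^N\ox\CC^N$ is exactly $\rho$. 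Using that partial trace is a contraction for $\norm\cdot_1$ and the standard vector-state/trace-norm inequality $\norm{\kettbra\Phi-\kettbra{\Phi'}}_1\le 2\norm{\Phi-\Phi'}$, we get $\norm{uu'(\rho\ox\kettbra{11})u^*u'^* - \rho\ox\kettbra\Psi}_1 \le 2\delta<\eps$. The one genuinely delicate point, which I would spell out carefully, is justifying that the ancillary legs can be removed — i.e. that $U$ and $U'$ may be chosen in $\M\ox M_n$ resp. $\M'\ox M_n$ after identifying $\M\ox M_N\cong\M$; this uses that $\M$ is properly infinite (so $\M\ox M_N\cong\M$ canonically, compatibly with the standard form by \cref{lem:standard_amplification}) and that under this identification $\Xi$ becomes a vector inducing the state $\rho$ on $\M$, exactly as in the pure-state case, closing the loop via \cref{thm:partite}.
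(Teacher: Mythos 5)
Your purification setup is correct as far as it goes — with $\Xi=\sum_k\sqrt{p_k}\,\xi_k\ox\ket{kk}$ one has $\tr_{\CC^N\ox\CC^N}\kettbra\Xi=\rho$, and contractivity of the partial trace would finish the job — but only if the embezzling unitaries $U,U'$ for $\Xi$ act \emph{trivially} on the two ancilla legs, and that is precisely what your argument does not establish. \cref{thm:partite} applied to the Haag-dual system $(\H\ox\CC^N\ox\CC^N,\,\M\ox M_N\ox1,\,\M'\ox1\ox M_N)$ produces unitaries that generically act on the ancillas; after conjugation by such unitaries the partial trace over $\CC^N\ox\CC^N$ no longer commutes with them, so the reduced operation is a channel, not $uu'(\placeholder)u^*u'^*$. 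If you instead insist on $U\in\M\ox1\ox1\ox M_n\ox 1$ and $U'\in\M'\ox1\ox1\ox1\ox M_n$, you are asking that $\Xi$ be embezzling for the \emph{non-Haag-dual} bipartition $(\M\ox1\ox1,\M'\ox1\ox1)$ of $\H\ox\CC^N\ox\CC^N$; this does not follow from \cref{thm:partite} or \cref{cor:itpfi-III1} and is essentially the corollary itself in disguise. Your fallback of absorbing the ancillas via $\M\ox M_N\cong\M$ also fails: the spatial isomorphism $W$ from \cref{lem:standard_amplification} carries $\kettbra\Xi$ to the rank-one operator $\kettbra{W\Xi}$ on $\H$, not to $\rho$. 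It preserves induced states on the von Neumann algebras, but \eqref{eq:rho_mbz} is a trace-norm statement about density operators on $\B(\H\ox\CC^n\ox\CC^n)$, which is strictly stronger than closeness of the induced states on $\M\ox M_n$ and $\M'\ox M_n$.

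The missing idea is the one the paper uses: exploit the ITPFI structure $\H=\bigotimes_j(\H_j,\Phi_j)$, $\M=\bigotimes_j(\M_j,\varphi_j)$. For a density operator of the special form $\sigma\ox\kettbra{\Phi_{>N}}$, the embezzling unitaries can be chosen inside the tail algebras $\M_{>N}\ox M_n$ and $\M_{>N}'\ox M_n$; these act trivially on $\H_{\le N}$, so a \emph{single} pair $(u,u')$ works simultaneously for all eigenvectors of $\sigma$ — exactly the obstruction you correctly identified at the outset. Every finite-rank $\rho$ is a trace-norm limit of such product-form density operators (approximate each eigenvector $\Omega_i$ by $\Omega_i^{(N)}\ox\Phi_{>N}$), and the set of density operators satisfying \eqref{eq:rho_mbz} is trace-norm closed because unitary conjugation is a trace-norm isometry; the general case follows.
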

\begin{proof}
    Let $\rho= \sum_{i=1}^r p_i \ketbra{\Omega_i}{\Omega_i}$ be the spectral decomposition of $\rho$. We may assume the rank $r$ to be finite.
    As explained above, we can assume the Hilbert space to be an infinite tensor product $\H = \bigotimes_{j=1}^\oo (\H_j,\Phi_j)$ of finite-dimensional Hilbert spaces $\H_j$ with unit vectors $\Phi_j\in\H_j$, such that $\M = \bigotimes_{j=1}^\oo (\M_j,\varphi_j)$ and $\M' = \bigotimes_{j=1}^\oo (\M_j,\varphi_j)$ where each $\M_j$ acts on $\H_j$.
    For every $N$, $(\H_{>N},\M_{>N},\M_{>N}')$ is a bipartite system with type $\III_1$ factors and the unit vector $\Phi_{>N}$ is embezzling.
    Therefore every density operator on $\H$ that is of the form $\sigma\ox \kettbra{\Phi_{>N}}$ (for some density operator $\sigma$ on $\H_{\le N}$) is embezzling in the sense of \eqref{eq:rho_mbz}.
    We can pick unit vectors $\Omega_j\up N \in \H_{\le N}$ such that
    \begin{equation*}
        \norm{\Omega_j - \Omega_{j}\up N \ox \Phi_{>N}} \to 0.
    \end{equation*}
    Setting $\rho\up N = \sum_j p_j\kettbra{\Omega_j\up N}\ox \kettbra{\Phi_{<N}}$, we find a sequence of embezzling density operators that converges to $\rho$ in trace norm. Hence, the same holds for $\rho$.
\end{proof}

\section{Embezzling entanglement from quantum fields}\label{sec:qft}

Entanglement in quantum field theory and the problem of its quantification is a topic that is drawing an increasing amount of attention, not least because of the growing interest in quantum information theory (see \cite{hollands2018entanglement_measures} for a recent discussion). Specifically, characterizing the entanglement structure of the vacuum is a fundamental question, with many results indicating that the vacuum should be understood as exhibiting a "maximal" amount of entanglement (see \cite{summers2011yet_more_ado}). Our findings on the possibility of embezzlement of entanglement and the structure of the involved von Neumann algebras allow for a further and, in particular, operational characterization of said maximality.

Conceptually, it is noteworthy that the modular theory of von Neumann algebras features prominently in the operator-algebraic approach to quantum field theory \cite{borchers2000on_revolutionizing, borchers2000modular_groups}, e.g., via the Bisognano-Wichmann theorem \cite{bisognano1976duality}, the determination of the type of the local observable algebras \cite{longo1982algebraic_and_modular}, modular nuclearity \cite{buchholz1986causal_independence, buchholz1990nuclear_maps1,buchholz1990nuclear_maps2}, or the construction of models \cite{brunetti2002wigner_particles,morinelli2021covariant_homogeneous,neeb2021nets_lie_groups}, and our results add to the list of applications.

\subsection{Local algebras as universal embezzlers}\label{subsec:qft}
\label{sec:local_alg}

Let us briefly recall the essential structures of algebraic quantum field theory (AQFT) in the vacuum sector mainly following \cite{baumgaertel1995oam} (see also \cite{haag_local_1996, halvorson2007aqft, buchholz2000current_trends, weiner2011algebraic_haag} for additional discussions):

The basic object of AQFT is a map,
\begin{align}\label{eq:loc_net}
    \O & \mapsto \M(\O),
\end{align}
that associates to each bounded region\footnote{For simplicity and to avoid pathological situations, one can assume that $\O$ is a diamond region (or double cone), i.e., $\O$ is the intersection of the (open) forward lightcone $V^{+}_{x}$ and the (open) backward lightcone $V^{-}_{y}$ of two points in $x,y\in\MM$ in Minkowski spacetime. Then, the algebras of more complicated regions can be built by appealing to \emph{additivity}, see \cref{eq:additivity}.} $\O\subset\MM$ of Minkowski spacetime $\MM$ a von Neumann algebra $\M(\O)$ (all acting on the same Hilbert space $\H$ sharing a common unit). The map $\M$ is referred to as a \emph{net of observable algebras}. This way, the von Neumann algebra $\M(\O)$ is thought of as the observables localized in the spacetime region $\O$. To justify the interpretation of the net $\M$ as encoding a quantum field theory, further assumptions are required:

\begin{definition}\label{def:loc_alg}
A net of observable algebras $\M$ is called {\bf local} if it satisfies the following conditions:
\begin{itemize}
    \item[(a)] \emph{Isotony}: The observables of larger spacetime regions include those of smaller spacetime regions contained in them, i.e.,
    \begin{align}\label{eq:isotony}
    \M(\O_{1})\subset\M(O_{2}) \qquad \textup{if} \qquad \O_{1}\subset\O_{2},
    \end{align}
    \item[(b)] \emph{Causality}: Observables in causally disconnected (or spacelike separated) spacetime regions commute in accordance with Einstein causality, i.e.,
    \begin{align}\label{eq:causal}
        \M(\O_{1})\subset\M(\O_{2})' \qquad \textup{if} \qquad \O_{1}\subset\O_{2}'.
    \end{align}
    Here, $\O'$ denotes the {\bf causal complement} consisting of the interior of the set of those points in $\MM$ that are spacelike to all points of $\O$.
    \item[(c)] \emph{Relativistic covariance}: The proper orthochronous Poincaré group, $\P^{\uparrow}_{+}$, considered as the symmetry group of Minkowski spacetime $\MM$, acts geometrically on the net of observables, i.e., the exists a strongly continuous, (projective) unitary representation $U:\P^{\uparrow}_{+}\rightarrow\U(\H)$ such that
    \begin{align}\label{eq:rel_cov}
        U_{g}\M(\O)U_{g}^{*} & = \M(g\O), \qquad g\in\P^{\uparrow}_{+}.
    \end{align}
\end{itemize}
\end{definition}

Depending on the physical situation under consideration, it is natural to impose further conditions on a local net $\M$, e.g., in the vacuum sector.

\begin{definition}\label{def:vacuum_net}
A local net $\M$ is said to be in or equivalently called a {\bf vacuum representation} if it satifies the following conditions:
\begin{itemize}
    \item[(d)] \emph{Completeness}: The quasi-local algebra $\fA$ given by the uniform closure of the *-algebra $\bigcup_{\O}\M(\O)$ acts irreducibly on $\H$.
    \item[(e)] \emph{Additivity}: The observables associated with a finite collection of spacetime regions $\{\O_{j}\}_{j=1}^{n}$ generate the observables associated with the joint region $\bigcup_{j=1}^{n}\O_{j}$\footnote{This conditions is sometimes relaxed to weak additivity by requiring only an inclusion of the left-hand side in the right-hand side of \cref{eq:additivity}.}, i.e.,
    \begin{align}\label{eq:additivity}
        \M\bigg(\bigcup_{j=1}^{n}\O_{j}\bigg) & = \bigvee_{j=1}^{n}\M(\O_{j}).
    \end{align}
    \item[(f)] \emph{Positive energy}: The joint spectrum of the generators $\{P^{\mu}\}_{\mu}$ of the subgroup of spacetime translations of $\P^{\uparrow}_{+}$ is contained in the forward lightcone $\overline{V}^{+}\subset\MM$.
    \item[(g)] \emph{Uniqueness of the vacuum}: There exists an (up to phase) unique vacuum vector $\Omega\in\H$, i.e., $U(g)\Omega =\Omega$ for all $g\in\P^{\uparrow}_{+}$. In particular, $0\in\overline{V}^{+}$ is a non-degenerate joint eigenvalue of the generators of spacetime translations $\{P^{\mu}\}_{\mu}$.
\end{itemize}
\end{definition}

We note that, by extending additivity \cref{eq:additivity}, a local net $\M$ allows for the construction of observable algebras associated with unbounded open regions, e.g.,
\begin{align}\label{eq:wedge_algebra}
    \M(\W) & = \bigvee_{\O\subset\W}\M(\O),
\end{align}
where $\W = \{x\in\MM\ :\ |x^{0}|<x^{1}\}$ is the standard wedge region (other wedge regions are obtained by relativistic covariance \cref{eq:rel_cov}).
Additivity serves as an essential ingredient for proving the \emph{Reeh-Schlieder property} of vacuum representations, i.e., the property that the vacuum vector $\Omega$ is cyclic for each $\M(\O)$ \cite[Thm.~1.3.2]{baumgaertel1995oam}.
In \cref{eq:additivity,eq:wedge_algebra}, $\bigvee_{j}\M_j$ denotes the von Neumann algebra generated by a collection of von Neumann algebras $\{\M_j\}$, i.e., $\bigvee_j\M_j = (\bigcup_j\M_j)''$.

Another property that is essential for our discussion of embezzlement in quantum field theory is \emph{Haag duality}. In the context of AQFT, this property is defined as follows (see also the discussion below \cref{def:bipartite_system}):

\begin{definition}\label{def:net_haag_duality}
A local net $\M$ satisfies {\bf Haag duality} if the local observables algebras of causally closed regions $\O\subset\MM$, i.e., $\O=\O''$,\footnote{For example, diamond regions are causally closed.} satisfy 
\begin{itemize}
    \item[(h)] \emph{Haag duality}: The observables commuting with $\M(\O)$ are precisely the observables localized in  the causal complement $\O'$:
    \begin{align}\label{eq:haag_duality}
        \M(\O)' & = \M(\O').
    \end{align}
\end{itemize}
If $\M$ satisfies Haag duality for all wedge regions, i.e., Poincaré transformations of the standard wedge region $\W$ in \cref{eq:wedge_algebra}, then it is said to satisfy {\bf essential duality}.
\end{definition}

\begin{remark}\label{rem:net_haag_duality}
Demanding Haag duality for all bounded open regions $\O$, not only for causally closed ones, enforces a particularly strong form of causal completeness or determinacy of a local net $\M$ as it entails
\begin{align}\label{eq:causal_completeness}
   \M(\O) & = \M(\O)'' = \M(\O')' = \M(\O''),
\end{align}
i.e., the dynamics associated with $\M$ uniquely determines the observables localized in the causal closure $\O''$ from those localized in $\O$ (see \cite[Sec.~III.4]{haag_local_1996} and \cite[Sec.~1.14]{baumgaertel1995oam} for further discussion).
In the theory of superselection sectors \cite{kastler1990superselection_sectors}, it is common to assume Haag duality but only for diamond regions (double cones), which are causally closed.
\end{remark}

In view of \cref{sec:mbz_fow}, it is crucial to know the types of von Neumann algebras that appear as local observable algebras of a local net $\M$ to decide whether and to which extent embezzlement is possible in quantum field theory (see \cite{longo1982algebraic_and_modular, baumgaertel1995oam,yngvason_role_2005,halvorson2007aqft} for a general overview):

Due to an early result of Kadison, it is known that the local observable algebras $\M(\O)$ are properly infinite for all regions $\O$ with non-empty interior \cite{kadison1963remarks_on_type}, while an observation of Borchers \cite{borchers1967remark_on} almost establishes the type $\III$ property in vacuum representations, i.e., for a nontrivial projection $p\in\M(\O_{1})$, we have $p\sim1$ in $\M(\O_{2})$ if $\overline{\O}_{1}\subset\O_{2}$\footnote{Recall that in a type $\III$ factor any nontrivial projection is equivalent to $1$, and if a von Neumann algebra is not finite and each nontrivial projection is equivalent to $1$, then it is of type $\III$.}.

In the specific case of the vacuum representation of the scalar free field (of any mass), Araki proved that the local observable algebras $\M(\O)$ are isomorphic to $\R_{\infty}$ \cite{araki1964type_of_free,araki1964vna_free_field} (see also \cite{dellantonio1968structure_of_some}), i.e., the unique hyperfinite type $\III_{1}$ factors by Haagerup's result \cite{haagerup_uniqueness_1987}. In two dimensions, combining Araki's result with the construction of $P(\Phi)_{2}$-models by Glimm, Jaffe, and others \cite{glimm1985qft_expositions} implies that the local observable algebras of said models are hyperfinite type $\III_{1}$ factors as well because of local unitary equivalence with the local observable algebras of the free field.

To generally conclude that the local observable algebras $\M(\O)$ in vacuum representations are of type $\III_{1}$, it is necessary to impose further conditions on the local net $\M$ such as the existence of certain natural scaling limits \cite{driessler1977type,buchholz_universal_1987,buchholz1995scaling_algebras1}.

Apart from the case of bounded open regions, it is known that the observable algebras $\M(\W)$ of wedge regions $\W$ are of type $\III_{1}$ in vacuum representations (see \cite[Cor.~1.10.9]{baumgaertel1995oam}). This was originally proved by Bisognano, Wichmann, and Kastler \cite{bisognano1975duality_hermitian,bisognano1976duality, longo1982algebraic_and_modular} for nets generated by Wightman fields and by Driesler and Longo \cite{driessler1975lightlike, longo1979notes} in the setting of vacuum representations of local nets. It is interesting to contrast the latter situation with that for the observable algebra $\M(V^{+})$ of the forward light cone $V^{+}$ (as the timelike analog of $\W$) is indefinite -- it can be of type $\III_{1}$, e.g., for massless free fields \cite{buchholz1977on_the_structure,longo1979notes}, but also of type $\I_{\infty}$, e.g., for gapped theories \cite{sadowski1971total_sets}.

In addition to the type $\III$ property, it is expected that the local observable algebras are hyperfinite (if one extrapolates boldly from the case of free fields and constructed models).
In the general setting of local nets, it is possible to ensure hyperfiniteness by approximation properties related to additivity \cref{eq:additivity} and conditions that control the relative size of local algebras.
\begin{definition}\label{eq:split_continuity}
    Let $\M$ be a local net. Then, $\M$ has the {\bf split (or funnel) property} if the inclusion $\M(\O_{1})\subset\M(\O_{2})$ for any pair of bounded regions such that $\overline{\O_{1}}\subset\O_{2}$ is a split inclusion \cite{doplicher1984split}, i.e., there exists a type $\I$ factor $\N$ such that
    \begin{align}\label{eq:split_inclusion}
        \M(\O_{1}) & \subset \N \subset \M(\O_{2}).
    \end{align}
    $\M$ is called {\bf inner continuous}, if local observable algebras $\M(\O)$ can be approximated from within\footnote{Similarly, the notion of outer continuity of a net $\M$ can be defined by considering decreasing collections $\{O_{j}\}_{j}$ with $\cap_{j}\O_{j}=\O$.}, i.e., given an arbitrary, increasing collection of open bounded regions $\{\O_{j}\}_{j}$ with $\cup_{j}\O_{j} = \O$, then
    \begin{align}\label{eq:outer_cont}
        \M(\O) & = \bigvee_{j}\M(\O_{j}).
    \end{align}
\end{definition}
It follows that inner continuous local nets $\M$ with the split property have hyperfinite local algebras $\M(\O)$ \cite{buchholz_universal_1987} (see also \cite[Prop.~2.28]{halvorson2007aqft}). A physically motivated derivation of the split property of a local net $\M$ is possible by nuclearity assumptions involving the Hamiltonian $H=P^{0}$ or the (local) modular operators $\Delta_{\O}$, thereby essentially limiting the number of local degrees of freedom \cite{buchholz1986causal_independence,buchholz_universal_1987,buchholz1990nuclear_maps1,buchholz1990nuclear_maps2}.

We conclude from the preceding discussion and the results presented in \cref{sec:mbz_fow,sec:itpfi}:
\begin{theorem}\label{thm:wedge_universal_mbz}
    Let $\M$ be a local net in a vacuum representation (with Hilbert space $\H$). Then, $\M(\W)$ as defined in \cref{eq:wedge_algebra} is a universal embezzler.
    
    \noindent
    In particular, if we consider the standard bipartite system $(\H,\M(\W),\M(\W'))$, then the vacuum vector $\Omega\in\H$ is embezzling.
    
    \noindent
    In addition, if $\M$ is inner continuous and has the split property, then $\M(\W)$ is isomorphic to the unique hyperfinite embezzler $\R_{\infty}$. 
\end{theorem}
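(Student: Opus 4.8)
The plan is to deduce the three claims from standard structural facts about the wedge algebra $\M(\W)$ together with the results of \cref{sec:mbz_fow,sec:itpfi}. The only inputs from algebraic quantum field theory are the following, all recalled above: in a vacuum representation (i) the Hilbert space $\H$ is separable, so $\M(\W)$ has separable predual; (ii) by the Bisognano--Wichmann property for wedges (Driessler--Longo \cite{driessler1975lightlike,longo1979notes}, see also \cite[Cor.~1.10.9]{baumgaertel1995oam}) the algebra $\M(\W)$ is a factor of type $\III_1$ and \emph{wedge duality} $\M(\W')=\M(\W)'$ holds; and (iii) by the Reeh--Schlieder property the vacuum $\Omega$ is cyclic for $\M(\O)$ for every bounded $\O$. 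I would begin by assembling these, since everything else is an application of the machinery of the paper.

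With (i) and (ii) in hand, the first claim is immediate from \cref{thm:universal_embezzzeling_algebra}: a type $\III_1$ factor with separable predual is (trivially) a direct integral of type $\III_1$ factors, hence every normal state on $\M(\W)$ is embezzling, i.e.\ $\M(\W)$ is a universal embezzler. One could equally invoke the homogeneity of the state space of a type $\III_1$ factor \cite{connes_homogeneity_1978}, as in the remark after \cref{thm:universal_embezzzeling_algebra}.

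For the bipartite statement I would first check that $(\H,\M(\W),\M(\W'))$ is a standard bipartite system. Since $\M(\W)=\bigvee_{\O\subset\W}\M(\O)$ contains algebras $\M(\O)$ for which $\Omega$ is cyclic by (iii), $\Omega$ is cyclic for $\M(\W)$; applying the same to the opposite wedge and using wedge duality, $\Omega$ is cyclic for $\M(\W')=\M(\W)'$, hence separating for $\M(\W)$. Thus $\M(\W)$ is in standard representation with cyclic--separating vector $\Omega$, and $(\H,\M(\W),\M(\W'))$ is standard by \cref{lem:standard}. The marginal $\omega = \ip{\Omega}{(\placeholder)\Omega}|_{\M(\W)}$ is a normal state on the type $\III_1$ factor $\M(\W)$, hence embezzling by the first part, and \cref{thm:partite} then gives that $\Omega$ is an embezzling bipartite state.

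Finally, under inner continuity and the split property, the quoted theorem of Buchholz--Doplicher--Longo \cite{buchholz_universal_1987} (see also \cite[Prop.~2.28]{halvorson2007aqft}) makes each bounded $\M(\O)$ hyperfinite. Choosing an increasing exhaustion $\O_1\subset\O_2\subset\cdots$ of $\W$ by bounded regions, additivity/inner continuity gives $\M(\W)=\bigvee_n\M(\O_n)$, the ultraweak closure of an increasing union of hyperfinite algebras, which is again hyperfinite; so $\M(\W)$ is a hyperfinite type $\III_1$ factor with separable predual, and Haagerup's uniqueness theorem \cite{haagerup_uniqueness_1987} identifies it with $\R_\infty$, consistently with \cref{cor:itpfi-III1}. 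There is no deep new obstruction here: the hardest part is hygiene rather than a new argument, namely confirming that the axioms adopted genuinely entail the type $\III_1$/wedge-duality package for $\M(\W)$, and justifying the passage from hyperfiniteness of the bounded local algebras $\M(\O)$ to hyperfiniteness of the unbounded wedge algebra $\M(\W)$.
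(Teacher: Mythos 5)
Your proposal is correct and follows essentially the same route as the paper, which derives the theorem from the preceding discussion in \cref{sec:local_alg}: the Driessler--Longo/Bisognano--Wichmann type $\III_1$ result for wedge algebras combined with \cref{thm:universal_embezzzeling_algebra}, the bipartite statement via Reeh--Schlieder and \cref{thm:partite}, and the $\R_\infty$ identification via the split property, inner continuity, and Haagerup's uniqueness theorem. If anything, your writeup is more explicit than the paper's (which simply says ``we conclude from the preceding discussion''), in particular in verifying that the bipartite system is standard and in passing from hyperfiniteness of the bounded local algebras to that of $\M(\W)$.
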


\begin{theorem}\label{thm:local_universal_mbz}
    Let $\M$ be a local net in a vacuum representation. If $\M$ admits a nontrivial scaling limit in the sense of Buchholz and Verch \cite{buchholz1995scaling_algebras1}, satisfying essential duality, then the local observable algebras $\M(\O)$ of diamond regions (double cones) $\O$ are universal embezzlers.
    
    \noindent
    In particular, the restriction $\omega_{\O} = \omega_{|\M(\O)}$ of the vacuum state $\omega = \bra{\Omega}\placeholder\ket{\Omega}$ is embezzling.
    
    \noindent
    In addition, if $\M$ is inner continuous and has the split property, then each local observable algebra $\M(\O)$ is isomorphic to $\R_{\infty}\ox Z(\M(\O))$. 
\end{theorem}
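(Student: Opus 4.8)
The plan is to reduce the statement to Theorem~\ref{thm:universal_embezzzeling_algebra} (a von Neumann algebra is a universal embezzler iff it is a direct integral of type~$\III_1$ factors), so that the only real work is to certify that under the stated hypotheses the local algebras $\M(\O)$ of diamond regions have flow of weights that is trivial fiberwise, i.e.\ $\M(\O) \cong \int^\oplus Z(\M(\O))$-valued field of type~$\III_1$ factors. First I would invoke the Buchholz–Verch scaling-limit machinery together with essential duality: as recalled in \cref{subsec:qft}, the existence of a nontrivial scaling limit satisfying essential duality is precisely the condition used by Buchholz–Verch (and elaborated by Buchholz and Wichmann) to deduce that the local observable algebras $\M(\O)$ of double cones are of type~$\III_1$. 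In the factor case $Z(\M(\O)) = \CC$ this is exactly the type~$\III_1$ property, and Theorem~\ref{thm:universal_embezzzeling_algebra} (or directly \cref{introthm:tablekappa}) immediately gives that $\M(\O)$ is universally embezzling; the non-factorial case then follows by the direct-integral decomposition $\M(\O) = \int^\oplus_Y \M(\O)_y\, d\nu(y)$ over the center $Z(\M(\O)) = L^\oo(Y,\nu)$, since the scaling-limit argument applies fiberwise (the derived type is a measurable invariant of the fibers) and the flow of weights of $\M(\O)$ decomposes accordingly by \eqref{eq:FOW_direct_integral}. Thus every fiber is type~$\III_1$, the flow of weights of $\M(\O)$ is trivial, and Theorem~\ref{thm:universal_embezzzeling_algebra} applies.

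For the ``in particular'' clause: the Reeh–Schlieder property (valid in any vacuum representation by additivity, as noted below \eqref{eq:wedge_algebra}) ensures that $\Omega$ is cyclic and — by causality and cyclicity for $\M(\O')$ — also separating for $\M(\O)$. Hence the induced bipartite system $(\H,\M(\O),\M(\O)')$, after passing to the supporting corner as in \cref{thm:WLOG_std}, is a standard bipartite system in which $\omega_\O$ is a faithful normal state. Since $\M(\O)$ is a universal embezzler, $\omega_\O$ is in particular an embezzling state on $\M(\O)$, and by Theorem~\ref{thm:partite} (equivalence of monopartite and bipartite embezzlement) the vacuum vector $\Omega$ is an embezzling state for the bipartite system. (Haag duality $\M(\O)' = \M(\O')$ is what justifies reading this as a genuine bipartite system with Alice controlling $\O$ and Bob controlling $\O'$, though for the embezzlement statement itself only the commutant is needed.)

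For the final clause, under inner continuity and the split property one has, by the Buchholz–Doplicher–Longo circle of results recalled in \cref{eq:split_continuity} and the paragraph following it, that the local algebras $\M(\O)$ are hyperfinite. Combining hyperfiniteness with the type~$\III_1$ (more precisely: direct integral of type~$\III_1$) conclusion above and the corollary to Theorem~\ref{thm:universal_embezzzeling_algebra} that a hyperfinite universal embezzler is of the form $Z(\M)\ox\R_\oo$, we get $\M(\O) \cong Z(\M(\O))\ox\R_\oo$; here one uses the uniqueness of the hyperfinite type~$\III_1$ factor \cite{haagerup_uniqueness_1987} to identify each fiber with $\R_\oo$, and measurable selection to assemble the fiberwise isomorphisms into a global one.

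I expect the main obstacle to be the first step: extracting the fiberwise type~$\III_1$ property from ``nontrivial scaling limit $+$ essential duality'' in a form that is uniform enough to survive the direct-integral decomposition. The original Buchholz–Verch arguments are stated for factors (or under implicit factoriality assumptions coming from additivity), so some care is needed to argue that the scaling-limit net and its essential-duality property disintegrate compatibly over $Z(\M(\O))$, or alternatively to argue that the conclusion ``$\mathrm{S}(\M(\O)_y) = \RR^+$'' holds for $\nu$-almost every $y$. If a clean disintegration of the scaling-limit construction is not available in the literature, the fallback is to work directly with the modular data: essential duality plus the Bisognano–Wichmann-type geometric action of the modular group (available in the scaling limit) pins down $\Sp\Delta_{\omega_\O}$, and one checks measurability of $y \mapsto \Sp\Delta_{(\omega_\O)_y}$ via the measurable field of modular operators, concluding $\mathrm{S}(\M(\O)_y)=\RR^+$ a.e.\ and hence type~$\III_1$ fibers. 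Everything downstream of that point is routine given the results already established in \cref{sec:mbz_fow,sec:itpfi}.
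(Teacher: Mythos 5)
Your proposal is correct and follows essentially the same route as the paper, which states this theorem as a direct consequence of the preceding discussion: the Buchholz--Verch scaling-limit results give the type $\III_1$ property of the double-cone algebras, \cref{thm:universal_embezzzeling_algebra} and its corollaries then yield universal embezzlement and the form $\R_\oo\ox Z(\M(\O))$ under the split property and inner continuity, and the Reeh--Schlieder property together with \cref{thm:partite} gives the bipartite statement for the vacuum. Your extra care about disintegrating over the center and about measurability of the fiberwise type is a reasonable refinement that the paper leaves implicit.
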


\begin{remark}
    In the context of (algebraic) conformal field theory, it is possible to deduce that local observable algebras $\M(\O)$ in vacuum representations (subject to an appropriate modification of \cref{def:vacuum_net}) are of type $\III_{1}$ on general grounds \cite{brunetti1993modular_structure,gabbiani1993oa_cft}.
\end{remark}

Thus, we may loosely summarize the implications of results on embezzlement of entanglement by the statement:

\begin{center}
    \textit{Relativistic quantum fields are universal embezzlers.}
\end{center}
It is this statement, together with \cref{thm:wedge_universal_mbz,thm:local_universal_mbz}, that makes precise to which extent the vacuum of quantum field theory possesses the maximally possible amount of entanglement.

Due to the operational interpretation of embezzlement, our findings give a precise meaning to the infinite amount of entanglement present in quantum field theories. However, this interpretation needs to be taken with a grain of salt, as the status of "local operations" in quantum field theory is not fully settled (see \cite{fewster2020local_measurements} for a comprehensive discussion of operational foundations of AQFT), and it is not clear whether every unitary $u\in\M(\O)$ (or $\M(\W)$) may be interpreted as a viable operation localized in $\O$ (or $\W$). This issue might be adequately addressed by appealing to recent proposals basing local nets $\M$ on Bogoliubov's local $S$-matrices, which may be interpreted as prototypes of local operations in quantum field theory \cite{buchholz2020interacting_qft}. Interestingly, certain models, e.g., the Sine-Gordon model, have been constructed within this framework \cite{bahns2018quantum_sine_gordon,bahns2021local_nets,bahns2023equilibrium_states_lorentz}.
It is an interesting open problem to explicitly determine the unitaries required for embezzlement, for example in free quantum field theories, and relate them to the discussion of local operations in quantum field theory.

Concerning the interpretation of \cref{thm:local_universal_mbz}, we point out that in contrast with \cref{thm:wedge_universal_mbz} the bipartite interpretation of embezzlement is rather asymmetric in this case because one party, e.g., Alice, has access to the local observable algebra $\M(\O)$ while the other party, Bob, is required, according to \cref{def:bipartite_system}, to have access to the observable algebra $\M(\O)'$, which contains $\M(\O')$ by causality \cref{eq:causal} and, thus, is associated with the unbounded region $\O'$.
Another interpretation of \cref{thm:local_universal_mbz} is suggested by the monopartite setting: Since the local observable algebras are universal embezzlers, we infer from \cref{thm:kappa_bound,thm:typeII_mbz} that any state on a (locally) coupled system, either described by an arbitrary hyperfinite factor (even of type $\III$) or an arbitrary semifinite factor (type $\I$ or $\II$), can be locally prepared to arbitrary precision. This feature of the local observable algebras is in accordance with related results concerning the local preparability of states in AQFT \cite{buchholz1986on_noether,werner1987local_prep,summers1990on_independence} (see also \cite{yngvason_role_2005}).

In addition, we illustrate in the following subsection that, specifically, \cref{thm:wedge_universal_mbz} provides a simple explanation for the classic result that the vacuum of relativistic quantum fields allows for a maximal violation of Bell's inequalities.

\subsection{Violation of Bell's inequalities}
\label{subsec:bell}
Consider two parties in spacelike separated labs, Alice and Bob, and imagine that both have access to a measurement device with two different measurement settings (say $\pm$), each of which yields two possible outcomes "yes" or "no". We denote by $A_\pm$ ($B_\pm$) the events that Alice's (Bob's) measurement yields "yes" with setting $\pm$ and denote the joint distribution by $P$. 

The correlation experiment has a local hidden variable model if the probabilities can be explained by a classical probabilistic model under the assumption that the measurement setting of one party does not influence the probability distribution for the outcomes of the other party. 
A Bell inequality is an inequality that any local hidden variable model must fulfill. 
In particular, the CHSH inequality \cite{clauser_proposed_1969} states that
\begin{align}
    0\leq P(A_+) + P(B_+) + P(A_- \wedge B_-) - P(A_-\wedge B_+) - P(A_+\wedge B_-) - P(A_+\wedge B_+) \leq 1.
\end{align}
Equivalently, we can consider random variables $a_\pm$ that take values $+1$ if the outcome is "yes" and $-1$ if the outcome is "no". Then
\begin{align}
    -2 \leq \mathbb E[a_+b_++a_+b_-+a_-b_+-a_-b_-] \leq 2.
\end{align}
In the quantum mechanical setting we can model the situation by a bipartite system $(\H,\M_A,\M_B)$, a state vector $\Omega\in\H$ and self-adjoint operators $a_\pm\in \M_A$ and $b_\pm \in \M_B$ such that $-1\leq a_\pm,b_\pm\leq 1$.
Optimizing over $a_\pm,b_\pm$, we define the Bell correlation coefficient:
\begin{equation} 
    \beta(\Omega;\M_A,\M_B) = \sup_{a_\pm,b_\pm} \ip{\Omega}{(a_+b_++a_+b_-+a_-b_+-a_-b_-)\Omega}
\end{equation}
where the supremum is over all self-adjoint operators $-1\le a_\pm,b_\pm \le 1$ with $a_\pm\in\M_A$ and $b_\pm\in\M_B$.
It is well known that $\beta(\Omega;\M_A,\M_B)\leq 2\sqrt 2$ and that the value $2\sqrt{2}$ may be achieved using the bipartite system $(\CC^2\ox\CC^2,M_2,M_2)$ with state $\Omega = \frac{1}{\sqrt 2}(\ket{00}+\ket{11})$. As seen above,  $\beta(\Omega;\M_A,\M_B)>2$ shows that the correlation experiment does not admit a local hidden model explanation.
If $\M$ is a von Neumann algebra in standard form, we set 
\begin{equation}
    \beta(\omega;\M) = \beta(\Omega_\omega;\M,\M'),\qquad \omega\in S_*(\M),
\end{equation}
where $\Omega_\omega$ is the unique vector in the positive cone corresponding to $\omega$.

\begin{proposition}\label{prop:beta}
    Let $\M$ be a von Neumann algebra. Then
    \begin{equation}
        \beta(\omega;\M) \ge 2\sqrt 2 - 8\sqrt{\kappa(\omega)}.
    \end{equation}
\end{proposition}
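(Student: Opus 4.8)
The plan is to exploit the operational meaning of $\kappa(\omega)$: a state with small $\kappa(\omega)$ can embezzle, up to a controlled error, a maximally entangled qubit pair, and the standard CHSH test then runs on that pair. Throughout we work in the standard form of $\M$ (where $\beta(\omega;\M)=\beta(\Omega_\omega;\M,\M')$ is defined) and in the standard form of $M_2(\M)$ on $\H\ox\CC^2\ox\CC^2$ as in \cref{lem:standard_amplification}, so that $M_2(\M)=\M\ox M_2\ox1$ and $M_2(\M)'=\M'\ox1\ox M_2$.

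First I would produce the approximate Bell state. By \cref{exa:M_n_std_form} the positive-cone vector of the tracial state $\tfrac12\Tr$ on $M_2$ is the Bell state $\Psi=\tfrac1{\sqrt2}(\ket{11}+\ket{22})$, while the cone vector of $\bra1\placeholder\ket1$ is $\ket{11}$. Fix $\eps>0$. Applying the definition \eqref{eq:kappa_def} of $\kappa$ to the pair $\phi=\bra1\placeholder\ket1$, $\psi=\tfrac12\Tr$ of states on $M_2$ gives a unitary $u\in M_2(\M)$ with $\norm{\omega\ox\tfrac12\Tr - u(\omega\ox\bra1\placeholder\ket1)u^*}<\kappa(\omega)+\eps$. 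Setting $u'=j\up2(u)\in M_2(\M)'$ and repeating the computation from the proof of \cref{prop:std_form_mbz} (combine \eqref{eq:state_vector_norms} with \eqref{eq:purification_formula}, using that $\Omega_{\omega\ox\bra1\placeholder\ket1}=\Omega_\omega\ox\ket{11}$ and $\Omega_{\omega\ox\frac12\Tr}=\Omega_\omega\ox\Psi$), the unit vector $\Xi:=uu'(\Omega_\omega\ox\ket{11})$ satisfies $\norm{\Xi-\Omega_\omega\ox\Psi}<\sqrt{\kappa(\omega)+\eps}=:\delta$.

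Next I would transport the CHSH test. Choose self-adjoint contractions $c_\pm\in M_2$ (acting on the first ancilla factor) and $d_\pm\in M_2$ (second ancilla factor) realizing Tsirelson's bound on $\Psi$, i.e.\ $\ip{\Psi}{(c_+d_++c_+d_-+c_-d_+-c_-d_-)\Psi}=2\sqrt2$, and put $B_0=1_\H\ox(c_+d_++c_+d_-+c_-d_+-c_-d_-)$, so that $\norm{B_0}\le4$ and $\ip{\Omega_\omega\ox\Psi}{B_0(\Omega_\omega\ox\Psi)}=2\sqrt2$. Define $a_\pm:=u^*(1\ox c_\pm\ox1)u\in\M\ox M_2\ox1$ and $b_\pm:=u'^*(1\ox1\ox d_\pm)u'\in\M'\ox1\ox M_2$; these are self-adjoint contractions, and since $u$ commutes with $1\ox1\ox d_\pm$ and $u'$ with $1\ox c_\pm\ox1$ one checks that $a_+b_++a_+b_-+a_-b_+-a_-b_-=(uu')^*B_0(uu')$. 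Hence the CHSH expectation of $a_\pm,b_\pm$ in $\Omega_\omega\ox\ket{11}$ equals $\ip\Xi{B_0\Xi}\ge 2\sqrt2-2\norm{B_0}\delta\ge 2\sqrt2-8\delta$. Finally, because $\ket{11}$ is a product vector, compressing the two ancilla factors replaces $a_\pm,b_\pm$ by self-adjoint contractions $\tilde a_\pm=(1\ox\bra1\ox1)a_\pm(1\ox\ket1\ox1)\in\M$ and $\tilde b_\pm=(1\ox1\ox\bra1)b_\pm(1\ox1\ox\ket1)\in\M'$ carrying the same CHSH expectation in $\Omega_\omega$; thus $\ip\Xi{B_0\Xi}\le\beta(\Omega_\omega;\M,\M')=\beta(\omega;\M)$. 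Combining, $\beta(\omega;\M)\ge 2\sqrt2-8\sqrt{\kappa(\omega)+\eps}$ for all $\eps>0$, and letting $\eps\to0$ gives the claim.

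The main work — there is no conceptual obstacle — is the bookkeeping of tensor factors and commutation relations: verifying that the conjugated ancilla observables stay in $\M_A=\M\ox M_2\ox1$ and $\M_B=\M'\ox1\ox M_2$, that the CHSH operator transforms as $(uu')^*B_0(uu')$, and that compression to $\ket{11}$ produces genuine observables of the original bipartite system with unchanged correlation. The constant $8$ is just $2\norm{B_0}\le 2\cdot4$ together with the square-root loss from states to vectors in \eqref{eq:state_vector_norms}.
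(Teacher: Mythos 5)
Your proof is correct and follows essentially the same route as the paper's: embezzle the pair $(\bra1\placeholder\ket1,\tfrac12\Tr)$, use \eqref{eq:state_vector_norms} to convert the state error into a vector error of $\sqrt{\kappa(\omega)}$, pay $2\norm{B_0}\le 8$ in the CHSH perturbation, and compress the $(1,1)$-ancilla entry to land back in $(\M,\M')$ — the last step being exactly the paper's auxiliary lemma that $\beta(\omega\ox\psi)=\beta(\omega)$ for pure $\psi$. The only difference is presentational: you conjugate the observables by $u,u'$ while the paper conjugates the state and invokes a continuity estimate $\beta(\omega)\ge\beta(\omega')-8\sqrt{\norm{\omega-\omega'}}$; these are the same computation.
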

Thus, $\omega$ is guaranteed to violate a Bell inequality if $\kappa(\omega)<\frac1{100}$.
As a direct consequence, we get a new proof of the following classic result due to S.\ J.\ Summers and the third author:

\begin{theorem}[{\cite{summers1985vacuum,summers1987,summers_bells_1995}}]
    Let $\M$ be a local net that satisfies essential duality and assume that the observable algebras $\M(\W)$ of wedge regions $\W$ (as in \cref{eq:wedge_algebra}) are of type $\III_1$. Let $\Omega\in\H$ be any unit vector, then
    \begin{equation}
        \beta(\Omega;\M(\mc W),\M(\mc W')) = 2\sqrt2.
    \end{equation}
\end{theorem}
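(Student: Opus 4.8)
The plan is to combine the embezzlement capability of the wedge algebras with the fact that a maximally entangled qubit pair saturates the CHSH bound, and to control the error via Proposition~\ref{prop:beta}. The essential input is \cref{thm:wedge_universal_mbz}: since $\M(\W)$ is a local observable algebra in a vacuum representation, it is a universal embezzler, i.e.\ a type $\III_1$ factor, so every normal state on it is embezzling and hence $\kappa(\omega)=0$ for the marginal $\omega$ of any unit vector $\Omega\in\H$ on $\M(\W)$. Note that essential duality gives $\M(\W)'=\M(\W')$, so $(\H,\M(\W),\M(\W'))$ is a genuine bipartite system; after restricting to the corner $s(\omega)\M(\W)s(\omega)$ (which is still type $\III_1$, a corner of a type $\III_1$ factor) and using \cref{thm:partite,prop:std_form_mbz} we may assume we are in a standard bipartite system with $\Omega=\Omega_\omega$ in the positive cone.

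The key steps, in order, are: (1) Invoke $\kappa(\omega)=0$, so Proposition~\ref{prop:beta} immediately yields $\beta(\omega;\M(\W))\ge 2\sqrt2$; combined with the general upper bound $\beta\le 2\sqrt2$ this gives $\beta(\Omega;\M(\W),\M(\W'))=2\sqrt2$ when $\Omega$ lies in the positive cone. (2) Remove the positive-cone assumption: for a general unit vector $\Omega$ with marginal $\omega$, the polar decomposition of vectors in standard form (\cite[Ex.~IX.1.2]{takesaki2}, as used in \cref{lem:polar}) gives a partial isometry $u'\in\M(\W)'$ with $\Omega=u'\Omega_\omega$, and by \cref{lem:simulation_by_unitaries} a sequence of unitaries $u_k'\in\M(\W)'$ with $u_k'\Omega_\omega\to\Omega$. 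Since $\beta(\,\cdot\,;\M(\W),\M(\W)')$ is invariant under local unitaries in $\M(\W)'$ (conjugating the $b_\pm$) and is continuous in the vector (the supremand is a fixed bounded quadratic form evaluated at $\Omega$, uniformly over $\norm{a_\pm},\norm{b_\pm}\le1$), we get $\beta(\Omega;\M(\W),\M(\W)')=\lim_k\beta(u_k'\Omega_\omega;\dots)=2\sqrt2$. (3) Finally rewrite $\M(\W)'=\M(\W')$ using essential duality to obtain the stated form.

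Alternatively, and perhaps more transparently, one can argue directly without Proposition~\ref{prop:beta}: embezzle the Bell state. Since $\omega$ is embezzling, for every $\eps>0$ there are unitaries $u\in M_2(\M(\W))$, $u'\in M_2(\M(\W'))$ with $\norm{uu'(\Omega\ox\ket{11})-\Omega\ox\Psi_{\mathrm{Bell}}}<\eps$ where $\Psi_{\mathrm{Bell}}=\tfrac1{\sqrt2}(\ket{11}+\ket{22})$. Pick the standard CHSH observables $\tilde a_\pm\ox1,\ 1\ox\tilde b_\pm\in M_2$ achieving $\langle\Psi_{\mathrm{Bell}}|(\cdots)|\Psi_{\mathrm{Bell}}\rangle=2\sqrt2$, pull them back through $u,u'$ to self-adjoint contractions $a_\pm:=u^*(1\ox\tilde a_\pm)u\in M_2(\M(\W))$ and $b_\pm:=u'^*(1\ox\tilde b_\pm)u'\in M_2(\M(\W'))$, and note that these live in the \emph{amplified} bipartite system $(\H\ox\CC^2\ox\CC^2, M_2(\M(\W)), M_2(\M(\W')))$ evaluated at $\Omega\ox\ket{11}$; the corresponding Bell value is within $O(\eps)$ of $2\sqrt2$. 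One then has to pass back from the amplification to the original system — this is the step I expect to be the main obstacle — using that $M_2(\M(\W))\cong\M(\W)$ (type $\III_1$ is properly infinite, $\M(\W)\cong\M(\W)\ox M_2$) via a unitary that intertwines $\Omega\ox\ket{11}$ with a unit vector in $\H$ whose $\M(\W)$-marginal is again some normal state, so $\beta$ is unchanged; alternatively one simply observes that $\beta(\Omega\ox\ket{11};M_2(\M(\W)),M_2(\M(\W')))=\beta(\Omega;\M(\W),\M(\W)')$ because appending the ancilla in the pure product state $\ket{11}$ does not change the set of achievable correlations (any contraction in $M_2(\M(\W))$ restricted to $\Omega\ox\ket{11}$ is matched by one in $\M(\W)$ on $\Omega$, and conversely). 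Either way the net effect is $\beta(\Omega;\M(\W),\M(\W'))\ge 2\sqrt2-O(\eps)$ for all $\eps$, hence equality with the universal bound $2\sqrt2$.
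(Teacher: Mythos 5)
Your primary argument is exactly the paper's: the type $\III_1$ hypothesis makes $\M(\W)$ a universal embezzler, so $\kappa(\omega)=0$ for every marginal, and Proposition~\ref{prop:beta} together with Tsirelson's bound $\beta\le 2\sqrt2$ forces equality; your step (2), reducing an arbitrary unit vector to the positive-cone representative via the polar decomposition and \cref{lem:simulation_by_unitaries}, correctly fills in a detail the paper leaves implicit. Your alternative "embezzle the Bell state" route is not really different either — it is the proof of Proposition~\ref{prop:beta} unrolled, and the amplification step you worry about is handled exactly as you suggest by the paper's lemma that $\beta(\omega\ox\psi)=\beta(\omega)$ for $\psi$ pure.
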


\begin{lemma}
    Let $\M$ be a von Neumann algebra, $\omega$ a normal state on $\M$, and $\psi$ a pure state on $M_n$. Then $\beta(\omega\ox\psi)=\beta(\omega)$.
\end{lemma}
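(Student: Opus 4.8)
The plan is to show both inequalities $\beta(\omega\ox\psi;\M\ox M_n)\ge\beta(\omega;\M)$ and $\beta(\omega\ox\psi;\M\ox M_n)\le\beta(\omega;\M)$, using that $\psi$ is a pure state on $M_n$. First recall that in the standard form of $\M\ox M_n$ (cf.\ \cref{lem:standard_amplification}), the purifying vector of $\omega\ox\psi$ is $\Omega_\omega\ox\Psi$ where $\Psi\in\CC^n\ox\CC^n$ is the positive-cone vector of $\psi$; since $\psi$ is pure, $\Psi$ has Schmidt rank one, i.e.\ $\Psi=\xi\ox\bar\xi$ for some unit vector $\xi\in\CC^n$ (after the appropriate identification of the two tensor legs). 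So the relevant Bell correlation is $\beta(\Omega_\omega\ox\xi\ox\bar\xi;\M\ox M_n, \M'\ox M_n)$, where Alice controls $\M\ox M_n\ox1$ and Bob controls $\M'\ox1\ox M_n$, and the second factor is in the state $\kettbra\xi$ for Alice and $\kettbra{\bar\xi}$ for Bob.

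For the ``$\ge$'' direction I would simply restrict the supremum: given admissible operators $a_\pm\in\M$, $b_\pm\in\M'$ witnessing $\beta(\omega;\M)$, the operators $a_\pm\ox1$ and $b_\pm\ox1$ are admissible on $\M\ox M_n$, $\M'\ox M_n$ and yield the same correlation value, because $\ip{\Omega_\omega\ox\xi\ox\bar\xi}{(a_\pm\ox1)(b_\pm\ox1)(\Omega_\omega\ox\xi\ox\bar\xi)}=\ip{\Omega_\omega}{a_\pm b_\pm\Omega_\omega}\cdot\ip\xi\xi\cdot\ip{\bar\xi}{\bar\xi}=\ip{\Omega_\omega}{a_\pm b_\pm\Omega_\omega}$. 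Taking the supremum gives $\beta(\omega\ox\psi)\ge\beta(\omega)$.

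For the harder ``$\le$'' direction, the point is that because $\Psi=\xi\ox\bar\xi$ factorizes, the reduced state of $\Omega_\omega\ox\Psi$ on Alice's (resp.\ Bob's) side is a product and the extra $M_n$-system carries a \emph{pure} local state, so it can be absorbed. Concretely, let $v_\xi\in M_{n,1}$ be the isometry $\CC\to\CC^n$, $1\mapsto\xi$; then conjugation by $v_\xi\ox v_{\bar\xi}$ intertwines the state $\omega$ (on $\M=\M\ox\CC$) with $\omega\ox\psi$ (on $\M\ox M_n$) at the level of GNS vectors up to the partial isometry already built into the standard form. For any admissible $A_\pm\in\M\ox M_n$ and $B_\pm\in\M'\ox M_n$, I would set $a_\pm:=(1\ox v_\xi^*)A_\pm(1\ox v_\xi)\in\M$ and $b_\pm:=(1\ox v_{\bar\xi}^*)B_\pm(1\ox v_{\bar\xi})\in\M'$; these are self-adjoint with $-1\le a_\pm,b_\pm\le1$ since compression by an isometry preserves the operator interval $[-1,1]$. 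Then one checks $\ip{\Omega_\omega\ox\xi\ox\bar\xi}{A_\pm B_\mp\,\Omega_\omega\ox\xi\ox\bar\xi}=\ip{\Omega_\omega}{a_\pm b_\mp\,\Omega_\omega}$ using $\xi\ox\bar\xi=(v_\xi\ox v_{\bar\xi})1$ and $A_\pm,B_\mp$ commute. Summing the four terms shows every correlation value achievable on $\M\ox M_n$ is already achievable on $\M$, hence $\beta(\omega\ox\psi)\le\beta(\omega)$, and combined with the previous paragraph we get equality.

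The main obstacle I anticipate is bookkeeping with the standard-form identifications: making sure that the positive-cone vector $\Psi$ of a pure state on $M_n$ really has the rank-one form $\xi\ox\bar\xi$ (this follows from \cref{exa:M_n_std_form}, since $\Psi_\omega=(\sqrt{\rho_\psi}\ox1)\Omega$ and $\sqrt{\rho_\psi}=\kettbra\xi$ for a pure state, giving $\Psi_\psi=\sum_i\xi_i\ket i\ox\overline{\sum_j\xi_j\ket j}$ after the standard identification), and that Alice's and Bob's auxiliary $M_n$-factors act on the ``bra'' and ``ket'' legs consistently with the conjugation $J\up n$. None of this is deep, but it must be spelled out carefully; once the rank-one structure of $\Psi$ is in hand, the compression argument is routine.
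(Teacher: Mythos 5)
Your proposal is correct and takes essentially the same route as the paper: the paper first normalizes $\psi=\bra1\placeholder\ket1$ (so $\Omega_{\omega\ox\psi}=\Omega_\omega\ox\ket{11}$) and then passes to the $(11)$-matrix entries of $b_\pm\in M_n(\M)$, $b_\pm'\in M_n(\M')$, which is precisely your compression by $1\ox v_\xi$ and $1\ox v_{\bar\xi}$ in the special case $\xi=\ket1$. The only cosmetic difference is that you keep a general rank-one vector $\xi\ox\bar\xi$ instead of invoking the unitary WLOG.
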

\begin{proof}[Proof of the Lemma]
    Without loss of generality, we set $\psi=\bra1\placeholder\ket1$ so that $\Omega_{\omega\ox\psi}=\Omega_\omega\ox\ket{11}\in \H\ox\CC^n\ox\CC^n$.
    Clearly $\beta(\omega\ox\psi)\ge \beta(\omega)$. 
    If $b_\pm\in M_n(\M)$, $b_\pm'\in M_n(\M')$ and if $a_\pm\in\M$ and $a_\pm'\in\M'$ denote their $(11)$-matrix entries, then
    \begin{equation}
        \ip{\Omega_{\omega\ox\psi}}{(b_+b_+'+b_+b_-'+b_-b_+'-b_-b_-')\Omega_{\omega\ox\psi}}
        = \ip{\Omega_\omega}{(a_+a_+'+a_+a_-'+a_-a_+'-a_-a_-')\Omega_\omega}.
    \end{equation}
\end{proof}

\begin{proof}[Proof of \cref{prop:beta}]
For two states $\omega,\omega'$ on $\M$ and a self-adjoint operator $x\in\B(\H)$, we have
\begin{align*}
    \ip{\Omega_{\omega'}}{x\Omega_{\omega'}}
    &=\ip{\Omega_{\omega'}}{x (\Omega_{\omega'}-\Omega_{\omega})} + \ip{\Omega_{\omega}-\Omega_{\omega'}}{x\Omega_{\omega}} + \ip{\Omega_\omega}{x\Omega_\omega}\\
    &\le 2\sqrt{\norm{\omega-\omega'}}\norm x + \ip{\Omega_\omega}{x\Omega_\omega}.
\end{align*}
Putting $x=a_+a_+' + ...$ and taking the supremum, we get
\begin{equation}
    \beta(\omega)\ge \beta(\omega')- 8 \sqrt{\norm{\omega-\omega'}}.
\end{equation}
Now consider the state $\omega_1 = \omega \ox \bra1\placeholder\ket1$ on $M_2(\M)$. By the previous Lemma, $\beta(\omega)=\beta(\omega_1)$.
Pick a unitary $u\in M_{2}(\M)$ and put $\omega'_1=u(\omega\ox\frac12\tr)u^*$.
Clearly, $\beta(\omega'_1)=2\sqrt 2$ and minimizing over the unitaries $u\in M_{2}(\M)$ shows the claim.
\end{proof}

\printbibliography
\null
\vspace{11pt}

\hrule
\null

\noindent\textsc{Email addresses:}\\[0.2cm]
\textsc{L.~van Luijk:} lauritz.vanluijk@itp.uni-hannover.de\\[0.1cm]
\textsc{A.~Stottmeister:} alexander.stottmeister@itp.uni-hannover.de \\[0.1cm]
\textsc{R.~F.~Werner:} reinhard.werner@itp.uni-hannover.de \\[0.1cm]
\textsc{H.~Wilming:} henrik.wilming@itp.uni-hannover.de
\clearpage

\end{document}